\newif \ifcomments

\newif\ifsubmission
\submissionfalse

\documentclass[12pt]{article}

\usepackage[utf8]{inputenc}
\usepackage{amsfonts,amssymb,amsmath,amsthm}
\usepackage{mathrsfs}
\usepackage{fullpage}
\usepackage[colorlinks=true,linkcolor=blue,citecolor=blue,breaklinks=true]{hyperref}
\usepackage{breakcites}
\usepackage{algorithm}
\usepackage{algorithmicx}
\usepackage[noend]{algpseudocode}
\usepackage{bm} 
\usepackage{dsfont}
\usepackage{graphicx, float, tikz}
\usepackage{colonequals}  
\usepackage{color}
\usepackage{multicol}
\usepackage{tikz}
\usepackage{fancybox}
\usepackage[breakable, theorems, skins]{tcolorbox}
\usetikzlibrary{matrix,arrows}
\usetikzlibrary{positioning}
\usetikzlibrary{decorations}
\usetikzlibrary{decorations.pathreplacing}
\usetikzlibrary{shapes.geometric, chains, calc, fit, matrix}
\tikzstyle{system}=[rectangle,draw,fill=lightgray,minimum height=0.8cm,minimum
            width=0.8cm,thick]
\tikzstyle{BC}=[system]
\tikzstyle{resource}=[system]
\tikzstyle{RO}=[resource, minimum width=1cm]
\tikzstyle{protocol}=[circle, inner sep=0.7mm, draw]
\tikzstyle{simulator}=[circle, inner sep=0.7mm, draw]
\tikzstyle{memory}=[resource]
\tikzstyle{distinguisher}=[resource,fill=white,minimum width=3.5cm,
minimum height=1.2cm]
\tikzstyle{link}=[]

\usepackage{cleveref}

\usepackage{paralist}
\usepackage{enumerate}
\usepackage{enumitem}
\usepackage[n,
advantage,
operators,
sets,
adversary,
landau ,
probability,
notions,
logic,
ff,
mm,
primitives,
events,
complexity,
asymptotics,
keys]{cryptocode}

\renewcommand{\secpar}{\lambda}
\usepackage{tikz}
\usetikzlibrary{arrows,automata}

\usepackage{braket}

\usepackage[framemethod=tikz]{mdframed}
\usepackage{subcaption}

\usepackage{algpseudocode}

\usepackage{cleveref}

\usepackage[colorinlistoftodos]{todonotes}
\input{Template/template_macros}

\newcommand{\email}[1]{\href{mailto:#1}{#1}}

\newcommand{\NP}{\mathsf{NP}}
\newcommand{\secp}{{\lambda}}
\newcommand{\CQVC}{\mathsf{CQVC}}
\newcommand{\QPT}{\mathsf{QPT}}
\newcommand{\cE}{\mathcal{E}}

\newcommand{\FHE}{\mathsf{FHE}}
\newcommand{\KeyGen}{\mathsf{KeyGen}}
\newcommand{\Eval}{\mathsf{Eval}}
\newcommand{\Enc}{\mathsf{Enc}}
\newcommand{\Dec}{\mathsf{Dec}}

\newcommand{\CVQC}{\mathsf{CVQC}}
\newcommand{\NDCVQC}{\mathsf{NDCVQC}}

\newcommand{\Recover}{\mathsf{Recover}}
\newcommand{\Exp}{\mathsf{Exp}}
\newcommand{\Ver}{\mathsf{Ver}}

\newcommand{\ct}{\mathsf{ct}}

\newcommand{\QMA}{\mathsf{QMA}}
\newcommand{\lang}{\calL}
\newcommand{\yes}{\mathsf{yes}}
\newcommand{\no}{\mathsf{no}}
\newcommand{\Accept}{\mathsf{Accept}}
\newcommand{\Reject}{\mathsf{Reject}}

\newcommand{\Ext}{\mathsf{Ext}}

\newcommand{\Setup}{\mathsf{Setup}}
\newcommand{\pubparams}{\mathsf{pp}}
\newcommand{\td}{\mathsf{td}}
\newcommand{\mode}{\mathsf{mode}}
\newcommand{\lossy}{\mathsf{recovery}}
\newcommand{\injective}{\mathsf{injective}}

\newcommand{\MP}{\mathsf{MP}} 
\newcommand{\Invert}{\mathsf{Invert}}
\def\vecA{\mathbf{A}}
\def\vecG{\mathbf{G}}
\def\vecT{\mathbf{T}}
\def\vecI{\mathbf{I}}
\def\vecR{\mathbf{R}}
\def\vec0{\mathbf{0}}
\def\vect{\mathbf{t}}

\def\vecr{\mathbf{r}}
\def\vecs{\mathbf{s}}
\def\vece{\mathbf{e}}
\def\vecu{\mathbf{u}}
\def\vecx{\mathbf{x}}
\def\vecy{\mathbf{y}}

\newcommand{\ValEst}{\sf ValEst}

\newcommand{\brho}{\boldsymbol{\rho}}
\newcommand{\bsigma}{\boldsymbol{\sigma}}

\newcommand{\regrand}{{\cal R}}
\newcommand{\regtranscript}{\calA}
\newcommand{\regprover}{\calP}

\newcommand{\SPNP}{\mathsf{SPA}_{\mathsf{NP}}}

\newcommand{\Repair}{\mathsf{Repair}}

\newcommand{\ketbra}[1]{\mathinner{|{#1}\rangle\langle{#1}|}}

\newcommand{\MW}{\mathsf{MW}}

\newcommand{\eigenspace}{\mathcal{ES}}
\newcommand{\cmszmodereg}{{\scriptstyle +_{\top, \bot}}}
\newcommand{\topbot}{{\scriptscriptstyle\top/\bot}}
\newcommand{\reset}{\mathsf{reset}}
\newcommand{\jor}{\mathsf{jor}}

\newcommand{\Repairable}{\mathsf{Repairable}}
\newcommand{\TraceDist}{\mathsf{TraceDist}}
\newcommand{\rand}{\mathsf{rand}}

\newcommand{\test}{\mathsf{T}}
\newcommand{\honest}{\mathsf{C}}

\renewcommand{\vec}[1]{\mathbf{#1}}

\newcommand{\testsoundness}{\zeta}
\newcommand{\poksoundness}{1 - (1/|x|^d\secp^d)}

\newcommand{\LWE}{\mathsf{LWE}}

\newcommand{\polyx}{\mathsf{poly}(|x|,\secp)}

\newcommand{\PoNISec}{\mathsf{PoNI\text{-}Sec}}
\newcommand{\cert}{\mathsf{cert}}
\newcommand{\verkey}{\mathsf{vk}}
\newcommand{\VerKeyGen}{\mathsf{VKGen}}
\newcommand{\PoNI}{\mathsf{PoNI}}
\newcommand{\PoNIEncSearch}{\PoNI\text{-}\mathsf{Enc}\text{-}\mathsf{S}}

\newcommand{\regH}{\mathcal{H}}
\newcommand{\regR}{\mathcal{R}}
\newcommand{\Del}{\mathsf{Del}}
\newcommand{\PKE}{\mathsf{PKE}}
\newcommand{\Sig}{\mathsf{Sig}}
\newcommand{\PVCDExp}{\mathsf{PVCD\text{-}Exp}}
\newcommand{\Sign}{\mathsf{Sign}}
\newcommand{\regA}{\mathcal{A}}
\newcommand{\regB}{\mathcal{B}}
\newcommand{\regC}{\mathcal{C}}
\newcommand{\cAuxClass}{k_C}

\title{\Large{How to Classically Verify a Quantum Cat without Killing it}}
\ifsubmission
\author{}
\else
\author{Yael Tauman Kalai\thanks{MIT. \email{tauman@mit.edu}} \and Dakshita Khurana\thanks{UIUC and NTT Research. \email{dakshita@illinois.edu}} \and Justin Raizes\thanks{NTT Research.\email{justin.raizes@ntt-research.com}}}
\fi
\date{}
\newtheorem{definition}{Definition}
\newtheorem{remark}{Remark}
\newtheorem{lemma}{Lemma}
\newtheorem{theorem}{Theorem}
\newtheorem{claim}{Claim}

\newtheorem{corollary}{Corollary}

\theoremstyle{definition}
\newtheorem{construction}{Construction}

\begin{document}
\maketitle

\begin{abstract}
    Existing protocols for classical verification of quantum computation ($\CVQC$) consume the prover's witness state, requiring a new witness state for each invocation.
Because $\QMA$ witnesses are not generally clonable, destroying the input witness means that amplifying soundness and completeness via repetition requires many copies of the witness. Building $\CVQC$ with low soundness error that uses only {\em one} copy of the witness has remained an open problem so far.

We resolve this problem by constructing a $\CVQC$ that uses a single copy of the $\QMA$ witness, has negligible completeness and soundness errors, and does \emph{not} destroy its witness. 
The soundness of our $\CVQC$ is based on the post-quantum Learning With Errors (${\sf LWE}$) assumption.

To obtain this result, we define and construct two primitives (under the post-quantum ${\sf LWE}$ assumption) 
for non-destructively handling superpositions of classical data, which we believe are of independent interest:
\begin{enumerate}
    \item A {\em state preserving} classical argument for $\NP$.
    \item Dual-mode trapdoor functions with {\em state recovery}.
\end{enumerate}

\end{abstract}

\clearpage
\tableofcontents
\clearpage

\pagenumbering{arabic} 

\section{Introduction}
Imagine a quantum prover that would like to demonstrate possession of Schr{\"o}dinger’s proverbial cat without killing the cat. 
Or picture a classical verifier seeking to assess whether a prover has sufficient quantum money to pay them.
In both cases, the challenge is: how can one prove the existence of a valuable quantum state without destroying it in the process?

It is well-known that quantum states are precious, consumable resources.
Some---like magic states---can be produced at a nontrivial but manageable cost, while others may be far harder to obtain. In particular, it is computationally infeasible to duplicate $\QMA$ witnesses, relative to an oracle~\cite{STOC:CKP25}.
Since only one copy of a witness may be available, it is crucial to avoid destroying this witness while proving statements about it.\\

\noindent{\bf Non-Destructive Classical Verification of $\QMA$.}
Mahadev's breakthrough~\cite{Mahadev18} opened the door to classical verification of quantum computation ($\CVQC$), enabling a quantum prover to prove $\QMA$ statements to an efficient classical verifier.
This work spurred a number of exciting developments in $\CVQC$~\cite{Mahadev18,TCC:ACGH20,TCC:ChiChuYam20,C:BKLMMV22,FOCS:Zhang22,KLVY23,FOCS:MetNatZha24,GKNV25,C:BKMSW25,C:BarKhu25}. 
And yet, despite this progress, a fundamental limitation remains: all existing protocols for classically verifying 
$\QMA$ statements end up irreversibly destroying the prover’s precious witness in the process. 
In this work, we ask
\begin{center}
    {\em Can a quantum prover convince a classical verifier of a\\ $\QMA$ statement without sacrificing the witness?}
    \vspace{2mm}
\end{center}

\noindent {\bf Using a Single Witness.}
In the setting where the verifier is {\em quantum}, non-destructive verification is exactly what enables deciding $\QMA$ with negligible error using only one copy of a witness. 
If a verifier can re-test the same witness repeatedly, completeness and soundness errors can be driven to $2^{-\secpar}$ without increasing the witness length. This idea is at the core of Marriott-Watrous's fundamental result amplifying $\QMA$ with one copy of the witness~\cite{CC:MW05}.
Conversely, this near-perfect completeness also enables non-destructive $\QMA$ verification with a {\em quantum verifier}. Since the witness is accepted in~\cite{CC:MW05} with overwhelming probability, the acceptance measurement disturbs it only negligibly.

In contrast, existing protocols for {\em classically verifying} $\QMA$ typically need sequential or parallel repetition to reduce completeness/soundness errors, but each repetition irreversibly damages the witness. As a result, these protocols require the prover to start with many copies of the $\QMA$ witness, a demand which is often unrealistic.
We ask:
\begin{center}
    {\em Can a quantum prover convince a \underline{classical} verifier of a\\ $\QMA$ statement using only one copy of the $\QMA$ witness?}
    \vspace{2mm}
\end{center}
{\bf This Work.}
Our work answers \emph{both questions} in the affirmative for $\CVQC$, aligning the case of classical verification with the $\QMA$ picture.
We note that unlike the $\QMA$ setting, even if near-perfect completeness were achieved in 
$\CVQC$ protocols, it is unclear whether the prover’s witness would be preserved in general.

In fact, our main result offers the best of both worlds: a $\CVQC$ for $\QMA$ with negligible error that uses a \emph{single copy} of a Marriott-Watrous witness \cite{CC:MW05}, and where the prover ends the protocol with a witness that has only negligible statistical distance from its original witness.

We achieve our main result via two generic transformations, both of which rely on the post-quantum hardness of ${\sf LWE}$:

\begin{enumerate}
 \item \textbf{Witness-Preservation for Near-Perfect Completeness:} We compile any non-adaptive\footnote{ A non-adaptive $\CVQC$ is one  where the verifier's messages are independent of the prover's. This is a broad class containing all $\CVQC$ protocols we are aware of and which naturally generalizes public-coin protocols.} $\CVQC$ with $1-\negl$ completeness into a \emph{witness-preserving} (non-adaptive) $\CVQC$, where an honest prover ends with a state negligibly close to their original witness. 

 This transformation reduces our task to constructing a non-adaptive $\CVQC$ that has near-perfect completeness and uses a single witness (but may destroy the witness).  
    \item \textbf{Completeness and Soundness Amplification via Non-Destructive Verification:} We compile any non-adaptive $\CVQC$ for a language in  $\QMA_{1-2^{-\secpar},2^{-\secpar}}$ (where the honest prover uses a single witness that is accepted by the $\QMA$ verifier with overwhelming probability) into one with $1-\negl$ completeness and $\negl$ soundness.

    This compiler first converts any non-adaptive $\CVQC$ into one that only mildly destroys the  prover's witness, then it amplifies soundness and completeness by sequential repetition, using the resulting witness which is still ``good enough.''
\end{enumerate}

We obtain our main result by taking any non-adaptive $\CVQC$ that uses a single copy of the witness, and first applying the second compiler that amplifies soundness and completeness (while still using a single witness), and then applying the first compiler that makes the $\CVQC$ witness preserving.

\subsection{Results.}
Our main result is a $\CVQC$ which uses one copy of the witness and preserves it.

\begin{theorem}[Informal]\label{thm:main:informal}
There exists a $\CVQC$ protocol for any language $\calL\in\QMA_{1-2^{-\secpar},2^{-\secpar}}$ with the following properties:
\begin{itemize}
    \item \textbf{One-Copy Amplified Soundness and Completeness.}
    It has $1-\negl$ statistical completeness and $\negl$ computational soundness error, assuming the post-quantum hardness of $\mathsf{LWE}$, using \emph{one copy} of \emph{any} witness $\ket{w}$ for $\lang \in \QMA_{1-2^{-\secpar},2^{-\secpar}}$.
   
    Furthermore, it is an argument of knowledge.

    \item \textbf{Witness Preserving.} 
    At the end of the protocol, the prover is left with a state that is statistically close to $\ket{w}$.

     \end{itemize} 
\end{theorem}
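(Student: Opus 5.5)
The plan is to prove the theorem as the composition of the two compilers described above, applied in the order amplification first, then witness preservation. We take as a base object any non-adaptive $\CVQC$ for $\calL\in\QMA_{1-2^{-\secpar},2^{-\secpar}}$ whose honest prover uses a single copy of the witness. For instance, we can use Mahadev-style or \cite{FOCS:Zhang22}/\cite{KLVY23}-style protocols run on one copy of a Marriott--Watrous witness. Such a protocol has only constant (or inverse-polynomial) completeness/soundness gap and may destroy the witness. Soundness of this base protocol rests on post-quantum $\LWE$, and it should also be an argument of knowledge, via the extractors already known for these protocols.

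\textbf{Step 1 (amplification with one copy).} We apply the second compiler. First we make the base protocol only mildly destructive. The honest prover runs its measurements coherently, in the sense that the classical strings it sends are computed into registers. The verifier's challenges are independent of the prover's messages (non-adaptivity), so the prover can later uncompute everything except the transcript. The remaining disturbance comes from measuring the transcript itself. Here we use the state-preserving classical argument for $\NP$ and dual-mode trapdoor functions with state recovery. They let the prover commit to and prove statements about superpositions of classical data, while the verifier learns essentially nothing that collapses the witness beyond the acceptance bit.

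Because $\ket{w}$ is accepted with probability $1-2^{-\secpar}$, a gentle-measurement argument shows that after one round the residual state is within $\poly\cdot 2^{-\secpar/2}$ of $\ket{w}$ up to the recoverable disturbance. We then run $\secpar$ sequential repetitions and accept by threshold. Completeness follows by a hybrid over rounds: each round's damage is negligible, so the residual witness stays ``good enough.'' For soundness, sequential repetition of computationally sound arguments does not automatically amplify. We would instead argue round by round: conditioned on any prior transcript, the cheating prover's success in a fresh round is bounded by the base soundness, since the verifier's fresh randomness is independent. This gives error $\negl$. Knowledge soundness follows by applying the base extractor to a round in which the prover succeeds with noticeable probability.

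\textbf{Step 2 (witness preservation).} We feed the amplified protocol, which now has $1-\negl$ completeness, into the first compiler. The prover runs the entire honest strategy coherently and keeps all classical messages in superposition. It uses the recovery mode of the dual-mode trapdoor functions, so that sent values are images whose preimage superpositions can be uncomputed. It replaces the final verifier check by the state-preserving $\NP$ argument that the committed transcript is accepting. Since acceptance has probability $1-\negl$, uncomputing the coherent execution returns a state negligibly far from $\ket{w}$ by the gentle measurement lemma.

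Soundness is inherited because, in injective mode, the functions are indistinguishable from recovery mode under $\LWE$. In that mode the committed transcript is statistically bound, and the $\NP$ argument is sound, so a cheating prover yields a cheating prover for the amplified protocol. The main obstacle is the soundness of Step 1. Sequential repetition of a computational argument in which the prover keeps quantum state requires care. We expect to need the non-adaptivity to argue that each round's verifier message is freshly independent, combined with a careful reduction that embeds a base-protocol challenge in a random round.
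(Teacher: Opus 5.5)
Your overall plan (amplify using one copy, then run the amplified protocol coherently with recovery-mode trapdoor functions and a state-preserving $\NP$ argument) matches the paper's, and your Step~2 is essentially the paper's high-completeness special case. The gap is in Step~1. You assert that after one round the residual state is within $\mathsf{poly}\cdot 2^{-\secpar/2}$ of $\ket{w}$ by gentle measurement, because $\ket{w}$ is accepted with probability $1-2^{-\secpar}$. That is the acceptance probability of the $\QMA$ verifier. The measurement the prover cannot avoid, after committing to the transcript and proving acceptance in the $\NP$ argument, is the binary verdict of the \emph{base} $\CVQC$ on the coherently computed transcript. Its accept outcome has probability about the base completeness $c$. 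For single-copy Mahadev-style protocols, $c$ is far from $1$ regardless of witness quality, because it inherits the low completeness of the Morimae--Fitzsimons check. If $c$ were $1-\mathsf{negl}(\secpar)$, the difficulty would disappear; that is exactly the case your Step~2 handles. Measuring an event of probability $c$ leaves, after uncomputation, a state with squared overlap only $c$ (or $1-c$) with $\ket{w}$. Nothing forces that state to remain a good witness, so your completeness hybrid (``each round's damage is negligible'') breaks: from the second round on, the prover may no longer hold a witness.

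The missing ingredient is state repair. The prover first runs the CMSZ almost-projective estimator $\mathsf{ValEst}$ on its witness. After the damaging measurement, which the commitments and the $\NP$ argument have reduced to two outcomes (keeping $\mathsf{Repair}$'s expected runtime polynomial), it uncomputes and runs $\mathsf{Repair}$ to restore the estimated acceptance value to within $2\epsilon$. The cost grows with $1/\epsilon$, so you only get $\epsilon=1/\mathsf{poly}$ loss per round, and the witness is \emph{not} preserved in trace distance during amplification. You must therefore track a class of repairable witnesses and keep the total loss $N\epsilon$ below the witness's margin over the threshold $a$. The paper shows $\mathsf{ValEst}$ essentially measures in the eigenbasis of $\tfrac12 V_{\QMA,x}+\tfrac14 I$, so Marriott--Watrous witnesses qualify. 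Negligible disturbance is obtained only at the very end, from your Step~2 applied to the whole amplified protocol.

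Your soundness argument for Step~1 also does not go through as stated. The single-copy base protocols do not have an ordinary completeness/soundness gap; one run of Mahadev's protocol has soundness error above its completeness. What they have is testable soundness: a prover that passes the test branch with probability close to $1$ passes the check branch with probability at most $s<c$. A threshold on accepts, analyzed round by round against ``base soundness,'' therefore has nothing to bound against, since a cheater can sacrifice test rounds to win check rounds.

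The paper's amplified verifier instead rejects if any test round fails and thresholds only the check rounds. The analysis shows that in an accepted transcript, all but $\secpar^{d+1}$ rounds have conditional test-pass probability above $1-\secpar^{-d}$. Testable soundness then caps the conditional check-pass probability of those rounds, and Azuma's inequality is applied to the dependent check-round indicators. The same ``good round'' argument lets the knowledge extractor guess a round (losing a factor $1/N$) in which the base extractor's hypotheses hold.

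Finally, $\secpar$ repetitions are too few when $c-s$ is only inverse-polynomial. Concentration needs $N=\Omega(\secpar/(c-s)^2)$; the paper uses $N=\secpar^{d+2}|x|^d/(c-s)^2$. That choice of $N$ in turn dictates how small $\epsilon$ must be relative to the witness margin.
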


\begin{remark}
    The prover may use one copy of \emph{any} witness $\ket{w}$ which is accepted with probability $1-2^{-\secpar}$. Thus, if short, high-quality witnesses exist, the CVQC prover may use a short witness. In contrast, prior works require multiple copies of $\ket{w}$ regardless of how high-quality it is.
    
    Marriot and Watrous~\cite{CC:MW05} show that every $\QMA_{a,b}$ language, where $a-b = 1/\poly[|x|]$, has short witnesses which are accepted with probability $1-2^{-\secpar}$.
    It should be noted that Marriot-Watrous witnesses have a special form and not every $\QMA_{a,b}$ witness is a $\QMA_{1-2^{-\secpar},2^{-\secpar}}$ witness.
    This limitation is inherent since thresholds below $1-\negl$ allow the possibility of a non-negligible amplitude on a ``junk'' state which cannot be amplified.
\end{remark}

Being simultaneously an argument of knowledge and witness preserving may at first seem contradictory; 
if it is possible to extract the witness from the prover \emph{and} have the prover keep their witness, have we not cloned the witness, which could in general be hard for QMA~\cite{STOC:CKP25}?
Indeed, \cite{VidickZhang21} formalizes such an intuition for non-destructive \emph{proofs} of knowledge (i.e., with statistical soundness).  

Our approach dodges this issue by ensuring that extraction and preservation are both possible at the start of the protocol, but do not happen \emph{simultaneously}. In a real execution, the prover's witness is preserved. However, when extracting an adversarial prover's witness, the extractor interacts differently with the prover in a mode which disables witness preservation. 
An adversarial prover cannot distinguish which mode is occurring, making our $\CVQC$ \emph{both} witness preserving and an argument of knowledge.
See \Cref{sec:overview-general-approach} for more details.

\paragraph{One-Copy Amplification via Non-Destructive Verification.}
To obtain our main result, we build two compilers equipping any non-adaptive $\CVQC$ with the two desired properties.
First, we show how to generically convert any non-adaptive $\CVQC$ with arbitrarily low completeness into one with near-perfect completeness and soundness, \emph{without using extra copies of the witness}.

\begin{theorem}[Informal]\label{informal-thm:amplification}
    Assuming the post-quantum hardness of $\LWE$, there is an efficient compiler that converts any non-adaptive $\CVQC$ protocol with completeness $c$ and soundness error $s$ satisfying $c-s \geq 1/\poly$ into a new $\CVQC$ protocol which has
    \begin{itemize}
        \item \textbf{One-Copy Amplified Soundness and Completeness.}
    It has $1-\negl$ statistical completeness and $\negl$ computational soundness error using \textbf{one copy} of the witness $\ket{w}$ for the original $\CVQC$.
    \end{itemize}
    Furthermore, if the original $\CVQC$ was an argument of knowledge, so is the new one.
 \end{theorem}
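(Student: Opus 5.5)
We prove the statement in three stages: make the base protocol only mildly disturb the witness, repeat it sequentially on the same register, and then argue soundness and knowledge soundness of the repetition. Fix the base non-adaptive $\CVQC$ $\Pi$ with completeness $c$ and soundness $s$, where $c-s\ge 1/\poly$.

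\textbf{Stage 1: mild damage per round.} The verifier's messages in $\Pi$ are independent of the prover's. So one execution can be viewed as the prover applying a fixed unitary to $\ket{w}$ and ancillas, measuring the transcript registers, and sending the outcome. We wrap each prover message in a state-preserving classical argument for $\NP$ and in dual-mode trapdoor functions with state recovery, both built from $\LWE$. Concretely:
\begin{itemize}
\item The prover never sends a transcript in the clear. Instead it evaluates the dual-mode function on the superposed transcript register and sends only the image $y$.
\item In the recovery mode, $y$ carries almost no information about the preimage, so measuring $y$ barely disturbs the state.
\item The verifier then decides acceptance through a state-preserving argument that ``the preimage of $y$ is an accepting transcript.'' This exposes only one bit, namely accept or reject.
\item Afterwards the prover uncomputes the function evaluation and runs the inverse of its unitary.
\end{itemize}
After one round, the witness register is therefore disturbed only by the binary accept/reject projection $\{\Pi_{\mathrm{acc}},\Pi_{\mathrm{rej}}\}$ induced by the verifier's random coins, up to negligible error.

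\textbf{Stage 2: Marriott--Watrous style repetition.} With the two-outcome measurement from Stage 1, we run the rounds sequentially on the same register, using fresh verifier coins each time. The honest prover therefore alternates between measuring ``accept'' projectors $\Pi_{\mathrm{acc}}$ and the reset projector $\ket{\bar 0}\bra{\bar 0}$ on its ancillas, exactly as in Marriott--Watrous. The verifier computes an estimate $\tilde p$ from the number of outcome switches and accepts iff $\tilde p>(c+s)/2$. Jordan's lemma applied to this pair of projectors gives the following: after $O(\secp/(c-s)^2)$ rounds, $\tilde p$ concentrates around the eigenvalue associated with $\ket{w}$. Completeness $1-\negl$ then follows from a Chernoff bound. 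For this we need the Stage 1 emulation to be statistically negligibly close to ideal in every round, so the errors add up to at most $\poly\cdot\negl$.

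\textbf{Stage 3: soundness and knowledge.} Soundness is argued by hybrids, one round at a time:
\begin{enumerate}
\item Switch the dual-mode keys to injective (extraction) mode. This is indistinguishable under $\LWE$.
\item In injective mode, extract the committed transcript of each round.
\item Invoke the soundness of the $\NP$ argument, so that acceptance implies the extracted transcript is accepting for $\Pi$.
\end{enumerate}
After these hybrids, each round becomes a run of $\Pi$ on the prover's current state. Because the rounds are sequential and each uses fresh randomness, each round accepts with probability at most $s+\negl$ conditioned on the past, whatever the prover's residual state. A martingale bound then shows that the fraction of accepting rounds exceeds $(c+s)/2$ only with negligible probability. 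For the argument-of-knowledge claim, we run the original extractor of $\Pi$ on a randomly chosen round inside the injective-mode hybrid. A prover that succeeds non-negligibly must make a $1/\poly$ fraction of rounds accept, and that suffices for the original extractor.

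\textbf{Main obstacle.} The hardest step is Stage 2's requirement that the per-round measurement is \emph{exactly} the binary projective measurement Marriott--Watrous needs. The leakage from the image $y$ and from the argument transcript must be negligible in trace distance, not just computationally hidden, because the completeness guarantee is statistical. I would first try to settle this by proving that, in recovery mode, $y$ is statistically independent of the preimage for every fixed transcript. The state-preserving argument would then be used with its statistical preservation guarantee for honest provers.
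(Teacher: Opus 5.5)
\textbf{Gap in Stage 2.} Your Stage 1 (reducing each round's damage to a binary accept/reject projection via dual-mode functions and the state-preserving $\mathsf{NP}$ argument) and your Stage 3 (injective-mode hybrid, extraction, martingale bound) match the paper. Stage 2 is where the argument breaks: the Marriott--Watrous argument is not available here. MW needs one \emph{fixed} pair of projectors, applied repeatedly by a \emph{trusted} party, so that Jordan's lemma confines the whole outcome chain to a single two-dimensional block. Neither condition holds, for three reasons.

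(i) A round's accept projector, $U_{q}^\dagger\Pi_{V_2(\mathsf{st},\cdot)=1}U_{q}$, depends on the classical verifier's private coins. These coins must be fresh in every round, since the verifier reveals $\mathsf{st}$ so the prover can run the $\mathsf{NP}$ argument. Because they are classical, they cannot be purified and reset the way MW resets the QMA verifier's coin register. Consecutive rounds therefore apply non-commuting projectors with no common Jordan decomposition, and the Jordan analysis gives no concentration. In general, such a random sequence of projective tests decoheres the witness; think of a qubit tested alternately in the computational and Hadamard bases.

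(ii) MW's switch count includes the outcomes of the reset measurements. Here the prover obtains and reports these, so a classical verifier cannot check them. A prover that is rejected in every round and reports a failed reset every time shows essentially no switches, i.e.\ $\tilde p\approx 1$. If you count only the verifiable accept outcomes, you no longer have MW's estimator: under MW alternation the raw accept frequency tends to $1/2$ for any $p$ bounded away from $0$ and $1$. Your Stage 3 in fact analyses a ``fraction of accepting rounds'' rule, not the Stage 2 rule.

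(iii) Even in the ideal setting, MW concentrates only on (near-)eigenvectors. A superposition of a good witness and junk cannot be amplified, so completeness needs a restriction on $\ket{w}$ that your proposal never states.

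\textbf{How the paper handles it.} The missing idea is to separate the verifier's decision from maintaining the witness. In the paper, the verifier rejects if any test round fails and otherwise thresholds the number of accepting check rounds at $\frac{c+s}{2}$. This is exactly the rule your Stage 3 proves sound.

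Keeping the witness alive is the honest prover's private business. Before each round, the prover runs CMSZ's $\mathsf{ValEst}$ with respect to $V_{\QMA,x}$ to obtain $p^*$; it can do this because it runs that verifier itself, with purified coins. After the round, once $q_{[\ell]}$ and $\mathsf{st}$ are known, the damage is a single \emph{fixed} binary projector $U_{q_{[\ell]}}^\dagger\Pi'_bU_{q_{[\ell]}}$. CMSZ's $\mathsf{Repair}$ alternates this projector with $\mathsf{ValEst}$ and restores the estimated quality to within $2\epsilon$ of $p^*$, in expected $\mathrm{poly}(1/\epsilon)$ time.

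The cost is a loss of $O(\epsilon)$ per round, plus a restriction to witnesses in $\Repairable_{p,\epsilon}$ with $p\ge a+O(\epsilon N)$; this class includes MW witnesses and $(1-\negl)$-good witnesses. Choosing $\epsilon=1/\mathrm{poly}$ much smaller than $1/N$ keeps every intermediate witness above $a$. Each check round then accepts with probability at least $c$, and Hoeffding gives completeness.

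\textbf{Misplaced ``main obstacle''.} Statistical independence of $y$ from the preimage is not what preserves the state. Measuring $y$ still entangles $x$ with the randomness $r_x$. What the paper uses is that the recovery-mode trapdoor, revealed after the $\mathsf{NP}$ argument, lets the prover coherently compute and subtract $r_x$.
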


 We prove this theorem by first converting the original $\CVQC$ into one which partially preserves the witness. Specifically, at the end of the protocol the prover has a new witness that is at most $\epsilon$ worse than its original witness, where $\epsilon = 1/\poly$ is a tunable parameter. Then we sequentially amplify completeness and soundness using the leftover witness.

 To use the amplification compiler (\Cref{informal-thm:amplification}), we need to start with a non-adaptive $\CVQC$ that uses a single witness but may have poor completeness and soundness.
The first question the reader may ask is:\\  

{\em Do we even have a $\CVQC$ protocol that uses a single witness, even with poor completeness and soundness?}\\

The answer to this question is a bit complicated, but for now, suppose the answer is yes (which is almost true, and the subtleties are deferred to the technical overview).\footnote{Jumping ahead, our starting $\CVQC$ is a single repetition of Mahadev's $\CVQC$ which uses only one witness (but ensures that $c-s\geq \frac{1}{\sf poly}$ only conditioned on the test round passes with high probability).}

\paragraph{Witness-Preservation via Near-Perfect Completeness.}
Second, we show how to compile any non-adaptive $\CVQC$ with near-perfect completeness into a witness preserving $\CVQC$, again without using extra copies of the witness. We obtain our main result by applying this second compiler to our first one that amplifies completeness and soundness.

\begin{theorem}[Informal]\label{informal-thm:witness-preserving}
    Assuming the post-quantum hardness of $\LWE$, there is an efficient compiler that converts any non-adaptive $\CVQC$ protocol with completeness $1-\negl$ and soundness error $s$
    into a new $\CVQC$ protocol with the same completeness and soundness which is
    \begin{itemize}
        \item \textbf{Witness Preserving.} 
    At the end of the protocol, the prover is left with a state that is statistically close to its input witness $\ket{w}$.
    \end{itemize}
    Furthermore, if the original $\CVQC$ was an argument of knowledge, so is the new one.
\end{theorem}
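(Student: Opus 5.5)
The idea is to have the prover run the original non-adaptive $\CVQC$ entirely coherently, then undo the computation. The obstacle is that the prover's classical messages are measured by the verifier, so the prover cannot simply uncompute.

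Because the protocol is non-adaptive, the verifier's messages do not depend on the prover's. The honest prover can therefore be modeled as a unitary $U$ acting on $\ket{w}$, the verifier's challenges and ancillas, followed by measurement of a message register $\regA$. Near-perfect completeness says the final verifier predicate accepts with probability $1-\negl$ over this superposition.

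The plan is to replace ``prover sends the transcript $a$'' with ``prover convinces the verifier that the superposition on $\regA$ is accepting,'' without collapsing it. This is done in three steps.
\begin{itemize}
\item The verifier sends all its (non-adaptive) messages up front, hidden inside a dual-mode trapdoor function key.
\item The prover commits to the $\regA$ register coherently by evaluating the dual-mode trapdoor function with state recovery. In the real (recovery) mode, the measured image leaks essentially nothing about $a$, so the residual state can be rotated back.
\item The prover gives a \emph{state-preserving} classical argument for the $\NP$ statement ``the committed transcript is accepted by the original verifier on its challenges.'' The verifier supplies the challenges through the trapdoor, so the verifier's decision predicate is an $\NP$ relation given the opening.
\end{itemize}
After the argument, the prover applies the recovery procedure to return $\regA$ to its pre-commitment superposition, then applies $U^\dagger$ to obtain $\ket{w}$ back.

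Completeness and preservation come from a gentle-measurement argument. Every measurement the honest prover performs has an outcome distribution that is either near-deterministic (acceptance, with probability $1-\negl$) or nearly independent of the witness-dependent part (the image in recovery mode, and the transcript of the state-preserving argument). Each measurement therefore disturbs the state negligibly, and a hybrid over the rounds, combined with the state-preservation guarantee of the $\NP$ argument, bounds the total trace distance by $\negl$.

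For soundness, the key is switched to injective mode, which is indistinguishable under $\LWE$. There the commitment is binding, so the committed transcript is extractable using the trapdoor. Soundness of the $\NP$ argument (or its argument-of-knowledge property) then forces the committed transcript to be accepting in the original protocol, which reduces to the original soundness error $s$ up to $\negl$. The same mode switch yields knowledge extraction whenever the original protocol was an argument of knowledge.

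I expect the main obstacle to be the completeness/preservation step. The difficulty is showing that the image measurement in recovery mode, and the state-preserving argument's transcript, are \emph{simultaneously} almost independent of the committed branch, so that recovery succeeds coherently. My first attempt is to define the ideal post-measurement state, namely the recovery-mode preimage superposition times the accepting projector, and to prove by hybrids that each protocol step maps this state to itself up to $\negl$ error. This reduces the problem to the individual guarantees of the two primitives.
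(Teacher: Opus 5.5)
Your preservation argument (gentle acceptance measurement from near-perfect completeness, then the state-preserving $\NP$ argument, then $\Recover$ and uncomputation) and your soundness strategy (switch to $\injective$ mode and extract with the trapdoor) match the paper's. The protocol you apply them to, however, has a genuine gap.

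\textbf{The gap: you collapse the round structure.} You have the verifier send all of its queries at the start and let the prover compute the entire transcript with one unitary $U$ before committing. Non-adaptivity only means the queries do not depend on the prover's answers. It does not allow the prover to see $q_{i+1},\dots,q_\ell$ before it is bound to $a_i$. Soundness of the base $\CVQC$ is a statement about the interactive order. For instance, in Mahadev's protocol the prover must commit before learning whether the round is a test or a Hadamard round. Knowing that bit in advance, it need not behave consistently across the two kinds of rounds, and that consistency is exactly what the test enforces.

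Your reduction breaks at this point. $\widetilde{P}^*$ talks to the external base verifier, which releases $q_1$ alone and demands $a_1$ before sending $q_2$. So $\widetilde{P}^*$ cannot run a compiled prover that expects all queries up front. ``Hiding'' the queries inside the dual-mode key does not help either: $\pubparams$ conceals only the mode, and the honest prover needs each $q_i$ in the clear to apply its unitary.

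\textbf{The paper's fix: keep the original rounds.} In round $i$ the verifier sends $q_i$. The prover applies $P_i(q_i)$ coherently to a fresh answer register, evaluates $f_\pubparams$ on it, and measures and sends $y_i$. Non-adaptivity is used only so that the verifier can continue to $q_{i+1}$ having seen $y_i$ rather than $a_i$. In $\injective$ mode, $\widetilde{P}^*$ then extracts $a_i=\Ext(\td,y_i)$ online and forwards it to the external verifier round by round.

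\textbf{The order of the final messages also matters.} Your sketch blurs it (``supplies the challenges through the trapdoor''). The verifier reveals its secret state $\st$ only after all $y_i$ have been sent, so that the prover knows the statement $(\st,\pubparams,y_{[\ell]})\in\lang_{\NDCVQC}$. It reveals $\td$ only after the $\SPNP$ verdict. This ordering is essential. In $\lossy$ mode each $y_i$ has preimages under essentially every possible $a_i$, so membership in $\lang_{\NDCVQC}$ only requires that \emph{some} accepting $a_{[\ell]}$ exist for $\st$. Soundness therefore comes solely from the prover's inability to compute such preimages without $\td$. Moreover, the hybrid to $\injective$ mode requires the whole experiment up to the verdict to be runnable without $\td$.
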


 To construct our compilers, we develop two techniques that we believe are of independent interest.

\paragraph{Witness-preserving arguments for $\NP$.} We construct an interactive argument for $\NP$ with the guarantee that if the prover starts with a superposition over possible witnesses $\sum \alpha_w\ket{w}$ then this superposition is maintained at the end of the interactive argument.  We refer to such an argument system as {\em witness preserving}.

 \begin{theorem}[Informal]
     There exists a witness preserving interactive argument for $\NP$ assuming the post-quantum ${\sf LWE}$ assumption. 
 \end{theorem}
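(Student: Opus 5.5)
The plan is to make the prover's interaction depend on the witness register $\calW$ (holding $\sum_w \alpha_w\ket{w}$) only through strings that are nearly independent of $w$. The prover then never commits to a single witness, and every operation it applies to $\calW$ can be uncomputed at the end. The natural template is a commit-and-prove protocol built from a dual-mode (lossy/injective) commitment, which we instantiate from $\LWE$. We use its hiding (lossy) mode in the honest execution and its extractable (injective) mode in the soundness proof. The two modes are computationally indistinguishable under $\LWE$.

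\textbf{Construction.} In the first step, the verifier sends the public parameters $\pubparams$ of the commitment, sampled in lossy mode. The prover coherently commits to the witness register, and we pick the commitment so that the distribution of the commitment string $c$ is statistically close to the same distribution $D$ for every $w$. Measuring $c$ then leaves the post-measurement state close to $\sum_w \alpha_w \ket{w}\ket{\mathrm{open}_w}$, with the opening held coherently; here we use that lossy-mode commitments can be sampled with negligible dependence on the message.

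Next, the prover and verifier run a zero-knowledge, public-coin $\Sigma$-protocol for the $\NP$ statement ``$c$ commits to a valid witness for $x$'', repeated in parallel. We use a Blum-style protocol whose first message and response, for a fixed challenge, are produced by a simulator-like map that is again statistically nearly independent of $(w,\mathrm{open}_w)$. A convenient way to get this is to commit to all intermediate values with lossy commitments and open only the challenged pieces. Honest-verifier zero-knowledge then holds \emph{statistically}, because the lossy mode makes the commitments statistically hiding. Statistical HVZK means each transcript message is almost independent of the superposed branch. After each message is measured, the prover's joint state is close to $\sum_w\alpha_w\ket{w}\ket{\mathrm{aux}_w}$, with the auxiliary state $\ket{\mathrm{aux}_w}$ determined by $w$ and the transcript. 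Finally, the prover uncomputes $\ket{\mathrm{aux}_w}$. The protocol is designed so that this is possible: the openings are recomputable from $w$ and the stored randomness, and we ensure the randomness registers can be disentangled. This recovers the original witness state up to negligible trace distance. We bound the damage with a gentle-measurement and hybrid argument that sums the per-message statistical distances.

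\textbf{Soundness.} We switch $\pubparams$ to injective/extractable mode, which a cheating prover cannot detect under $\LWE$. In this mode the commitment binds statistically, so the standard special-soundness argument of the parallel-repeated $\Sigma$-protocol applies. That argument yields negligible soundness error, and also an argument of knowledge through the extraction trapdoor, which gives online extraction without rewinding.

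\textbf{Main obstacle.} The hardest step is making the opened values almost independent of $w$ in every branch, for a fixed challenge. Standard protocols such as graph 3-coloring open values like permuted colors. These are uniform, yet they are correlated with the prover's randomness, and that randomness is entangled with $w$. Measuring the transcript therefore collapses the superposition unless the randomness can later be ``repaired''. My first attempt would be to give the prover's randomness the form $\sum_r\ket{r}$ and to argue that, conditioned on the transcript, the residual randomness for each $w$ can be coherently mapped back to a fixed state. This map would use the trapdoor/state-recovery structure of lattice-based lossy functions, namely preimage sampling to return superpositions over preimages to a canonical state. If that fails, the fallback is to replace the $\Sigma$-protocol with one built from fully homomorphic commitments. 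In that version the prover evaluates the $\NP$ verifier homomorphically and reveals only a single accept bit plus randomness that is statistically independent of $w$.
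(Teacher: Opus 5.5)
There is a genuine gap, and it sits exactly at the step you call the main obstacle. Efficient uncomputation of the auxiliary registers is the entire content of the proof, and your construction asserts it (``the openings are recomputable from $w$ and the stored randomness'') rather than providing it. Without more structure the assertion fails. Once a lossy-mode evaluation $y$ is measured, branch $w$ holds a preimage $r_w$ with $f_{\pubparams}(w;r_w)=y$, and no $w$-independent randomness is left to recompute it from. Suppose an efficient, trapdoor-free unitary mapped $\ket{w}\ket{r_w}\mapsto\ket{w}\ket{0}$ for every branch. Running it backwards on two different $w$ would give two preimages of the same $y$, i.e.\ a claw, which $\LWE$ rules out. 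Statistical HVZK/WI only guarantees that measuring the transcript does not reweight the branches; it says nothing about disentangling them efficiently. The paper handles this by having the verifier send the recovery-mode trapdoor $\td$ \emph{after} it outputs its verdict. Sending it earlier would let a cheating prover open commitments to anything and destroy soundness. The prover then runs $\Recover$ on every unopened commitment. Your ``first attempt'' names the right tool, but your protocol never gives the prover the trapdoor or says when it may be released.

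The second missing idea concerns the opened values, which need no trapdoor at all. In Blum's protocol, with the dual-mode function as the commitment, the opened randomness $r_i,r_j$ is a function of $(c_i,y_i)$ and $(c_j,y_j)$ alone, so it is unentangled with $w$. The permutation register collapses to $\ket{\pi_{c,w,(i,j)}}$, the unique permutation of $[3]$ sending $(w_i,w_j)$ to $(c_i,c_j)$. Uniqueness holds because $w_i\neq w_j$ for every valid coloring in the superposition. This single fact gives two things. First, measuring $(c_i,c_j)$ leaves every branch with the same weight. Second, the controlled swap $\ket{w,\pi_{c,w,(i,j)}}\leftrightarrow\ket{w,\mathsf{id}}$ efficiently clears $\calP$. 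That is precisely what resolves the ``permuted colors are correlated with entangled randomness'' problem you raise, so the FHE fallback is unnecessary.

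Your preliminary commitment to $w$ followed by a commit-and-prove $\Sigma$-protocol does not help. The $\Sigma$-protocol's witness becomes $\sum_w\alpha_w\ket{w}\ket{\mathrm{open}_w}$, so you still need a state-preserving $\Sigma$-protocol, now with an extra trapdoor-dependent opening to uncompute. The paper simply runs Blum's protocol on $w$ directly. Your soundness and knowledge plan matches the paper's and works once the recovery pieces are in place: switch to injective mode, which is undetectable because nothing before the verdict uses $\td$, then invert straight-line.
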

We refer the reader to \Cref{sec:state-pres-np} for details.  
The construction uses the following primitive, which is also used by our compilers.

\paragraph{Dual-mode trapdoor function family with state recovery.}  We define and construct a  family of randomized functions that can be sampled in two modes: $\lossy$ mode or $\injective$ mode.  It has the guarantee that if a function is sampled in the $\lossy$ mode then it is {\em non-collapsing} in the sense that applying $f$ to a state $\sum_{w} \alpha_w\ket{w}$ and measuring the output~$y$, allows one to reconstruct the state $\sum_{w} \alpha_w\ket{w}$ given a trapdoor.  On the other hand, if a function is sampled in the $\injective$ mode then it is injective and hence the state is collapsed to a single~$w$, which can be computed given the trapdoor.  
 \begin{theorem}[Informal]
     There exist dual-model trapdoor function family under the post-quantum ${\sf LWE}$ assumption.
 \end{theorem}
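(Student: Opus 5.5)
The plan is to build the dual-mode family from the LWE-based trapdoor lattice machinery, using Micciancio--Peikert gadget trapdoors for inversion. The public key is a matrix $\vecA \in \mathbb{Z}_q^{n\times m}$ and $f_{\vecA}(w,\vecs,\vece) = \vecs^\top\vecA + \vece^\top + w\cdot \vect$, or a variant that encodes $w$ bitwise in the low-order coordinates. Here $\vecs$ is uniform, $\vece$ is a discrete Gaussian, and the prover applies $f$ to $\sum_w \alpha_w\ket{w}$ against a superposition of randomness.

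The two modes are set up as follows.
\begin{itemize}
\item In $\injective$ mode, $\vecA$ is sampled with a gadget trapdoor, $\vecA = [\vecA_0 \mid \vecA_0\vecR + \vecG]$. Then $(\vecs, \vece, w)$ is uniquely determined by $y$ (with the error bounded well below the decoding radius), and the trapdoor recovers $w$.
\item In $\lossy$ mode, $\vecA$ is replaced by a lossy (LWE-sample) matrix $\vecA = \vecB\vecC + \vecE'$ with small inner dimension. In this mode the term $w\cdot\vect$ can be absorbed: for every $w$, the distribution of $y$ is statistically close to one independent of $w$.
\end{itemize}
Indistinguishability of the two modes follows directly from decision LWE, since the lossy $\vecA$ is pseudorandom and the trapdoored $\vecA$ is statistically close to uniform.

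The key step is state recovery in $\lossy$ mode. After $y$ is measured, the residual state is $\sum_w \alpha_w \ket{w}\otimes\ket{\psi_{w,y}}$, where $\ket{\psi_{w,y}}$ is the normalized superposition over preimage randomness for $(w,y)$. I want a trapdoor-efficient unitary $U_y$ with $U_y\ket{\psi_{w,y}} \approx \ket{\phi_y}$ for a state $\ket{\phi_y}$ independent of $w$. Then discarding the randomness register leaves $\sum_w\alpha_w\ket{w}$ up to negligible error. To get this, I would design the family so that the preimage sets for different $w$ are translates of each other: there is a shift $\Delta_w$, computable from the lossy-mode trapdoor (the low-rank factorization $\vecB,\vecC$), with $(\vecs,\vece)\mapsto(\vecs+\Delta^s_w, \vece+\Delta^e_w)$ mapping preimages of $(0,y)$ to preimages of $(w,y)$. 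The controlled-on-$w$ subtraction of $\Delta_w$ then gives the desired $w$-independent state.

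I expect the main obstacle to be that the Gaussian weights on $\vece$ are not translation invariant. The shifted state therefore differs from $\ket{\psi_{0,y}}$ by a factor of roughly $\exp(-\|\Delta_w\|^2/\sigma^2)$ per coordinate, and the shifted superposition has a different weighting. I would address this with noise flooding: choose $\sigma$ superpolynomially larger than $\|\Delta^e_w\|$, so that the Gaussian state and its shift have inner product $1-\negl$. This requires $q$ superpolynomial, which is still fine under LWE with superpolynomial modulus-to-noise ratio. I would also need a typicality argument over $y$, since for most measured $y$ the conditional states are close, and averaging fidelity over $y$ handles this. A secondary concern is efficient preparation of the uniform/Gaussian superposition, which is standard. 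The final bound on the recovered state's trace distance to $\sum_w \alpha_w\ket{w}$ follows by the triangle inequality over these negligible losses.
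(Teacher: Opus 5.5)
The step that fails is your $\lossy$ mode. For the shift $w\cdot\vect$ to be absorbed, $\vect$ must lie within the noise width of $L=\{\vecs^\top\vecA \bmod q\}$, i.e.\ $\vect=\vecs_*^\top\vecA+\vece_*^\top$ with $\vece_*$ short. The same condition is needed for a shift $\Delta_w$ to exist at all.

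Replacing $\vecA$ by $\mathbf{B}\mathbf{C}+\mathbf{E}'$ does not achieve this. Since $\vecs$ is uniform and $\mathbf{E}'$ makes the matrix full rank, $L$ is still an $n$-dimensional subspace of $\mathbb{Z}_q^m$. Moreover $m\ge n\log q$, because both modes must share dimensions and the gadget trapdoor of the other mode requires it. By counting, the points within $\ell_\infty$-distance $\beta$ of $L$ make up at most a $q^{n-m}(2\beta+1)^m\le 2^{-m}$ fraction of $\mathbb{Z}_q^m$ for $\beta\le q/12$. So a uniformly random $\vect$ is with overwhelming probability far outside the noise width. You give no mechanism by which the low-rank structure would place your key-independent $\vect$ near $L$ in one mode but not in the other.

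As a result, the outputs for $w=0$ and $w=1$ have essentially disjoint supports, and measuring $y$ collapses $w$. The lossiness of $\mathbf{B}\mathbf{C}+\mathbf{E}'$ says that \emph{short} secrets lose information; it does not absorb an additive shift under uniform secrets. For the same reason, the factorization $(\mathbf{B},\mathbf{C})$ gives you no way to compute $\Delta_w$. The natural trapdoor for the shift is the pair $(\vecs_*,\vece_*)$ itself.

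The missing idea is that the mode switch belongs on the shift vector, not on the matrix. In the paper, $\vecA$ carries a Micciancio--Peikert trapdoor in both modes, the key also contains $\vecu$, and $f(b,(\vecx,\vece'))=\vecA\vecx+b\vecu+\vece'$. In $\lossy$ mode $\vecu=\vecA\vecs+\vece$; in $\injective$ mode $\vecu$ is uniform; the two are indistinguishable by LWE.
\begin{itemize}
\item In $\injective$ mode, a uniform $\vecu$ is far from the lattice, so MP-inverting $\vecy$ reveals $b$.
\item In $\lossy$ mode, the preimages of $\vecy$ under $b=0$ and $b=1$ are translates by exactly $(\vecs,\vece)$, which the trapdoor yields. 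Noise flooding with $B/B'$ exponentially small, via the Hellinger bound of BCMVV, gives $2^{-\secpar}$-closeness.
\item The domain $\{0,1\}$ is then extended by parallel repetition.
\end{itemize}
With this change, the rest of your plan goes through: superpolynomial $q$, noise flooding, controlled subtraction of the shift, then discarding the now $w$-independent randomness register. It differs from the paper's $\Recover$ only in the last step. The paper MP-inverts $\vecy$ to obtain the unique preimage for each $b$ and subtracts it, returning the randomness register to $\ket{0}$; your variant needs only $(\vecs,\vece)$.
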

The construction is basically the same as the construction of dual-mode claw-free trapdoor functions from \cite{BCMVV18}, with the addition of a new $\Recover$ algorithm.  We refer the reader to \Cref{sec:dual-trapdoor-recovery} for details.

\paragraph{Witness Preservation Against Malicious Verifiers.}
Finally, we mention the implications of our results to a related interesting scenario where the verifier is malicious and attempts to destroy the prover's witness. In other words,
\begin{quote}
    \begin{center}
        \emph{Can a quantum prover convince a verifier of a $\QMA$ statement without losing their witness \textbf{even if the verifier deviates from the protocol?}}
    \end{center}
\end{quote}

It is not hard to see that any witness-preserving $\CVQC$ (not necessarily non-adaptive) can be generically converted into one with \emph{malicious-verifier} witness preservation, assuming time-lock puzzles and zero-knowledge arguments for $\NP$. 
The verifier sends the prover a time-lock puzzle, containing its randomness; then for each message it sends in the $\CVQC$, it proves that this message is consistent with the time-locked randomness in zero-knowledge.\footnote{For soundness, the zero-knowledge argument and $\CVQC$ must be together faster than the time to solve the time-lock puzzle.}
If the verifier ever deviates from the protocol (e.g. by aborting), the prover can solve the time-lock puzzle and complete the protocol locally to recover its witness. On the other hand, the security of the time-lock puzzle and zero-knowledge argument ensure that a real prover has no information about the verifier's randomness before the time-lock puzzle can be solved, so soundness is still guaranteed.

A conceptually even simpler approach uses classical witness encryption for $\QMA$ (for which there are no plain-model constructions, but there are constructions in the classical oracle model e.g. due to \cite{ITCS:BarMal22}). The verifier encrypts a random message $m$ under the $\QMA$ statement to be proven, then the prover decrypts it and sends back $m'$, whereupon the verifier checks that $m' = m$. 
If the witness encryption is correctly generated, then decryption is almost deterministic and disturbs the prover's witness negligibly. Honest encryption can be guaranteed using a zero-knowledge argument.

\subsection{Related Works}

\paragraph{Classical Proofs of Quantum Knowledge~\cite{VidickZhang21}.} 
Vidick and Zhang observed that proofs of quantum knowledge with classical communication are generally destructive unless the witness can be (inefficiently) cloned, and so an honest prover in a sequentially repeated protocol needs to use multiple copies of the witness in order to succeed. Their analysis applies to perfectly non-destructive, statistically sound proofs and they provide some discussion about extending the results to protocols which only damage the witness by a small amount.\footnote{They also observed that in the specific case of QMA, an unbounded adversary could manufacture as many witness states as they wish. However, we note that the cloning implication still restricts QMA verification if there are at least two valid witnesses, since not all superpositions can be cloned.}
On the other hand, in this work we build arguments of knowledge that are non-destructive.

\paragraph{Proofs of No Intrusion~\cite{eprint:GR25}.}
Recently, Goyal and Raizes investigated a new primitive they called ``proofs of no-intrusion'' (PoNI). A PoNI for encryption is essentially a non-destructive classical argument that a ciphertext has not been ``stolen'', in the sense that no external party can decrypt the ciphertext even given the key. Our results enable these by combining our state-preserving argument for NP with a public-key encryption scheme with publicly verifiable certified deletion (known from just public-key encryption~\cite{TCC:BKMPW23,TCC:KitNisYam23}). See \Cref{app:PoNI} for more details.

However, their main new technical tool achieves something incomparable to our results. Goyal and Raizes introduce PoNIs for coset states, which is essentially a non-destructive classical argument of quantum knowledge specifically for coset states~\cite{C:CLLZ21}. Although conceptually related, PoNIs for coset states are incomparable to our results. Our results allow non-destructive arguments for the general class of $\QMA$, where the statement being proven is public. On the other hand, PoNIs for coset states allow the prover to blindly prove a specific statement about coset states, where the prover \emph{does not know the statement being proven.}

\paragraph{Disambiguation.} \cite{FOCS:LomMaSpo22} also used the term ``state preserving argument'' in the context of arguments of knowledge. There, ``state preserving'' refers to the ability to extract some knowledge from an adversarial prover without noticeably disturbing the prover's internal state. This takes place in a sandbox where the communications with the adversarial prover can be purified or rewound.
In contrast, our usage of state preserving refers to preserving an honest prover's input even in a real execution.

\subsection{Open Problems}
We conclude this section with a set of open directions:
\begin{enumerate}
    \item {\bf Round Complexity.} 
    Can we achieve single-copy $\CVQC$ for $\QMA$ with negligible errors in constant {\em (or even two)} rounds of interaction? 
    Since high-error protocols can only be composed sequentially when there is a single copy of the witness, addressing this question likely requires additional techniques beyond those developed in this work.

    \item {\bf Succinctness.} 
    Can we achieve single-copy $\CVQC$ for $\QMA$ with negligible errors with smaller communication than the instance size? 
    Our compiler from $1-\negl$ completeness to witness preservation can make each message succinct using techniques from \cite{GKNV25}. However, our compiler for amplifying completeness using one copy of the witness uses a number of sequential repetitions which grows with the instance size, preventing it from being succinct even with short messages.

    \item {\bf Relationship between Witness Preservation and Zero-Knowledge.} 
    Witness preservation in $\CVQC$ seems to imply a form of witness hiding: intuitively, because the prover must keep the witness intact and quantum states cannot be cloned, an honest verifier cannot end the protocol holding the witness. Is there a more formal connection between witness preservation and zero-knowledge?

    \item {\bf Other Solutions for Malicious-Verifier Witness Preservation.}
    Can a quantum prover preserve its witness against a malicious verifier using weaker assumptions than time-lock puzzles and zero knowledge for $\NP$ or witness encryption for $\QMA$? This question is even interesting using \emph{quantum communication}, since the verifier is not guaranteed to cooperate with any request to uncompute the protocol.
    
    \item {\bf Further Applications.}
    Can witness-preserving $\CVQC$ serve as a foundation for new unclonable cryptographic primitives, such as classically verifiable quantum money, copy-protected software, or quantum credentials?
\end{enumerate}

\section{Technical Overview}

In what follows, we describe our compiler that compiles any non-adaptive $\CVQC$ protocol, that uses a single copy of the $\QMA$ witness (but may have poor completeness and soundness parameters and may destroy the witness), into a $\CVQC$ protocol that uses a single copy of the $\QMA$ witness and is witness-preserving and has nearly perfect completeness and soundness.

Our compiler works in three stages.

\begin{itemize}
    \item \textbf{Stage 1: $\epsilon$-Non-Destruction and a Special Case.} 
    First, we compile the $\CVQC$ protocol to an $\epsilon$-non-destructive protocol with almost the same completeness $c-\negl$ and soundness error $s+\negl$. Roughly speaking, if the prover starts with a witness $\ket{w}$ which would be accepted by the $\QMA$ verifier with probability $p$ (a ``$p$-good witness''), then at the end of the protocol they have a witness $\ket{w'}$ which is at least $(p-\epsilon)$-good.\footnote{Our $\epsilon$-non-destructive protocol actually uses a special set of ``repairable'' witnesses that include all Marriot-Watrous witnesses and more generally any $(1-\negl)$-good witness, but for now we make the simplifying assumption that the $\epsilon$-non-destructive guarantee applies to any witness.}
    The runtime scales with $1/\epsilon$, so we are limited to $\epsilon = 1/\poly$.

    As a special case of this compiler, if the underlying $\CVQC$ \emph{starts} with $1-\negl$ completeness, then we achieve \emph{witness preservation}: at the end of the protocol $\ket{w'}$ is negligibly far from $\ket{w}$. The runtime only blows up by a factor of the security parameter $\secpar$.  This special case is used in Stage 3, below.

    \item \textbf{Stage 2: Sequential Amplification.} 
    Next, we amplify completeness and soundness by repeating the protocol sequentially using the same witness. Since we can tune $\epsilon$ to any $1/\poly$, if we start with a witness just slightly better than the $\QMA$ threshold, we can ensure that every intermediate witness still surpasses the $\QMA$ threshold.

    Thus, the $\CVQC$ verifier accepts with probability $c-\negl$ in each execution.

    The parameters can also be tuned more aggressively to end up with a $(p - \epsilon')$-good witness $\ket{w'}$ after all executions, 
    
    but we are still limited to $\epsilon' = 1/\poly$ by the runtime blowup from stage 1.

    \item \textbf{Stage 3: Witness Preservation.} 
    Finally, we re-compile the amplified protocol \emph{again}, using the special case of the stage 1 compiler. Since the amplified protocol has $1-\negl$ completeness, the final compiled protocol is witness preserving.
\end{itemize}

The bulk of the technical ideas appear in stage 1, which is the focus of this overview. 

\subsection{Special Case: Almost-Perfect Completeness}\label{sec:overview-high-completeness}\label{sec:overview-general-approach}
We begin with the special case of $c=1-\negl$. The main body of the compiler is the same as the general completeness version, but the witness recovery step is significantly simpler and illustrates the core ideas well.

\paragraph{Classical Verification in Superposition.} 
Ultimately, the goal in $\CVQC$ is to convert a quantum statement into a classical one so that the classical verifier can check it. 

The problem with existing approaches to $\CVQC$ is that measuring said transcript collapses the prover's state. 
If the prover could keep the transcript ``in superposition'' throughout the protocol, then at the end of the protocol (after the verifier measures the verdict), the prover could uncompute the transcript and recover the witness.
This yields something like the following very high-level approach.
\begin{enumerate}
    \item Interactively, the prover \emph{coherently} computes its answers $a_i$ to each of the verifier's $\ell$ queries $q_{i}$ during the $\CVQC$ and does not measure them. At the end, the prover holds a superposition over answers $a_{[\ell]} = (a_1, \dots, a_{\ell})$:
    \[
        \sum_{a_{[\ell]}} \alpha_{a_{[\ell]}} \ket{a_{[\ell]}}
    \]
    Since the verifier's queries are non-adaptive, it doesn't need to see the prover's answers to send the next query.
    
    \item The verifier sends its (classical) secret state from the $\CVQC$ 
    
    to the prover\footnote{Sending the verifier's internal $\CVQC$ state is necessary so that the prover knows the $\NP$ statement it is proving. If the $\CVQC$ is publicly verifiable, this is not necessary, but publicly verifiable $\CVQC$ remains an open problem.} and the prover uses a special argument for $\NP$ to prove that it holds an answer sequence $a_{[\ell]}$ which the verifier would accept, without collapsing its superposition over $a_{[\ell]}$.
    
    Since the completeness is almost perfect, measuring whether $a_{[\ell]}$ is an accepting transcript disturbs the state negligibly.
    
    \item Finally, the prover uncomputes its answers $\sum_{a_{[\ell]}} \alpha_{a_{[\ell]}} \ket{a_{[\ell]}}$ to recover the witness.
\end{enumerate}
Of course, this approach is not sound as written because the prover can change its answers $a_{[\ell]}$ after seeing future queries or even the verifier's internal state. So, the question is:\\

{\em How do we bind the prover to $a_{[\ell]}$  without collapsing it or otherwise preventing the prover from uncomputing its answers and recovering the witness?}

\paragraph{Dual-Mode Trapdoor Functions with State Recovery.}
Our solution to this conundrum is an observation of a novel property of \cite{BCMVV18,Mahadev18}'s trapdoor claw-free functions which we abstract out as a new primitive: dual-mode trapdoor functions with state recovery.
This is a family of randomized functions and trapdoors $(f_\pubparams, \td_\pubparams)$. The tuples can be generated in two indistinguishable modes: $\lossy$ or $\injective$.

In $\injective$ mode, two different inputs $x\neq x'$ cannot collide: $f_\pubparams(x;r) \neq f_\pubparams(x';r')$ for all randomness $r$ and $r'$. Given any $y = f_\pubparams(x;r)$, the trapdoor $\td_\pubparams$ can be used to find $x$.
This mode will be important for proving soundness of our protocol.

In $\lossy$ mode, $f_\pubparams$ is $2^n$-to-$1$, where $n$ is the input size, and $\td_\pubparams$ can be used to invert a preimage $(x; r_x)$ of $y = f_\pubparams(x;r_x)$ for \emph{any} $x\in \{0,1\}^n$, where $r_x$ is the randomness. Similarly to Mahadev's measurement protocol, a quantum party can ``commit'' to a computational basis measurement of a quantum state $\sum_{x} \alpha_x\ket{x}$ by preparing a superposition over all randomness $r$ then coherently evaluating $f_\pubparams(x;r)$ and measuring the result $y$. This results in the state
\[
    \sum_{x} \alpha_{x} \ket{x, r_x}
\]
where $f_\pubparams(x, r_x) = y$ for all $x\in \{0,1\}^n$. 
Here is where the recovery comes in: if the trapdoor $\td_\pubparams$ were revealed, the quantum party can use it to coherently compute $r_x$ controlled on $x$. By subtracting that from the randomness register, they disentangle their state and recover $\sum_{x}\alpha_x \ket{x}$.

Using the $\lossy$ mode, we can update the protocol above as follows.
\begin{enumerate}
    \item The verifier generates a pair $(f_\pubparams, \td_\pubparams)$ in the  $\lossy$ mode, and sends $\pubparams$ to the prover. Interactively, the prover ``commits'' to each answer $a_i$ by evaluating $f_\pubparams$ as discussed previously. At the end, the verifier knows $y_{[\ell]} = (y_1, \dots, y_\ell)$ and the prover holds
    \[
        \sum_{a_{[\ell]}} \alpha_{a_{[\ell]}} \ket{a_{[\ell]}} \otimes \ket{r_{a_{[\ell]}}}
    \]
    where $y_i = f_\pubparams(a_i;r_{a_i})$ for each $i\in [\ell]$.
    
    \item The verifier reveals its secret state for the underlying $\CVQC$ and the prover uses the ``state-preserving'' argument for $\NP$ (which we elaborate on below) to prove that there exist preimages $(a_i;r_{a_i})$ of $y_{[\ell]}$ such that the verifier would accept $a_{[\ell]}$ in the $\CVQC$.

    \item The verifier sends $\td_\pubparams$ to the prover, who uses it to uncompute each $\ket{r_{a_{[\ell]}}}$. Since the superposition over answers $a_{[\ell]}$ is no longer entangled with any external state, the prover can uncompute its answers to recover the witness.
\end{enumerate}
If the state-preserving argument for $\NP$ does not disturb the superposition $\sum_{a_{[\ell]}} \alpha_{a_{[\ell]}} \ket{a_{[\ell]}} \otimes \ket{r_{a_{[\ell]}}}$, then the $\lossy$ mode trapdoor allows recovering the prover's original witness up to negligible disturbance.

\paragraph{State-Preserving Arguments for $\NP$.}

Roughly, we need the prover to be able to use a superposition of $\NP$ witnesses $\sum_{i} \alpha_{i} \ket{w_{\NP,i}}$ to prove the $\NP$ statement, and at the end of the argument they should have a state close to their original witness superposition.

A very simple construction is possible using witness encryption for $\NP$. The verifier encrypts a random message $m$ under the statement to be proven. Then, the prover coherently decrypts $m'$ using their witness superposition. The verifier accepts if $m=m'$. Since decryption is almost deterministic using a valid witness, measuring $m'$ disturbs the prover's superposition negligibly. 
Unfortunately, witness encryption is a relatively strong assumption, which we would like to avoid (and do avoid). 

The next idea is to use a \emph{statistically} witness-indistinguishable (${\sf WI}$) argument. Intuitively, since the verifier's view is statistically independent of which witness was used, the prover avoids measuring their state. 
However, some care is needed here. During the argument, the prover may also entangle their witness with the randomness used. To recover the witness superposition, it is crucial that the prover be able to unentangle their state later.\footnote{It is not hard to come up with examples where unentangling is computationally hard, for example using a claw-free lossy function \emph{without} a trapdoor.}

We show that the well-known 3-coloring ${\sf WI}$ argument, when instantiated using dual-mode trapdoor functions with state recovery, is a state-preserving argument for $\NP$. 
The analysis requires a careful accounting of the prover's randomness -- both for the trapdoor function and for randomness inherent to the 3-coloring protocol -- to show that it can be safely uncomputed at the end of the protocol.

\paragraph{Soundness.}
The proof of soundness reduces to the soundness of the underlying $\CVQC$ by extracting an accepting transcript $a_{[\ell]}$ from the prover.
As a result, the new $\CVQC$ is a proof of knowledge if the underlying $\CVQC$ is.
If $(f_\pubparams, \td_\pubparams)$ is generated in $\injective$ mode, then there is only one preimage $a_i$ of each $y_i$ in step 1. The reduction can use $\td_\pubparams$ to extract each answer $a_i$ from $y_i$ as the prover sends it, then forward $a_i$ to the $\CVQC$ verifier.

The subtle part of the proof is in showing that the extracted $a_{[\ell]}$ will be accepted by the verifier. 
Intuitively, the indistinguishability of $\injective$ and $\lossy$ mode ensures that the adversarial prover cannot \emph{knowingly} change its behavior between the extractor and a real execution. However, the adversarial prover does not know $\td_\pubparams$, so anything extracted using it might change when we switch modes.\footnote{As a concrete example, suppose that the function was specified by an Fully Homomorphic Encryption (FHE) ciphertext $\ct = \Enc(b)$. If the mode is $\lossy$, $b=0$, and if the mode is $\injective$, $b=1$. An evaluation of $x$ is just a homomorphic multiplication of $x\cdot b$. Although an adversarial evaluator might not be able to detect a switch from $b=0$ to $b=1$, homomorphic evaluation allows them to encrypt different messages in the two cases.}

To prevent this possibility, we break the proof into two parts. First, we observe that in $\injective$ mode, the \emph{existence} of a preimage $a_{[\ell]}$ of $y_{[\ell]}$ which the $\CVQC$ verifier would accept, is sufficient to extract an accepting $a_{[\ell]}$, since said preimage is unique.

Then, we use the soundness of the $\NP$ argument to establish that this holds with almost the same probability that the adversarial prover would convince the non-destructive verifier. 
If the $\NP$ argument verifier would accept (in $\lossy$ mode), there exists such a preimage, although it is not necessarily the unique preimage.
Since the $\NP$ argument can be checked \emph{without} $\td_\pubparams$, we can rely on the indistinguishability of $\lossy$ and $\injective$ modes to show that the probability of the $\NP$ argument verifier accepting is almost the same in $\injective$ mode as in $\lossy$ mode.

\paragraph{Dodging the Cloning Bullet.}
As mentioned previously, at first it may seem that the ability to both extract and retain the prover's witness implies the ability to clone it, which would violate the no-cloning theorem. Now that we have explained the core of the protocol, it becomes clearer why our approach does not imply cloning.

During our proof for (knowledge) soundness, we reduce to the knowledge soundness of the original $\CVQC$ by using an $\injective$ mode $f_\pubparams$ to extract the prover's answers for the original $\CVQC$. However, $\injective$ mode does not have the nice recovery property we used to preserve the witness. In fact, measuring an evaluation of an injective mode $f_\pubparams$ \emph{permanently} collapses the witness. 
The only way to extract a witness is to take it away from the prover!

We also mention how our protocol avoids Vidick and Zhang's more formal cloning implication for \emph{proofs} of knowledge~\cite{VidickZhang21}.
Their technique relies on the non-destructivity to query the prover on every possible verifier message and learn its classical response. The resulting table of responses acts as an inefficient clone of the prover.

At that point, the extractor can be applied to each copy of the prover, extracting the prover's witness twice.
In our case, the extraction and witness recovery modes have two disjoint, but indistinguishable sets of verifier messages. The prover's behavior in the witness recovery mode could indeed be cloned, but it would not be possible to extract a witness from the cloned prover. In fact, the cloned prover would be able to distinguish between an $\injective$ and $\lossy$ mode $f_\pubparams$ simply because it does not know how to respond when the verifier sends $f_\pubparams$ in $\injective$ mode.

\subsection{$\epsilon$-Repair with General Completeness.}
\label{sec:over-repair}

In the more general setting of an arbitrary $c$, the prover's measurement in step 2 of whether they hold an accepting $\CVQC$ transcript $a_{[\ell]}$ is no longer gentle. Since that measurement can noticeably disturb the state, uncomputing the trapdoor function randomness $r_{a_{[\ell]}}$ in step 3 does not allow recovering the witness.

Still, we have made some progress. In the original $\CVQC$, the prover needed to measure the entire transcript, which can have exponentially many outcomes. In the new $\CVQC$, the prover only makes a measurement with two outcomes, limiting the possible damage.
Limiting to two possible outcomes enables a repair technique from \cite{CMSZ22}.

\paragraph{CMSZ State Repair.} 
Suppose one had a state $\ket{\psi}$ which passed some verification procedure $V$ with probability $p$, but then measured it with an $N$-outcome projective measurement. The measurement damages the state, resulting in $\ket{\psi'}$. Is it possible to repair $\ket{\psi'}$ so that it is accepted by $V$ with probability close to $p$ again?

\cite{CMSZ22} consider exactly this scenario and give an algorithm to repair $\ket{\psi'}$ with overwhelming probability. For the purposes of non-destructive $\CVQC$, we set $V$ to be the $\QMA$ verifier and the damaging measurement to be coherently generating a $\CVQC$ transcript and measuring whether it is accepting. If the algorithm is successful in restoring $\ket{\psi'}$'s original success probability, then the prover still has a valid witness, albeit potentially a different one.

\paragraph{Runtime Considerations.}
CMSZ's $\Repair$ algorithm is extremely useful, but has a few important limitations. 
First, the expected runtime scales with $N$, the number of outcomes for the damaging measurement. By substituting a measurement with exponentially many outcomes for a measurement with $N=2$ outcomes using our approach so far, we keep the expected runtime polynomial.

Second, $\Repair$ only repairs the success probability to \emph{approximately} $p$, plus or minus $\epsilon$. Although $\epsilon$ can be tuned arbitrarily, the runtime \emph{also} grows with $1/\epsilon$, so we are limited to $\epsilon = 1/\poly$ if we want $\Repair$ to run in expected polynomial time. Fortunately, as discussed at the start of the overview, an $\epsilon=1/\poly$ decay suffices for our purposes.

\paragraph{What Can CMSZ Repair?}
The third and final limitation of $\Repair$ is that it can only make strong repair guarantees for a particular set of ``repairable'' states. 
If the original state $\ket{\psi}$ was a superposition of two repairable states which were accepted with probability $p_1$ and $p_2$, respectively, then the repair procedure will not repair to the average of $p_1$ and $p_2$; instead, it will essentially select one at random and repair to that one. 
As an example, imagine that $\ket{\psi}$ was a  superposition over a ``good'' $\QMA$ witness which is accepted with probability $1$ and some ``junk'' which is accepted with probability $0$. If both the ``good'' witness and ``junk'' were individually repairable, then there is a possibility that $\Repair$ picks the junk state and repairs the success probability to close to $0$, ruining the witness.

We show a more precise picture of which states CMSZ's $\Repair$ almost certainly will repair to an acceptance probability $\geq p^*$, which we call ``$p^*$-repairable''.
The overall effect of repairing a state using CMSZ's techniques can be thought of in three steps:
\begin{enumerate}
    \item Estimate the probability $p$ that $\ket{\psi}$ passes the verification procedure $V$. This potentially disturbs the state.
    \item Damage the state by measuring it.
    \item Repair the state back to acceptance probability $p$, plus or minus a small error $\epsilon$.
\end{enumerate}
The primary factor determining the acceptance probability of the repaired state is the estimate in step 1. CMSZ requires a special estimation procedure that differs significantly from Marriot-Watrous's $\QMA$ verification (which can be viewed as a probability estimation algorithm) to make it compatible with $\Repair$ (see \Cref{sec:high-quality-estimation} for a more in-depth discussion of this point). 
By closely inspecting their estimation algorithm, we observe that it approximately measures $\ket{\psi}$ in the eigenbasis of the operator\footnote{We abuse notation here by using $V$ to mean the ``accept'' operator of the POVM induced by $V$.}
\[
    \frac{1}{2}V + \frac{1}{4}I
\]
and outputs the corresponding eigenvalue, rescaled. This operator has the same eigenstates as $V$ with rescaled eigenvalues. As such, the set of witnesses which can be repaired almost certainly to better than $p^*$ acceptance probability is the span of eigenstates of $V$ with eigenvalues $\geq p^* + \epsilon$.

Letting $a$ be the $\QMA$ threshold, the set of $(a-\epsilon)$-repairable witnesses includes all Marriot-Watrous witnesses. More generally, it can be seen that any witness which is accepted by $V$ with overwhelming probability is overwhelmingly close to being $(1-\negl)$-repairable.

\subsection{Instantiating the Base $\CVQC$}
\label{sec:over-base}
To obtain our main result (and to instantiate our compilers, described below), we need a ``base'' $\CVQC$ protocol where the honest prover uses a single copy of the $\QMA$ witness, and has arbitrary completeness~$c$ and soundness $s$, as long as $c-s\geq 1/{\sf poly}(|x|,\secp)$.

One way to obtain such a $\CVQC$ protocol is to use Mahadev's protocol~\cite{Mahadev18} as the underlying $\CVQC$ protocol.  This requires some care since Mahadev's $\CVQC$ consists of many repetitions of an underlying $\CVQC$ (which~\cite{Mahadev18} constructs). As a result, the prover needs  many copies of the witness, one copy for each repetition.  

It is tempting to use only one copy of Mahadev's protocol, however, a single copy of this protocol  has worse completeness than soundness!  To demystify this, we note that what makes this protocol useful is that it has a special structure: The verifier $V$ is defined via two verification algorithms $(V_\test, V_{\honest})$;  it implements $V_\test$, which is a test phase, with probability $1/2$, and implements $V_{\honest}$, which checks the validity of the witness, with probability $1/2$.  Importantly, $V_\test$ has completeness~$1$ while $V_{\honest}$ has (low) completeness~$c$.\footnote{This follows from the fact that the $\QMA$ witness is converted into a Morimae-Fitzimons \cite{MF16} $\QMA$ witness, which has low completeness.}  The guarantee is that for every $x\in\calL_\no$ and every cheating prover $P^*$ that convinces $V_\test$ to accept with probability close to~$1$, can convince $V_{\honest}$ to accept with probability at most $s$ which is smaller than $c$.  We denote this notion of soundness by {\em testable soundness}.\\

We ensure that both our compilers in  \Cref{informal-thm:amplification,informal-thm:witness-preserving} also compile $\CVQC$ protocols with testable soundness~$s$. We refer the reader to \Cref{sec:main} (and in particular to \Cref{thm:non-des,thm:sequential-rep} for the formal theorems).

We note that an alternative route is to use the \cite{KLVY23} compiler to obtain an  underlying  $\CVQC$, as follows:  
\begin{enumerate}
    \item Take any any ${\sf MIP}^*$ protocol for $\calL$, with arbitrary completeness~$c$ and soundness~$s$, such that $c-s\geq 1/{\sf poly}(|x|,\secp)$, that uses a {\em single} copy of a (repairable) witness.  
    
    For example, one can take the ${\sf MIP}^*$ from \cite{ji2015}.
    \item Apply the compiler from \cite{KLVY23} to convert this ${\sf MIP}^*$ protocol into a $\CVQC$. 
\end{enumerate}
The soundness of the \cite{KLVY23} compiler is still under investigation.  It was proven to be equivalent to the quantum value (assuming the existence of a secure quantum $\FHE$) for any game where the optimal strategy is a finite one \cite{NZ23FOCS,KMPSW25STOC,BaroniEtAl25Asymptotic,baroni2025}.  Unfortunately, it is not clear whether for every $\QMA$ language $\calL=(\calL_{\yes},\calL_{\no})$ and every ${\sf MIP}^*$ for $\calL$, it holds that for every $x\in\calL_{\no}$ the optimal strategy for proving that $x\in \calL_{\yes}$ is finite.

\subsection{Amplifying Completeness and Soundness}
Let us now focus on achieving negligible errors with a single copy of the witness (while potentially allowing the witness to eventually degrade by the end of the protocol).
To do this, we will have the prover and verifier sequentially repeat an $\epsilon$-non-destructive $\CVQC$, with completeness $c$ and soundness $s$ where $c-s\geq 1/{\sf poly}(|x|,\secp)$, $N$ times, where $N$ is a large enough polynomial in $(\secp,|x|,{(c-s)}^{-1})$.
Such an $\epsilon$-non-destructive $\CVQC$ can be obtained by applying the $\epsilon$-repair technique described in Section \ref{sec:over-repair} to the base $\CVQC$ described in Section \ref{sec:over-base}.

In our sequentially repeated protocol, the verifier  will  accept if and only if all the test rounds accept, and at least $\left(\frac{c+s}{2}\right)\cdot N$ of the check rounds accept.  If we start with an $\epsilon$ ``repairable'' witness that is accepted with probability $p$, then after $N$ executions, with overwhelming probability we are left with a witness that is accepted with probability $p - N\epsilon$, where we are able to set $N \epsilon$ to be an arbitrarily small inverse polynomial value.
The degradation is small enough that an honest prover is still able to cause the verifier to accept in $\left(\frac{c+s}{2}\right)\cdot N$ of the check rounds, except with negligible probability. This ensures negligible completeness error.

To argue that the protocol is a proof of knowledge, we will build an extractor that outputs a $\QMA$ witness for $x$ with oracle access to any prover that generates accepting transcripts for $x$ with noticeable probability.
Our extractor will simply pick a random execution $j \gets [N]$ and run the extractor of the underlying $\CVQC$ on this execution: we will prove that this extractor succeeds with noticeable probability in finding a good $\QMA$ witness for $x$.

To prove correctness of this extractor, we will rely on the extractability of the underlying $\CVQC$: namely, given any prover $P^*$ that convinces $V_\test$ to accept with probability close to~$1$, and convinces $V_{\honest}$ to accept with probability noticeably larger than $s$, the underlying extractor outputs a $\QMA$ witness with noticeable probability. 

Then our goal is to simply prove that with noticeable probability, the session $j$ picked by the outer extractor satisfies both the constraints above. We prove this in two parts: first, we 

prove that for accepting transcripts, nearly all prefixes of the transcript satisfy the following: the prover, conditioned on this prefix, will pass $V_{\test}$ with high probability in the upcoming round. We call such a transcript a $\mathsf{Good}$ transcript.

Next, we prove that for nearly all $\mathsf{Good}$ transcripts $\tau$, there is an $i$ such that in session $i$, the prover convinces $V_\test$ to accept with probability close to $1$ as before, but additionally, passes $V_\honest$ with probability greater than $s$. This is proved by contradiction: suppose this were not the case, then we show that the number of check rounds that would accept will be smaller than $(c+s)/2$, and the transcript would be rejected by the verifier. This requires careful tail bounds for non-independent random variables, and in particular we are able to use Azuma's inequality to obtain these bounds.

The proof of soundness proceeds similarly, by noting that for any prover that outputs accepting proofs with noticeable probability, there must exist a round where the prover passes the test round with probability close to $1$, and passes the check round with probability greater than $s$, which gives us a contradiction to the soundness of the base $\CVQC$.    

This gives us a protocol that uses a single copy of the witness, and achieves negligible soundness and completeness errors, albeit at the cost of slightly degrading the witness. We can then apply the compiler described in Section \ref{sec:overview-high-completeness} {\em again} to the resulting protocol to make it witness preserving (i.e. with negligible disturbance to the witness state).

\section{Preliminaries}

\subsection{Quantum Computing}

Let $\calH$ and $\calA$ be  Hilbert spaces.
A quantum algorithm $V$ is a unitary circuit $U_V$ together with a (without loss of generality, computational basis) measurement $\Pi_V = \{\Pi^x_V\}_{x\in X}$. To evaluate it on a state $\ket{\psi}\in \calH$, first append $\ket{0}\in \calA$ to it, then compute $U_V(\ket{\psi}\otimes \ket{0})$ and finally measure the resulting state with respect to $\Pi$. The ancilla registers are discarded afterwards.
The probability of outcome $x$ is
\[
    \|\Pi_V^x U_V (\ket{\psi} \otimes \ket{0})\|^2
\]
In the case where $V$ outputs a bit $x\in \{\Accept, \Reject\}$, we overload $V$ to also mean the operator representing an accepting outcome:
\[
    V \coloneqq (I \otimes \bra{0}) U_V^\dagger \Pi_V^\Accept U_V (I \otimes \ket{0})
\]
Then the probability of accepting a state $\ket{\psi}$ is
\[
    \Tr[V \ket{\psi}] = \bra{\psi} V \ket{\psi}.
\]
Any eigenvector of $V$ with eigenvalue $p$ is accepted by $V$ with probability $p$. A general quantum channel may not be described in terms of a unitary, in which case we refer to a quantum algorithm that implements the channel, i.e. a unitary circuit $U_{\Phi}$ and a measurement $\Pi_{\Phi}$, as a unitary dilation of the channel.

\begin{definition}\label{def:eigen}
    Given an algorithm $V$, we define the Hilbert space $\eigenspace_{\geq a}(V)$ to be the span of eigenvectors of $V$ with eigenvalue $\geq a$. $\eigenspace_{\leq b}(V)$ is defined similarly.
\end{definition}

The trace distance between two quantum states $\rho$ and $\sigma$ is $\frac{1}{2}\|\rho - \sigma\|_1$, where $\|\cdot\|_1$ denotes the trace norm. The distance between two quantum channels is measured by how far apart they map the same state. In other words, 

\begin{definition}[Diamond Distance]
    The diamond distance, which is induced by the completely bounded trace norm,

    between two channels $\Phi_1$ and $\Phi_2$ on $n$ qubits is given by
    \[
        \frac{1}{2}\|\Phi_1 - \Phi_2\|_{\diamond} 
        = \max_{\rho} \frac{1}{2} \|(\Phi_1 \otimes I_n)\rho - (\Phi_2 \otimes I_n)\rho \|_1
    \]
    where $\|\cdot\|_1$ denotes the trace norm.
\end{definition}

\begin{lemma}[Mixed to Pure]\label{lem:mixed-to-pure}
    Let $V$ be a binary-outcome measurement (not necessarily projective). Then any $\rho$ such that
    \[
        \Pr[\Accept \gets V(\rho)] \geq 1-\gamma
    \]
    can be decomposed as $\rho = q_\mathsf{good} \rho_\mathsf{good}  + q_\mathsf{bad} \rho_\mathsf{bad}$, a mixture over two orthogonal mixed states where
    \[
        q_\mathsf{bad} \leq \sqrt{\gamma}
    \]
    and $\rho_\mathsf{good}$ is supported completely on pure states $\ket{\psi}$ such that
    \[
        \Pr[\Accept \gets V(\ket{\psi})] \geq 1 - \sqrt{\gamma}
    \]
\end{lemma}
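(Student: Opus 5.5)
Let $V$ also denote the accept operator of the measurement, so $0 \preceq V \preceq I$. The plan is to decompose $\rho$ along the spectral decomposition of $V$ and then apply Markov's inequality to the rejection probability.

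\textbf{Step 1: reduce to the rejection operator.} Define the rejection operator $R \coloneqq I - V$. The hypothesis becomes $\Tr[R\rho] \leq \gamma$. Diagonalize $R = \sum_j \lambda_j \ketbra{e_j}$ with $\lambda_j \in [0,1]$.

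\textbf{Step 2: define the good and bad subspaces.} Let $\Pi_\mathsf{good}$ be the projector onto the span of eigenvectors with $\lambda_j \leq \sqrt{\gamma}$. In the paper's notation, this is $\eigenspace_{\geq 1-\sqrt{\gamma}}(V)$. Let $\Pi_\mathsf{bad} = I - \Pi_\mathsf{good}$. Both projectors commute with $R$.

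\textbf{Step 3: the candidate decomposition and its difficulty.} The natural choice is $q_\mathsf{good}\rho_\mathsf{good} = \Pi_\mathsf{good}\rho\Pi_\mathsf{good}$ and $q_\mathsf{bad}\rho_\mathsf{bad} = \Pi_\mathsf{bad}\rho\Pi_\mathsf{bad}$. Any pure state $\ket{\psi}$ in the image of $\Pi_\mathsf{good}$ satisfies $\bra{\psi}R\ket{\psi} \leq \sqrt{\gamma}$. It is therefore accepted with probability at least $1-\sqrt{\gamma}$, as the statement requires.

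The main obstacle is that $\rho$ need not equal the sum of these two pinched terms, because the off-diagonal blocks $\Pi_\mathsf{good}\rho\Pi_\mathsf{bad}$ may be nonzero. An exact equality $\rho = q_\mathsf{good}\rho_\mathsf{good} + q_\mathsf{bad}\rho_\mathsf{bad}$ with orthogonal supports requires the two pieces to be block-diagonal with respect to some splitting.

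\textbf{Step 4: handle the off-diagonal blocks.} I would try two routes.
\begin{itemize}
\item The first is to read the lemma operationally, as a statement about the mixture obtained after a (fictitious, analysis-only) measurement of $\{\Pi_\mathsf{good}, \Pi_\mathsf{bad}\}$. The pinched state gives the same acceptance statistics for $V$, since $V$ commutes with both projectors.
\item The second is to use the freedom in the phrase ``supported on pure states''. Write $\rho = \sum_k \mu_k \ketbra{\phi_k}$ in its eigenbasis, which gives orthogonal states. Call $\phi_k$ good if $\bra{\phi_k}R\ket{\phi_k} \leq \sqrt{\gamma}$ and bad otherwise, and group the terms accordingly.
\end{itemize}
I would go with the second route: it gives an exact decomposition into orthogonal mixed states. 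Its good part $\rho_\mathsf{good}$ is a mixture of pure states $\phi_k$ satisfying the acceptance bound; this is the sense in which I read ``supported completely on'', since $\rho_\mathsf{good}$ is not claimed to lie in $\eigenspace_{\geq 1-\sqrt{\gamma}}(V)$.

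\textbf{Step 5: Markov bound on the bad weight.} The quantities $\bra{\phi_k}R\ket{\phi_k}$ are nonnegative, and their $\mu$-weighted average is $\Tr[R\rho] \leq \gamma$. Markov's inequality then gives
\[
q_\mathsf{bad} = \sum_{k:\ \bra{\phi_k}R\ket{\phi_k} > \sqrt{\gamma}} \mu_k < \frac{\gamma}{\sqrt{\gamma}} = \sqrt{\gamma}.
\]
The same calculation bounds $\Tr[\Pi_\mathsf{bad}\rho] \leq \sqrt{\gamma}$ in the first route, so the bound holds whichever reading of the decomposition is taken.
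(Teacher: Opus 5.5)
Your chosen second route is exactly the paper's proof: write $\rho$ in its eigenbasis, call an eigenvector good if it is accepted with probability at least $1-\sqrt{\gamma}$, and bound the bad weight by the averaging (Markov) argument $1-\gamma \le \sum_i q_i \Pr[\Accept \gets V(\ket{\psi_i})] \le 1-\sqrt{\gamma}\, q_\mathsf{bad}$. The only nit is that your strict inequality $q_\mathsf{bad} < \sqrt{\gamma}$ should be $\le$, since it fails (and the division $\gamma/\sqrt{\gamma}$ is undefined) when $\gamma = 0$; the non-strict bound is what the statement asks for anyway.
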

\begin{proof}
    Without loss of generality, we may write
    \[
        \rho = \sum_{i} q_i \ketbra{\psi_i}
    \]
    for some orthogonal basis $\{\ket{\psi_i}\}_i$. 
    Let \[\calI_\mathsf{good}=\big\{i:~\Pr[\Accept \gets V(\ket{\psi_i}] \geq 1- \sqrt{\gamma}\big\}\] and define $\calI_\mathsf{bad}$ to be its complement.
    Define 
    \[
q_{\mathsf{good}} = \sum_{i\in \calI_\mathsf{good}} q_i~~\mbox{ and }~~\rho_{\mathsf{good}} = q_{\mathsf{good}}^{-1}\sum_{i\in \calI_\mathsf{good}} q_i \ketbra{\psi_i}.
\]
Define $q_{\mathsf{bad}}$ and $\rho_{\mathsf{bad}}$ analogously. By definition, $\rho_{\mathsf{good}}$ satisfies the second half of the claim.

    Thus,
    \begin{align*}
        &1-\gamma\leq\\
        &\Pr[\Accept \gets V(\rho)]=\\
        &\sum_{i} q_i \Pr[\Accept \gets V(\ket{\psi_i})]
        =\\ &\sum_{i\in \calI_\mathsf{good}} q_i \Pr[\Accept \gets V(\ket{\psi_i})] + \sum_{i \in \calI_\mathsf{bad}} q_i \Pr[\Accept \gets V(\ket{\psi_i})]\leq
        \\
        & 1 - \sqrt{\gamma} \sum_{i\in \calI_\mathsf{bad}} q_i
    \end{align*}
   Therefore,
   \[
        q_\bad=\sum_{i\in \calI_\mathsf{bad}} q_i \leq \gamma/\sqrt{\gamma} = \sqrt{\gamma},
   \]
   as desired.
\end{proof}

\subsection{Tail Bounds}
Our analysis of sequential repetition will use tail bounds for certain non-independent processes called Martingales.
A martingale is a stochastic process in which the expected value of the next observation, given all prior observations, is equal to the most recent value.
We formally define such a process and tail bounds for this process below (focusing on the simplified finite case, which suffices for our setting).

\begin{definition}[Martingales]

A sequence of random variables $(S_m)_{m \geq 0}$ from a finite universe is called a \textbf{martingale} if

$\mathbb{E}[S_{m+1} \mid S_1, \ldots, S_m] = S_m$ for all $m \geq 0$.  

\end{definition}

\begin{theorem}[Azuma–Hoeffding inequality for Martingales with bounded differences]
\label{thm:azuma}
Let $(S_m)_{m \geq 0}$ be a martingale 

and let $c_1, c_2, \ldots, c_m$ be constants such that for all $1 \leq i \leq m$,
$|S_i - S_{i-1}| \leq c_i
$. Then for any $t > 0$,

        \[\Pr[ S_m - S_0 \geq t ] \leq \exp\left( \frac{- 2 t^2}{ \sum_{k=1}^{m} c_k^2} \right).\]
\end{theorem}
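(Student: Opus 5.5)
The plan is the standard exponential-moment (Chernoff) argument applied to the martingale differences $D_i \coloneqq S_i - S_{i-1}$, with a conditional Hoeffding lemma to control each factor. Let $\mathcal{F}_{i-1}$ denote conditioning on $S_1,\ldots,S_{i-1}$. The martingale property gives $\mathbb{E}[D_i \mid \mathcal{F}_{i-1}] = 0$.

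\textbf{Step 1 (Markov).} For any $\lambda>0$, Markov's inequality gives
\[
\Pr[S_m - S_0 \ge t] \le e^{-\lambda t}\,\mathbb{E}\bigl[e^{\lambda(S_m-S_0)}\bigr].
\]

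\textbf{Step 2 (peeling).} I will bound the moment generating function by peeling off one increment at a time with the tower rule:
\[
\mathbb{E}\bigl[e^{\lambda(S_m-S_0)}\bigr] = \mathbb{E}\bigl[e^{\lambda(S_{m-1}-S_0)}\,\mathbb{E}[e^{\lambda D_m}\mid \mathcal{F}_{m-1}]\bigr].
\]
Each conditional factor is handled by Hoeffding's lemma. If a random variable $D$ has mean zero and takes values in an interval $[a,b]$, then $\mathbb{E}[e^{\lambda D}] \le \exp(\lambda^2 (b-a)^2/8)$. The finite universe makes all conditional expectations elementary sums, so no measure-theoretic issues arise. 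Induction on $m$ then yields
\[
\mathbb{E}\bigl[e^{\lambda(S_m-S_0)}\bigr] \le \exp\Bigl(\tfrac{\lambda^2}{8}\textstyle\sum_k w_k^2\Bigr),
\]
where $w_k$ is the width of the interval containing $D_k$, conditionally on $\mathcal{F}_{k-1}$.

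\textbf{Step 3 (optimize).} Choosing $\lambda = 4t/\sum_k w_k^2$ gives $\exp\bigl(-2t^2/\sum_k w_k^2\bigr)$.

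\textbf{Main obstacle: the width $w_k$.} To obtain exactly the displayed bound $\exp(-2t^2/\sum_k c_k^2)$, one needs $w_k = c_k$. That is, each increment must lie, conditionally on the past, in an interval of length $c_k$.

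The literal hypothesis $|D_k|\le c_k$ only places $D_k$ in $[-c_k,c_k]$, so $w_k = 2c_k$. This gives $\exp(-t^2/(2\sum_k c_k^2))$, and the loss cannot be avoided. Take $m=1$ and $D_1=\pm c_1$, each with probability $1/2$. Then at $t=c_1$,
\[
\Pr[S_1-S_0\ge c_1] = \tfrac12 > e^{-2},
\]
so the displayed inequality fails under the literal hypothesis.

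My plan is therefore to carry out Steps 1--3 under the reading $D_k \in [A_k, A_k + c_k]$, where $A_k$ may depend on $\mathcal{F}_{k-1}$. This is the range form of the Azuma--Hoeffding inequality, and it is the form for which the stated constant holds. I will also note explicitly that under the literal hypothesis $|D_k|\le c_k$, Steps 1--3 give only $\exp(-t^2/(2\sum_k c_k^2))$, so the theorem as worded is false and needs one of these two corrections.

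In the sequential-repetition analysis, the martingale is a sum of terms of the form $X_k - \mathbb{E}[X_k\mid\mathcal{F}_{k-1}]$ with $X_k\in\{0,1\}$. Each such term lies in an interval of length $1$, so the range form applies with $c_k=1$ and the stated constant is what gets used there.
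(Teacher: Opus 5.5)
Your proposal is correct, and your diagnosis of the statement is right. The paper does not prove this theorem; it only quotes it in the preliminaries, so there is no competing argument. Your Markov, peeling and conditional Hoeffding-lemma argument is the standard proof of the range form of Azuma--Hoeffding. Under the literal hypothesis $|S_i-S_{i-1}|\le c_i$ the increments lie only in intervals of width $2c_i$, and the argument gives $\exp\bigl(-t^2/(2\sum_k c_k^2)\bigr)$. Your two-point example ($\tfrac12>e^{-2}$ at $t=c_1$) shows the displayed constant genuinely fails. So either the hypothesis must become $S_i-S_{i-1}\in[A_i,A_i+c_i]$ with $A_i$ determined by the past, or the exponent must be weakened.

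One point to tighten is your remark connecting this to the sequential-repetition analysis. You define $\mathcal{F}_{k-1}$ as the natural filtration $\sigma(S_1,\ldots,S_{k-1})$. The paper's increments, however, are $Y_k=X_k-\mathbb{E}[X_k\mid\tau_{k-1}]$, centered with respect to the transcript prefix $\tau_{k-1}$. The width-one interval $\bigl[-\mathbb{E}[X_k\mid\tau_{k-1}],\,1-\mathbb{E}[X_k\mid\tau_{k-1}]\bigr]$ is determined by $\tau_{k-1}$, but in general not by $S_1,\ldots,S_{k-1}$. Conditioned only on the $S$-history, the range of $Y_k$ can have width up to $2$, so the natural-filtration version does not give $c_k=1$ there.

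You should therefore state and prove the corrected theorem for an arbitrary filtration to which $S$ is adapted, here $\sigma(\tau_k)$. Your peeling argument goes through verbatim, since $S_{k-1}$ is $\tau_{k-1}$-measurable and $\mathbb{E}[Y_k\mid\tau_{k-1}]=0$.

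The misstatement is harmless downstream. Using the paper's own choice of $\gamma$ (with $\secp$ the security parameter), the weaker exponent gives $e^{-\secp^2/2}$ in place of $e^{-2\secp^2}$, which is still negligible.
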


\subsection{QMA}

\begin{definition}\label{def:QMA}
    Let $a, b: \bbN \rightarrow [0,1]$.
    
    there exists a polynomial $p$ such that for every  instance size $n$ there exists a family $\{V_{x}\}_{x\in \lang_\yes \cup \lang_\no}$ of quantum polynomial-time algorithms which take as input a quantum state on $p(n)$ qubits and output a decision bit such that the following properties hold.
    \begin{itemize}
        \item \textbf{Efficiency.} There exists a quantum polynomial-time algorithm that takes as input a classical string $x\in\{0,1\}^*$ and outputs a description of $V_x$.  
        
        \item \textbf{Completeness.} For every $x\in \lang_\yes$, there exists a quantum state $\ket{w}$
        \[
            \Pr[\Accept \gets V_x(\ket{w})] \geq a(n)
        \]
        Such states $\ket{w}$ are called \emph{witnesses}.
        
        \item \textbf{Soundness.} For every $x\in \lang_\no$ and every state $\ket{\psi}$,
        \[
            \Pr[\Accept \gets V_x(\ket{\psi})] < b(n)
        \]
    \end{itemize}
    In general, there may be many verifiers $V=\{V_x\}_{x\in \lang_\yes \cup \lang_\no}$ satisfying these conditions. We say $\{V_x\}_{x\in \lang_\yes \cup \lang_\no}$ decides $\lang \in \QMA_{a,b}$ if it satisfies the above conditions. We say $\ket{w}$ is a witness for $x$ with respect to $V$ if it satisfies the completeness condition using $V_x$.
\end{definition}

In general, the set of witnesses is not a subspace because a witness may be the superposition over a state which is accepted with probability $1$ and a state which is accepted with probability $0$.

Marriot and Watrous \cite{CC:MW05} showed that eigenstates of $V_x$ with eigenvalues $\geq a$ can be amplified to exponentially small completeness error, and moreover that their procedure rejects eigenstates with eigenvalues $\leq b$ with exponentially small soundness error.

\begin{theorem}[\cite{CC:MW05}]\label{thm:MW}
    Let $a, b: \bbN \rightarrow [0,1]$ such that $a(n) - b(n) = 1/\poly[n]$ for all $n\in \bbN$.
    There exists a quantum polynomial-time algorithm  $\MW_{V_x}(1^\lambda, \cdot)$ such that for every promise language $(\lang_{\yes},\lang_\no)$ in $\QMA_{a,b}$, every $x\in \lang_\yes \cup \lang_\no$ with instance size $n\in \bbN$, and every $\lambda \in \bbN$, the following properties hold.
    \begin{itemize}
        \item \textbf{Completeness.} For all $\ket{w} \in \eigenspace_{\geq a}(V_x)$,
        \[
            \Pr[\Accept \gets \MW_{V_x}(1^\lambda, \ket{w})] \geq 1-2^{-\lambda}
        \]
        \item \textbf{Soundness.} For all $\ket{\psi} \in \eigenspace_{\leq b}(V_x)$,\footnote{Note that if $x \in \lang_\no$, then $\eigenspace_{\leq b}(V_x)$ is the entire Hilbert space on $p(n)$ qubits.} 
        \[
            \Pr[\Accept \gets \MW_{V_x}(1^\lambda, \ket{\psi})] \leq 2^{-\lambda}
        \]
    \end{itemize}
\end{theorem}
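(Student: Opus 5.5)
The statement to prove is \Cref{thm:MW}: eigenstates of $V_x$ with eigenvalue at least $a$ (resp.\ at most $b$) are accepted with probability $\geq 1-2^{-\lambda}$ (resp.\ $\leq 2^{-\lambda}$) by $\MW_{V_x}$. I would follow the Marriott--Watrous alternating-projector construction together with Jordan's lemma.

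\textbf{Setup.} Write $V_x = (I\otimes\bra{0})U^\dagger \Pi^{\Accept} U (I\otimes\ket{0})$. Define two projectors on the joint witness--ancilla space:
\[
\Pi_0 = I\otimes\ketbra{0}, \qquad \Pi_1 = U^\dagger \Pi^{\Accept} U .
\]
Then $\Pi_0\Pi_1\Pi_0 = V_x\otimes\ketbra{0}$. By Jordan's lemma the space decomposes into invariant subspaces of dimension at most $2$, on each of which both projectors act as rank-one projectors. In a $2$-dimensional block spanned by $\ket{v}\otimes\ket{0}$, where $\ket{v}$ is an eigenvector of $V_x$ with eigenvalue $p$, the angle $\theta$ between the two projectors satisfies $\cos^2\theta = p$.

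\textbf{Procedure.} $\MW_{V_x}(1^\lambda,\cdot)$ appends $\ket{0}$ and then alternately measures $\{\Pi_1, I-\Pi_1\}$ and $\{\Pi_0, I-\Pi_0\}$, for $2N$ measurements in total. It records the outcome sequence and counts the indices $i$ at which consecutive outcomes agree (measuring $\Pi_1$ after $\Pi_0$ and vice versa). It accepts iff the fraction of agreements exceeds $(a+b)/2$.

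\textbf{Analysis.} The main step is to show that, started from $\ket{v}\otimes\ket{0}$, the evolution stays inside the Jordan block of $\ket{v}$. Within that block, each successive measurement agrees with the previous outcome with probability exactly $p$, independently of the history. This is because in a $2$-dimensional block the post-measurement state is always one of the four basis vectors, and the transition probabilities are $\cos^2\theta$ and $\sin^2\theta$. So the agreement indicators are i.i.d.\ Bernoulli$(p)$.

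For a general $\ket{w}\in\eigenspace_{\geq a}(V_x)$, expand $\ket{w}$ in eigenvectors of $V_x$, all of which have eigenvalue $\geq a$. The distinct Jordan blocks are orthogonal and invariant under both measurements, so the outcome distribution is a convex mixture, with weights $|\alpha_v|^2$, of the single-block distributions. Cross terms vanish because the measurement operators preserve the blocks.

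A Hoeffding bound then gives the following. The fraction of agreements falls below $(a+b)/2$ with probability at most $\exp(-2N(a-b)^2/4)$ for $p\geq a$. Symmetrically, it exceeds $(a+b)/2$ with the same bound for $p\leq b$. Choosing $N = O(\lambda/(a-b)^2) = \poly[n,\lambda]$ makes both errors at most $2^{-\lambda}$, and the procedure runs in polynomial time.

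\textbf{Main obstacle.} The main technical point is the block-invariance and mixture argument: one must verify carefully that measuring in the Jordan basis never mixes blocks, so that the superposition behaves like a probabilistic mixture. One must also handle degenerate $1$-dimensional blocks, where $p\in\{0,1\}$ and the behavior is deterministic. This is the step I would write out in full.
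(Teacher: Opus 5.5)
Your proposal is correct and is essentially the original Marriott--Watrous argument (alternating measurements of $I\otimes\ketbra{0}$ and $U^\dagger\Pi^{\Accept}U$, using Jordan's lemma to reduce to i.i.d.\ Bernoulli$(p)$ agreement indicators, then Hoeffding); the paper does not reprove this result and simply imports it from \cite{CC:MW05}. One detail is worth writing out: when $V_x$ has a degenerate eigenvalue, you may take the blocks to be $\mathrm{span}\{\ket{v_j,0},\Pi_1\ket{v_j,0}\}$ for your chosen orthonormal eigenbasis. These blocks are invariant and mutually orthogonal because $\langle v_j,0|\Pi_1|v_k,0\rangle=\langle v_j|V_x|v_k\rangle=0$ for $j\neq k$, which justifies your mixture decomposition.
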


\subsection{Jordan's Lemma}

We provide here a few useful facts for reasoning about the interaction of two projectors.

\begin{lemma}[Jordan's Lemma]\label{lem:jordan}
    For any two Hermitian projectors $\Pi_A$ and $\Pi_B$ on a Hilbert space $\calH$, there exists an orthogonal decomposition of $\calH = \bigoplus_j S_j$ into one-dimensional and two-dimensional subspaces $\{\calS_j\}_{j}$ (referred to as the \emph{Jordan subspaces}), where each $\calS_j$ is invariant under both $\Pi_A$ and $\Pi_B$. Moreover:
    \begin{itemize}
        \item in each one-dimensional subspace, $\Pi_A$ and $\Pi_B$ act as identity or rank-zero projectors
        \item and in each two-dimensional subspace $S_j$, $\Pi_A$ and $\Pi_B$ are rank-one projectors. In particular, there exist distinct orthogonal bases $\{\ket{\jor_{j,1}^A, \ket{\jor_{j,0}^A}}\}$ and $\{\ket{\jor_{j,1}^B, \ket{\jor_{j,0}^B}}\}$ such that $\Pi_A$ projects onto $\ket{\jor_{j,1}^A}$ and $\Pi_B$ projects onto $\ket{\jor_{j,1}^B}$.
    \end{itemize}
\end{lemma}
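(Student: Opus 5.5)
The plan is the standard argument via the positive operator $P = \Pi_A\Pi_B\Pi_A$, which is Hermitian and positive semidefinite on $\calH$ (finite-dimensional, as throughout the paper). Diagonalize $P$ with an orthonormal eigenbasis. For each eigenvector $\ket{v}$ of $P$ with eigenvalue $\mu\in(0,1)$, first note that $\ket{v}$ lies in the image of $\Pi_A$: since $\mu\ket{v} = \Pi_A(\Pi_B\Pi_A\ket{v})$ and $\mu\neq 0$, we have $\Pi_A\ket{v}=\ket{v}$. Then define $\ket{u} = \Pi_B\ket{v}/\|\Pi_B\ket{v}\|$, where $\|\Pi_B\ket{v}\|^2 = \bra{v}P\ket{v}=\mu$. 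The candidate Jordan subspace is $\calS=\mathrm{span}\{\ket{v},\ket{u}\}$.

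Next I will check that $\calS$ is invariant under both projectors. We have $\Pi_B\ket{v}=\sqrt{\mu}\ket{u}$ and $\Pi_B\ket{u}=\ket{u}$, and $\Pi_A\ket{u} = \Pi_A\Pi_B\ket{v}/\sqrt{\mu}=P\ket{v}/\sqrt{\mu}=\sqrt{\mu}\ket{v}$ (using $\ket{v}=\Pi_A\ket{v}$). Since $0<\mu<1$, $\ket{u}$ is not parallel to $\ket{v}$, so $\calS$ is two-dimensional. On $\calS$, $\Pi_A$ projects onto $\ket{v}=\ket{\jor^A_{1}}$ and $\Pi_B$ onto $\ket{u}=\ket{\jor^B_{1}}$; both are rank one, and the complementary basis vectors come from Gram--Schmidt inside $\calS$.

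For orthogonality, take distinct orthonormal eigenvectors $\ket{v},\ket{v'}$ of $P$ with eigenvalues in $(0,1)$. Then $\braket{v|v'}=0$, $\braket{v|u'}\propto\bra{v}\Pi_B\ket{v'} = \bra{v}\Pi_A\Pi_B\Pi_A\ket{v'}=\mu'\braket{v|v'}=0$ (and symmetrically), and $\braket{u|u'}\propto \bra{v}\Pi_B\ket{v'}=0$. So the resulting subspaces $\calS_j$ are mutually orthogonal. Their direct sum $\calS_{\mathrm{2D}}$ is invariant under both projectors, hence so is its orthogonal complement $\calS_{\mathrm{2D}}^\perp$. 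This is where the only real care is needed.

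The remaining task is to show that $\Pi_A$ and $\Pi_B$ commute on $W=\calS_{\mathrm{2D}}^\perp$; once they do, they can be simultaneously diagonalized there, which gives the one-dimensional invariant subspaces on which each acts as $0$ or the identity. To see commutation, split $\mathrm{im}\,\Pi_A\cap W$ using the eigenvectors of $P$ with eigenvalue $1$ and with eigenvalue $0$.
\begin{itemize}
\item If $P\ket{v}=\ket{v}$ with $\ket{v}\in\mathrm{im}\,\Pi_A$, then $\|\Pi_B\ket{v}\|=1$, so $\Pi_B\ket{v}=\ket{v}$.
\item If $P\ket{v}=0$, then $\Pi_B\ket{v}=0$.
\end{itemize}
Hence on $\mathrm{im}\,\Pi_A\cap W$, $\Pi_B$ is diagonal in this basis. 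Running the symmetric argument with $Q=\Pi_B\Pi_A\Pi_B$ on $\ker\Pi_A\cap W$ shows that $\Pi_B$ preserves both $\mathrm{im}\,\Pi_A\cap W$ and $\ker\Pi_A\cap W$ (it must map $\ker\Pi_A\cap W$ into itself by self-adjointness and invariance of $W$). This gives commutation on $W$ and completes the decomposition.
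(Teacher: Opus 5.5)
Your proof is correct. The paper gives no proof of this lemma: it is quoted in the preliminaries as a standard fact and is only used through \Cref{lem:aba} and the proof of \Cref{lem:cmsz-eigenstates}. So there is nothing to compare against. What you wrote is the standard self-contained construction from the spectrum of $P=\Pi_A\Pi_B\Pi_A$. The two-dimensional part checks out line by line: invariance of $\mathrm{span}\{\ket{v},\ket{u}\}$, dimension two, the rank-one restrictions, and mutual orthogonality via $\bra{v}\Pi_B\ket{v'}=\mu'\braket{v|v'}=0$.

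Two places should be tightened.

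First, on $\mathrm{im}\,\Pi_A\cap W$ you should diagonalize the restriction of $P$ to this $P$-invariant subspace, rather than reuse your original eigenbasis. Inside a degenerate $0$-eigenspace the original basis may mix $\ker\Pi_A$ with $\mathrm{im}\,\Pi_A$. For example, with $\Pi_A=\ketbra{0}$ and $\Pi_B=\ketbra{1}$ one has $P=0$, and the eigenbasis $\{\ket{+},\ket{-}\}$ contains no vector of $\mathrm{im}\,\Pi_A$. You should also say why no eigenvalue in $(0,1)$ survives on this subspace. The reason is that the whole $(0,1)$ spectral subspace of $P$ is spanned by the chosen $\ket{v_j}$, so it lies in $\mathcal{S}_{\mathrm{2D}}$, which meets $W$ only in $0$.

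Second, the sentence invoking ``the symmetric argument with $Q=\Pi_B\Pi_A\Pi_B$ on $\ker\Pi_A\cap W$'' is unnecessary and somewhat misdirected; that argument would naturally live on $\mathrm{im}\,\Pi_B\cap W$. Your parenthetical is the actual argument. $\Pi_B$ is self-adjoint and preserves both $W$ and $\mathrm{im}\,\Pi_A\cap W$. Hence it preserves the orthogonal complement of the latter inside $W$, namely $\ker\Pi_A\cap W$. This gives $\Pi_A\Pi_B=\Pi_B\Pi_A$ on $W=(\mathrm{im}\,\Pi_A\cap W)\oplus(\ker\Pi_A\cap W)$.

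Your construction also yields some useful extras, all coming from $\braket{v|u}=\sqrt{\mu}\in(0,1)$:
\begin{itemize}
\item It gives the ``distinct bases'' clause of the statement, since $\ket{u}$ is neither parallel nor orthogonal to $\ket{v}$.
\item It identifies the Jordan values $p_j$ of the two-dimensional blocks in \Cref{lem:aba} with the eigenvalues of the sandwiched operator lying in $(0,1)$.
\item Because you may choose the eigenbasis freely inside degenerate eigenspaces, any prescribed eigenvector with eigenvalue in $(0,1)$ can be made one of the $\ket{\jor_{j,1}}$. Instantiating your $\Pi_A$ as the reset projector, this is exactly the step the proof of \Cref{lem:cmsz-eigenstates} relies on implicitly when it treats the padded eigenvector as a single Jordan vector.
\end{itemize}
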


\begin{lemma}\label{lem:aba}
    Let $\Pi_A$ and $\Pi_B$ be Hermitian projectors on a Hilbert space $\calH$ and let $\calH = \bigoplus_j S_j$ be the Jordan decomposition of the space corresponding to $\Pi_A$ and $\Pi_B$. Then $\Pi_B \Pi_A\Pi_B$ can be eigen-decomposed as
    \[
        \Pi_B \Pi_A\Pi_B = \sum_{j} p_j \ketbra{\jor_{j,1}^B}
    \]
    where $p_j = \left|\braket{\jor_{j,1}^B | \jor_{j,1}^A}\right|^2$.
\end{lemma}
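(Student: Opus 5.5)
The plan is to reduce to the Jordan decomposition of \Cref{lem:jordan} and check the identity one Jordan subspace at a time. Both $\Pi_A$ and $\Pi_B$ leave every $\calS_j$ invariant, and $\calH = \bigoplus_j \calS_j$ is an orthogonal decomposition. So $\Pi_B \Pi_A \Pi_B$ is block diagonal with respect to it, and it suffices to compute its restriction to each $\calS_j$. Gluing the blocks together then gives a spectral decomposition, because eigenvectors taken from different, mutually orthogonal blocks are orthogonal.

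For a two-dimensional $\calS_j$, \Cref{lem:jordan} says that $\Pi_A$ and $\Pi_B$ restrict to $\ketbra{\jor_{j,1}^A}$ and $\ketbra{\jor_{j,1}^B}$. The restriction is therefore
\[
\ketbra{\jor_{j,1}^B}\,\ketbra{\jor_{j,1}^A}\,\ketbra{\jor_{j,1}^B} = \left|\braket{\jor_{j,1}^B|\jor_{j,1}^A}\right|^2 \ketbra{\jor_{j,1}^B},
\]
which is exactly $p_j \ketbra{\jor_{j,1}^B}$. It has eigenvalue $p_j$ on $\ket{\jor_{j,1}^B}$ and eigenvalue $0$ on $\ket{\jor_{j,0}^B}$.

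For a one-dimensional $\calS_j = \mathrm{span}\{\ket{v}\}$, each projector acts as either the identity or zero. If $\Pi_B$ acts as zero, the block vanishes and contributes nothing. If $\Pi_B$ acts as the identity, set $\ket{\jor_{j,1}^B} = \ket{v}$. The block is then $\ketbra{v}$ when $\Pi_A$ also acts as the identity, and $0$ otherwise. Using the same convention for $\ket{\jor_{j,1}^A}$ (set it to $\ket{v}$ if $\Pi_A$ acts as the identity, and to the zero vector otherwise), we get $p_j = |\braket{\jor_{j,1}^B|\jor_{j,1}^A}|^2 \in \{0,1\}$, which matches the formula.

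The only step needing care is this bookkeeping in the degenerate one-dimensional cases. The lemma's notation $\ket{\jor_{j,1}^B}$ is really defined only for two-dimensional blocks, so the proof has to fix the convention above explicitly, and it must drop the blocks where $\Pi_B$ vanishes, since they contribute zero. Summing over $j$ then yields $\Pi_B\Pi_A\Pi_B = \sum_j p_j \ketbra{\jor_{j,1}^B}$. The vectors $\ket{\jor_{j,1}^B}$ lie in distinct orthogonal subspaces, so they are orthonormal, and this is a genuine eigendecomposition.
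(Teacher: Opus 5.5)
Your proposal is correct and follows essentially the same route as the paper. The paper checks on each Jordan block that $\Pi_B\Pi_A\Pi_B\ket{\jor_{j,1}^B} = p_j\ket{\jor_{j,1}^B}$ and that $\ket{\jor_{j,0}^B}$ contributes nothing, then concludes from $\calH=\bigoplus_j \calS_j$; your explicit convention for the one-dimensional blocks adds care that the paper leaves implicit.
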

\begin{proof}
    For each $\ket{\jor_{j,1}^B}$, we have
    \begin{align*}
        \Pi_B \Pi_A\Pi_B  \ket{\jor_{j,1}^B} 
        &= \Pi_B (\ketbra{\jor_{j,1}^A}\ket{\jor_{j,1}^B}) 
        \\
        &=  \left|\braket{\jor_{j,1}^B | \jor_{j,1}^A}\right|^2 \ket{\jor_{j,1}^B}
    \end{align*}
    Furthermore, $\Tr[\Pi_B \Pi_A\Pi_B \ket{\jor_{j,0}^B}] = 0$ and $\calH = \bigoplus_j \calS_j$, so this characterizes all eigenvectors of $\Pi_B \Pi_A\Pi_B$ with nonzero eigenvalue.
\end{proof}

\subsection{State Repair}

\begin{definition}[Almost Projective Measurement]
A real-valued measurement ${\sf M}$ on $\calH$ is $(\epsilon,\delta)$-almost-projective if applying ${\sf M}$ twice in a row to any state $\brho\in S(\calH)$  produces measurement outcomes $p,p'$ where 
\[
\Pr[|p-p'|\leq \epsilon     ]\geq 1-\delta
\]

\end{definition}

\noindent We borrow the following lemmas from \cite{CMSZ22} (using the formalism from \cite{LMS22}).

\begin{lemma}[\textbf{Value Estimation}, \cite{CMSZ22,LMS22}] \label{lemma:CMSZ-valest}
    Let $\mathcal{H}$ be a Hilbert space. There exists a quantum algorithm $\ValEst$ that satisfies the following guarantees:
    \begin{enumerate}
    \item 
   $(\boldsymbol{\rho^{*}}, p^{*}) \leftarrow \ValEst_{V}(\boldsymbol{\rho}, \epsilon, \delta)
    $  
   is given black-box access to the unitary dilation of a quantum verifier $V$ which takes as input the state $\brho$ and outputs  a bit $\{0, 1\}$.\footnote{$V$ is implemented by a unitary $U$ acting on registers $\calX \otimes \calA$. It takes as input a state on register $\calX$, then initializes $\calA$ to $\ket{0^n}$ and runs $U$. By black-box access to the unitary dilation, we mean that $\ValEst_{V}$ has black-box access to the unitaries $U$ and $U^\dagger$, along with direct access to $\calA$.} On input a quantum state $\boldsymbol{\rho} \in S(\mathcal{H})$ and accuracy parameters $\epsilon, \delta \in (0, 1]$, $\ValEst_V$ outputs a quantum state $\boldsymbol{\rho^{*}} \in S(\mathcal{H})$ and value $p^{*} \in X \subseteq [-\frac{1}{2}, \frac{3}{2}]$ for some discrete set $X$ where $N_{\epsilon, \delta} := |X| = O\left(\frac{1}{\epsilon} \log \frac{1}{\delta}\right)$.\footnote{The reader should think of $p^{*}$ as a noisy estimate of the success probability, whose expectation is in $[0,1]$.}

       \item $\ValEst$ is an oracle circuit with $O\left(\frac{1}{\epsilon} \log \frac{1}{\delta}\right)$ gates.

        \item 
       $          \mathbb{E}\left[p^{*} \hspace{3pt} \middle| \hspace{5pt} (\boldsymbol{\rho^{*}}, p^{*}) \leftarrow \ValEst_{V}(\boldsymbol{\rho}, \epsilon, \delta)\right] = \emph{Pr}\left[V(\brho) = 1
            \right].
        $

   \item  \label{prop-valest-est-to-actual}        For every $p^*\in \bbR$, if \begin{equation}\label{eqn:val}\Pr_{(p,\brho')\gets \ValEst_V(\brho, \epsilon, \delta)}[p\geq p^*] \geq 1-\gamma,
   \end{equation}
   then
   \[\Pr[\Accept \gets V(\rho')] \geq p^* - \gamma - \epsilon -\delta. \]

 \item \textbf{$\ValEst^{V}_{\epsilon, \delta} := \ValEst_{V}(\cdot, \epsilon, \delta)$ is an Almost Projective Family:}
        \begin{align*}
            \Pr\left[|p^{*}-p^{**}| \geq \max\{\epsilon, \epsilon'\} \hspace{3pt} \middle| \hspace{5pt}
            \begin{aligned}
                &(\boldsymbol{\rho}^{*}, p^{*}) \leftarrow \ValEst_{V}(\boldsymbol{\rho}, \epsilon, \delta) \\
                &(\boldsymbol{\rho^{**}}, p^{**}) \leftarrow \ValEst_{V}(\boldsymbol{\rho^{*}}, \epsilon', \delta')
            \end{aligned}
            \right] \leq \max\{\delta, \delta'\}.
        \end{align*}

    \end{enumerate}
\end{lemma}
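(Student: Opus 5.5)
The first three properties and almost-projectivity are taken directly from the construction of \cite{CMSZ22} as formalized in \cite{LMS22}; I would not re-derive them but only recall the structure. The construction uses two projectors on $\calX\otimes\calA$:
\[
\Pi_A = U^\dagger(\Pi_V^{\Accept}\otimes I)U, \qquad \Pi_B = I_\calX\otimes\ketbra{0^n}_\calA .
\]
$\ValEst$ appends $\ket{0^n}$, alternately measures $\Pi_A$ and $\Pi_B$ $O(\frac{1}{\epsilon}\log\frac1\delta)$ times (this gives the gate count in item 2), and finally restores the $\Pi_B$ subspace. Its output is a suitably randomized and shifted function of the agreement pattern, which takes values in a discrete set $X\subseteq[-\frac12,\frac32]$ and is unbiased (item 3). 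By \Cref{lem:jordan} and \Cref{lem:aba}, the process never leaves a Jordan subspace $\calS_j$, and the relevant success probability there is $p_j=|\braket{\jor^B_{j,1}|\jor^A_{j,1}}|^2$, the eigenvalue of $\Pi_B\Pi_A\Pi_B$. Conditioned on $\calS_j$, the output lies within $\epsilon$ of $p_j$ except with probability $\delta$. This per-subspace concentration, applied twice, yields item 5.

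For item 4, which is the part most worth writing out, I would use this Jordan-block structure directly.

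First, write the input as $\brho\otimes\ketbra{0^n}$ and expand it in the eigenbasis $\{\ket{\jor^B_{j,1}}\}_j$ of $\Pi_B\Pi_A\Pi_B$. Let $w_j$ be the weight of block $j$. Since each alternating step preserves every $\calS_j$, the block weights are unchanged: $\brho'$ (with ancilla reset) is a mixture $\sum_j w_j\sigma_j$, where $\sigma_j$ is supported on $\ket{\jor^B_{j,1}}$. Hence
\[
\Pr[\Accept\gets V(\brho')]=\sum_j w_j p_j .
\]

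Next, let $G=\{j: p_j\ge p^*-\epsilon\}$. For $j\notin G$, the output satisfies $p\ge p^*$ with probability at most $\delta$. Therefore
\[
1-\gamma\le \Pr[p\ge p^*]\le \sum_{j\in G}w_j+\delta ,
\]
so $\sum_{j\notin G}w_j\le\gamma+\delta$. Using $p_j\ge0$, this gives
\[
\Pr[\Accept\gets V(\brho')]\ge (p^*-\epsilon)(1-\gamma-\delta)\ge p^*-\epsilon-\gamma-\delta .
\]
The last inequality holds whenever $p^*-\epsilon\le1$. If instead $p^*-\epsilon>1$, the set $G$ is empty, so $1-\gamma\le\delta$ and the claimed bound is at most $0$, which holds trivially.

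The main obstacle is justifying that the post-measurement state is block-diagonal across Jordan subspaces with the original weights. Coherences between different blocks could in principle matter, but they vanish once the estimate is measured, because every measurement in the process commutes with the projectors onto the $\calS_j$. One also has to handle the final ancilla-restoration step of $\ValEst$, which may fail with small probability and should be absorbed into $\delta$.

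If this block-wise argument proves awkward to make rigorous for the exact CMSZ procedure, the fallback is to derive item 4 from items 3 and 5. Run $\ValEst$ a second time on $\brho'$ to get $p''$, with $\mathbb{E}[p'']=\Pr[V(\brho')=1]$ by item 3, and $p''\ge p^*-\epsilon$ except with probability $\gamma+\delta$ by item 5. The cost is a slightly worse constant, coming from the lower bound $p''\ge-\frac12$.
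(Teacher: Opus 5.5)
The paper gives no proof of this lemma. All five items are imported from \cite{CMSZ22,LMS22}. The only place the paper opens up $\ValEst$ is Lemma~\ref{lem:cmsz-eigenstates}. There it shows that, for the padded game, the Jordan blocks are $\ket{\psi}\otimes\ket{0,+_{\top,\bot}}$ with $\ket\psi$ an eigenvector of $V$, and that the estimate concentrates within each block.

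You cite items 1--3 and 5 the same way, but you derive item 4 from exactly that block structure. The law of the estimate is the $w_j$-mixture of the per-block laws. So $\Pr[p\ge p^*]\ge 1-\gamma$ forces $\sum_{j:\,p_j<p^*-\epsilon}w_j\le\gamma+\delta$, and acceptance probability depends only on the $w_j$.

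What your route buys is transparency, and in particular a clean proof for the \emph{input} state: $\Pr[V(\brho)=1]=\sum_j w_jp_j\ge p^*-\epsilon-\gamma-\delta$, with no restoration issue at all. This is in fact the form in which the paper uses item 4. Both Lemma~\ref{lem:valest-to-pure} and the remark after Definition~\ref{def:high-qual} apply it to the state fed into $\ValEst$, not to $\brho'$. Your fallback via items 3 and 5 is sound, but it only gives $p^*-\epsilon-\tfrac32(\gamma+\delta)$, so it does not replace the block argument.

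For the output-state reading, two things need fixing.

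\textbf{First, the coherence step.} Your resolution of what you call the main obstacle is wrong. Measurements that commute with the block projectors do not destroy inter-block coherence. In each outcome branch, block $j$'s amplitude is only rescaled by $\sqrt{\Pr[\text{outcomes}\mid j]}$. For instance, blocks with equal $p_j$ stay fully coherent, so $\brho'$ need not be block-diagonal.

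You do not need block-diagonality, though. On a branch where the ancilla was restored, block $j$'s component is proportional to $\ket{\psi_j}$ tensored with the fixed ancilla state. Here $V\ket{\psi_j}=p_j\ket{\psi_j}$ and the $\ket{\psi_j}$ are orthonormal, so cross terms contribute $\bra{\psi_k}V\ket{\psi_j}=0$. This gives $\Pr[\Accept\gets V(\brho')]\ge\sum_j w_j(1-f_j)p_j$, where $f_j$ is the restoration-failure probability in block $j$.

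\textbf{Second, absorbing $f_j$ into $\delta$.} This depends on the padding that your description of the construction leaves out. With the bare pair $(\Pi_A,\Pi_B)$ and a strict round bound $t$, restoration fails with constant probability when $p_j(1-p_j)\approx 1/t$. Nothing then controls the acceptance probability of the failed branch.

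In the CMSZ construction, the $\ket{+_{\top,\bot}}$ register moves the Jordan values to $p_j/2+1/4\in[\tfrac14,\tfrac34]$. Each reset round then succeeds with probability at least $2\cdot\tfrac14\cdot\tfrac34=\tfrac38$, so $f_j\le(5/8)^t$. Running the estimator at confidence $\delta/2$ then gives exactly $p^*-\epsilon-\gamma-\delta$. The restored block-$j$ state is still accepted by $V$ with probability $p_j$, so the rest of your computation is unchanged.

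A side remark on item 2: within a block, the alternating measurements give i.i.d.\ Bernoulli agreement bits. So $\epsilon$-accuracy needs $\Theta(\epsilon^{-2}\log\frac1\delta)$ rounds, and that procedure does not by itself give the $O(\frac1\epsilon\log\frac1\delta)$ count you attribute to it. Nothing in your item-4 argument depends on this.
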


\begin{lemma}[\textbf{Repair}, \cite{CMSZ22,LMS22}] \label{lemma:CMSZ-repair}
Let $\mathcal{H}$ be a Hilbert space. There exists a quantum algorithm $\Repair$ that satisfies the following guarantees:
\begin{enumerate}
    
\item $\boldsymbol{\sigma^{*}} \leftarrow \Repair_{{\sf M}, \Pi}(\boldsymbol{\sigma}, y, p, T)$ is given black-box access to an $(\epsilon, \delta)$-almost projective measurement ${\sf M}$ and a projective measurement $\Pi=(\Pi_y)_{y \in Y}$ on $\mathcal{H}$. On input a quantum state $\boldsymbol{\sigma} \in S(\mathcal{H})$, an outcome $y \in Y$, a probability $p \in [0, 1]$, and a maximum runtime $T\in \bbN$, it outputs a quantum state $\boldsymbol{\sigma^{*}} \in S(\mathcal{H})$.

\item 
        For any $(\epsilon, \delta)$-almost projective measurement ${\sf M}$ on $\calH$, any projective measurement $\Pi$,

        and any mixed state $\rho$,
        \begin{align*}
            \emph{Pr}\left[|p^{*}-p^{**}| \geq 2\epsilon\hspace{3pt} \middle| \hspace{5pt} \begin{aligned}
                &(\brho_{p^*}, p^*) \leftarrow {\sf M}(\brho) \\
                &(\boldsymbol{\sigma}, y) \leftarrow \Pi(\brho_{p^*}) \\
                &\boldsymbol{\sigma^{*}} \leftarrow \Repair_{{\sf M}, \Pi}(\boldsymbol{\sigma}, y, p^{*}, T) \\
                &(\boldsymbol{\rho^{**}}, p^{**}) \leftarrow {\sf M}(\boldsymbol{\sigma^{*}})
            \end{aligned}\right] \leq N \cdot (\delta+1/T) + 4\sqrt{\delta},            
        \end{align*}
        where $N=|Y|$ is the number of outcomes the projective measurement $\Pi$ can obtain.
        
        \item $\Repair$ is a variable-runtime oracle algorithm making 
        
        $N+4T\sqrt{\delta} + 1$ oracle queries in expectation.
        \end{enumerate}
Setting $T = 1/\sqrt{\delta}$, the expected number of oracle calls is $N+5$ and the probability that $|p^* - p^{**}| \geq 2\epsilon$ is 
$O(N\sqrt{\delta})$.

\end{lemma}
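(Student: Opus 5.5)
The plan is to first prove the lemma for an exactly projective ${\sf M}=\{{\sf M}_p\}_p$, which is the $\epsilon=\delta=0$ case. Afterwards I would reduce the general $(\epsilon,\delta)$-almost-projective case to it, paying the additive $N\delta+4\sqrt{\delta}$ loss.

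\textbf{Repair procedure.} Fix the observed $p^*$ and let $A$ be the projector onto outcomes of ${\sf M}$ within $\epsilon$ of $p^*$ (in the projective case, simply ${\sf M}_{p^*}$). Let $B_y=\Pi_y$ for the observed $y$. $\Repair$ alternately applies the binary measurements $\{B_y,I-B_y\}$ and $\{A,I-A\}$. It stops as soon as the $A$-measurement accepts, or after $T$ rounds.

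\textbf{Projective case.} By Jordan's Lemma (\Cref{lem:jordan}) applied to $(A,B_y)$, the space splits into invariant subspaces of dimension at most $2$. In a two-dimensional block $\calS_j$, $A$ and $B_y$ are rank one with overlap $p_j=|\braket{\jor^B_{j,1}|\jor^A_{j,1}}|^2$ (\Cref{lem:aba}). The post-measurement state $\sigma$ lies in the image of $B_y$, so within each block it is $\ket{\jor^B_{j,1}}$. The alternating process is then a two-state Markov chain in which each measurement flips with probability $1-p_j$ and keeps with probability $p_j$. Because the blocks are invariant, the process never leaves a block, and it succeeds once it lands on $\ket{\jor^A_{j,1}}$.

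The key step is bounding, averaged over the Born-rule choice of $y$, the probability that the chain in the relevant block has not hit $A$ within $T$ rounds. The weight on block $j$ is $p_j$ times the weight on $\ket{\jor^A_{j,1}}$ in $\rho_{p^*}$. The first attempt already succeeds with probability $p_j$; on failure, each later attempt succeeds with probability about $(1-p_j)^2$. The stall probability should therefore telescope to roughly $\sum_j p_j(1-p_j)\min\{1,1/(T(1-p_j))\}\lesssim \dim$-free$/T$ per outcome $y$. Summing over the $N$ outcomes should give failure $\le N/T$, and the same computation bounds the expected number of oracle calls by $N+O(1)$. I expect this averaging-over-$y$ bookkeeping, and showing that it yields a factor of $N$ and not a dimension, to be the main obstacle. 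The first thing I would try is writing $\sum_y \Tr[B_y \cdot]$ and using $\sum_y B_y=I$ to bound each $y$'s expected hitting time by its weight.

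\textbf{Reduction to the almost-projective case.} This is the part where my recollection of the CMSZ/LMS argument is least firm. I would use almost-projectivity to show that ${\sf M}$, after coarse-graining its outcomes into $\epsilon$-windows, behaves like a projective measurement up to disturbance $\delta$ per application. A gentle-measurement argument would then replace each ${\sf M}$ call inside $\Repair$ with its projective idealisation. I would expect the per-application disturbance to contribute the $N\delta$ term and the gentle-measurement step (via \Cref{lem:mixed-to-pure}-style reasoning) to contribute the $4\sqrt{\delta}$ term. The window width $2\epsilon$ would absorb the final re-measurement's jitter.

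\textbf{Final parameter setting.} Plugging $T=1/\sqrt{\delta}$ into the bound $N+4T\sqrt{\delta}+1$ on oracle calls gives $N+5$. Plugging it into the failure bound gives $N(\delta+\sqrt{\delta})+4\sqrt{\delta}=O(N\sqrt{\delta})$.
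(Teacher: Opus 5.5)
The paper does not prove this lemma; it quotes it from \cite{CMSZ22,LMS22}. Your proposal therefore has to stand on its own, and it has a genuine gap in exactly the case the paper needs. The paper applies the lemma with ${\sf M}=\ValEst_{V}(\cdot,\epsilon,\delta)$, which is only $(\epsilon,\delta)$-almost projective, not projective.

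Your reduction assumes that coarse-graining ${\sf M}$ into $\epsilon$-windows gives something projective up to a disturbance of $\delta$ per call. Almost-projectivity only says that two successive outcomes are consistent, and it provides no such projective surrogate. For example, take the measurement that measures in the computational basis and outputs $f(x)+\nu$ with fresh noise $|\nu|\le \epsilon/2$.
\begin{itemize}
\item It is $(\epsilon,0)$-almost projective.
\item Yet the event ``outcome within $\epsilon$ of $p^*$'' accepts basis states whose value $f(x)$ lies near the edge of the window with probabilities strictly between $0$ and $1$.
\item So even with $\delta=0$, the window's POVM element is far from every projector, and there is no pair of projectors for Jordan's lemma to act on.
\end{itemize}
This jitter is also why the conclusion only promises $2\epsilon$.

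Separately, a gentle-measurement substitution costs order $\sqrt{\delta}$ per call, and $\Repair$ may make up to $T$ calls. Call-by-call accounting therefore gives $O(T\sqrt{\delta})$, which is $\Theta(1)$ at $T=1/\sqrt{\delta}$, not the $T$-independent $4\sqrt{\delta}$ in the statement.

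The shape of the bounds is informative. The failure bound is $N(\delta+1/T)+4\sqrt{\delta}$ and the expected query count is $N+1+4T\sqrt{\delta}$. This suggests the non-projectivity is paid essentially once, as a probability-$O(\sqrt{\delta})$ event after which the loop may run to its cap $T$. On top of that, each ${\sf M}$-call costs $\delta$, so that the stopping outcome is consistent with the final measurement. Getting this requires analysing the alternating ${\sf M}$/$\Pi_y$ process using almost-projectivity directly, and your proposal does not supply that argument. As written, it establishes only the projective case.

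In the projective case your plan is the standard Jordan-lemma argument and it works, but one step needs correcting. Starting from $\ket{\jor^A_{j,0}}$, the binary measurement $\{\Pi_y,I-\Pi_y\}$ followed by $\{A,I-A\}$ lands in $\ket{\jor^A_{j,1}}$ with probability $2p_j(1-p_j)$, not $(1-p_j)^2$. Two flips return you to the rejecting state, and with $(1-p_j)^2$ you would only get a $1/\sqrt{T}$ rate.

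With the correct value, fix $y$ and let $w_j$ be the weight of $\brho_{p^*}$ on $\ket{\jor^A_{j,1}}$ in the Jordan decomposition of $(A,\Pi_y)$.
\begin{itemize}
\item Outcome $y$ together with block $j$ has joint weight $w_jp_j$.
\item That block contributes failure $w_jp_j(1-p_j)\bigl(1-2p_j(1-p_j)\bigr)^{T-1}\le w_j/(2e(T-1))$.
\item Since $\sum_j w_j\le 1$ for each fixed $y$, summing over $y$ gives at most $N/(2e(T-1))$.
\item The same bookkeeping gives expected calls $\sum_y\sum_{j:p_j>0} w_jp_j(1+1/p_j)\le N+1$.
\end{itemize}
So averaging over $y$ is not the obstacle; the almost-projective case is.
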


\subsection{Classical Verification of Quantum Computation}\label{sec:cvqc}

\begin{definition}\label{def:CVQC}
    A Classical Verification of Quantum Computation $(\CVQC)$ protocol for a $\QMA_{a,b}$ language $\lang = (\calL_\yes,\calL_\no)$  associated with a verifier $V_\QMA = \{V_{\QMA, x}\}_{x\in \lang_\yes \cup \lang_\no}$ deciding $\lang\in \QMA_{a,b}$,  is an interactive protocol between a quantum prover $P$ and a classical verifier $V$.

    It satisfies the following for some parameters $c = c(\secpar,|x|)$ and $s = s(\secpar,|x|)$.
    \begin{itemize}
        \item {\bf Completeness $c$.} 
        For every $\secp\in\mathbb{N}$, every $x\in\calL_\yes$ of size at most $2^\secp$, and 
        every witness $\ket{w}$ such that
        \[
            \Pr[V_{\QMA,x}(\ket{w})=1]\geq a,
        \]
       it holds that

        \[
            \Pr[(P(\ket{w}),V)(1^\secp,x)=1]\geq c(\secp,|x|).
        \]
    \begin{remark}
        Many of the $\CVQC$ protocols in the literature (such as \cite{Mahadev18}) need many copies of the witness to obtain completeness.  The focus of this work is on constructing $\CVQC$ protocols where the prover uses only a single copy of the witness, and hence we define completeness in this restrictive manner.
    \end{remark}

        \item {\bf Computational soundness $s$.}  For every $\QPT$ cheating prover~$P^*$, 
        every $\secp\in\mathbb{N}$ and every $x\in\calL_\no$ 
        
        it holds that for for every state $\ket{\psi}$, 
        
        \[
        \Pr[P^*(\ket{\psi}),V)(1^\secp,x)=1]\leq s(\secp,|x|).
        \]

        \item {\bf Efficiency.}  For every $\secp$ and every $x\in\calL_\yes$ with a corresponding witness $\ket{w}$, the honest prover $P(1^\secp,x,\ket{w})$ is a (uniform) quantum circuit of size ${\sf poly}(\secp, |x|)$, and the verifier $V(1^\secp,x)$ is a classical probabilistic polynomial time circuit of size ${\sf poly}(\secp, |x|)$.
    \end{itemize}
\end{definition}

A CVQC protocol can additionally be a {\em classical proof of quantum knowledge}~\cite{VidickZhang21},  
which is a generalization of the classical concept of a proof of knowledge to the quantum setting.

\begin{definition}[$\eta$-Argument of  Knowledge]
\label{def:pok}
    Let $\calR$ be a $\QPT$ algorithm outputting a decision bit, which we call the ``$\QMA$ relation''.
    An interactive protocol between a classical verifier and a quantum prover is an $\eta$-argument of (quantum) knowledge for $\calR$  if there exists a $\QPT$ extractor $\cE$ such that for every statement $x \in \lang_\yes \cup \lang_\no$ and every dishonest $\QPT$ prover $P^*$ which convinces the verifier with  probability $s(\secp,|x|)$,

    the extractor outputs a quantum state satisfying
    \[
        \Pr[\Accept \gets \calR( 1^\secpar,x, \rho): \rho \gets \cE(1^\secpar,P^*, x)] 
        \geq \eta(\secp,|x|)s(\secp,|x|).
    \]
\end{definition}

This definition is weaker than the one in the classical setting, which guarantees that if $P^*$ convinces the verifier w.p.~$\frac{1}{p(\secp)}$ then the extractor extracts a valid witness with overwhelming probability (after running in time ${\sf poly}(\secp,p(\secp))$.  This difference is inherent since if $P^*$ has a state $\alpha \ket{w}+\beta\ket{\sf junk}$ then we cannot hope to extract $\ket{w}$ with probability greater than $|\alpha|^2$.

\paragraph{Testable Soundness and Proof of Knowledge.} For use in our technical sections, we now define {\em testable} soundness and proof of knowledge. Intuitively, we amplify by sequentially repeating a base CVQC protocol with high completeness and soundness errors, and indeed where the completeness error is higher than the soundness error. Fortunately, the usefulness of this protocol comes from a special structure: the verifier $V$ is specified by a pair of verification algorithms $(V_\test, V_\honest)$. On any execution, the verifier chooses between them uniformly at random: it runs a test phase $V_\test$ with probability $1/2$ and a check phase $V_\honest$ with probability $1/2$. Crucially, $V_\test$ has \emph{completeness~$1$}, whereas $V_\honest$ has a \emph{lower completeness} $c$\footnote{This arises because the original $\QMA$ witness is transformed into a Morimae--Fitzsimons~\cite{MF16} $\QMA$ witness, which has low completeness.}.  

The protocol's security guarantee is as follows: for every $x\in\mathcal{L}_\no$ and any cheating prover $P^*$ that succeeds at convincing $V_\test$ with probability close to $1$, the probability of convincing $V_\honest$ is bounded above by $s < c$. We define \emph{testable soundness} below, parameterized by $s$---the upper bound on the probability that $V_{\honest}$ accepts $x\in\mathcal{L}_\no$.

\begin{definition}[$s$-Testable Soundness]
\label{def:2part}  
A $\CVQC$ protocol for language $\lang$ has {\em testable soundness}~$s=s(\secp,|x|)$ if there is a constant $\zeta>0$ such that the following holds. The verifier uniformly randomly picks one out of two verification algorithms: a ``test'' algorithm $V_\test$ and a ``check'' algorithm $V_\honest$, where $V_\test$ has perfect completeness,
and for every $\QPT$ cheating prover~$P^*$, 
every $\secp\in\mathbb{N}$ and every $x\in\calL_\no$ 

it holds that for for every (mixed) state $\brho$ consisting of at most ${\sf poly} (\secp, |x|)$ qubits, if 
        \begin{equation}\label{eqn:test-sound}
        \Pr[P^*(\brho),V_{\test})(1^\secp,x)=1] \geq \testsoundness 
        \end{equation}
then
        \begin{equation}\label{eqn:check-sound}
        \Pr[P^*(\brho),V_{\honest})(1^\secp,x)=1] \leq s(\secp,|x|)
        \end{equation}

\end{definition}

The following definition generalizes the definition of testable soundness to an argument-of-knowledge setting, where one can extract a witness from a prover that succeeds in both phases. 
The key property is as follows: whenever a prover succeeds in both (i) convincing $V_\test$ with high probability and (ii) convincing $V_\honest$ with at least some threshold probability $s$, there exists a \emph{quantum polynomial-time extractor} $\mathcal{E}$ that can recover a valid witness $\sigma$ for the underlying $\QMA$ relation $\mathcal{R}$ with probability at least $\delta$. 

\begin{definition}[$(d,s,\delta)$-Testable Argument of Knowledge]
\label{def:2partpok}
Let $\calR$ be a $\QPT$ algorithm outputting a decision bit, which we call the ``$\QMA$ relation''.
A $\CVQC$ protocol for language $\lang$ is a $(d, s, \delta)$  
{\em testable argument of knowledge} for $\calR$ if the following holds, where $d$ is a constant, $s = s(\secp,|x|)$ is a polynomial, and $\delta$ is a function of $(\secp, |x|)$ that takes values between $0$ and $1$. The verifier uniformly randomly picks one out of two verification algorithms: a ``test'' algorithm $V_\test$ and a ``check'' algorithm $V_\honest$, where $V_\test$ has perfect completeness, and there is a $\QPT$ extractor $\cE$ such that for every $\QPT$ prover~$P^*$, 
every $\secp\in\mathbb{N}$ and every $x\in\calL_\yes \cup \calL_\no$

it holds that for for every state $\brho$ consisting of at most ${\sf poly}(\secp,|x|)$ qubits, if
        \begin{equation}\label{eqn:test-pok}
        \Pr[P^*(\brho),V_{\test}(1^\secp,x)=1] \geq \poksoundness 
        \end{equation}
and \begin{equation}\label{eqn:check-pok}
        \Pr[P^*(\brho),V_{\honest}(1^\secp,x)=1] \geq s(\secp,|x|)
        \end{equation}

then
    \[
    \Pr[\mathsf{Accept} \gets \calR(1^\secp, x, \sigma): \sigma \gets \cE^{P^*}(\brho, 1^\secp, x)] \geq \delta(\secp,|x|)
    \]
\end{definition}

We note that {\em a single repetition of} Mahadev's $\CVQC$ protocol~\cite{Mahadev18} has completeness $c$ w.r.t. $V_{\honest}$ and satisfies $s$-testable soundness with $(c - s) > \frac{1}{\mathsf{poly}(\secp,|x|)}$. It is also a $(d, s,\delta)$-testable argument of knowledge for a large enough constant $d$, with $(c - s) > \frac{1}{\mathsf{poly}(\secp,|x|)}$

and $\delta(\secp) > 1 - 1/\secp$. The testable argument of knowledge property is proven implicitly in Theorem 6.5 of~\cite{VidickZhang21}.

\paragraph{Non-adaptive Verifiers.}
Finally, we will only compile protocols that admit non-adaptive verifiers, which we define below. First, we provide some notation.

If the $\CVQC$ protocol consists of $\ell$ back-and-forth rounds, we denote by $(q_1,\ldots,q_\ell)$ the verifier's messages, and by $(a_1,\ldots,a_\ell)$ the prover's messages, where in round $i$ the verifier sends query $q_i$ and the verifier responds with answer $a_i$. 

\begin{definition}\label{def:non-adaptive-cvqc}
    An $\ell$-round $\CVQC$ protocol is said to have a non-adaptive verifier if the verifier's queries do not depend on the provers answers; namely, a non-adaptive verifier can be partitioned into two ${\sf PPT}$ algorithm $V=(V_1,V_2)$, where $V_1$ is the query sampler 
    \[
    (\st, q_1,\ldots,q_\ell)\gets V_1(x,1^\secp)
    \]
    and $V_2$ is the verdict function 
    \[    V_2(\st, a_1,\ldots,a_\ell)\in\{0,1\}.
    \]
\end{definition}
We note that all $\CQVC$ protocols that we are aware of have a non-adaptive verifier ~\cite{Mahadev18,TCC:ACGH20,TCC:ChiChuYam20,C:BKLMMV22,FOCS:Zhang22,KLVY23,FOCS:MetNatZha24,GKNV25,C:BKMSW25,C:BarKhu25}\footnote{\cite{C:BarKhu25}'s verifier messages technically depend on the prover's messages. However, that part of the protocol is independent of the statement being proven, so we can view it as a non-adaptive $\CVQC$ in the (classical) preprocessing model, which is also sufficient for our compilers.} (though they don't all satisfy our desired completeness guarantee that completeness holds with a single witness).

\section{Repairing High-Quality Witnesses}\label{sec:high-quality-repair}

In this section, we prove that \cite{CMSZ22}'s state repair procedure can repair high quality $\QMA$ witnesses -- those that are accepted with overwhelming probability -- with overwhelming probability.

In particular, \cite{CMSZ22}'s state repair procedure uses a special probability estimation algorithm $\ValEst$.\footnote{Technically their state repair procedure can be used with any almost-projective measurement, but a probability estimation algorithm is required to repair the probability of acceptance of a QMA witness.} 
Their repair procedure can be thought of as using $\ValEst$ to estimate the probability $p$ of $V(\ket{\psi})$ accepting, then repairing a damaged $\ket{\psi}$ so that the repaired state is accepted with probability $p -\epsilon$. However, $\ValEst$ is only guaranteed to be accurate \emph{on average}. For example, if $\ket{\psi}$ has success probability $p$, and is a superposition of $\ket{\psi_1}$ and $\ket{\psi_2}$, where $\ket{\psi_1}$ is accepted with probability $1$ and $\ket{\psi_2}$ is accepted with probability $0$, the $\ValEst$ procedure may collapse $\ket{\psi}$ into a mixture of $\ket{\psi_1}$ and $\ket{\psi_2}$. If the witness is collapsed to $\ket{\psi_2}$, then the repair procedure will never return the success probability to $\approx p$.
Although intuitively an overwhelmingly-good witness cannot collapse like this, that fact surprisingly does not follow from the black-box properties of $\ValEst$ presented in \cite{CMSZ22}.\footnote{In more detail, they only give the implication that $p=\Pr[\Accept \gets V(\rho)]$ implies that $\ValEst(\rho)$ outputs~$p$ on expectation  (property 3 of \Cref{lemma:CMSZ-valest}). Unfortunately, even if $p$ is close to~$1$ this is not enough to argue that $\ValEst$ outputs a value close to~$1$ with high probability because $\ValEst$ can return an estimate which is larger than~$1$, since its output is in $[-\frac12,\frac32]$.}

To ensure that we can repair a $\QMA$ witness $\ket{\psi}$ for $\QMA$, we need to ensure that $\ValEst(\ket{\psi})$ gives an estimate close to the probability that $\ket{\psi}$ passes the $\QMA$ verification procedure (e.g.\ $2/3$) \emph{with overwhelming probability}. 
This will rule out the possibility of a problematic collapse, except with negligible probability.

We show in this section that if $\ket{\psi}$ is an eigenvector of the verification procedure $V$ with eigenvalue $p$, then $\ValEst(\ket{\psi})$ returns an estimate $p \pm \epsilon$ with overwhelming probability. 
Such states are exactly Marriot-Watrous witnesses when $p$ is greater than the QMA threshold (e.g. $2/3$).
As a corollary, any damaged witness which was originally accepted by $V$ with overwhelming probability can be repaired to a state which is accepted with $1-3\epsilon-\negl$ probability.\footnote{Running $\ValEst$ may reduce the accepting probability by an additive factor of $\epsilon$, and running $\Repair$ may reduce the accepting probability by another additive factor of $2\epsilon$ (see \Cref{lemma:CMSZ-repair}).}

We prove a concentration inequality for $\ValEst$ on high-quality witnesses in \Cref{sec:high-quality-estimation} (\Cref{lem:cmsz-eigenstates} and \Cref{coro:valest-high-quality}). Then, we use this to show in \Cref{sec:high-quality-repair-subsec} that repairing a damaged high-quality witness results in a state $\rho$ which is overwhelmingly supported on witnesses of $1-3\epsilon$ quality.

\paragraph{A Note on Other Probability Estimation Algorithms.}
\cite{CC:MW05}'s original probability estimation algorithm is unfortunately \emph{not} compatible with the repair technique, because it is not approximately projective. \cite{TCC:Zhandry20} refines \cite{CC:MW05}'s algorithm so that it becomes approximately projective, although the resulting algorithm runs in {\em expected} polynomial time as opposed to strict polynomial time. Unfortunately, \cite{CMSZ22}'s repair procedure requires the $(\epsilon,\delta)$-approximately-projective measurement to run in \emph{strict} polynomial time.\footnote{For exampley be tempting to simply truncate the runtime in Zhandry's algorithm. However the extra runtime is necessary to return the state to its ``original form'' so that the procedure is approximately projective. Specifically, if the true value being estimated is $p$, it takes $\approx 1/p$ time to return the state to its ``original form.''} \cite{CMSZ22} modifies Zhandry's algorithm so that it can be truncated while still being approximately projective.

\subsection{Estimating the Acceptance Probability of a High-Quality Witness}\label{sec:high-quality-estimation}

\begin{lemma}\label{lem:cmsz-eigenstates}
    Let $V$ be a binary-outcome quantum algorithm and let $\ValEst$ be \cite{CMSZ22}'s probability estimation algorithm. For every eigenstate $\ket{\psi}$ of $V$ with eigenvalue $p^*$,
    \[
        \Pr\big[|p - p^*| > \epsilon : (p, \rho) \gets \ValEst_{V}(\ket{\psi}, \epsilon, \delta)\big] \leq \delta.
    \]
    Furthermore, if $\ket{\phi} = \sum_{j}\alpha_j \ket{\psi_j}$ for eigenstates $\ket{\psi_j}$ of $V$ with eigenvalues $p_j$ and there is a range $[p^*_1, p^*_2]$ such that
    \[
        \sum_{j:p_j \notin [p^*_1, p^*_2]} |\alpha_j|^2 \leq \eta,
    \]
    then 
    \[  
        \Pr\big[p \notin [p^*_1 - \epsilon, p^*_2 + \epsilon] : (p, \rho) \gets \ValEst_{V}(\ket{\phi}, \epsilon, \delta) \big] \leq \delta + \eta
    \]

\end{lemma}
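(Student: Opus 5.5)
The plan is to open up the internal structure of \cite{CMSZ22}'s $\ValEst$ rather than use only its black-box guarantees (\Cref{lemma:CMSZ-valest}). The paper is explicit that the black-box properties do not suffice. Recall that $\ValEst_V$ works with the unitary dilation $U$ of $V$ on $\calX\otimes\calA$. It considers two projectors on the enlarged space: $\Pi_B = I_\calX\otimes\ketbra{0}_\calA$ (``ancilla initialized'') and $\Pi_A = U^\dagger \Pi_V^\Accept U$ (``accept''). From $V$ it builds a unitary whose eigenphases encode the Jordan angles, for instance the product of reflections $(2\Pi_A - I)(2\Pi_B-I)$, possibly in a rescaled or padded form as indicated in the overview by the operator $\frac12 V+\frac14 I$. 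It then runs phase estimation, or a Chernoff-style repeated alternating measurement, to estimate the eigenvalue. So the first step is to identify the spectral object that $\ValEst$ measures on the image of $\Pi_B$.

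By \Cref{lem:aba}, $\Pi_B\Pi_A\Pi_B = \sum_j p_j \ketbra{\jor^B_{j,1}}$. Moreover, $\Pi_B\Pi_A\Pi_B$ restricted to $\mathrm{im}\,\Pi_B\cong\calH$ is exactly the operator $V$ from the preliminaries. Hence an eigenstate $\ket{\psi}$ of $V$ with eigenvalue $p^*$ corresponds to $\ket{\psi}\ket{0}$ lying in the span of vectors $\ket{\jor^B_{j,1}}$ with $p_j=p^*$. (If $p^*$ is degenerate, $\ket{\psi}\ket 0$ is a superposition of such vectors, which is harmless.)

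The second step is to check that each Jordan subspace $\calS_j$ is invariant under every operation $\ValEst$ performs. This holds because $\ValEst$ only applies $\Pi_A$, $\Pi_B$, the associated reflections, and controlled versions of these. Consequently, the distribution of the output $p$ on input $\ket{\jor^B_{j,1}}$ depends only on $p_j$, and $\ValEst$ behaves like a classical estimator of $p_j$. I would then invoke the accuracy analysis inside \cite{CMSZ22}: their estimator is shown to be $\epsilon$-accurate with probability $1-\delta$ on each Jordan block. This is precisely the per-block statement that their almost-projectivity and expectation properties are derived from. On a superposition of blocks sharing the same $p_j=p^*$, the amplitudes combine coherently, but the outcome statistics are identical in every block, so the first claim follows.

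This step is the main obstacle. I must verify that the truncation and rescaling in \cite{CMSZ22} (the output range $[-\frac12,\frac32]$ and the $\frac12 V+\frac14 I$ shift) still give a per-block tail bound of $\delta$, not merely a correct expectation. My approach would be to trace their phase-estimation or threshold-counting subroutine block by block and apply a Chernoff/Hoeffding bound to the $O(\frac1\epsilon\log\frac1\delta)$ independent-looking samples within a single two-dimensional block. The samples are genuinely i.i.d. there, because the block is invariant and the alternating measurements act as a Markov chain with known transition probabilities determined by $p_j$.

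For the furthermore part, decompose $\ket{\phi}\ket0=\sum_j\alpha_j\ket{\psi_j}\ket0$ into the orthogonal components $\ket{\phi_{\rm in}}$ (eigenvalues in $[p^*_1,p^*_2]$) and $\ket{\phi_{\rm out}}$. Since $\ValEst$ preserves each Jordan block, and hence these two orthogonal invariant subspaces, the output distribution is the mixture of the output distributions on the normalized $\ket{\phi_{\rm in}}$ and $\ket{\phi_{\rm out}}$, with weights $1-\eta'$ and $\eta'$, where $\eta'\le\eta$. The cross terms vanish because the final classical outcome is measured after a block-preserving channel; more precisely, the outcome projectors commute with the block decomposition. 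On $\ket{\phi_{\rm in}}$, applying the first part block by block gives $p\in[p^*_1-\epsilon,p^*_2+\epsilon]$ except with probability $\delta$. The out-of-range part contributes at most $\eta$. Summing the two gives the bound $\delta+\eta$.
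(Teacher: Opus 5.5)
Your proposal is correct and follows essentially the same route as the paper: a Jordan decomposition of the accept/reset projector pair, then \Cref{lem:aba} to place $\ket{\psi}\otimes\ket{\text{init}}$ in Jordan blocks of a single value, then a per-block binomial (Chernoff) accuracy bound imported from \cite{CMSZ22} (Lemma 4.5 and Proposition 4.7), and finally a mixture over orthogonal invariant blocks for the second part. The one step the paper makes explicit that you left hedged is that \cite{CMSZ22}'s estimator uses the padded projector $\Pi_G$ (with the $\top/\bot$ register) and the reset projector onto $\ket{0,\cmszmodereg}$; the paper computes this compression to be $\frac{1}{2}V+\frac{1}{4}I$ (the cross terms vanish because $\Pi_G$ is diagonal on $\ket{0},\ket{\top},\ket{\bot}$), so your identification via the unpadded pair carries over with Jordan values $p^*/2+1/4$, which the output rescaling maps back to $p^*$.
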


Using this lemma, we can bound the probability that $\ValEst(\ket{\psi})$ gives an estimate close to $1$ when $\ket{\psi}$ is accepted by $V$ with overwhelming probability by reasoning about its representation in the eigenbasis of $V$.

\begin{corollary}\label{coro:valest-high-quality}
    For any $\ket{\phi}$ such that $\Pr[\Accept \gets V(\ket{\phi})] \geq 1-\eta$,

    where $\sqrt{\eta} = o\left(\frac{1}{\frac{1}{\epsilon}\log\left(\frac{1}{\delta}\right)}\right)$,
    \[
        \Pr[|1-p|\geq \epsilon: (p, \brho) \gets \ValEst_V(\ket{\phi}, \epsilon, \delta)] \leq \delta + \sqrt{\eta}.
    \]
    
\end{corollary}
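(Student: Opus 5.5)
Write $\ket{\phi} = \sum_j \alpha_j \ket{\psi_j}$ in the eigenbasis of $V$, with $V\ket{\psi_j} = p_j\ket{\psi_j}$. The plan is to reduce the corollary to the second part of \Cref{lem:cmsz-eigenstates}, taking the range $[p^*_1,p^*_2] = [1-\sqrt{\eta}, 1]$ (eigenvalues of $V$ never exceed $1$).

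The first step is a Markov-type bound on the weight of $\ket{\phi}$ outside this range. Since
\[
\Pr[\Accept \gets V(\ket{\phi})] = \sum_j |\alpha_j|^2 p_j \geq 1-\eta,
\]
we have $\sum_j |\alpha_j|^2 (1-p_j) \leq \eta$. Every $j$ with $p_j < 1-\sqrt{\eta}$ contributes more than $\sqrt{\eta}\,|\alpha_j|^2$ to this sum, so
\[
\sum_{j : p_j \notin [1-\sqrt{\eta},1]} |\alpha_j|^2 \leq \sqrt{\eta}.
\]
This is the same computation as in the proof of \Cref{lem:mixed-to-pure}, now done in the eigenbasis.

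The second step applies \Cref{lem:cmsz-eigenstates} with weight parameter $\sqrt{\eta}$. Except with probability $\delta + \sqrt{\eta}$, the estimate satisfies $p \in [1-\sqrt{\eta}-\epsilon,\ 1+\epsilon]$. So $|1-p| \leq \epsilon + \sqrt{\eta}$, not $\epsilon$ as stated.

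The main obstacle is removing this additive $\sqrt{\eta}$ slack. This is exactly where the hypothesis $\sqrt{\eta} = o\big(1/(\tfrac1\epsilon\log\tfrac1\delta)\big)$ must enter. $\ValEst$ outputs values in a discrete set $X$ of size $N_{\epsilon,\delta} = O(\tfrac1\epsilon\log\tfrac1\delta)$, i.e., on a grid whose spacing is on the order of $1/N_{\epsilon,\delta}$. A shift of size $\sqrt{\eta}$, asymptotically smaller than this spacing, cannot move an outcome across a grid point.

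Concretely, I would first try to reopen the first part of \Cref{lem:cmsz-eigenstates} for an eigenvalue $p_j \in [1-\sqrt{\eta},1]$. The outcome distribution of $\ValEst$ on $\ket{\psi_j}$ should depend continuously on $p_j$, since it is a polynomial-size circuit calling $V$, $O(\tfrac1\epsilon\log\tfrac1\delta)$ times. It should therefore differ from the outcome distribution on a $p^*=1$ eigenstate by at most $O(N_{\epsilon,\delta}\sqrt{\eta}) = o(1)$ in total variation. The $p^*=1$ eigenstate lands within $\epsilon$ of $1$ except with probability $\delta$.

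If that route is too lossy, the fallback is to absorb the slack into the parameters. Run the lemma at accuracy $\epsilon' = \epsilon - \sqrt{\eta}$, so that $\epsilon' + \sqrt{\eta} = \epsilon$. The growth condition ensures $\sqrt{\eta} \ll \epsilon$, so $\epsilon' = \epsilon(1-o(1))$. The only remaining check is that the guarantee stated at $(\epsilon,\delta)$ transfers to $\epsilon'$ up to lower-order terms. Either way, a union bound over the bad-weight event (probability $\sqrt{\eta}$) and the estimation failure (probability $\delta$) gives the claimed bound $\delta + \sqrt{\eta}$.
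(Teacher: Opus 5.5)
Your first two steps match the paper's proof. You bound by $\sqrt{\eta}$ the weight on eigenvalues below $1-\sqrt{\eta}$, then apply the second part of \Cref{lem:cmsz-eigenstates} to get $p\in[1-\sqrt{\eta}-\epsilon,\,1+\epsilon]$ except with probability $\delta+\sqrt{\eta}$. The gap is in the last step, removing the $\sqrt{\eta}$ slack. You name the right idea in one sentence (the output is discrete), but neither route you actually commit to yields the stated bound. Your closing ``union bound'' silently drops the extra error that each route introduces.

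\emph{Continuity route.} Comparing a $p_j\in[1-\sqrt{\eta},1]$ eigenstate with a $p^*=1$ eigenstate costs $O(N_{\epsilon,\delta}\sqrt{\eta})$ in total variation. Your hypothesis makes this $o(1)$, not zero. So you end with $\delta+\sqrt{\eta}+O(N_{\epsilon,\delta}\sqrt{\eta})$, which is strictly weaker than claimed.

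\emph{Fallback route.} The statement concerns the fixed algorithm $\mathsf{ValEst}_V(\cdot,\epsilon,\delta)$. \Cref{lem:cmsz-eigenstates} at accuracy $\epsilon'=\epsilon-\sqrt{\eta}$ describes a different algorithm, so the ``transfer'' you defer is the whole content of the step. Opening up the Chernoff bound, a run tuned to $\epsilon$ deviates by $\epsilon'$ with probability roughly $\delta^{(\epsilon'/\epsilon)^2}=\delta(1+o(1))$ under your hypothesis. That is again not at most $\delta$.

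What closes the step, and what the paper does, is a deterministic rounding that costs no probability. From the explicit description of $\mathsf{ValEst}$, the output is an affine rescaling of an empirical frequency over $2t$ trials, with $t=\Theta(\frac{1}{\epsilon}\log\frac{1}{\delta})$. Hence the output set $X$ is an arithmetic progression with step $\Theta(1/N_{\epsilon,\delta})$. Note that the cardinality bound $|X|=N_{\epsilon,\delta}$ alone gives no minimum gap; you need this structure.

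The paper also arranges, by taking $1/\epsilon$ integral (its footnote), that $1-\epsilon$ is itself a point of $X$. Once $\sqrt{\eta}$ is below the step, the window $[1-\epsilon-\sqrt{\eta},\,1-\epsilon)$ contains no possible output. So the event $p\ge 1-\epsilon-\sqrt{\eta}$ coincides with $p\ge 1-\epsilon$, and the bound $\delta+\sqrt{\eta}$ carries over unchanged. Your phrase that a $\sqrt{\eta}$ shift ``cannot move an outcome across a grid point'' misses exactly this alignment requirement. If $1-\epsilon$ were not on the grid, the window could contain an output value and the rounding would fail.
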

\begin{proof}
    Decompose $\ket{\phi} = \sum_{j} \alpha_j \ket{\psi_j}$ in terms of eigenstates $\ket{\psi_j}$ of $V$ with eigenvalues $p_j$. The probability of acceptance is 
    \begin{align*}
        \bra{\phi}V \ket{\phi}
        &= \sum_{j} |\alpha_j|^2 p_j 
        \\
        &= \sum_{j: p_j \geq 1-\sqrt{\eta}} |\alpha_j|^2 p_j + \sum_{j: p_j < 1-\sqrt{\eta}} |\alpha_j|^2 p_j
        \\
        &\leq \sum_{j: p_j \geq 1-\sqrt{\eta}} |\alpha_j|^2 + \sum_{j: p_j < 1-\sqrt{\eta}} |\alpha_j|^2 (1-\sqrt{\eta})
        \\
        &= 1 - \sqrt{\eta}\sum_{j: p_j < 1-\sqrt{\eta}} |\alpha_j|^2 
    \end{align*}
    Comparing to the bound $\Pr[\Accept \gets V(\ket{\phi})] \geq 1-\eta$ and rearranging,
    \[
        \sum_{j: p_j < 1-\sqrt{\eta}} |\alpha_j|^2  \leq \sqrt{\eta}.
    \]
    Setting $p_1^* = 1 - \sqrt{\eta}$ and noting that the maximum eigenvalue of $V$ is $1$, the second half of \Cref{lem:cmsz-eigenstates} implies that $p \in [1 - \sqrt{\eta} -\epsilon, 1+ \epsilon]$ except with probability $\delta +\sqrt{\eta}$. Finally, $\ValEst_V(\cdot, \epsilon, \delta)$ has precision $1/\Theta\left(\frac{1}{\epsilon}\log\left(\frac{1}{\delta}\right)\right)$. 
    
    We may round up the lower bound to the nearest multiple of the precision, which is $1-\epsilon$ since $\sqrt{\eta}$ is asymptotically smaller than the precision.\footnote{We assume that $1/\epsilon$ is an integer, since taking the floor of $1/\epsilon$ does not change the algorithm.}

\end{proof}

\begin{proof}[Proof of \Cref{lem:cmsz-eigenstates}]
    We recall $\ValEst$ for completeness. Let the unitary $U_V$ and the projective measurement $\Pi_V$ denote the implementation of $V$. Let $\calH$ be the input register to $V$. $t = \Theta\left(\frac{1}{\epsilon}\log\left(\frac{1}{\delta}\right)\right)$ is a parameter controlling the runtime -- see \cite{CMSZ22} for the exact value.

    $\mathsf{NReps}(\vec{b})$ is the number of consecutive repeated bits in $\vec{b}$, divided by $|\vec{b}| - 1$; for example, $\mathsf{NReps}(001110) = 3/5$. 
    \begin{enumerate}
        \item Initialize register $\calR$ to $\ket{0}$ and initialize register $\calT$ to $\ket{\cmszmodereg} \coloneqq \frac{1}{\sqrt{2}}(\ket{0} + \ket{\top/\bot})$, where $\ket{\topbot} \coloneqq \frac{1}{\sqrt{2}}\left(\ket{\top} + \ket{\bot}\right)$. Here, $\ket{\top}$ and $\ket{\bot}$ are special basis elements representing ``automatically'' winning or losing the game, respectively.
        
        \item Define measurement $M_{G} = (\Pi_G, I - \Pi_G)$ where
        \[
            \Pi_G \coloneqq \left(U_V^\dagger \Pi_{V}^\Accept U_V \right) \otimes \ketbra{0}_{\calT} + I_{\calH, \calR} \otimes \ketbra{\top}_{\calT}
        \]
        In words, $\Pi_G$ evaluates $V$ when $\calT$ contains $\ket{0}$ and accepts any input when $\calT$ contains $\ket{\top}$. It rejects any input when $\calT$ contains $\ket{\bot}$.

        Define the meaurement $M_{\reset} = (\ketbra{0,\cmszmodereg}_{\calR, \calT}, I - \ketbra{0,\cmszmodereg}_{\calR, \calT})$.

        \item For $i = 1$ to $t$:
        \begin{enumerate}
            \item Apply $M_G$, obtaining outcome $L_{2i-1} \in \{0,1\}$.
            \item Apply $M_{\reset}$ to registers $(\calR, \calT)$, obtaining outcome $L_{2i} \in \{0,1\}$.
        \end{enumerate}
        \item If $L_{2t} = 1$, skip to the next step. Otherwise, apply $M_G$ and $M_\reset$ in an alternating fashion as above until $M_{\reset}$ outputs $1$ or a further $2t$ measurements have been applied.
        \item Discard registers $\calR$, $\calR'$, and $\calT$, then output $p \coloneq 2\mathsf{NReps}(1, L_1, \dots, L_{2t})$.
    \end{enumerate}
    This is equivalent to \cite{CMSZ22}'s algorithm, but the description varies slightly. In \cite{CMSZ22}, they consider $V$ to take in ancillas of the form $\frac{1}{\sqrt{|R|}}\sum_{r\in R}\ket{r, r}$, rather than $\ket{0}$. These are easily accounted for by having $U_V$ first map $\ket{0}$ to $\frac{1}{\sqrt{|R|}}\sum_{r\in R}\ket{r, r}$. The other difference is that we separate the ``automatic'' win/loss symbols into their own register, rather than squeezing them into $\calR$ as well. Since $\ket{\top}$ and $\ket{\bot}$ are already orthogonal to the other computational basis vectors in \cite{CMSZ22}, this has the same effect but makes the definition of $\Pi_G$ easier to read.

    Applying Jordan's lemma (\Cref{lem:jordan}) to $\Pi_G$ and $\ketbra{\cmszmodereg}_{\calR, \calR', \calT}$ allows the orthogonal decomposition $\calH \otimes \calR\otimes \calT = \bigoplus_j \calS_j$ into one- and two-dimensional subspaces $\calS_j$, which we call \emph{Jordan subspaces}. Each $\calS_j$ is invariant under both $\Pi_G$ and $\ketbra{0,\cmszmodereg}_{\calR, \calT}$. The projection of $\calS_j$ onto $\Pi_G$ is a state $\ket{\jor_{j,1}^G}$. Similarly, the projection of $\calS_j$ onto $\ketbra{0,\cmszmodereg}_{\calR, \calT}$ is a state $\ket{\jor_{j, 1}^\reset}$. Define
    \[
        p_j \coloneqq |\braket{\jor_{j,1}^G | \jor_{j,1}^\reset}|^2
    \]

    \begin{claim}[\cite{CMSZ22}, Lemma 4.5 + Proposition 4.7 + Chernoff Bound]
        If
        \[
            \ket{\psi} \otimes \ket{0,\cmszmodereg} = \sum_{j} \alpha_j \ket{\jor_{j,1}^\reset}
        \]
        then $\ValEst_{V}(\ket{\psi}, \epsilon, \delta)$ is distributed as
        \begin{enumerate}
            \item Sample $j$ with probability $|\alpha_j|^2$.
            \item Sample $p' \sim \mathsf{Binomial(2t, p_j})$ and output $2p' - 1/2$.
        \end{enumerate}
        In particular, conditioned on choosing $j$, $\Pr[|p-p_j|\geq \epsilon] < \delta$.
    \end{claim}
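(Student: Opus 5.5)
The plan is to reduce the analysis of $\ValEst$ to a classical random process inside each Jordan subspace, as in \cite{CMSZ22}. I would first isolate the block-diagonal structure. By \Cref{lem:jordan}, every $\calS_j$ is invariant under both $\Pi_G$ and $\ketbra{0,\cmszmodereg}_{\calR,\calT}$. Hence every measurement operator applied in steps 3 and 4 of $\ValEst$ is block diagonal with respect to $\bigoplus_j \calS_j$. Fix an outcome sequence $\vec{L}$ and let $K_{\vec{L}}$ be the corresponding product of projectors. Then
\[
\Bigl\|K_{\vec{L}} \sum_j \alpha_j \ket{\jor_{j,1}^\reset}\Bigr\|^2 = \sum_j |\alpha_j|^2 \,\bigl\|K_{\vec{L}}\ket{\jor_{j,1}^\reset}\bigr\|^2,
\]
because the cross terms lie in orthogonal subspaces and vanish. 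So the output distribution is exactly the mixture that samples $j$ with probability $|\alpha_j|^2$ and then runs the procedure on $\ket{\jor_{j,1}^\reset}$ alone. This gives item 1 of the claim.

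Next I would analyse a single two-dimensional $\calS_j$; the one-dimensional case is degenerate, with $p_j\in\{0,1\}$ and deterministic outcomes. Inside $\calS_j$ both projectors have rank one, with overlap $p_j=|\braket{\jor_{j,1}^G|\jor_{j,1}^\reset}|^2$, as in \Cref{lem:aba}. After each measurement the state is one of the four basis vectors $\ket{\jor_{j,b}^G}$ or $\ket{\jor_{j,b}^\reset}$. The next measurement, in the other basis, returns the same bit with probability $p_j$ and the flipped bit with probability $1-p_j$, regardless of the history. The procedure starts in $\ket{\jor_{j,1}^\reset}$, which matches the prepended $1$ in $\mathsf{NReps}(1,L_1,\dots,L_{2t})$. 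Therefore the indicators ``$L_k = L_{k-1}$'' for $k=1,\dots,2t$ are i.i.d.\ Bernoulli$(p_j)$, so the number of repeats is $\mathsf{Binomial}(2t,p_j)$. The output is an affine rescaling of this count, which is item 2 with the normalisation used in the algorithm. Step 4 only applies further measurements after $L_1,\dots,L_{2t}$ are fixed, so it does not affect the output distribution.

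Finally, the concentration statement ``conditioned on $j$, $\Pr[|p-p_j|\geq\epsilon]<\delta$'' would follow from a Chernoff--Hoeffding bound on the empirical repeat frequency. I would invoke \cite{CMSZ22}'s Lemma 4.5 and Proposition 4.7 for the exact choice of $t$ and for the identification of $p_j$ with the quantity being estimated.

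I expect this last step to be the main obstacle, in two places. First, the rescaling must be matched so that the estimate concentrates around the quantity the surrounding lemma uses. By inspection, $p_j=\tfrac12 q+\tfrac14$ for the relevant eigenvalue $q$ of the rescaled operator from the overview, and one must check this against the affine map the algorithm actually applies to the repeat count. Second, one must verify that the stated $t=\Theta(\frac1\epsilon\log\frac1\delta)$ yields additive error $\epsilon$ with failure probability $\delta$. A plain Hoeffding bound would need order $\frac{1}{\epsilon^2}\log\frac1\delta$ samples, so here I would rely on \cite{CMSZ22}'s specific parameter choice rather than rederive the bound.
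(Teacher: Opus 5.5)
Your argument is correct and is the same analysis the paper invokes by citing \cite{CMSZ22}. Block-diagonality over the Jordan subspaces gives the mixture over $j$. The history-independent repeat probability $p_j$, anchored by the prepended $1$ matching $\ket{\jor_{j,1}^\reset}$, gives the binomial count, and a Chernoff bound finishes.

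Both worries you raise are slips in the paper's statement rather than real obstacles:
\begin{itemize}
\item The output is $2\,\mathsf{NReps}-\frac12=\frac{p'}{t}-\frac12$ with $p'\sim\mathsf{Binomial}(2t,p_j)$. This concentrates around $2p_j-\frac12$, which is the eigenvalue $p^*$ of $V$ that the enclosing lemma needs, not around $p_j$.
\item The count is exactly binomial with $p_j\in[\frac14,\frac34]$, so no choice $t=\Theta(\frac1\epsilon\log\frac1\delta)$ can give the stated bound, and there is nothing to recover from \cite{CMSZ22} here. Take $t=\Theta(\frac{1}{\epsilon^2}\log\frac1\delta)$ and apply Hoeffding with tolerance $\epsilon/2$ on the repeat fraction. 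This gives failure probability at most $2e^{-t\epsilon^2}$, which is $<\delta$ for a suitable constant.
\end{itemize}
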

    
    Thus, it suffices to show that if $\ket{\psi}$ is an eigenstate of 
    \[
        V = (I\otimes \bra{0})U_V^\dagger \Pi_V^\Accept U_V(I\otimes \ket{0})
    \]
    with eigenvalue $p^*$, then $\ket{\psi}\otimes \ket{0,\cmszmodereg} = \ket{\jor_{j,1}^\reset}$ with corresponding Jordan value $p_j = p^*/2 + 1/4$. If this is the case, the first statement of the lemma holds immediately. The second statement, for $\ket{\phi} = \sum_{j} \alpha_j \ket{\psi_j}$, follows from the observation that a $j$ such that $p_j \in [p^*_1, p^*_2]$
    
    will be sampled with probability $\geq 1- \eta$.

    By \Cref{lem:aba}, it in turn suffices to show that 
    $\ket{\psi} \otimes \ket{0,\cmszmodereg}$ is an eigenstate of
    \[
        \bigg(I_{\calH} \otimes \ketbra{0,\cmszmodereg}\bigg) \Pi_G \bigg(I_{\calH} \otimes \ketbra{0,\cmszmodereg}\bigg)
    \]
    with eigenvalue $(p^*/2 + 1/4)$. The eigenstates of this operator are the same as the eigenstates of
    \[
        \bigg(I \otimes \bra{0,\cmszmodereg}\bigg) \Pi_G \bigg(I_{\calH} \otimes \ket{0,\cmszmodereg}\bigg)
    \]
    with the same eigenvalues, when the latter are appended with $\ket{0,\cmszmodereg}$. Expanding this operator,
    \begin{align*}
        \bigg(I_{\calH} \otimes \bra{0,\cmszmodereg}\bigg) \Pi_G \bigg(I_{\calH} \otimes \ket{0,\cmszmodereg}\bigg)
        &= \frac{1}{2}\bigg(I_{\calH} \otimes \bra{0,0}\bigg) U_V^\dagger \Pi_V^\Accept U_V \bigg(I_{\calH} \otimes \ket{0,0}\bigg)
        \\
        &+ \frac{1}{2}\bigg(I_{\calH} \otimes \bra{0,0}\bigg) \Pi_G \bigg(I_{\calH} \otimes \ket{0,\top}\bigg) 
        \\
        &+ \frac{1}{2}\bigg(I_{\calH} \otimes \bra{0,\top}\bigg) \Pi_G \bigg(I_{\calH} \otimes \ket{0,0}\bigg)
        \\
        &+ \frac{1}{4}\bigg(I_{\calH} \otimes \bra{0,\top}\bigg) I \bigg(I_{\calH} \otimes \ket{0,\top}\bigg)
    \end{align*}
    Since $\Pi_G$ is diagonal on $\ket{0}_\calT$, $\ket{\top}_{\calT}$, and $\ket{\bot}_{\calT}$ the cross terms cancel and we are left with
    \[
        \frac{1}{2}V + \frac{1}{4}I_{\calH}
    \]
    Finally, any eigenstate of $V$ with eigenvalue $p^*$ is also an eigenstate of $\frac{1}{2}V + \frac{1}{4}I_{\calH}$ with eigenvalue $\frac{p^*}{2} + \frac{1}{4}$.

\end{proof}

\subsection{Result of Repairing a High-Quality Witness}\label{sec:high-quality-repair-subsec}

We prove here that repairing a damaged high-quality witness will, with overwhelming probability, produce a witness which is accepted with probability $1-3\epsilon - \negl$. Using \cref{coro:valest-high-quality}, estimating the value of a high quality witness $\ket{w}$ will return $\geq 1-\epsilon$ with overwhelming probability. \Cref{lemma:CMSZ-repair} guarantees that the repaired mixed state $\rho$ then returns an estimate within $2\epsilon$ of this with overwhelming probability, i.e. the final estimate is $\geq 1-3\epsilon$.
Finally, we show in the following lemma that this implies that $\rho$ is overwhelmingly supported on \emph{pure state} witnesses which are accepted with probability $\geq 1-3\epsilon - \negl$.

\begin{lemma}\label{lem:valest-to-pure}
    Let $\rho$ be a state such that 
    \[
        \Pr[p\geq p^*: (p, \rho') \gets \ValEst_V(\rho, \epsilon, \delta]
        \geq 1-\gamma
    \]
    for some $p^*$. Then $\rho = q_\mathsf{good} \rho_\mathsf{good}  + q_\mathsf{bad} \rho_\mathsf{bad}$ can be decomposed as a mixture over two orthogonal mixed states where 
    \[
        q_\mathsf{bad} \leq \sqrt{\gamma}
    \]
    and $\rho_\mathsf{good}$ is supported completely on pure states $\ket{\psi}$ such that
    \[
        \Pr[\Accept \gets V(\ket{\psi})] \geq p^* - \sqrt{\gamma} - \epsilon - \delta
    \]
\end{lemma}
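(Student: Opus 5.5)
The plan mirrors the proof of \Cref{lem:mixed-to-pure}, with the $\ValEst$ outcome event $\{p \geq p^*\}$ playing the role of the accepting outcome of $V$. The key observation is that running $\ValEst_V(\cdot,\epsilon,\delta)$ and checking whether its output satisfies $p\geq p^*$ is itself a binary-outcome measurement. Call its accept operator $E$, where $0\preceq E\preceq I$. This measurement need not be projective, but \Cref{lem:mixed-to-pure} explicitly allows arbitrary binary-outcome measurements. The hypothesis says exactly that $\Tr[E\rho]\geq 1-\gamma$.

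First, I apply \Cref{lem:mixed-to-pure} to the measurement $E$ and the state $\rho$. This gives a decomposition $\rho = q_\mathsf{good}\rho_\mathsf{good} + q_\mathsf{bad}\rho_\mathsf{bad}$ into orthogonal parts with $q_\mathsf{bad}\leq\sqrt{\gamma}$. It also guarantees that $\rho_\mathsf{good}$ is supported on pure states $\ket{\psi}$ (the eigenvectors $\ket{\psi_i}$ in the good index set) with $\Pr[p\geq p^*:(p,\rho')\gets\ValEst_V(\ket{\psi},\epsilon,\delta)]\geq 1-\sqrt{\gamma}$.

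Second, for each such pure state $\ket{\psi}$, I invoke property \ref{prop-valest-est-to-actual} of \Cref{lemma:CMSZ-valest} with $\gamma$ replaced by $\sqrt{\gamma}$. This yields $\Pr[\Accept\gets V(\ket{\psi})]\geq p^*-\sqrt{\gamma}-\epsilon-\delta$, which is the claimed bound. That completes the argument.

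The main subtlety is which state the final bound refers to. As stated in \Cref{lemma:CMSZ-valest}, property \ref{prop-valest-est-to-actual} bounds the acceptance probability of the post-measurement state $\rho'$, not of the input. The lemma here, however, is about the input pure states $\ket{\psi}$ in the support of $\rho_\mathsf{good}$. I would resolve this in one of two ways.

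The first option is to read property \ref{prop-valest-est-to-actual} as applying to the input state. This is how the paper uses it, and it matches the underlying \cite{CMSZ22} statement: the estimate is, in expectation, the acceptance probability of the input, per property 3.

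The second option, if a self-contained argument is needed, is to argue directly from the structure of $\ValEst$ given in the proof of \Cref{lem:cmsz-eigenstates}. There the distribution of $p$ is a mixture over Jordan values $p_j$, each concentrated within $\epsilon$ except with probability $\delta$. A lower-tail bound on $p$ therefore transfers to a lower bound on the mass of $\ket{\psi}$ on eigenvectors with large eigenvalue. From that, one lower-bounds $\bra{\psi}V\ket{\psi}$ up to the stated $\epsilon+\delta+\sqrt{\gamma}$ losses.
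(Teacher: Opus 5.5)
Your argument is the paper's proof: apply \Cref{lem:mixed-to-pure} to the binary-outcome measurement ``run $\ValEst_V(\cdot,\epsilon,\delta)$ and test whether $p\geq p^*$,'' then invoke property~\ref{prop-valest-est-to-actual} of \Cref{lemma:CMSZ-valest} on each good pure state, with $\sqrt{\gamma}$ in place of $\gamma$. The paper applies that property directly to the input states $\ket{\psi}$ without comment, so the mismatch you point out (the property bounds the post-measurement state $\rho'$, not the input) is real and is simply glossed over there. Your second option closes it with exactly the stated constants: it puts mass at least $1-\sqrt{\gamma}-\delta$ on eigenvectors of $V$ with eigenvalue above $p^*-\epsilon$, hence $\bra{\psi}V\ket{\psi}\geq(1-\sqrt{\gamma}-\delta)(p^*-\epsilon)\geq p^*-\sqrt{\gamma}-\epsilon-\delta$ when $0\leq p^*-\epsilon\leq 1$. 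Appealing to property~3 alone would not give these constants.
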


\begin{proof}
    Consider the binary-outcome measurement where one (coherently) estimates $p \gets \ValEst_V(\ket{\psi}, \epsilon, \delta)$ and measures whether $p\geq p^*$ or $p<p^*$. By \Cref{lem:mixed-to-pure}, 
    we can decompose $\rho = q_\mathsf{good} \rho_\mathsf{good}  + q_\mathsf{bad} \rho_\mathsf{bad}$ where
    \[
        q_\mathsf{bad} \leq \sqrt{\gamma}
    \]
    and 
    $\rho_\mathsf{good}$ is supported completely on pure states $\ket{\psi}$ such that
    \[
        \Pr[p\geq p^* \gets \ValEst_V(\ket{\psi}, \epsilon, \delta)] \geq 1 - \sqrt{\gamma}
    \]
    By property \ref{prop-valest-est-to-actual} of \Cref{lemma:CMSZ-valest}, each such $\ket{\psi}$ also satisfies
    \[
        \Pr[\Accept \gets V(\ket{\psi}] \geq p^* - \sqrt{\gamma} - \epsilon - \delta
    \]

\end{proof}

\section{Dual-Mode  Trapdoor Functions with State Recovery.}\label{sec:dual-trapdoor-recovery}

In this section we define the notion of a {\em dual-mode trapdoor function family with state recovery}.  Intuitively, this is a family of (randomized) functions that have two modes: an {\em injective mode} and a {\em recovery mode}, and it is hard to distinguish between the two.  Functions in the injective mode  are injective and are associated with a trapdoor which can be used to invert the function efficiently. Functions in the recovery mode are also associated with a trapdoor, where this trapdoor is used to recover the state (this roughly corresponds to the lossy/two-to-one mode in the existing literature on dual-mode TCFs).  More specifically, functions in recovery mode have the property that if we compute the function coherently on a quantum state and measure the output value then this does not disturb the state, and we can use the trapdoor to recover the quantum state.

\begin{definition}[Dual-Mode Trapdoor Function Family with State Recovery.]\label{def:dual-trapdoor-recovery}
    A \textbf{dual-mode trapdoor function family with state recovery} is a family of (potentially randomized) functions
    \[
        \calF = \{f_{\pubparams} : \calX \times \calR \rightarrow \calY\}_{\pubparams}
    \]
    satisfying the following properties.\footnote{Formally, $\calX, \calR, \calY$ all depend on the security parameter with respect to which $\pubparams$ was generated.  This is omitted from the notation for the sake of simplicity.}
    \begin{itemize}
        \item \textbf{Efficient Generation.} There exists a probabilistic poly-time algorithm $\Setup$ distribution, which takes as input security parameter $1^\secpar$ along with a flag $\mode \in \{\lossy, \injective\}$, and outputs 
        \[
            (\pubparams, \td) \gets \Setup(1^\secpar, \mathsf{mode}).
        \]
        
        \item \textbf{Dual-Mode Indistinguishability.}
        \[
            \{\pubparams: (\pubparams, \td) \gets \Setup(1^\secpar, \injective)\}
            \approx_c
            \{\pubparams: (\pubparams, \td) \gets \Setup(1^\secpar, \lossy)\}
        \]
        where by $\approx_c$ we mean computational indistinguishability against $\QPT$ algorithms.

        \item {\textbf{Efficient Evaluation and Range Superposition.}}  There is a poly-time computable evaluation function that given any $\pubparams$, in the image of  $\Setup$ and given any pair $(x,r)\in(\calX\times \calR)$ outputs $f_\pubparams(x,r)$. 
        
        Furthermore, $f_\pubparams$ is associated with a distribution over randomness $\calR$ with probability mass function $p$ such that the state $\ket{\rand_\pubparams} = \sum_{r\in \calR}\sqrt{p(r)}\ket{r}$ is efficiently preparable.
        This state is a superposition over randomness used to evaluate $f_\pubparams$.

        \item \textbf{Injective Mode: Trapdoor.} 
        There exists a poly-time algorithm $\Ext$ such that for all $(\pubparams, \td) \in \mathsf{SUPP}(\Setup(1^\secpar, \injective))$ and all input/randomness pairs $(x, r) \in \calX \times \calR$, 
        \[
            \Pr[x \gets \Ext(\td, f_\pubparams(x;r))] = 1
        \]
        \item \textbf{Recovery Mode: Trapdoor.} Consider the following experiment $\mathsf{Exp}(\pubparams, \ket{\psi})$, parameterized by a public parameter $\pubparams$ and a quantum state $\ket{\psi} = \sum_{x}\alpha_x \ket{\phi_x}_{A}\otimes \ket{x}_{B}$ in registers $A$ and $B$.
        \begin{enumerate}
            \item Prepare $\ket{\rand_\pubparams}$ in register $C$ and evaluate $f_\pubparams$ coherently on register $B$ to obtain 
            $\sum_{x} \alpha_x \ket{\phi_x, x}_{A,B} \otimes \sum_{r}\sqrt{p(r)} \ket{r, f_{\pubparams}(x;r)}$.
            \item Measure the register containing $f_{\pubparams}(x;r)$ in the computational basis to obtain $y$. Let $\ket{\varphi_y}$ be the residual state.
        \end{enumerate}
        
        There exists a $\QPT$ algorithm $\Recover$ acting on registers $B$ and $C$ such that for all $(\pubparams, \td) \in \mathsf{SUPP}(\Setup(1^\secpar, \lossy))$ and all quantum states $\ket{\psi} = \sum_{x}\alpha_x \ket{x}$, 
        \[
            \Pr\left[
            \frac{1}{2}\big\|\ketbra{\psi} - \rho \big\|_1 \leq \negl :
            \begin{array}{c}
                 (\ket{\varphi_y}, y) \gets \mathsf{Exp}(\pubparams,\ket{\psi})  \\
                 \rho \gets \Recover(\td,\ket{\varphi_y}, y) 
            \end{array}
            \right] 
            = 1-\negl
        \]
        In other words, the channel $\Recover_{\td} \circ \Exp_{\pubparams}$ is close in diamond distance to the identity channel, where 
        \[
        \Recover_\td\triangleq \Recover(\td,\cdot,\cdot)~~\mbox{ and }~~ \Exp_{\pubparams}\triangleq \Exp(\pubparams,\cdot).
        \]
        As a more granular definition, if the recovered state $\rho$ is has trace distance $\delta$ to the original state $\ket{w}$, we refer to the function family as having $(1-\delta)$-state recovery.

    \end{itemize}
\end{definition}

Because these function families are evaluated in the computational basis, they support coherently implemented classical computation on the encoded data followed by recovery. More concretely, imagine that one had already evaluated $f_\pubparams$ on a state $\sum_{x}\alpha_x \ket{x}$, but wanted to measure some classical predicate of $x$. Computing on $x$ and measuring the result, and then performing $\Recover$ produces a state which is close to the state obtained if the computation had been performed directly on $\sum_{x}\alpha_x \ket{x}$.  This is formalized in the following lemma.

\begin{lemma}\label{lem:eval-measure-recover}
    Let $U = \sum_{x \in \calX} U_x \otimes \ketbra{x}$ be a unitary acting on registers $\calA$ and $\calB$ for some set of unitaries $\{U_x\}_{x\in\calX}$. Then the following two channels are negligibly far in diamond distance for all $(\pubparams, \td) \in \mathsf{SUPP}(\Setup(1^\secpar, \lossy))$:
    \[
        \left|(I_\calA\otimes \Recover_{\td}) \circ (U\otimes I_\calC) \circ (I_\calA \otimes \Exp_{\pubparams}) - (U \otimes I_\calC) \right|_{\diamond} = \negl
    \]
    where $\Recover_{\td}$ and $\Exp_{\pubparams}$ act only on register $\calB$ and an ancilla register $\calC$ (which is used to contain the randomness $r$). 

    If $\calF$ has $(1-\delta)$-state recovery, then the distance is $\leq \delta$.
\end{lemma}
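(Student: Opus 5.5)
The plan is a commutation argument: since $U$ is controlled on the computational-basis value of register $\calB$, it commutes with everything that $\Exp_\pubparams$ and $\Recover_\td$ do to $\calB$ that is also diagonal in $x$. Write $\Exp_\pubparams$ as an isometry $W$ followed by a measurement. The isometry $W$ prepares $\ket{\rand_\pubparams}$ in $\calC$ and computes $f_\pubparams(x;r)$ into a fresh register $\calY$. The measurement is a computational-basis measurement of $\calY$. Every step here is controlled on $x$ in $\calB$ and never changes $x$; it acts only on $\calC,\calY$ as a function of $x$. The same holds for $U$, which is block diagonal in $x$ and acts only on $\calA$ given $x$. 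So $(U\otimes I)$ commutes with $W$ as operators, because both are of the form $\sum_x(\cdot)_x\otimes\ketbra{x}$ on disjoint remaining registers. The measurement of $\calY$ also commutes with $U$, since $U$ does not touch $\calY$.

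The key steps in order:
\begin{enumerate}
\item Rewrite $(U\otimes I_\calC)\circ(I_\calA\otimes \Exp_\pubparams)$ as $(I_\calA\otimes\Exp_\pubparams)\circ U$. To do this, commute $U$ past the coherent evaluation, which is exact by the block-diagonal structure, and then past the measurement of $\calY$, which acts on a disjoint register.
\item Apply $I_\calA\otimes(\Recover_\td\circ\Exp_\pubparams)$ after $U$. By the recovery-mode property in \Cref{def:dual-trapdoor-recovery}, $\Recover_\td\circ\Exp_\pubparams$ is negligibly close to the identity in diamond distance (or at most $\delta$ under $(1-\delta)$-state recovery).
\item Because the diamond norm is stable under tensoring with the identity on $\calA$ and under pre-composition with the channel $U$, the total distance is at most $\|\Recover_\td\circ\Exp_\pubparams-\mathrm{id}\|_\diamond$. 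This gives $\negl$, or $\delta$ in the granular version.
\end{enumerate}

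The main obstacle is that the definition's recovery guarantee is stated for pure product inputs $\sum_x\alpha_x\ket{x}$ on $\calB$ alone, whereas here $\calB$ is entangled with $\calA$ and with a reference system. I would handle this by using the more general experiment in the definition, which allows states $\sum_x\alpha_x\ket{\phi_x}_A\ket{x}_B$; taking $A$ to include $\calA$ and the reference covers every purification. I would also use the definition's stated reformulation as closeness to the identity channel in diamond distance, which already quantifies over entangled inputs. With that in hand the remaining steps are bookkeeping: check that the registers on which $\Recover_\td$ acts ($\calB,\calC$ and the classical $y$) are untouched by $U$ other than through its control on $\calB$, and that the commutation in the first step is an exact operator identity.
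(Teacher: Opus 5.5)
Your proposal is correct and follows essentially the paper's own argument. You commute $U$ past $\Exp_\pubparams$, using that both are block-diagonal in the computational basis of $\calB$ and otherwise act on disjoint registers. You then factor out $U$ via contractivity of the diamond norm under composition and invoke the recovery-mode trapdoor property in its diamond-distance form. That diamond-distance form quantifies over entangled inputs, so it also settles your concern about $\calB$ being entangled with $\calA$; the paper relies on it implicitly.
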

\begin{proof}
    First, observe that $U\otimes I_\calC$ commutes with $I_\calA\otimes \Exp_\pubparams$ because both $U$ and $\Exp_{\pubparams}$ are diagonal in the computational basis on register $B$ and otherwise operate on disjoint registers. Then, using submultiplicativity of the diamond norm to factor out $U$ and the recovery mode trapdoor property imply the claim.
    
\end{proof}

\begin{theorem}\label{thm:dual-mode-state-recovery}
    For any $n\in \bbN$, there exists a dual-mode trapdoor function family with $(1-2^{-\secpar})$-state recovery for $\calX = \{0,1\}^n$ assuming the post-quantum hardness of ${\sf LWE}$.
\end{theorem}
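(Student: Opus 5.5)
We instantiate the family with the LWE-based dual-mode claw-free construction of \cite{BCMVV18,Mahadev18}, applied bit-by-bit to the input $x\in\{0,1\}^n$, and add a $\Recover$ algorithm. It suffices to handle a single bit with $(1-2^{-\secpar}/n)$-state recovery and then take $n$ independent copies in parallel. For one bit, $\Setup(1^\secpar,\mode)$ samples $(\vecA,\td_\vecA)$ with a lattice trapdoor (e.g.\ Micciancio--Peikert), $\vecs$, and a short $\vece$. It sets $\vecu=\vecA\vecs+\vece$ in $\lossy$ mode and $\vecu$ uniform in $\injective$ mode. The function is
\[
f_\pubparams(b;(\vecx,\vece'))=\vecA\vecx+\vece'+b\cdot\vecu,
\]
where $\vecx$ is uniform over $\bbZ_q^m$ and $\vece'$ is sampled from a discrete Gaussian whose width is super-polynomially larger than $\|\vece\|$. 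The state $\ket{\rand_\pubparams}$ is the product of a uniform superposition over $\vecx$ and a Gaussian superposition over $\vece'$, which is efficiently preparable. Dual-mode indistinguishability follows directly from LWE, since $\vecA\vecs+\vece\approx_c$ uniform. In $\injective$ mode, for an appropriately chosen $\vecA$, $\vecu$ lies far from the lattice $\vecA\bbZ_q^m$ with overwhelming probability over $\Setup$. Then $\Ext$ uses $\td_\vecA$ to decide whether $y$ is close to $\vecA\bbZ_q^m$ and outputs $b$ accordingly. We restrict the support of $\Setup$ to good $\vecu$ by resampling, so that the extraction guarantee holds with probability $1$.

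Recovery in $\lossy$ mode works by exhibiting an explicit shift. Preimages of $y$ for $b=1$ are exactly the preimages for $b=0$ shifted by the map $\phi:(\vecx,\vece')\mapsto(\vecx-\vecs,\vece'-\vece)$. After measuring $y$, the residual state is
\[
\sum_b\alpha_b\ket{\phi_b}\ket{b}\otimes\ket{\chi_{y,b}},
\qquad \ket{\chi_{y,1}}=\phi^{b}\ket{\chi_{y,0}} \text{ up to Gaussian reweighting}.
\]
$\Recover$ first applies the controlled map $\phi^{-b}$ using $\td=(\vecs,\vece)$, which is an efficient reversible arithmetic operation. This leaves approximately $\big(\sum_b\alpha_b\ket{\phi_b}\ket{b}\big)\otimes\ket{\chi_{y,0}}$. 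It then discards register $C$, which is now essentially unentangled with $B$. The residual state is close to the original $\ket{\psi}$.

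The main obstacle is bounding the error of this approximation. Three sources must be controlled.
\begin{itemize}
\item The Gaussian weights on $\vece'$ and on $\vece'-\vece$ differ slightly. The superposition over errors is close to its shift by $\vece$ when the Gaussian width exceeds $\|\vece\|$ by a factor $2^{\secpar}\cdot\poly$; this gives overlap $1-2^{-\Omega(\secpar)}$, for each $y$ in expectation.
\item For a typical measured $y$, the probability of $y$ under $b=0$ and under $b=1$ must be nearly equal. This is needed so that the relative amplitudes $\alpha_b$ are not reweighted by the measurement.
\item $\vecA$ is only full-rank modulo $q$ with overwhelming probability. We handle this by restricting the support of $\Setup$.
\end{itemize}
We plan to bound everything through a single fidelity computation: the expected fidelity over $y$ between the recovered state and $\ket{\psi}$ is an inner product of the shifted and unshifted randomness states, which is $\ge 1-2^{-2\secpar}/n^2$ for suitable parameters. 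A Markov argument over $y$ then converts this into the required ``probability $1-\negl$ that the trace distance is $\negl$'' form. Finally, a hybrid/triangle argument over the $n$ bits, using \Cref{lem:eval-measure-recover}-style commutation to treat each bit independently, gives total distance $\le 2^{-\secpar}$, i.e.\ $(1-2^{-\secpar})$-state recovery.
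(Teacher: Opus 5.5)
Your proposal is correct. It uses the same construction as the paper: $f(b;(\vecx,\vece'))=\vecA\vecx+b\vecu+\vece'$ with $\vecu=\vecA\vecs+\vece$ in recovery mode and uniform $\vecu$ in injective mode, extended bit-by-bit with an additive loss. Your $\Recover$ and its analysis, however, take a genuinely different route.

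\textbf{The paper's route.} Its $\Recover$ uses the lattice trapdoor $\vect_\vecA$ twice. It inverts $\vecu$ to get $(\vecs,\vece)$, and inverts $\vecy$ to get the unique $b=0$ preimage. It then subtracts the $b$-dependent preimage $(\vecx_b,\vece'_b)$, returning the randomness register to $\ket{\mathbf{0}}$. The analysis delays the measurement of $\vecy$ and uses a Hellinger bound to replace the $b=1$ branch by a reweighted branch $\ket{\Eval'(1)}$. After a change of variables, the randomness register is $\ket{\vecx-b\vecs,\vece'-b\vece}$ while $\vecy=\vecA\vecx+\vece'$ no longer depends on $b$.

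\textbf{Your route.} You observe that only the shift by $(\vecs,\vece)$ depends on $b$. Applying that controlled shift already leaves the randomness register in a $\vecy$-dependent but $b$-independent state, so it can simply be discarded. Neither the inversion of $\vecy$ nor $\vect_\vecA$ is needed for recovery; they matter only for $\Ext$. Your overlap $\sum_{\vece''}\sqrt{p(\vece'')\,p(\vece''-\vece)}$ is the same Bhattacharyya/Hellinger quantity the paper invokes. It also subsumes your first two bullets, which here are one issue: with unique preimages, the relative reweighting of $\alpha_0$ versus $\alpha_1$ is exactly $\sqrt{p(\vece'_0)}$ versus $\sqrt{p(\vece'_0-\vece)}$.

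\textbf{What each buys.} Your $\Recover$ is simpler and does not rely on uniqueness or efficient inversion of recovery-mode preimages. It would therefore also cover families with many preimages per $\vecy$, which is exactly the paper's caveat about the group-action construction. Your explicit Markov step, from the averaged fidelity to the per-$\vecy$ ``probability $1-\negl$'' form of the definition, is a point the paper glosses over. The paper's route gives a clean uncomputation with ancillas returned to $\ket{\mathbf{0}}$, and a single trapdoor $\vect_\vecA$ serving both modes.

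\textbf{Minor points.}
\begin{itemize}
\item The Markov conversion loses a square root, so you need a somewhat larger width ratio (or a rescaled $\secpar$, as the paper also allows) to land exactly at $(1-2^{-\secpar})$.
\item The error randomness must be truncated to $\|\vece'\|\le B'$, as in the paper's $\chi_{B'}$, because $\Ext$ must succeed for \emph{every} $r$. Your overlap computation already tolerates truncation, since boundary terms simply contribute zero.
\item With $\vecA\in\mathbb{Z}_q^{m\times n}$, $\vecx$ ranges over $\mathbb{Z}_q^n$, not $\mathbb{Z}_q^m$.
\end{itemize}
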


The construction is very similar to the one from \cite{BCMVV18,Mahadev18}, except for the new $\Recover$ algorithm.
At a high level, the result of evaluating \cite{BCMVV18,Mahadev18}'s trapdoor claw-free function in superposition and measuring the result $y$ is close to
\[
    \sum_{x} \alpha_x \ket{x, r_x}
\]
where $f_\pubparams(x;r_x) = y$ for all $x$.
By using the trapdoor to invert $r_x$ from $y$ (controlled on $x$), the second register can be returned to $0$, recovering the original state.

\paragraph{Other Constructions.} 
In addition to the LWE-based construction presented here, we sketch a few other constructions. 
\begin{itemize}
    \item \textbf{Classical Commitments to Quantum States.} Any classical commitment to a quantum state~\cite{GKNV25} satisfies the recovery property. If the receiver sends the committer their secret state, the prover can open and decode a measurement of their state in either the computational or Hadamard basis. This permits the committer to implement the extractor from \cite{VidickZhang21,GKNV25}, which succeeds with overwhelming probability since the committer can generate accepting openings with overwhelming probability.
    
    To create a a dual-mode trapdoor function family with state recovery, the commitment scheme only needs to additionally have an indistinguishable injective mode. For example, \cite{GKNV25}'s work satisfies this since they build on \cite{BCMVV18,Mahadev18}'s dual-mode claw-free trapdoor functions.

    \item \textbf{Group Actions.} 
    \cite{C:GupVai24} showed how to build dual-mode claw-free trapdoor functions from group actions by building on \cite{TCC:AlaMalRah22}. It is likely that this construction also satisfies the recovery property using a similar approach of using the trapdoor to coherently uncompute the entangled randomness. 
    The main caveat is that their basic construction does not have $1-\negl$ overlap in recovery mode and they rely on an XOR amplification lemma (Lemma 5 in \cite{C:GupVai24}) to achieve $1-\negl$ overlap. The amplification step allows there to be many $r$ such that $f(x;r) = y$ for each $y$. One would need to verify that the amplified variant still allows uncomputation of a superposition over such $r$.\footnote{Thanks to \cite{aparna-email} for discussing this point with us.}
\end{itemize}

\subsection{Construction} 
\label{appendix-NTCF-construction}

We give a version with domain $\calX = \{0,1\}$ and note that the domain can be extended by appending evaluations of the same function, at the cost of an additive overhead in recovery distance (e.g. for $\calX = \{0,1\}^n$, the concatenated family has $(1-n2^{-\secpar})$-state-recovery). The recovery parameter can be adjusted by modifying $\secpar$.

The construction is lattice-based, and makes use of the following theorem from \cite{MicciancioP11}.

\begin{theorem}[Theorem 5.1 in \cite{MicciancioP11}]\label{thm:MP11}
Let $n, m \geq 1$ and $q \geq 2$ be such that $m = \Omega(n \log q)$. There is
an efficient randomized algorithm $\mathsf{TrapGen}_\mathsf{MP}(1^n,1^m,q)$ that returns a matrix $\vecA\in \mathbb{Z}_q^{m\times n}$ together with a trapdoor $\vect_\vecA$ such that the distribution of $\vecA$ is negligibly (in $n$) close to the uniform distribution. Moreover, there is
an efficient algorithm $\Invert_\mathsf{MP}$ such that with overwhelming probability over $(\vecA, \vect_\vecA)\gets\mathsf{TrapGen}_\mathsf{MP}(1^n,1^m,q)$, 
  the following holds:  for every $\vecs\in\mathbb{Z}_q^n$ and every $\vece\in\mathbb{Z}_q^m$ s.t.\ $\norm{\vece} \leq  \frac{q}{C\sqrt{n\log q}}$ (where $C$ is a universal constant),  
\[
\Invert_\MP(\vecA,\vect, \vecA\cdot \vecs + \vece)=(\vecs,\vece)
\] 
\end{theorem}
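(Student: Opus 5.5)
We would prove this by reproducing the Micciancio--Peikert gadget-trapdoor construction, working in the column convention of the statement, where $\vecA\in\mathbb{Z}_q^{m\times n}$ and samples are $\vecA\vecs+\vece$. Let $k=\lceil\log q\rceil$ and let $\vecG\in\mathbb{Z}_q^{nk\times n}$ be the transposed gadget matrix $\vecG=\vecI_n\otimes(1,2,\dots,2^{k-1})^{T}$. We would take $\bar m=m-nk=\Omega(n\log q)$, sample a uniform $\bar\vecA\in\mathbb{Z}_q^{\bar m\times n}$ and a short matrix $\vecR\in\{0,\pm1\}^{nk\times\bar m}$ (or a subgaussian one), and set
\[
\vecA=\begin{pmatrix}\bar\vecA\\ \vecG-\vecR\bar\vecA\end{pmatrix},\qquad \vect_\vecA=\vecR .
\]
The proof then has three steps: near-uniformity of $\vecA$, inversion of the gadget, and a norm bound on the error.

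\textbf{Near-uniformity.} Since $\bar m=\Omega(n\log q)$, the leftover hash lemma shows that $(\bar\vecA,\vecR\bar\vecA)$ is within statistical distance $\negl[n]$ of uniform. Subtracting this from the fixed matrix $\vecG$ preserves uniformity, so $\vecA$ is negligibly close to uniform.

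\textbf{Inversion.} Given $\vecb=\vecA\vecs+\vece$ with $\vece=(\vece_1,\vece_2)$, we would apply the trapdoor transformation $\vecT=[\vecR\mid \vecI]$ to obtain
\[
\vecT\vecb=\vecR(\bar\vecA\vecs+\vece_1)+(\vecG-\vecR\bar\vecA)\vecs+\vece_2=\vecG\vecs+(\vecR\vece_1+\vece_2).
\]
This turns the problem into LWE for the gadget $\vecG$. Here $\vecs$ is recovered coordinate by coordinate: for each $s_i$ the samples are $2^{j}s_i+e'_j$, and we read off the bits of $s_i$ from the most significant down, as in Babai rounding with the basis of the gadget lattice $\Lambda^\perp(\vecG)$. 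This succeeds whenever $\|\vecR\vece_1+\vece_2\|_\infty<q/4$, and in fact an $\ell_2$ bound of order $q/\mathrm{const}$ per block suffices, because the basis of $\Lambda^\perp(\vecG)$ has Gram--Schmidt norms at most $\sqrt5$. Once $\vecs$ is known, we output $\vece=\vecb-\vecA\vecs$.

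\textbf{Error bound.} This is the step we expect to be the main obstacle, namely controlling $\|\vecR\vece_1\|$ uniformly over all admissible $\vece$. For this we would use the fact that a random $\vecR$ with independent subgaussian or $\pm1$ entries has largest singular value $s_1(\vecR)=O(\sqrt{nk}+\sqrt{\bar m})=O(\sqrt{n\log q})$ with overwhelming probability. This is where the ``overwhelming probability over $\mathsf{TrapGen}$'' in the statement comes from. It gives
\[
\|\vecR\vece_1+\vece_2\|\le (s_1(\vecR)+1)\|\vece\|\le O(\sqrt{n\log q})\cdot\frac{q}{C\sqrt{n\log q}},
\]
which is below the decoding radius $q/(2\sqrt5)$ once the universal constant $C$ is chosen large enough. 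Since this bound holds for every $\vecs$ and every $\vece$ of the stated norm, inversion is correct for all of them simultaneously, and the remaining work is just fixing the constants.
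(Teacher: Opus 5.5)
Your proof is the standard Micciancio--Peikert gadget-trapdoor argument, and that is all the paper relies on: it cites the theorem and, in the remark after it, sketches the same mechanism. That mechanism applies a unimodular transform built from a binary $\vecR$ to $\bar\vecA$ stacked on $\vecG$, and inverts by undoing the transform and then decoding $\vecG$. Your block placement is the correct one. Your matrix $\vecA=\begin{pmatrix}\bar\vecA\\ \vecG-\vecR\bar\vecA\end{pmatrix}$ equals $\vecT\begin{pmatrix}\bar\vecA\\ \vecG\end{pmatrix}$ for the lower-triangular $\vecT=\begin{pmatrix}\vecI&\mathbf{0}\\ -\vecR&\vecI\end{pmatrix}$ with $\vecR$ of size $nk\times\bar m$. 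Your $[\vecR\mid\vecI]$ is the bottom block row of $\vecT^{-1}$. The remark as written puts $-\vecR$ in the upper-right block, which gives $\vecA=\begin{pmatrix}\bar\vecA-\vecR\vecG\\ \vecG\end{pmatrix}$. Its bottom block is the public gadget, so that $\vecA$ is far from uniform.

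There are two small caveats on your side.

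First, your bound $s_1(\vecR)=O(\sqrt{nk}+\sqrt{\bar m})=O(\sqrt{n\log q})$ needs $\bar m=O(n\log q)$. That does not follow from the hypothesis $m=\Omega(n\log q)$. If $m$ is much larger, a worst-case $\vece_1$ aligned with the top singular vector pushes $\vecR\vece_1$ past the decoding radius. The fix is to let $\vecR$ act on only $\Theta(n\log q)$ of the uniform rows, with zero columns elsewhere. This keeps the leftover-hash step intact and keeps $C$ universal.

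Second, your error analysis gives correctness with overwhelming probability over $\vecR$, which is exactly what the theorem states. It does not give correctness for every generated pair, which the paper's remark goes on to assert. That stronger form would need either rejection sampling of $\vecR$ on $s_1(\vecR)$, or the deterministic bound $\|\vecR\|_F=O(n\log q)$, which costs a factor $\sqrt{n\log q}$ in the noise tolerance.

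A minor correction: for $q=2^k$ the bits of $s_i$ come out least-significant first, starting from the $2^{k-1}$ sample. For the paper's prime $q$, the decoding is done by the nearest-plane decoder with the $\sqrt5$ Gram--Schmidt bound, which you also cite.
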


\begin{remark}
We mention that the guarantee in the theorem above actually holds {\em for every} $(\vecA, \vect_\vecA)$ generated by $\mathsf{TrapGen}_\mathsf{MP}(1^n,1^m,q)$.  The reason is that  $\mathsf{TrapGen}_\mathsf{MP}(1^n,1^m,q)$ works as follows:
\begin{enumerate}
    \item Set $k:=\lceil{\log q\rceil}$.
    \item Let $\bar{m}=m-nk$.
    \item Sample $\bar{\vecA}\gets\mathbb{Z}_q^{\bar{m}\times n}$.
    \item Let $\vecG\in\mathbb{Z}_q^{nk\times n}$ be a gadget matrix such that there exists an efficient algorithm that for every $\vecx$ and every $\vece$ of bounded norm, given $\vecG\vecx+\vece$ outputs $\vecx$.  Such a gadget matrix exists by Lemma 4.1 in \cite{MicciancioP11}.

  \item Let  $\vecA'=
\left(\begin{matrix}
\bar{\vecA}\\
\vecG
\end{matrix}\right)\in\mathbb{Z}_q^{m\times n}$.

\item Sample $\vecR\gets\{0,1\}^{nk\times nk}$.
\item Let
$\vecT=
\left(\begin{matrix}
\vecI & -\vecR \\
\vec0 & \vecI
\end{matrix}\right)\in\mathbb{Z}_q^{m\times m}$.
where $\vecI$ is the $\bar{m}$-by-$\bar{m}$ identity matrix.
\item Let $\vecA=\vecT\vecA'$.
\item Let $\vecT^{-1}=
\left(\begin{matrix}
\vecI & \vecR \\
\vec0 & \vecI
\end{matrix}\right)$.
\item Output $(\vecA,\vect_\vecA)$
where $\vect_\vecA=\vecT^{-1}$.

\end{enumerate} 
Given $\vecA\vecx+\vece$, multiply this vector with $\vect_\vecA$ to obtain $\vecA'\vecx+\vecT^{-1}\vece=\vecA'\vecx+\vece'$.  From this one can efficiently compute $\vecG\vecx+\vece''$ and thus invert $\vecx$.
\end{remark}

\begin{construction}[Dual-Mode Trapdoor Functions with State Recovery]
\label{constr:fns-with-state-recovery}
The construction for input domain $\calX=\{0,1\}$ is as follows.
\begin{itemize}
    \item {\bf $\Setup$:} This algorithm  takes as input the security parameter $1^\secpar$ along with a flag $\mode \in \{\lossy, \injective\}$, and outputs a public parameter $\pubparams$ along with a trapdoor $\td$.  It is associated with the following parameters:
    \begin{itemize}
        \item A prime $q \approx  2^{2\secp}/\poly$.
        \item Parameters $n=n(\secp)$ and $m=m(\secp)$, both polynomially bounded functions of $\secp$, such that $m\geq n\log(q)$ and $n\geq \ell\log(q) + \secp$.
        \item Two error distributions $\chi_B,\chi_{B'}$ over $\mathbb{Z}_q$. These are discrete Gaussians truncated to $B\in \mathbb{N}$ and $B'\in \mathbb{N}$, respectively. The truncation bounds satisfy:
        \begin{enumerate}
        \item $\frac{B}{B'}=\frac{1}{2^{2\secpar}(2\pi m)}$.
        \item $B'\leq \frac{q}{2C\sqrt{n\cdot m\cdot \log q}}$, where $C$ is the  universal constant from \Cref{thm:MP11}.
        \item  $\Pr_{\vece\gets \chi^m}[\norm{\vece}> B]=\negl$.
        \item $\Pr_{\vece'\gets (\chi')^m}[\norm{\vece'}> B']=\negl$.

        \end{enumerate}
    \end{itemize}
        
    \noindent It does the following:
    \begin{enumerate}
        \item Generate $(\vecA,\vect_\vecA)\gets \mathsf{TrapGen}_\MP(1^n,1^m,q)$.
        
        \item If $\mode=\lossy$ then sample a uniform random string $\vecs\gets \mathbb{Z}_q^n$ and a random error vector $\vece\gets\chi_B^m$, and set $\vecu=\vecA\cdot \vecs+\vece$.

        \item If $\mode=\injective$ then sample a uniform $\vecu\gets \mathbb{Z}_q^m$. 
        
        If we want the injective mode inversion algorithm to succeed in inverting with probability~$1$ (as opposed to $1-\negl$) then we should use $\vect_\vecA$ to check if that $\vecu$ is not of the form $\vecA\vecs+\vece$ for some low-norm $\vece$ (a condition that is satisfied with overwhelming probability).  If it is of this form, then we can add the vector $\left(\lfloor{\frac{q}{2}\rfloor},\ldots,\lfloor{\frac{q}{2}\rfloor}\right)$ to $\vecu$.
        \item Output $\pubparams=(\vecA,\vecu)$ and $\td=(\vecA,\vecu,\vect_\vecA)$.
    \end{enumerate}

\item {\bf Function $f_\pubparams$:} The domain of $f_\pubparams$ is $\calX\times \calR$, where $\calX=\{0,1\}$. Its range is $\calR=\mathbb{Z}_q\times [-B',B']^m$, and its range is $\mathbb{Z}_q^m$. $f_\pubparams$ is defined as follows:
\[
f_{\vecA,\vecu}(b,(\vecx,\vece'))=\vecA \vecx+b\vecu+\vece'
\]
        \item \textbf{Injective mode $\Ext$ algorithm.}  
        The poly-time algorithm $\Ext$ takes as input  $(\td,\vecy)$ and does the following:
        \begin{enumerate}
            \item Parse $\td=(\vecA,\vecu,\vect_\vecA)$.
            \item Compute $(\vecx',\vece)=\Invert_\MP(\vecA,\vect, \vecy)$.
            \item If $\vecy=\vecA\vecx'+\vece$  then output $b=0$.
            \item Otherwise, output $b=1$.
        \end{enumerate}
        \item {\bf $\Recover$ algorithm:} $\Recover$ is a $\QPT$ algorithm that takes as input $\td$, $\vecy$, and a state 
        \[
        \ket{\varphi_\vecy}=\sum_{b\in\{0,1\}}\alpha_b \ket{b} \sum_{(\vecx,\vece'):~f_\pubparams(b,\vecx,\vece')=\vecy}\ket{\vecx,\vece',\vecy}=\sum_{b\in\{0,1\}}\alpha_b \ket{b,\vecx_b,\vece'_b}
        \]
        where for every $b\in\{0,1\}$,
        \[
        f_\pubparams(b,\vecx_b,\vece'_b)=\vecy.
        \]
        It computes $\sum_{b\in\{0,1\}}\alpha_b\ket{b}$, as follows:\footnote{This entire computation can be done coherently controlled on $b$, so that in an $n$-time parallel repetition the $\Recover$ algorithm can compute the $2^n$ preimages in superposition, instead of computing them individually and blowing up the runtime.}
        \begin{enumerate}
            \item Parse $\td=(\vecA,\vecu,\vect_\vecA)$.
            \item Compute $(\vecs,\vece_\vecs)=\Invert_\mathsf{MP}(\vecA,\vect_\vecA,\vecu)$.
            \item Compute $(\vecx,\vece_\vecx)=\Invert_\mathsf{MP}(\vecA,\vect_\vecA,\vecy)$.
            \item Let $\vecx_0=\vecx$ and let $\vecx_1=\vecx-\vecs$.
             \item Let $\vece'_0=\vece_\vecx$ and let $\vece_1= \vece_\vecx-\vece_\vecs$.
             \item Apply the unitary mapping
             \[
                \ket{b, (\vecr_\vecx, \vecr_\vece)} \mapsto \ket{b, (\vecr_\vecx - \vecx_b, \vecr_\vece - \vece'_b)}
            \]
            to $\ket{\varphi_y}$.

            \item Output the resulting state.
            
        \end{enumerate}
\end{itemize}
\end{construction}

The dual-mode indistinguishability follows immediately from the post-quantum hardness of ${\sf LWE}$. The inversion algorithm $\Ext$ works due to \Cref{thm:MP11}.

\begin{claim}
    The above construction has $(1-2^{-\secpar})$-state recovery.
\end{claim}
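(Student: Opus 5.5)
We prove the claim by tracking the state through $\Exp$ and showing that, except on a small "bad" portion of the Gaussian mass, the measured $\vecy$ leaves a state of the form $\sum_b \alpha_b\ket{b}\ket{\vecx_b,\vece'_b}$ with the preimages $(\vecx_b,\vece'_b)$ uniquely determined by $(b,\vecy)$ and computed correctly by $\Recover$. Write $\vecu=\vecA\vecs+\vece$ with $\|\vece\|\le B$. A preimage of $\vecy$ under $b=1$ is $(\vecx_0-\vecs,\vece'_0-\vece)$ whenever $(\vecx_0,\vece'_0)$ is a preimage under $b=0$. The $\Recover$ step subtracts exactly these values, so the post-recovery state is $\sum_b\alpha_b\ket{b}\ket{0}$ provided three things hold. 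First, both preimages exist with nonzero weight. Second, each is unique. Third, $\Invert_\MP$ returns $(\vecx_0,\vece'_0)$ on $\vecy$ and $(\vecs,\vece)$ on $\vecu$. The third point follows from \Cref{thm:MP11} and the remark after it, because $\|\vece'_0\|\le \sqrt m B'\le q/(C\sqrt{n\log q})$ by condition 2. Uniqueness of $\vecx$ given $\vecy$ then follows from the same decoding guarantee.

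The core quantitative step is to bound the trace distance between the ideal state $\sum_b\alpha_b\ket{b}\otimes\ket{\rand_\pubparams}$ evaluated and measured, and the "aligned" state in which both branches have equal amplitude weight on $\vecy$. Concretely, for each $\vecy$ the branch-$b$ amplitude is $\alpha_b\sqrt{p(\vecx_b,\vece'_b)}$. Here $p$ is uniform over $\vecx$ times the truncated Gaussian density $\chi_{B'}^m(\vece'_b)$. So the post-measurement state is proportional to $\sum_b\alpha_b\sqrt{\chi_{B'}(\vece'_b)}\ket{b,\vecx_b,\vece'_b}$, where $\vece'_1=\vece'_0-\vece$. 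I will compare this with the state in which $\chi_{B'}(\vece'_1)$ is replaced by $\chi_{B'}(\vece'_0)$. The averaged distance is governed by the Hellinger overlap $\sum_{\vece'}\sqrt{\chi_{B'}^m(\vece')\chi_{B'}^m(\vece'-\vece)}$. This is the standard BCMVV computation: it is at least $1-O(\sqrt m\,\|\vece\|/B')$, i.e.\ $1-O(m B/B')$, and condition 1 ($B/B'=1/(2^{2\secpar}2\pi m)$) makes the loss at most about $2^{-2\secpar}$. Truncation effects (points where $\vece'_0-\vece$ leaves $[-B',B']^m$, so that no branch-1 preimage exists) are covered by the same bound. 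The $\negl$ tail conditions 3 and 4 absorb the event $\|\vece\|>B$, which is already excluded by truncation. Converting the fidelity $1-2^{-2\secpar}$ to trace distance via Fuchs–van de Graaf gives $\sqrt{2\cdot 2^{-2\secpar}}\approx 2^{-\secpar}$, which matches the claimed $(1-2^{-\secpar})$ recovery.

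Finally, I apply the $\Recover$ unitary to the aligned state. On this state it maps $\ket{b,\vecx_b,\vece'_b}$ to $\ket{b,0,0}$ with common amplitude factor, yielding exactly $\sum_b\alpha_b\ket{b}$. Since unitaries preserve trace distance, the actual output is within $2^{-\secpar}$ of the original. The result holds for all $\alpha_b$ and extends to an ancilla register $A$, because everything is controlled on $b$; this gives the diamond-norm form.

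The main obstacle is the Gaussian-shift overlap estimate. It must hold uniformly over $\vecy$ after averaging with the measurement probabilities, and it must handle the fact that $\alpha_b$ weights the two branches differently. To deal with this, I will compute the overlap between the full pre-measurement states $\sum_b\alpha_b\ket{b}\sum_{\vecx,\vece'}\sqrt{p}\ket{\vecx,\vece'}\ket{f(b,\vecx,\vece')}$ and the idealized version directly. I will not work conditionally on $\vecy$, so that the weighting by $|\alpha_b|^2$ causes no difficulty.
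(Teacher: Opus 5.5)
Your proposal is correct and follows the paper's argument. Delay the measurement of $\vecy$, then replace the branch-$1$ Gaussian weights $\sqrt{p(\vece')}$ by the shifted weights $\sqrt{p(\vece'+\vece)}$, at a cost controlled by the BCMVV Hellinger bound computed on the full pre-measurement state, so the $\alpha_b$ weighting is harmless. $\Recover$ then disentangles the aligned state exactly, and your treatment of preimage uniqueness and the decoding radius fills in what the paper leaves implicit. One correction: the pure-state conversion actually yields $\sqrt{2}\cdot 2^{-\lambda}$ rather than $2^{-\lambda}$ (the paper's inequality $\sqrt{1-(1-2^{-2\lambda})^2}\le 2^{-\lambda}$ hides the same factor), and halving $B/B'$ restores the exact bound.
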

\begin{proof}
    Without loss of generality, we may delay the measurement of $y$ in $\Exp$ by coherently copying it to another register which is measured later. Let $\Exp'$ be the experiment with the delayed measurement, before the extra register is measured. We first compute $\Exp'$ for $b = 0$ and $b= 1$, which we call $\ket{\Eval(0)}$ and $\ket{\Eval(1)}$, respectively.
    We have
    \begin{align*}
        \ket{\Eval(0)} &\propto \ket{0} \otimes \sum_{(\vecx,\vece')} \sqrt{p(\vece')} \ket{(\vecx,\vece')} \otimes \ket{\vecA\vecx + \vece'}^{\otimes 2}
        \\
        \ket{\Eval(1)} &\propto \ket{1} \otimes \sum_{(\vecx,\vece')} \sqrt{p(\vece')} \ket{(\vecx,\vece')} \otimes \ket{\vecA\vecx + (\vecA\vecs + \vece) +  \vece'}^{\otimes 2}
    \end{align*}
    where $(\vecA\vecs + \vece) = \vecu$, where $p$ is the probability density function of $\chi_{B'}$, and where the normalization constant over $\vecx$ is suppressed since $\vecx$ is uniform and independent of $\vece'$.
    Following \cite{BCMVV18}, the distribution defined by sampling $y = \vecA\vecx + (\vecA\vecs + \vece) + \vece'$ with probability $p(\vece')$ has Hellinger distance
    \[
        1- e^{-\frac{2\pi m B}{B'}}
    \]
    to the distribution defined by sampling $y = \vecA\vecx + \vecA\vecs + (\vece + \vece')$ with probability $p(\vece' +\vece)$. Since we set $\frac{B}{B'} = \frac{1}{2^{\secpar}(2\pi m)}$, the distance is $\leq 2^{-\secpar}$.
    
    Since the trace distance of two superpostions over samples from distributions $g_1, g_2$ is related to the Hellinger distance $H^2(g_1, g_2)$ by $\sqrt{1 - (1-H^2(g_1, g_2))^2}$, the state $\ket{\Eval(1)}$ has trace distance
    \[
        \leq \sqrt{1-(1-2^{-2\secpar})^2} ~\leq 2^{-\secpar}
    \]
     from a state
    \[
        \ket{\Eval'(1)} \propto \ket{1} \otimes \sum_{(\vecx,\vece')} \sqrt{p(\vece'+\vece)} \ket{(\vecx,\vece')} \otimes \ket{\vecA\vecx + (\vecA\vecs + \vece) + \vece'}^{\otimes 2}
    \]
    Setting $\vecx' = \vecx - \vecs$ and $\vece'' = \vece+\vece'$, this state can be rewritten as
    \[
        \ket{\Eval'(1)} \propto \ket{1} \otimes \sum_{(\vecx',\vece'')} \sqrt{p(\vece'')} \ket{(\vecx'+\vecs,\vece''-\vece)} \otimes \ket{\vecA\vecx' + \vece''}^{\otimes 2}
    \]

    Now consider the result of $(I \otimes \Recover) \circ (I\otimes \Exp')$ on an arbitrary state $\alpha_0 \ket{\phi_0} \otimes \ket{0} + \alpha_1 \ket{\phi_1} \otimes \ket{1}$. By the trace distance bound above,
    
    the result of running $I\otimes \Exp'$ on this state has $\leq 2^{-\secpar}$ trace distance to 
    \begin{align*}
        \alpha_0\ket{\phi_0}\otimes &\ket{\Eval(0)} + \alpha_1\ket{\phi_1} \otimes \ket{\Eval'(1)}
        \\
        &= \sum_{(\vecx, \vece')} \sqrt{p(\vece')} \left(\sum_{b\in \{0,1\}} \alpha_b \ket{\phi_b} \otimes \ket{b} \otimes \ket{\vecx - b\vecs, \vece' - b\vece}\right) \otimes \ket{\vecA\vecx + \vece'}^{\otimes 2}
    \end{align*}
    Since $\Recover$ successfully inverts $(\vecx, \vece')$ from any $\vecA\vecx + \vece'$ with overwhelming probability over $(\vecA, \vect_\vecA)$ (\Cref{thm:MP11}), the result of running $I \otimes \Recover$ on this state is
    \begin{equation*}
        \left(\sum_{b\in \{0,1\}} \alpha_b \ket{\phi_b} \otimes \ket{b}\right) \sum_{(\vecx, \vece')} \sqrt{p(\vece')}  \otimes \ket{0} \otimes \ket{\vecA\vecx + \vece'}^{\otimes 2}
    \end{equation*}
    To complete the experiment, the register containing the extra copy of $\vecA\vecx + \vece'$ is measured in the computational basis to obtain $y$ and both registers containing $y$ are traced out, which cannot increase the trace distance.

\end{proof}
\section{State-Preserving Arguments for $\NP$}\label{sec:state-pres-np}

In this section, we construct an interactive argument for $\NP$ with the property that if the (honest) prover starts with a state $\ket{\psi}=\sum_w \alpha_w\ket{w}$ which is a superposition over valid witnesses, then after running the protocol it is left with a state $\rho$ that is negligibly far from its original state $\ket{\psi}$. We call such an argument system a {\em state preserving argument}.

\begin{definition}\label{def:state-preserving-NP}
    A \emph{state-preserving argument} for an $\NP$ language $\lang$ is an interactive protocol between a $\QPT$ prover $P$ and a ${\sf PPT}$ verifier $V$ that satisfies the following guarantees.

    \begin{itemize}
        \item \textbf{Completeness and State Preservation.} 
        
        For every $x\in \lang$ and every superposition $\ket{\psi} = \sum_{w} \alpha_{w} \ket{w}_{\calW}\otimes \ket{\phi_w}_{\calB}$ over witnesses for $x$ (encoded in the computational basis and potentially entangled with an external state $\ket{\phi_w}$), there exist negligible functions $\mu_1$ and $\mu_2$ such that for every $\secp\in\mathbb{N}$,
        \[
            \Pr\left[
               \left(\frac{1}{2}\big\|\ketbra{\psi} - \rho \big\|_1 \leq \mu_1(\secpar)\right)
                ~~\land~~ \left(b = \Accept\right):~
                (\rho, b) \gets \langle P(\ket{\psi}), V\rangle(1^\secpar, x)
            \right] \geq 1- \mu_2(\secpar),
        \]
        where $P$ acts only on register $\calW$ and where the notation $(\rho, b) \gets \langle P(\ket{\psi}), V\rangle(1^\secpar, x)$ means that $V$ outputs $b\in\{\Accept, \Reject\}$ and $P$'s residual state is $\rho$.

        When we wish to be more precise about the repair guarantees, we say that the argument is $\mu_1$-state-preserving.

        \item \textbf{Computational Soundness.} 
        For every polynomial $\ell=\ell(\secp)$, there exists a negligible function $\mu$ such that for every $\secpar \in \bbN$, every $x\notin \lang$ of size $\leq \ell(\secpar)$ and every $\QPT$ cheating prover $P^*$ with auxiliary input $\ket{\psi}$ consisting of at most ${\sf poly}(\ell(\secpar))$ qubits
        \[
            \Pr[\langle P^*(\ket{\psi}, V\rangle(1^\secpar, x) = \Accept] \leq \mu(\secpar)
        \]

    \end{itemize}

    We say a state-preserving argument for NP is an $\delta$-\emph{argument of knowledge} if there exists a QPT extractor $\Ext$ such that for every QPT adversarial prover $P^*$ and every statement $x$,
    \[
        \Pr[\Accept \gets \langle P^*, V\rangle(1^\secpar, x)]
        \leq 
        \Pr[(x, w) \in \calR_\lang: w\gets \Ext(P^*)] + \delta
    \]
    where $\calR$ is the NP relation for $\lang$. If $\delta = \negl$, we simply say that it is an argument of knowledge.
\end{definition}

State preservation composes in parallel since it also preserves entanglement with an external register. As a result, state-preserving arguments can be parallely amplified whenever parallel repetition amplifies soundness: coherently classical-copy the witness to produce $\sum_{w} \alpha_w \ket{w}^{\otimes n}$ and independently run a state-preserving argument on each witness copy.

\begin{lemma}\label{lem:parallel-repetition}
    The $n$-fold parallel repetition of a $(1-\delta)$-state-preserving argument is $(1-n\delta)$-state preserving.
    
\end{lemma}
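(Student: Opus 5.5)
The plan is a hybrid argument over the $n$ copies. Let the repeated protocol be defined as in the preceding paragraph. The honest prover starts from $\ket{\psi}=\sum_w \alpha_w \ket{w}_{\calW}\otimes\ket{\phi_w}_{\calB}$ and applies the coherent classical copy $\ket{w}\mapsto\ket{w}^{\otimes n}$ on registers $\calW_1,\dots,\calW_n$, which is a unitary $C$. It then runs the $i$-th instance of the base argument on register $\calW_i$. At the end it applies $C^\dagger$ to return to $\ket{\psi}$.

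The key observation is the one noted before the lemma. The base definition of state preservation (\Cref{def:state-preserving-NP}) quantifies over every superposition over witnesses, \emph{including arbitrary entanglement with an external register $\calB$}. So when we focus on instance $i$, we may take the external register to be $\calB\otimes\bigotimes_{j\neq i}\calW_j$ together with all private registers of the other instances. The state on $\calW_i$ is still a superposition over valid witnesses, since each branch $\ket{w}^{\otimes n}$ holds the same valid $w$ in $\calW_i$.

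Since the instances run in parallel, the overall prover map factors as $\Phi_n\circ\cdots\circ\Phi_1$, where $\Phi_i$ denotes instance $i$'s prover–verifier interaction together with the verifier's measurement. Each $\Phi_i$ acts only on $\calW_i$, its own ancillas, and its own verifier. The verifier's messages in different instances use independent coins, and the prover's operations on disjoint registers commute, so this ordering is without loss of generality.

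Define hybrids $H_k$, in which instances $1,\dots,k$ are actually executed and instances $k+1,\dots,n$ are replaced by the identity (with acceptance output). Applying $(1-\delta)$-state preservation to instance $k$, with the rest as external system, gives the following. Except with probability $\mu_2$, instance $k$ accepts and the joint residual state is within trace distance $\delta$ of the input to $\Phi_k$.

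Chaining these bounds uses the triangle inequality, plus the fact that trace distance is nonincreasing under the subsequent channels $\Phi_{k+1},\dots,\Phi_n$ and under $C^\dagger$. This yields a final state within $n\delta$ of $C C^{\dagger}$-restored $\ket{\psi}$, i.e. of $\ket{\psi}$ itself. By a union bound, all instances accept with probability $\ge 1-n\mu_2$.

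The main obstacle is the probabilistic form of the guarantee. For each $k$, the bound ``distance $\le\delta$'' holds only with probability $1-\mu_2$ over the verifier's coins and outcomes, not as a channel-level diamond bound. I would first convert each instance's guarantee into an averaged statement: the post-selected, averaged residual state is within $\delta+\mu_2$ of the input. Then I would chain these averaged bounds and absorb the $\mu_2$ terms into the negligible slack. Conditioning on acceptance of all instances can be handled the same way, by union bound.

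Soundness is not part of this lemma's claim. It would follow separately, from whatever parallel-repetition theorem applies to the base argument.
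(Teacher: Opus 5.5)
Your core argument matches the paper's proof: reorder the instances using disjointness of registers, apply the base guarantee to instance $k$ with everything outside $\calW_k$ as the external register, then use a union bound and the triangle inequality. The problem is your last paragraph. The ``main obstacle'' is not actually an obstacle, and the averaging fix you propose would keep you from proving the lemma as stated. Definition~\ref{def:state-preserving-NP} is a per-run guarantee: with probability $1-\mu_2$, the residual state \emph{conditioned on that run's transcript} is within $\mu_1$ of the input. If you pass to transcript-averaged, post-selected states, you only learn that the average final state is within roughly $n(\delta+\mu_2)$ of $\ket{\psi}$. That is strictly weaker: an averaged distance of $n\delta$ is consistent with half the runs being at distance $2n\delta$. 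Recovering a per-run statement would need a Markov step (e.g.\ on $1-\bra{\psi}\rho\ket{\psi}$), which loses a root in both the distance and the failure probability. You would then not obtain $n\delta$, and for non-negligible $\delta$ you would not even obtain a negligible failure probability.

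The fix is to chain the conditional states directly, as the paper does. Fix the transcripts of instances $1,\dots,k-1$ and let $\rho_{k-1}$ be the resulting conditional state. The earlier instances act as $I_{\calW_k}\otimes(\cdot)$, and post-selection on other registers preserves the subspace spanned by valid $w$ on $\calW_k$. So $\rho_{k-1}$ is still supported on valid witnesses in $\calW_k$, which you should say explicitly rather than only for the initial state $C\ket{\psi}$. After purifying its mixedness into the external register, the base guarantee applies: over instance $k$'s randomness, with probability at least $1-\mu_2$, instance $k$ accepts and $\frac{1}{2}\|\rho_k-\rho_{k-1}\|_1\le\delta$. A union bound over $k$ shows that all $n$ good events hold with probability at least $1-n\mu_2$. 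On that event, the triangle inequality over consecutive conditional states gives $\frac{1}{2}\|\rho_n-\rho_0\|_1\le n\delta$, and $C^\dagger$, being unitary, preserves this distance. Note that your hybrids $H_k$ already replace the later instances by the identity, so no data-processing step under $\Phi_{k+1},\dots,\Phi_n$ is needed. Invoking that step is what pushed you toward channel-level, averaged bounds.
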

\begin{proof}
    Since each repetition acts on disjoint registers from the others, the repetitions commute (for the purposes of determining the leftover state). Therefore we may consider running the repetitions one at a time. Let $\rho_i$ be the leftover state after repetition $i$, with $\rho_0 = \ketbra{\psi}$ being the initial witness superposition. 
    By state-preservation and union bound, $\frac{1}{2}\|\rho_i - \rho_{i+1}\|_1 \leq \delta$ for every $i$ with overwhelming probability. By triangle inequality the leftover state has distance $\leq n\delta$ from the prover's initial input $\ketbra{\psi}$ with overwhelming probability.
\end{proof}

\subsection{Construction}
We construct witness preserving arguments for $\NP$ by building on top of the well-known witness indistinguishable argument for the $\NP$-complete 3-coloring language. 
Instances for 3-coloring are graphs $x = (\calV,\calE)$. Witnesses are a 3-coloring of the graph, i.e., a function $w: \calV \rightarrow [3]$, such that for every edge $(i, j) \in \calE$, the vertices $i$ and $j$ have different colors: $w(i) \neq w(j)$. We consider $w$ to be written as a truth-table and denote the color it assigns to vertex $i$ by $w_i$.

Our construction makes use of a dual-mode trapdoor function family $\calF$, with a corresponding setup algorithm, denoted by $\Setup_\calF$ (see \Cref{def:dual-trapdoor-recovery}). We describe a version with $1/\poly$ soundness. Soundness can be amplified by running the protocol sequentially, without losing state preservation.

\begin{construction}[State-Preserving Arguments for $\NP$]\label{constr:state-pres-np}
    Let $\lang$ be the 3-coloring language and consider any statement $x\in \lang$. The prover starts with a superposition $\sum_{w} \alpha_w \ket{w}$ over witnesses $w$ for $x\in \lang$. 
    
    During the protocol, the prover will maintain three registers $\calW$, $\calP$, and $\regrand$. 
    \begin{itemize}
        \item $\calW = (\calW_1, \dots, \calW_{|\calV|})$ will contain the witness and is initialized to $\sum_{w} \alpha_w \ket{w}$. \item $\calP$ will contain a permutation $\pi:[3]\rightarrow [3]$ and is initialized to $\propto \sum_{\pi\in \mathsf{Perm}([3])} \ket{\pi}$. \item $\regrand = (\regrand_1, \dots, \regrand_{|\calV|})$ will contain randomness $r_i$ for evaluating a trapdoor function and is initialized to $\bigotimes_{i\in \calV} \ket{\rand_\pubparams}$.
        
\end{itemize}
    
    \begin{enumerate}
        \item \textbf{Verifier:} Sample $(\pubparams, \td) \gets \Setup_\calF(1^\secpar, \lossy)$. 
        
        Send $\pubparams$ to the prover.

        \item \textbf{Prover:} 
        \begin{enumerate}
            \item For every $i\in \calV$ apply an isometry
        \[
            \ket{w_i, \pi, r_i}_{\calW_i, \calP,\regrand_i}
            \mapsto 
            \ket{w_i, \pi, r_i}_{\calW_i, \calP,\regrand_i} \otimes \ket{f_{\pubparams}(\pi(w_i); r_i)}
        \]
        
        and measure the last register to obtain $y_i$.

        \item Send $y_{\calV} \coloneqq (y_1, \dots, y_{|\calV|})$ to the verifier.
\end{enumerate}
        \item \textbf{Verifier:} Sample a random edge $(i,j) \gets \calE$ and send $(i, j)$ to the prover.

        \item \textbf{Prover:} Measure $(\pi(w_i),\pi(w_j),r_i,r_j)$ using registers $\calP$, $\calW_i$, $\calW_j$, $\calR_i$, and $\calR_j$, and send it to the verifier.

        \item \textbf{Verifier:} Upon receiving $c = (c_i, c_j,r_i, r_j)$ from the prover, check that $c_i \neq c_j$, that $y_i = f_\pubparams(c_i;r_i)$, and that $y_j = f_{\pubparams}(c_j; r_j)$. If these are all true, output $\Accept$ and send $\td$ to the prover.  Otherwise, output $\Reject$.

        \item \textbf{Prover:} For each $k \in \calV\backslash\{i,j\}$, run $\Recover_\calF(\td, \cdot, y_k)$ on registers $(\calW_k, \calP_k)$. 
        
        Next, observe that for every witness $w$, there is a unique permutation $\pi_{c,w,(i,j)}$ such that $c_i = \pi_{c,w}(w_i)$ and $c_j = \pi_{c,w}(w_j)$; this holds because $w_i \neq w_j$ and fixing two outputs of a permutation on $[3]$ fully determines the permutation. Compute the unitary swapping
        \[
            \ket{w, \pi_{c,w,(i,j)}}_{\calW, \calP} 
            \leftrightarrow 
            \ket{w, \mathsf{id}}_{\calW, \calP}
        \]
        where $\mathsf{id}$ is the identity permutation, and acting as the identity elsewhere.

        Output register $\calW$.
    \end{enumerate}
\end{construction}

\begin{theorem}\label{thm:witness-preserving-NP}
    \Cref{constr:state-pres-np} is a  $(1-2^{-\secpar})$-state-preserving argument for $\NP$ with soundness $1-\frac{1}{|\calV|}$, assuming the existence of dual-mode trapdoor function family with $(1-2^{-\secpar}/|\calV|)$-state recovery (which in turn can be constructed from the post-quantum hardness of {\sf LWE}~\cite{BCMVV18}).

    Furthermore, its parallel repetition is an argument of knowledge.
\end{theorem}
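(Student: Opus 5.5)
We prove the three claims of \Cref{thm:witness-preserving-NP} separately: completeness with state preservation, soundness, and argument of knowledge for the parallel repetition.

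\textbf{Completeness and state preservation.} Expand the prover's initial state as $\sum_{w}\alpha_w\ket{w}\otimes\frac{1}{\sqrt6}\sum_\pi\ket{\pi}\otimes\ket{\rand_\pubparams}^{\otimes|\calV|}$. Each evaluation in step 2 is controlled on the computational-basis value $\pi(w_k)$, which is a classical function of registers $(\calW_k,\calP)$. So we view $\ket{\pi(w_k)}$ as computed coherently into a scratch register and apply \Cref{lem:eval-measure-recover} once per vertex $k\neq i,j$. Each application costs $2^{-\secpar}/|\calV|$ in diamond distance, for a total of at most $2^{-\secpar}$ by the triangle inequality, exactly as in \Cref{lem:parallel-repetition}. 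The measurements of $y_k$ for $k\neq i,j$ are therefore undone up to this error.

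It remains to handle the step-4 measurement of $(\pi(w_i),\pi(w_j),r_i,r_j)$. Fix the outcome $c=(c_i,c_j,r_i,r_j)$. For every valid coloring $w$ exactly one $\pi=\pi_{c,w,(i,j)}$ is consistent with $c$. Ignoring the other vertices' randomness, which recovery removes, the post-measurement state is proportional to $\sum_w\alpha_w\ket{w}\ket{\pi_{c,w,(i,j)}}\ket{r_i,r_j}$, and the weight of each $w$ is $|\alpha_w|^2\cdot\frac16\cdot p(r_i)p(r_j)$, independent of $w$. Hence the relative amplitudes $\alpha_w$ are preserved. The controlled swap then maps the state to $\sum_w\alpha_w\ket{w}\ket{\mathsf{id}}$, which is unentangled from $r_i,r_j$.

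The subtle point is that the $y_i,y_j$ measurements and the vertex-$k$ recovery registers must not reintroduce $w$-dependence. I would argue this by commuting the recoveries past the step-4 measurement, using that they act on disjoint registers and are diagonal in $\calW$. Completeness then holds with probability $1$, since the honest opening always satisfies $c_i\neq c_j$ and is consistent with $y_i,y_j$. Preservation holds with error $2^{-\secpar}$, and entanglement with the external register $\calB$ is untouched because the prover never acts on $\calB$.

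\textbf{Soundness.} This step is the main obstacle, because the protocol runs in recovery mode, where openings are not binding. I would use a hybrid argument. Switch $\pubparams$ to injective mode; the verifier's acceptance test does not use $\td$ (the trapdoor is sent only after the verdict), so by dual-mode indistinguishability the acceptance probability changes only negligibly. In injective mode each $y_k$ determines a unique color $c_k=\Ext(\td,y_k)$. Since $x\notin\lang$, the resulting coloring has some monochromatic or invalid edge, which the verifier picks with probability at least $1/|\calE|$. Any opening of that edge is forced to be the extracted colors and is rejected. This gives soundness error $1-1/|\calE|$, which I believe is what the stated bound $1-\frac{1}{|\calV|}$ intends.

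\textbf{Argument of knowledge for the parallel repetition.} The extractor samples injective-mode $\pubparams$ for every repetition, runs $P^*$ to obtain the $y$'s, and outputs the colorings given by $\Ext$ in each repetition. Consider the event that every extracted coloring is invalid. On that event the verifier rejects except with probability $(1-1/|\calE|)^n=\negl$, for $n=\secpar|\calE|$ repetitions. By mode indistinguishability the real acceptance probability in recovery mode is within $\negl$ of the acceptance probability in injective mode. Hence $\Pr[\Accept]\le\Pr[\text{some extracted coloring is valid}]+\negl$. The extractor can check validity of each extracted coloring and output a valid one, which gives the claimed bound with $\delta=\negl$.
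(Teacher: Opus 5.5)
Your proposal is correct and follows the paper's proof essentially step for step. For state preservation, you commute the evaluate--measure--recover steps for the vertices $k\neq i,j$ ahead of the opened edge, and you use the unique $\pi_{c,w,(i,j)}$ together with the final swap on the opened edge. Your scratch-register application of \Cref{lem:eval-measure-recover} is a slightly more careful version of the paper's ``disjoint registers'' reordering, since $\calP$ is shared by all vertices. For soundness and the parallel-repetition extractor, you switch to injective mode, which is also what the paper does. Your observation that the argument yields soundness error $1-1/|\calE|$ (plus a negligible term) rather than $1-1/|\calV|$ matches what the paper's own proof actually establishes.
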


\paragraph{Remark}
To improve the soundness error to be negligible, one can repeat the protocol in \Cref{constr:state-pres-np} sequentially $\poly$ times. As shown in \Cref{lem:parallel-repetition}, repeating $k$ times results in a $(1-k2^\secpar)$-state-preserving argument. This can be modified to $(1-2^{-\secpar})$ by adjusting the security parameter.

\begin{proof}
    We need to prove that the completeness, soundness and state preservation properties hold.
\paragraph{Completeness.}  Completeness follows from the observation that $\pi(w_i) \neq \pi(w_j)$ for all valid witnesses $w$.

\paragraph{Soundness and Argument of Knowledge.} 
It suffices to prove soundness in the hybrid experiment where in the first round the verifier sends $\pubparams$ which are generated in injective mode: $(\pubparams, \td) \gets \Setup_\calF(1^\secpar, \injective)$, as opposed to lossy mode. The dual-mode indistinguishability property implies that  this experiment is indistinguishable from the real execution.  Therefore, for any $\QPT$ prover $P^*$, its success probability changes only by a negligible factor.
    
In injective mode, each $y_i$ corresponds to a unique preimage color $c_i$. If $x\notin \lang$, then there must exist an edge $(i,j)\in \calE$ such that $c_i = c_j$ or such that $c_i\notin[3]$. Therefore, $P^*$'s success probability is at most $1- 1/|\calE|$ in the hybrid experiment, and thus at most $1 - 1/|\calE| - \negl$ in a real execution. 

The argument of knowledge extractor $\Ext$ interacts with $P^*$ using an injective mode $\pubparams$ until it receives $y_\calV$ for each instance. Then, it inverts each $y_\calV$ using the trapdoor to obtain a coloring $c_\calV$ for each instance. It checks which coloring is valid and outputs that coloring. As before, $P^*$'s success probability is unmodified by this extractor. However, if the extractor does not succeed, every instance's preimage has an edge with the same color; therefore $P^*$ can only succeed with $(1-1/|\calE|)^\secpar$ probability, which is negligible for $\secpar = \omega(|\calE|\log(|\calE|))$. 

\paragraph{State preservation.}  

    At the start of the protocol, the prover initializes their state to
    \[
        \propto \sum_{w} \alpha_w \ket{w} \otimes \sum_{\pi\in \mathsf{Perm}([3])} \ket{\pi} \otimes \bigotimes_{i\in V} \sum_{r_i \in \calR} \beta_{r_i} \ket{r_i}
    \]
    where $\beta_{r_i}$ is an amplitude determined by $\calF$ and $\pubparams$.
    Suppose that the prover (with the verifier's aid) ran all computations on $k \in \calV\backslash\{i,j\}$ before doing any computation on registers corresponding to $i$ and $j$. 
    Explicitly, they apply the isometry from step 2, measure $y_k$, and run $\Recover_\calF(\td, \cdot, y_k)$ before doing any operations to the registers corresponding to $i$ and $j$.
    Since these computations are done on disjoint registers, reordering them produces an identical state.

    The trapdoor property of $\calF$ in $\lossy$ mode implies that the state immediately after operating on all $k \in \calV\backslash\{i,j\}$, before operating on $i$ and $j$, has trace distance $(|\calV|-2)2^{-\secpar}/|\calV| < 2^{-\secpar}$ from
    \[
        \propto \sum_{w} \alpha_w \ket{w} \otimes \sum_{\pi\in \mathsf{Perm}([3])} \ket{\pi} \otimes \sum_{r_i, r_j \in \calR} \beta_{r_i}\beta_{r_j}\ket{r_i, r_j}
    \]

    We will now rewrite this state in terms of the prover's possible last messages -- specifically, the revealed colors $c = (c_i, c_j)$. Fix any $c$ and any witness $w$. The possible settings of $\pi$ are constrained by $\pi(w_i) = c_i$ and $\pi(w_j) = c_j$. Since $w_i \neq w_j$ for valid witnesses, this leaves only one possibility for $\pi$ to map the last color. In other words, there is a unique $\pi_{c,w,(i,j)}$ which is consistent with $c$, $w$, and $(i,j)$. Thus, we can rewrite the state as
    \[
        \propto \sum_{w} \alpha_w \ket{w} \otimes \sum_{c} \ket{\pi_{c,w,(i,j)}} \otimes \sum_{r_i, r_j \in \calR} \beta_{r_i}\beta_{r_j}\ket{r_i, r_j}
    \]
    Measuring $c$, $r_i$, $r_j$, and the evaluations of $f_\pubparams$ on indices $i$ and $j$ collapses the state to 
    \[
        \sum_{w} \alpha_w \ket{w} \otimes \ket{\pi_{c,w,(i,j)}} 
    \]
    Finally, applying a unitary swapping
    \[
        \ket{w, \pi_{c,w,(i,j)}} 
        \leftrightarrow
        \ket{w, \mathsf{id}} 
    \]
    and discarding the register containing $\ket{\mathsf{id}}$ recovers the original state $\sum_{w}\alpha_w \ket{w}$, up to $2^{-\secpar}$ trace distance.
\end{proof}
\section{Non-Destructive {$\CVQC$ for $\QMA$}}

\label{sec:main} In this section we construct a $\CVQC$ for any $\QMA$ language with nearly perfect completeness and soundness, while using only a {\em single} copy of the $\QMA$ witness, and moreover the protocol is witness preserving (as formally defined below).

Clearly, we cannot hope to obtain negligible soundness with {\em any} witness, since the witness can be a superposition of a ``good'' witness and ``junk.''  Indeed, we define the set of  ``repairable'' witnesses for every $x\in\calL_\yes$,  and argue that the honest prover only needs a single copy of a  ``repairable'' witness,  to convince the verifier that $x\in\calL_\yes$ with overwhelming probability.   

The set of repairable witnesses includes every witness $\ket{w}$ that is accepted by $V_{\QMA,x}$ with overwhelming probability, as well as all of the Marriot-Watrous witnesses. So, as a special case, we obtain the following theorem. 
\begin{theorem}\label{thm:cor}
Fix any $\calL\in \QMA_{a,b}$ with a $\QMA$ verifier $V_\QMA=\{V_{\QMA,x}\}_{x\in\{0,1\}^*}$.  Then there exists a $\CVQC$ protocol $(P,V)$ for $\calL$ w.r.t.\ $V_\QMA$  with the following properties:
\begin{itemize}
    \item {\bf Computational soundness: } It has $\negl$ computational soundness error (as defined in \Cref{def:CVQC}), assuming the post-quantum hardness of ${\sf LWE}$. 
    
    \item {\bf Completeness:} For every negligible function $\mu_1$ there exists a negligible function $\mu_2$ s.t.\ for every $x\in\calL_\yes$ and every ({\em single}) witness $\ket{w}$ s.t.\ \begin{equation}\label{eqn:goodW}
    \Pr[V_{\QMA,x}(1^\secp,\ket{w})=1]=1-\mu_1(\secp)
    \end{equation}
    it holds that 
    \[
    \Pr[(P(\ket{w},V)(1^\secp,x)=1]=1-\mu_2(\secp).
    \] 
    \item {\bf Witness preserving:}  There exist negligible functions $\mu_1$ and $\mu_2$ such that for every $x\in\calL$ and

    every entangled witness superposition $\ket{\tilde{w}}_{\calA,\calB} = \sum_{i}\alpha_i \ket{w_i}_{\calA}\otimes\ket{\phi_i}_{\calB}$ such that each $\ket{w_i}$ satisfies \Cref{eqn:goodW},
    \[
        \Pr_{(\ket{\psi},b) \gets (P(\ket{\tilde{w}}), V)(1^\secpar, x)}
        \left[\frac{1}{2}\big\|\ketbra{\psi}-\ketbra{\tilde{w}}\big\|_1 \leq \mu_1(\secpar)\right] 
        \geq 1-\mu_2(\secpar),
    \]  
    where $P$ operates only on register $\calA$.
 \item {\bf Complexity:}  The number of rounds, the communication complexity and the computational complexity are ${\sf poly}(\secp,|x|)$.   
 
 \item {\bf $\eta$-Argument of Knowledge:} It is an $\eta$-classical argument of quantum knowledge (Definition \ref{def:pok}) with $\eta = \frac{1}{p(\secp)}$ for some polynomial $p(\cdot)$.

    \end{itemize}

All of the above holds as long as $|x|>\secp$, which can be ensured w.l.o.g. via padding.
\end{theorem}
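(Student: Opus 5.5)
Theorem~\ref{thm:cor} follows by composing the three stages from the technical overview, instantiated with single-repetition Mahadev as the base protocol. I would take the later formal theorems (\Cref{thm:non-des} for $\epsilon$-non-destruction and witness preservation, \Cref{thm:sequential-rep} for sequential repetition) as black boxes, and check that their hypotheses are met by a single execution of \cite{Mahadev18}.

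\textbf{Step 1: the base protocol.} A single execution of \cite{Mahadev18} uses one copy of the witness and is non-adaptive: all queries are sampled up front by $V_1$ and the verdict is computed by $V_2$. It has $s$-testable soundness and is a $(d,s,\delta)$-testable argument of knowledge (\cite{VidickZhang21}, Theorem 6.5), with check completeness $c$ and $c-s\ge 1/\poly$.

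Completeness needs one extra piece. The base protocol must be run on a Morimae–Fitzsimons witness derived from $\ket{w}$. I would first apply $\MW$ (\Cref{thm:MW}) to move into the $1-2^{-\secpar}$ regime. A witness satisfying \Cref{eqn:goodW} is, by \Cref{coro:valest-high-quality} and the eigenbasis decomposition in \Cref{lem:cmsz-eigenstates}, negligibly close to $\eigenspace_{\ge 1-\epsilon}$, so it is repairable.

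\textbf{Step 2: $\epsilon$-non-destruction and sequential repetition.} Apply the Stage 1 compiler with $\epsilon=1/\poly$. The prover commits to its answers using recovery-mode dual-mode trapdoor functions (\Cref{thm:dual-mode-state-recovery}). It then proves acceptance with the state-preserving $\NP$ argument (\Cref{thm:witness-preserving-NP}), uncomputes via $\Recover$ (\Cref{lem:eval-measure-recover}), and repairs the witness with $\ValEst$/$\Repair$ (\Cref{lemma:CMSZ-valest,lemma:CMSZ-repair}). Because the damaging measurement now has only $N=2$ outcomes, \Cref{lem:valest-to-pure} shows the repaired state is supported on $(p-3\epsilon)$-good pure witnesses, except with small probability.

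Next, repeat this protocol sequentially $N=\poly(\secpar,|x|,(c-s)^{-1})$ times, with $N\epsilon$ a small inverse polynomial. The verifier accepts iff every test round passes and at least $\frac{c+s}{2}N$ check rounds accept. Completeness error is negligible by Azuma (\Cref{thm:azuma}), since the per-round acceptance probabilities stay above $c-N\epsilon$.

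Soundness and the $\eta$-argument of knowledge follow from the "Good prefix" argument. The extractor picks a random session $j$ and switches the trapdoor functions to injective mode. In injective mode the preimage is unique, so an accepting $\NP$-argument transcript yields an accepting base transcript. This reduces to testable soundness and testable knowledge with noticeable probability $\eta=1/p(\secp)$.

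\textbf{Step 3: witness preservation.} The amplified protocol has $1-\negl$ completeness, so recompile it once more with the special case of Stage 1. Here the acceptance measurement is gentle and no repair is needed. $\Recover$ plus \Cref{lem:eval-measure-recover}, together with state preservation of the $\NP$ argument, return a state negligibly close to the input. Entangled superpositions $\sum_i\alpha_i\ket{w_i}\ket{\phi_i}$ are handled because every operation acts only on $\calA$ and is $(1-\negl)$-close to the identity on each good $\ket{w_i}$, hence on their span. The requirement $|x|>\secp$ ensures $2^{-\secpar}$-type errors are negligible in the parameters that matter.

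\textbf{Main obstacle.} The hard step is Step 2's soundness and extraction argument across sequential sessions with non-independent rounds. One must show that a noticeably-accepting prover has some session where it simultaneously passes $V_\test$ with probability $\ge 1-1/(|x|^d\secpar^d)$ and $V_\honest$ with probability above $s$. The injective-mode switch must also not change the existence-of-accepting-preimage event, except negligibly. I would handle the first part with Azuma on the martingale of conditional check-acceptance probabilities. I would handle the second by noting that $\NP$-argument acceptance is checkable without $\td$, so dual-mode indistinguishability applies.
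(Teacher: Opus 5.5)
Your three-stage plan is exactly how the paper obtains \Cref{thm:cor}, as a special case of \Cref{thm:main1}: single-repetition Mahadev as the non-adaptive base, then \Cref{thm:non-des} with $\epsilon=1/\poly$, then sequential repetition via \Cref{thm:sequential-rep}, then \Cref{thm:non-des} again in its $c=1-\negl$ special case, with \Cref{coro:valest-high-quality} certifying that witnesses satisfying \Cref{eqn:goodW} are repairable.

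The step that does not hold is your handling of entangled witnesses in Step 3. Being $(1-\negl)$-close to the identity on each good $\ket{w_i}$ does not transfer to their span. The set of states satisfying \Cref{eqn:goodW} is not a subspace, and negligible per-component errors can add up coherently over superpolynomially many components. Concretely, take orthonormal accepted $\ket{g_i}$ and a rejected $\ket{b}$. Each $\ket{w_i}=\sqrt{1-\mu_1}\ket{g_i}+\sqrt{\mu_1}\ket{b}$ satisfies \Cref{eqn:goodW}, yet the normalized sum of $K\approx 1/\mu_1$ of them has weight about $1/2$ on $\ket{b}$ and is accepted by $V_{\QMA,x}$ with probability only about $1/2$. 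Nothing then makes the verdict measurement gentle on it. More generally, a dephasing channel fixes every basis vector while destroying their superpositions.

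The paper's argument (\Cref{claim:nd-cvqc-recovery-high-c}) is of a different kind, and has three parts:
\begin{itemize}
\item $\Recover\circ\Exp$ is close to the identity in diamond norm (\Cref{lem:eval-measure-recover}), so entanglement with $\calB$ costs nothing.
\item The $\NP$ argument preserves arbitrary superpositions of $\NP$ witnesses.
\item The single non-unitary step, measuring whether the coherently computed transcript is accepting, is gentle on the \emph{actual} reduced state on $\calA$.
\end{itemize}
The last part needs that reduced state itself to be accepted with probability $1-\negl$. This holds when it is a mixture of good witnesses (e.g.\ orthogonal $\ket{\phi_i}$), which is the case the paper's argument addresses. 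You have to argue it for the state as a whole, not component by component.

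There are also three smaller inaccuracies.
\begin{itemize}
\item The acceptance threshold must be a $\frac{c+s}{2}$ fraction of the check rounds $B$, as in \Cref{constr:sequential-repetition}, not $\frac{c+s}{2}N$. Only about $N/2$ rounds are check rounds, so your threshold is at least the honest expected count $\approx cN/2$, and completeness would fail.
\item Completeness does not survive because per-round acceptance ``stays above $c-N\epsilon$.'' The base protocol's completeness is promised only for witnesses of quality at least $a$, with no bound that degrades gracefully below that. The real reason is that $\epsilon N$-non-destructiveness keeps every intermediate witness in $\Repairable_{p-N\epsilon,\epsilon}$, hence above $a$. So each check round accepts with probability at least $c$, conditioned on the history.
\item The $\MW$ step in Step 1 is unnecessary, and running $\MW$ is a test, not a transformation of the witness. \Cref{eqn:goodW} already places $\ket{w}$ in the high-acceptance regime of $V_{\QMA,x}$, and $\ValEst$ and $\Repair$ are run with respect to $V_{\QMA,x}$ directly.
\end{itemize}
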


More generally, we present a generic compiler that converts any non-adaptive $\CVQC$ protocol $(P,V)$ for $\calL$ w.r.t.\ $V_\QMA=\{V_{\QMA,x}\}_{x\in\{0,1\}^*}$, with arbitrary completeness~$c$ and soundness $s$, into a new $\CVQC$ protocol with negligible soundness and nearly perfect completeness, while using only a {\em single copy} of a ``repairable'' witness, and moreover the protocol preserves the witness.

We  defer the precise definition of a ``repairable'' witness to \Cref{def:high-qual} in \Cref{repairable-sec}, but mention some brief intuition here. Generally,

the set of repairable witnesses  (corresponding to an instance $x\in\calL_\yes$) is associated with two parameters:
\begin{itemize}
    \item \textbf{Quality:} A parameter $p\in[0,1]$ corresponding to the probability that the witness is accepted by $V_{\QMA,x}$.
    
    \item \textbf{Repair Decay:} A parameter $\epsilon \in (0,1]$ bounding how much worse the witness gets after running the protocol: an $\epsilon$-non-destructive protocol which takes in a witness in $\Repairable_{p, \epsilon}$ should output one in $\Repairable_{p-\epsilon,\epsilon}$.
\end{itemize}
We denote by $\Repairable_{p,\epsilon}(x)$ the set of repairable witnesses corresponding to $x\in\calL_\yes$, associated with parameters $p,\epsilon$.
We mention that every witness $\ket{w}\in\eigenspace_{\geq p}(V_{\QMA,x})$ is in $\Repairable_{p,\epsilon}(x)$ for every $\epsilon\geq 0$. 
Recall from \Cref{def:eigen} that $\eigenspace_{\geq p}$ is the span of eigenvectors of $V_{\QMA,x}$ with eigenvalue $\geq p$, which is precisely the set of Marriot-Watrous witnesses when $p\geq a$, and $a$ is the $\QMA$ threshold.

We next provide the  definition of witness preserving. 
\begin{definition}[Witness-Preserving $\CVQC$]\label{def:witness-pres}
    Let $\lang = (\lang_\yes, \lang_\no)$ be a language in $\QMA_{a,b}$ with a $\QMA$ verifier $V_\QMA$. A $\CVQC$ protocol $(P,V)$ for $\lang$ w.r.t.\ $V_\QMA$ is \emph{witness-preserving} with respect to a family of witnesses $\{W_x\}_{x\in\calL_\yes}$, where for every $x\in\calL_\yes$, 
    \[W_x\subseteq\{\ket{w}:~ \Pr[V_{\QMA,x}(\ket{w})=1]\geq a\},
    \]
    if there exist negligible functions $\mu_1$ and $\mu_2$ such that for every $x\in\calL_\yes$,

    every entangled witness state $\ket{\tilde{w}}_{\calA,\calB} = \sum_{i}\alpha_i \ket{w_i}_{\calA}\otimes \ket{\phi_i}_{\calB}$ where $\ket{w_i}\in W_x$, and
    
    every $\secpar \in \bbN$,
    \[
        \Pr_{(\ket{\psi},b) \gets (P(\ket{\tilde{w}}), V)(1^\secpar, x)}\left[
        \frac{1}{2}\big\|\ketbra{\psi} - \ketbra{\tilde{w}}\big\|_1 \leq \mu_1(\secpar)
        \right] 
        \geq 1-\mu_2(\secpar),
    \]
    where $P$ operates only on register $\calA$.
\end{definition}

Our generic compiler is formalized below.
\begin{theorem}\label{thm:main1}
Fix any language $\calL\in \QMA_{a,b}$ and a corresponding $\QMA$ verifier $V_\QMA$.
Assume the existence of the following two cryptographic primitives (both which can be instantiated assuming the post-quantum hardness of ${\sf LWE}$):
\begin{enumerate}
    \item A dual-mode trapdoor function family with $(1-2^{-\secpar})$-state recovery (\Cref{def:dual-trapdoor-recovery}).
    \item A $(1-2^{-\secpar})$-state-preserving argument for $\NP$ (\Cref{def:state-preserving-NP}). 
\end{enumerate}
There exists an efficient compiler, associated with a parameter $\epsilon=\epsilon(\secp)$, that converts any non-adaptive $\CVQC$ protocol $(P,V)$ for $\calL$ w.r.t.\ $V_\QMA$, with completeness $c$ and (testable) soundness $s$ s.t.\ $c-s\geq 1/{\sf poly}(\secp)$ into a new (non-adaptive) $\CVQC$ protocol $(P',V')$ for  $\calL$ w.r.t.\ $V_\QMA$, such that $(P',V')$ has the following properties, whenever $|x|>\secp, N \geq \frac{\secp^2}{(c-s)^2}$:
\begin{itemize}
    \item {\bf Computational soundness: } It has $\negl$ computational soundness error (as defined in \Cref{def:CVQC}).

    \item {\bf Completeness:} For every $x\in\calL_\yes$ and every $\ket{w}\in\Repairable_{a+\epsilon
    \cdot 4N,\epsilon}(x)$, 
    \[
    \Pr[(P'(\ket{w}),V')(1^\secp,x)=1]\geq 1-\negl.
    \] 
    \item {\bf Witness preserving:}  For every $x\in\calL_\yes$, the protocol is witness preserving w.r.t.\ all witnesses in $\Repairable_{a+\epsilon\cdot 4N,\epsilon}(x).$
 \item {\bf Complexity:}  The number of rounds, the communication complexity and the computational complexity grow by a multiplicative factor of ${\sf poly}(\secp,|x|,(c-s)^{-1})$. The prover's computational complexity additionally grows by an additive ${\sf poly}(\secp,|x|,(c-s)^{-1}, \epsilon^{-1})$ factor.

\item {\bf Argument of Knowledge:} 
    If the underlying $\CVQC$ is a $(d,s,\delta)$-testable argument of knowledge, then whenever
    $N\geq \frac{\secp^{d+3} |x|^d}{(c-s)^2}$, the compiled $\CVQC$ is a $\delta/N$-argument of quantum knowledge (Definition \ref{def:pok}).
    \end{itemize} 
\end{theorem}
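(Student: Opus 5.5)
The compiler has three stages, following the technical overview, and I would prove the theorem by analyzing each stage in turn.

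\textbf{Stage 1: an $\epsilon$-non-destructive protocol $\Pi_1$.} Given the non-adaptive $(P,V)$, the verifier samples $(\pubparams,\td)\gets\Setup(1^\secpar,\lossy)$ and queries $(q_1,\dots,q_\ell)\gets V_1$. The prover runs $\ValEst$ on its witness with parameters $(\epsilon,\delta)$, where $\delta$ is negligible, and records an estimate $p^*$. It then computes each answer $a_i$ coherently and ``commits'' to it by evaluating $f_\pubparams$ and measuring $y_i$. After that, the verifier reveals $\st$, and the prover runs the state-preserving $\NP$ argument (parallel-amplified by \Cref{lem:parallel-repetition}) for the statement ``$\exists (a_i,r_i)_i$ with $f_\pubparams(a_i;r_i)=y_i$ and $V_2(\st,a_{[\ell]})=1$''. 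Because completeness is not perfect, this step is run coherently with a two-outcome measurement $\Pi$: is the transcript accepting or not.

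Finally the verifier sends $\td$. The prover applies $\Recover$ and uncomputes its answers. By \Cref{lem:eval-measure-recover} and the state preservation of the $\NP$ argument, the net effect on the witness is, up to $2^{-\secpar}$-type errors, exactly the binary measurement $\Pi$ applied to $\ket{w}$. The prover then runs $\Repair_{\ValEst,\Pi}$ with $N=2$ outcomes. By \Cref{lemma:CMSZ-repair} together with \Cref{lem:cmsz-eigenstates}, \Cref{coro:valest-high-quality} and \Cref{lem:valest-to-pure}, a witness in $\Repairable_{p,\epsilon}$ leaves the protocol in $\Repairable_{p-O(\epsilon),\epsilon}$, except with negligible probability. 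Here I rely on the definition of $\Repairable$ being tailored to exactly this fact.

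\textbf{Stage 2: sequential repetition $\Pi_2$.} Run $\Pi_1$ $N$ times sequentially on the same witness. The verifier accepts iff every test round accepts and at least $\frac{c+s}{2}$ of the check rounds accept.

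For completeness, starting in $\Repairable_{a+4N\epsilon,\epsilon}$, the witness stays above quality $a$ throughout. Each check round then succeeds with probability $\geq c-\negl$ conditioned on the history. Azuma (\Cref{thm:azuma}) applied to the martingale $\sum_k(\mathbf 1[\text{round }k\text{ accepts}]-\Pr[\cdot\mid\text{past}])$, with $N\geq \secp^2/(c-s)^2$, bounds the probability of falling below $\frac{c+s}{2}N$ by $e^{-\Omega(\secp^2)}$.

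For soundness, I first switch every $\pubparams$ to $\injective$ mode. The acceptance probability changes negligibly, since the $\NP$ verdicts are computable without $\td$. In injective mode, acceptance of the $\NP$ argument implies that the uniquely extracted $a_{[\ell]}$ is accepting, up to the $\NP$ argument's soundness error. Hence each round reduces to a run of the base $\CVQC$ with answers extracted via $\Ext$.

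Next define the ``Good'' prefixes: those after which the prover passes $V_\test$ with probability $\geq\testsoundness$. Azuma shows that accepting transcripts pass through non-Good prefixes only rarely, since every test round must pass. On Good prefixes, testable soundness bounds the check-acceptance probability by $s$. A second Azuma bound then shows that more than $\frac{c+s}{2}N$ check acceptances occurs with negligible probability.

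The argument of knowledge is handled the same way. The extractor picks a random $j\in[N]$, simulates rounds $<j$ in injective mode, and runs the base extractor on round $j$. By averaging, a $1/N$ fraction of rounds are Good with check probability $\geq s$ once $N\geq \secp^{d+3}|x|^d/(c-s)^2$. This gives $\delta/N$.

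\textbf{Stage 3: witness preservation.} Apply the Stage-1 compiler again, now to $\Pi_2$, which has $1-\negl$ completeness. The measurement $\Pi$ is now gentle, so no repair is needed: $\ket{w}$ is returned up to negligible trace distance by the gentle measurement lemma plus the recovery bounds. This extends to entangled superpositions $\sum_i\alpha_i\ket{w_i}\ket{\phi_i}$ because every step acts on $\calA$ and is close in diamond norm to the identity on the accepting subspace. Soundness and knowledge are preserved by the same injective-mode argument.

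\textbf{Main obstacle.} I expect the main difficulty to be the soundness and knowledge analysis of Stage 2. The rounds are not independent: the prover's state evolves adaptively, and testable soundness applies only conditionally on Good prefixes. Setting up the right martingales, and making the mode switch compatible with the per-round conditioning, is the delicate part.
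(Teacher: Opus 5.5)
Your three-stage decomposition is the paper's: Stage 1 is \Cref{thm:non-des}, Stage 2 is \Cref{thm:sequential-rep}, and Stage 3 re-applies Stage 1 and gets witness preservation from $1-\negl$ completeness. Your Stage 1, Stage 3 and the Stage 2 completeness argument are fine. The gap is in the mode switch in your Stage 2 soundness and knowledge arguments.

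You switch \emph{every} round's $\pubparams$ to $\injective$ mode at once. You justify this by saying the acceptance probability changes negligibly because the $\NP$ verdicts are computable without $\td$. But each execution of $\Pi_1$ ends with the verifier sending its $\td$ to the prover, since the prover needs it for $\Recover$. In the LWE instantiation, an injective-mode $\td=(\vecA,\vecu,\vect_\vecA)$ reveals the mode outright, because the prover can check whether $\vecu$ decodes as an LWE sample. More generally, a reduction to dual-mode indistinguishability for round $k$ cannot simulate rounds $k+1,\dots,N$, because it does not hold $\td_k$. The verdict of $\Pi_2$ depends on those later rounds. A cheating prover could, for example, deliberately fail every later round once it sees an injective trapdoor. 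So the all-injective experiment is not negligibly close to the real one, and bounding acceptance there says nothing about the real protocol.

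The same problem breaks your extractor, which ``simulates rounds $<j$ in injective mode.'' The prefix $\tau_{j-1}$ and state $\rho_{j-1}$ it produces are not distributed as in a real execution. Your Good-prefix guarantees are statements about real accepting transcripts, so they do not apply to these prefixes.

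The fix, which is what the paper does, is never to switch a round whose trapdoor the prover will later see.
\begin{itemize}
\item First prove testable soundness and testable argument of knowledge for a \emph{single} execution of $\Pi_1$, against an arbitrary prover state (\Cref{claim:nd-cvqc-soundness}). There the only switched $\pubparams$ belongs to that execution, and its verdict is fixed before its $\td$ is released, so dual-mode indistinguishability applies.
\item In Stage 2, generate all prefixes with the honest lossy-mode verifier. Apply the single-execution guarantee as a black box to the conditional prover $P^*_j(\rho_{j-1})$ on a Good prefix.
\item Likewise, the extractor should run rounds $<j$ exactly as the honest verifier does, trapdoors included. It then invokes the $\Pi_1$ extractor, which uses injective mode only for round $j$.
\end{itemize}
With these changes your Good-prefix and Azuma analysis goes through on the real distribution.
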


To use \Cref{thm:main1}, we need an underlying $\CVQC$ protocol with completeness and soundness $c$ and $s$, respectively, such that $c-s\geq 1/{\sf poly}(\secp)$,  and such that the (honest) prover uses a {\em single} copy of a (repairable) $\QMA$ witness.  

One way to obtain such a $\CVQC$ protocol is to use Mahadev's protocol~\cite{Mahadev18} as the underlying $\CVQC$ protocol.  This requires some care since Mahadev's $\CVQC$ consists of many repetitions of an underlying $\CVQC$ (which~\cite{Mahadev18} constructs), and hence the prover needs  many copies of the witness, one copy per each repetition.  Since we want a protocol where the prover uses a single copy of the witness, it is tempting to use only one execution of Mahadev's protocol.  However, a single copy of the protocol  has worse completeness than soundness!  To demystify this, we note that while the soundness is worse than the completeness, the {\em testable} soundness is better than the completeness (see \Cref{def:2part}).  

Specifically, in Mahadev's protocol, the verifier $V$ is defined via two verification algorithms $(V_\test, V_{\honest})$;  it implements $V_\test$ with probability $1/2$, and implements $V_{\honest}$ with probability $1/2$.  $V_\test$ has completeness~$1$ while $V_{\honest}$ has (low) completeness~$c$.\footnote{This follows from the fact that the $\QMA$ witness is converted into a Morimae-Fitzimons \cite{MF16} $\QMA$ witness, which has low completeness.}  The guarantee is that for every $x\in\calL_\no$ and every cheating prover $P^*$ that convinces $V_\test$ to accept with probability close to~$1$ (say probability $0.99$), it holds that $P^*$ can convince $V_{\honest}$ to accept with probability at most $s$ which is smaller than $c$.  In addition, a single execution of Mahadev's protocol has an $(s,\delta)$ testable argument of knowledge guarantee (\Cref{def:2partpok}) with $\delta = 1 - \frac{1}{\secp}$.

Due to this, we need to ensure that the compiler in  \Cref{thm:main1} also compiles $\CVQC$ protocols with {\em testable} soundness and {\em testable} argument of knowledge properties, which it indeed does (as we show below).\\

We prove 
\Cref{thm:main1} in three stages.

\paragraph{Stage 1 (\Cref{sec:epsilon-nd,sec:ndcvqc-soundness,sec:ndcvqc-witness-recovery-high-c,sec:ndcvqc-witness-recovery-low-c}):} We first present a compiler that converts any non-adaptive $\CVQC$ protocol (possibly with only testable soundness) into one with the same completeness and (testable) soundness up to a negligible loss, and which is ``$\epsilon$-non-destructive'' (defined below) assuming the original witness is in $\Repairable_{p,\epsilon}$ for $p\geq a+\epsilon$ (where  $\epsilon$ is a tunable parameter).

\begin{definition}
    [$\epsilon$-Non-Destructive]  We say that a $\CVQC$ protocol $(P,V)$ for $\calL\in\QMA_{a,b}$ is $\epsilon$-non-destructive if there exists a negligible function~$\mu$ such that for every $\secp\in\mathbb{N}$, every $x\in\calL_\yes$ and every witness in $\ket{w}\in \Repairable_{p,\epsilon}(x)$, for $p\geq a+\epsilon$, 
    \[
        \Pr_{(\ket{\psi},b)\gets (P(\ket{w}),V)(1^\secp,x)}\big[\ket{\psi}\in\Repairable_{p-\epsilon,\epsilon}(x)\big]=1-\mu(\secp).
    \] 
\end{definition}

\begin{theorem}\label{thm:non-des}
    Let $\calL=(\calL_\yes,\calL_\no)\in \QMA_{a,b}$ with a $\QMA$ verifier $V_\QMA$. Assume the existence of the following primitives:
    \begin{enumerate}
        
    \item A dual-mode trapdoor function family with $(1-2^{-\secpar})$-state recovery.
    \item A $(1-2^{-\secpar})$-state-preserving argument for $\NP$. 
\end{enumerate}
Then there exists a generic compiler, associated with a parameter $\epsilon=\epsilon(\secp)$, that converts any non-adaptive $\CVQC$ protocol $(P,V)$ w.r.t.\ $V_\QMA$, with completeness~$c$ and (testable) soundness~$s$, into a new $\CVQC$ protocol $(P',V')$ for $\calL$ (described in  \Cref{constr:cvqc}), with the following guarantees:
\begin{enumerate}
    \item {\bf Computational soundness:}  $(P',V')$ has (testable) soundness $s+\negl$.

    \item {\bf Completeness: } For every $x\in\calL_\yes$, the protocol $(P',V')$ has completeness $c-\negl$, using any single witness for $V_{\QMA,x}$. 

    \item {\bf Witness preserving if $c=1-\negl$:} If $c=1-\negl$ then \Cref{constr:cvqc}, with the witness recovery algorithm in \Cref{constr:witness-recovery}, is witness-preserving w.r.t.\ the set of all witnesses that are accepted in the underlying $\CVQC$ with probability $1-\negl$. 
    \item {\bf  $4\epsilon$-Non-Destructive:}  There exists a negligible function~$\mu$ such that for every $\secp\in\mathbb{N}$, every $x\in\calL_\yes$ and every witness in $\ket{w}\in \Repairable_{p,\epsilon}(x)$, for $p\geq a+4\epsilon$, 
    \[
        \Pr_{(\ket{\psi},b)\gets (P'(\ket{w}),V')(1^\secp,x)}\big[\ket{\psi}\in\Repairable_{p-4\epsilon,\epsilon}(x)\big]=1-\mu(\secp).
    \] 

    \item{\bf Complexity:}  The the communication complexity and computational complexity grow by a multiplicative factor of ${\sf poly}(\secp)$. 
    
    Additionally, in the general case, the prover's computational complexity grows by an additive factor of $\poly[\secpar, 1/\epsilon]$. 

    \item {\bf Argument of Knowledge:} If $(P,V)$ is a $(d, s,\delta)$ (testable) argument of knowledge, then $(P',V')$  is a $(d+1, s+\negl,\delta-\negl)$ (testable) argument of knowledge.
\end{enumerate}

\end{theorem}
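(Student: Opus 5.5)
We first fix the compiled protocol $(P',V')$ (the paper's \Cref{constr:cvqc}) along the lines of the overview. $V'$ samples $(\pubparams,\td)\gets\Setup(1^\secpar,\lossy)$ and runs $V_1$ to obtain $(\st,q_1,\ldots,q_\ell)$. It sends $\pubparams$ and the queries $q_i$ one at a time. For each $q_i$, the prover coherently computes its answer $a_i$, commits to it by evaluating $f_\pubparams$ on a fresh $\ket{\rand_\pubparams}$, measures $y_i$, and sends $y_i$. Next $V'$ reveals $\st$ (and, for testability, which of $V_\test,V_\honest$ is in use). The prover then runs the state-preserving argument for the $\NP$ statement ``there exist $(a_i,r_i)_i$ with $f_\pubparams(a_i;r_i)=y_i$ and $V_2(\st,a_{[\ell]})=1$'', using its coherent superposition of preimages as the witness superposition. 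The prover cannot measure an accepting transcript deterministically, so we compose. The prover first coherently evaluates the predicate ``$V_2$ accepts the committed preimages''. It measures this bit $b$. If $b=1$ it runs the $\NP$ argument, and otherwise it aborts. Finally $V'$ sends $\td$, and the prover runs $\Recover$ and uncomputes its answers.

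\textbf{Completeness and witness preservation.} By \Cref{lem:eval-measure-recover}, up to $\ell 2^{-\secpar}$ diamond distance the whole run equals the original prover with all commitments removed, followed by a measurement of $V_2$ on the coherent transcript. So $\Pr[b=1]=c\pm\negl$. Conditioned on $b=1$, the prover holds a superposition over valid $\NP$ witnesses, and the $\NP$ argument accepts and preserves this superposition except with negligible loss. This gives completeness $c-\negl$. If $c=1-\negl$, measuring $b$ is gentle (gentle measurement lemma), so every step is negligibly disturbing. Applying $\Recover$ and the inverse of the honest answer unitaries then returns $\ket{w}$ up to negligible trace distance, entangled registers included, since all operations act only on $\calA$.

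\textbf{$4\epsilon$-non-destruction.} Here the witness recovery of \Cref{constr:witness-recovery} is used. Before the protocol, the prover runs $\ValEst_{V_{\QMA,x}}(\ket{w},\epsilon,\delta)$ to obtain $p^*$. After the protocol and uncomputation, the net effect on the witness register is (up to negligible error) a single two-outcome projective measurement: coherently run the $\CVQC$, check acceptance, and uncompute. The prover then applies $\Repair$ (\Cref{lemma:CMSZ-repair}) with $N=2$ and $T=1/\sqrt{\delta}$, taking $\delta=2^{-\secpar}$ so that the expected runtime is $\poly[\secpar,1/\epsilon]$. By the definition of $\Repairable_{p,\epsilon}$ and \Cref{coro:valest-high-quality}/\Cref{lem:cmsz-eigenstates}, $p^*\ge p-\epsilon$ with overwhelming probability. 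The final re-estimate is then $\ge p-3\epsilon$, and \Cref{lem:valest-to-pure} places the state, except with negligible weight, in the set of witnesses of quality $p-4\epsilon$ that are again repairable. I expect the main obstacle to be this step. It requires showing that the repaired state lies in $\Repairable_{p-4\epsilon,\epsilon}$ and is not merely of good quality, which depends on the exact definition in \Cref{def:high-qual}. It also requires checking that $\Repair$ may be applied to a measurement implemented only approximately (through $\Recover$), where the diamond-norm closeness from \Cref{lem:eval-measure-recover} has to be propagated through the repair guarantees.

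\textbf{Soundness and knowledge.} We use a hybrid in which $\pubparams$ is sampled in $\injective$ mode. By dual-mode indistinguishability, the acceptance probability of $V'$ (which never uses $\td$ before deciding) changes by at most $\negl$, and this holds separately for the test and check branches. In injective mode each $y_i$ has a unique preimage $a_i=\Ext(\td,y_i)$. Soundness of the $\NP$ argument then implies that, except with negligible probability, whenever $V'$ accepts, the extracted $a_{[\ell]}$ satisfies $V_2(\st,a_{[\ell]})=1$. This yields a cheating prover $P^*$ for the original $\CVQC$: it simulates $P'^*$, extracts each $a_i$ online, and forwards it. $P^*$ has the same test and check acceptance probabilities up to $\negl$, which gives testable soundness $s+\negl$. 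The same reduction, composed with the original extractor, gives the $(d+1,s+\negl,\delta-\negl)$ knowledge statement; the shift $d\to d+1$ absorbs the negligible loss in the test-probability threshold. Complexity follows by counting: $\ell$ commitments, a $\poly$-size $\NP$ argument, and the repair overhead.
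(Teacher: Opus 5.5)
Your construction and most of your analysis match the paper's. You use the same commit-then-prove protocol, the same injective-mode reduction (extract $a_i=\Ext(\td,y_i)$ online, then use soundness of the $\NP$ argument to certify that the unique preimages are accepted), the same gentle-measurement argument when $c=1-\negl$, and the same use of $\Repair$ with a two-outcome damaging measurement. The gap is the step you flag yourself in the $4\epsilon$-non-destruction argument: getting the output back into $\Repairable_{p-4\epsilon,\epsilon}(x)$. The tool you name does not close it. \Cref{lem:valest-to-pure} turns a $\mathsf{ValEst}$ guarantee into a lower bound on acceptance probability, and acceptance probability does not imply repairability. For example, if $\ket{\psi_1},\ket{\psi_0}$ are eigenvectors of $V_{\QMA,x}$ with eigenvalues $1$ and $0$, then $\sqrt{q}\ket{\psi_1}+\sqrt{1-q}\ket{\psi_0}$ is accepted with probability $q$. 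Yet by \Cref{lem:cmsz-eigenstates}, $\mathsf{ValEst}$ returns a value near $0$ with probability about $1-q$. Stopping halfway through that lemma's proof does not suffice either. Suppose you apply \Cref{lem:mixed-to-pure} to the binary test ``$\mathsf{ValEst}$ outputs at least $p-3\epsilon$'' on the repaired state. The $\Repair$ guarantee bounds the failure probability only by $\gamma=\Theta(\sqrt{\delta})=\Theta(2^{-\secpar/2})$, so the pure components are concentrated only up to $\sqrt{\gamma}=\Theta(2^{-\secpar/4})$. But \Cref{def:high-qual} requires failure probability at most $\sqrt{2\delta}$.

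The missing idea is to re-concentrate with one more estimate. This is why \Cref{constr:witness-recovery-general} ends by computing $(p^{**},\bsigma^{**})\gets\mathsf{ValEst}(\bsigma^*,\epsilon,\delta)$ and outputs $\bsigma^{**}$. From $p^*\geq p-\epsilon$ and the $2\epsilon$ repair guarantee, $p^{**}\geq p-3\epsilon$ except with negligible probability. Almost-projectivity of $\mathsf{ValEst}$ (property 5 of \Cref{lemma:CMSZ-valest}) has error $\delta$ rather than $\sqrt{\delta}$. So, conditioned on $p^{**}\geq p-3\epsilon$, a further estimate on $\bsigma^{**}$ is at least $p-4\epsilon$ except with probability $2\delta$. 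Now apply \Cref{lem:mixed-to-pure} to the test ``$\mathsf{ValEst}\geq p-4\epsilon$'' with $\gamma=2\delta$. This leaves weight at most $\sqrt{2\delta}$ on a bad part, and a good part supported on pure states meeting exactly the condition of \Cref{def:high-qual}.

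The other obstacle you raise is routine and is handled as you suggest. Reorder the honest prover's operations so the whole coherent transcript is computed first. Remove the $\NP$ argument by state preservation, and the $f_\pubparams$-evaluation/$\Recover$ pair by \Cref{lem:eval-measure-recover}. This reaches the ideal repair experiment within trace distance $2^{-\secpar+1}$, so data processing transfers the repair guarantee. The expected number of oracle calls also transfers, because it is recorded in the hybrids' output and capped at $T=2^{\secpar/2}$.

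Finally, in the general case your completeness argument skips the initial $\mathsf{ValEst}$. Base completeness only applies to inputs of quality at least $a$. So, as in \Cref{claim:ndcvqc-completeness}, you need the $\Repairable_{p,\epsilon}$ hypothesis with $p\geq a+4\epsilon$. It guarantees that the post-estimation state is, up to negligible weight, supported on witnesses of quality at least $a+2\epsilon$.
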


    \paragraph{Stage 2 (\Cref{sec:sequential-rep}):}  We show a general compiler that converts any $\CVQC$ protocol that is $\epsilon$-non-destructive (\Cref{thm:non-des}), into one with nearly perfect completeness and soundness and is also $\epsilon$-non-destructive.  

    \begin{theorem}\label{thm:sequential-rep}
            Let $\calL=(\calL_\yes,\calL_\no)\in \QMA_{a,b}$ with a $\QMA$ verifier $V_\QMA$. There exists an efficient compiler that converts any $\epsilon$-non-destructive $\CVQC$ protocol $(P,V)$ for $\calL$ w.r.t.\ $V_\QMA$ (as defined in \Cref{thm:non-des}), with completeness $c$ and (testable) soundness $s$, into a new $\CVQC$ protocol $(P',V')$ for $\calL$ w.r.t.\ $V_\QMA$, that has the following properties, whenever $|x|>\secp, N \geq \frac{\secp^2|x|^2}{(c-s)^2}$:
         \begin{enumerate}
             \item {\bf Computational soundness:} It has  soundness $\negl$.

             \item {\bf Completeness:} It has completeness $1-\negl$ using a single witness in $\Repairable_{p,\epsilon}$, for any  $\epsilon\geq 0$ and $p\geq a+\epsilon N$.
             \item {\bf $\epsilon N$-Non-Destructive:}
         There exists a negligible function~$\mu$ such that for every $\secp\in\mathbb{N}$, every $x\in\calL_\yes$ and every witness $\ket{w}\in\Repairable_{p,\epsilon}(x)$,  where $p\geq a+\epsilon N$, 
         \[
            \Pr_{(\ket{\psi},b)\gets (P'(\ket{w}),V')(1^\secp,x)}\big[\ket{\psi}\in\Repairable_{p-\epsilon N,\epsilon}(x)\big]=1-\mu(\secp).
        \] 
         \item{\bf Complexity:}  The number of rounds, the communication complexity and the computational complexity grow by a multiplicative factor of ${\sf poly}(\secp,|x|,(c-s)^{-1})$.

         \item {\bf Argument of Knowledge:} 
             If the underlying $\CVQC$ is a $(d,s,\delta)$-testable argument of knowledge, then whenever
            $N\geq \frac{\secp^{d+2} |x|^d}{(c-s)^2}$, the compiled $\CVQC$ is a $\delta/N$-argument of knowledge according to Definition \ref{def:pok}.\footnote{See Remark \ref{remark:Nloss} for an explanation of the factor $N$ loss.}  
         \end{enumerate}
    \end{theorem}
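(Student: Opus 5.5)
The compiled protocol $(P',V')$ runs $N$ sequential executions of $(P,V)$ on the same witness register. In each execution the verifier flips a fresh coin to choose between $V_\test$ and $V_\honest$. $V'$ accepts iff every test execution accepts and at least $\frac{c+s}{2}\cdot|\{\text{check executions}\}|$ of the check executions accept. Complexity is immediate: everything grows by the factor $N=\poly(\secp,|x|,(c-s)^{-1})$.

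\textbf{Completeness and non-destruction.} These go by induction over executions. Write $\ket{w^{(k)}}$ for the state entering execution $k$. By $\epsilon$-non-destructiveness and a union bound over $N$ executions, with probability $1-N\mu$ we have $\ket{w^{(k)}}\in\Repairable_{p-k\epsilon,\epsilon}$ for all $k\le N$. Since $p-k\epsilon\ge a$, each such state is a valid witness. Conditioned on the whole history, $V_\test$ therefore accepts with probability $1$ and $V_\honest$ with probability $\ge c-\negl$. The main point is that the history only enters through the current state, which is still a witness.

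Define $S_m=\sum_{k\le m}(X_k-\mathbb{E}[X_k\mid\text{past}])$, where $X_k$ indicates an accepting check round. This is a martingale with bounded differences. By Azuma (\Cref{thm:azuma}) with $t=\frac{c-s}{4}N$, and since $N\ge\secp^2/(c-s)^2$, the number of accepting check rounds falls below $\frac{c+s}{2}$ times the number of check rounds only with probability $e^{-\Omega(\secp^2)}$. Chernoff handles the count of check rounds itself. The final state lies in $\Repairable_{p-\epsilon N,\epsilon}$, which is the non-destruction claim.

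\textbf{Soundness.} This is the main obstacle: rounds are correlated through the adversary's state, so testable soundness must be applied to the conditional prover state at a single round. For a fixed cheating $P^*$ and $x\in\calL_\no$, I would call a round $k$ with prefix $\tau$ \emph{good} if the conditional state $\brho_\tau$ satisfies $\Pr[V_\test\text{ accepts}]\ge\zeta$. Testable soundness then gives $\Pr[V_\honest\text{ accepts}]\le s$ on good rounds.

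On bad rounds, the round is a test round with probability $1/2$, independent of the past, and then rejects with probability $>1-\zeta$. So the probability that all tests pass while more than $\secp/(1-\zeta)$ rounds are bad is $\le 2^{-\Omega(\secp)}$; this follows from a sequential product argument, or from Azuma again. On good rounds, apply Azuma to $\sum_k (Y_k-\mathbb{E}[Y_k\mid\text{past}])$, where $Y_k$ indicates a check round that accepts, noting that $\mathbb{E}[Y_k\mid\text{past}]\le s/2$. The total number of accepting check rounds is then at most $\frac{s}{2}N+\secp\sqrt N+O(\secp)<\frac{c+s}{2}\cdot\frac{N}{2}(1-o(1))$ except with negligible probability, using $N\ge\secp^2|x|^2/(c-s)^2$. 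So $V'$ rejects. One point to check is that testable soundness is stated for QPT provers with mixed advice; the conditional state $\brho_\tau$ is exactly such advice once the prefix interaction is absorbed.

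\textbf{Argument of knowledge.} The extractor picks $j\gets[N]$ uniformly, simulates the first $j-1$ rounds honestly against $P^*$, and then runs the base extractor $\cE$ on round $j$. Suppose $P^*$ makes $V'$ accept with probability $\sigma$. By the same counting argument, but with threshold $1-1/(|x|\secp)^d$ in the definition of ``good'' and $N\ge\secp^{d+2}|x|^d/(c-s)^2$, on accepting transcripts all but few rounds are good. If no good round had check-acceptance $\ge s$, Azuma would force rejection. Hence with probability $\gtrsim\sigma$ over the prefix there exists a round $i$ meeting both hypotheses of \Cref{def:2partpok}. The uniform choice of $j$ hits it with probability $1/N$, after which $\cE$ succeeds with probability $\delta$. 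This gives extraction probability $\ge\delta\sigma/N$, i.e., a $\delta/N$-argument of knowledge.
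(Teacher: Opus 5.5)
Your proposal is correct, and for completeness, $\epsilon N$-non-destruction and the argument of knowledge it follows the paper's proof. It chains the $\epsilon$-non-destructive guarantee over the $N$ executions, uses an Azuma bound on accepting check rounds, counts to show that accepting transcripts have few rounds with low conditional test acceptance, and extracts by guessing $j\gets[N]$ and running the base extractor after an honestly simulated prefix, paying the factor $1/N$. You also use the threshold $1-1/(|x|\lambda)^d$ that Definition~\ref{def:2partpok} actually requires, where the paper writes $1-1/\lambda^d$.

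The difference is in soundness. The paper argues by contradiction and reuses its knowledge-soundness claims: if the compiled verifier accepts with noticeable probability, then some round has a conditional prover state that passes the test with probability $>1-1/\lambda^d$ and the check with probability $>s+1/\mathrm{poly}$, contradicting testable soundness. You argue directly instead. You classify rounds by the constant $\zeta$ of Definition~\ref{def:2part}, apply testable soundness pointwise to cap the predictable part of the check martingale at $s/2$ on every good round, and show that more than $O(\lambda)$ bad rounds is incompatible with passing every test.

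Your route gives an explicit $2^{-\Omega(\lambda)}$ soundness error, never involves $d$, and fits the stated hypothesis $N\ge\lambda^2|x|^2/(c-s)^2$ directly. The paper's proof as written inherits the knowledge-soundness parameters instead: a budget of $\lambda^{d+1}$ low-test rounds and $N$ of order $\lambda^{d+2}|x|^d/(c-s)^2$. What the paper's route buys is economy, since soundness and extraction come from the same two claims.

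One small arithmetic fix in your completeness step. Choosing $t=\frac{c-s}{4}N$ consumes the whole gap between $\frac{c}{2}N$ and $\frac{c+s}{2}\cdot\frac{N}{2}$, which leaves no room for an upward fluctuation in the number of check rounds. Take $t=\frac{c-s}{8}N$ instead, or apply Azuma directly to $\sum_k\bigl(X_k-\frac{c+s}{2}C_k\bigr)$, where $C_k$ indicates that round $k$ is a check round.
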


Loosely speaking, the new $\CVQC$ protocol $(P',V')$ simply repeats the underlying protocol $(P,V)$ sequentially $N={\sf poly}(\secp,|x|,(c-s)^{-1})$ times. The verifier $V'$ will  accept if and only if at least $\left(\frac{c+s}{2}\right)\cdot N$ of the executions are  accepting.  If we start with a witness in $\Repairable_{p,\epsilon}$ then after $N$ executions, with overwhelming probability we are left with a witness in $\Repairable_{p-N\epsilon,\epsilon}$, as desired. 
The end result is an $\epsilon N$-non-destructive $\CVQC$ protocol with completeness $1-\negl$ and soundness $\negl$. Thus, after running the entire protocol, we are left with a witness that remains in $\Repairable_{p-\epsilon N,\epsilon}(x)$ for some $p\geq a+\epsilon N$. 
    
Note that if we started with a protocol that is non-destructive (i.e., $\epsilon$-non-destructive for $\epsilon=\negl$) then the resulting protocol is also non-destructive. For the general completeness case, \Cref{thm:non-des} is restricted to $\epsilon = 1/\poly$ due to $\poly[1/\epsilon]$ runtime increase.

\paragraph{Stage 3:}   Finally we show how to compile any $\epsilon$-non-destructive $\CVQC$ protocol with completeness $1-\negl$ and soundness $\negl$, into a state preserving $\CVQC$ protocol satisfying the properties of \Cref{thm:main1}.

This is done by running the compiler from \Cref{thm:non-des} (again), and using the fact that if the underlying $\CVQC$ protocol has completeness $c=1-\negl$ then the resulting $\CVQC$ is witness preserving.

We next formally use  \Cref{thm:non-des,thm:sequential-rep} to prove \Cref{thm:main1}.

\paragraph{Proof of \Cref{thm:main1}}  Fix any language $\calL\in\QMA_{a,b}$ and a corresponding $\QMA$ verifier $V_\QMA$.  We next describe the compiler, associated with parameter $\epsilon=\epsilon(\secp)$ . Given a $\CVQC$ protocol $(P,V)$ for $\calL$ w.r.t.\ $V_\QMA$, with completeness $c$, (testable) soundness $s$, and (testable) $(d,s, \delta)$ argument of knowledge with some corresponding constant $d > 1$, the compiler does the following.

\begin{enumerate}

\item Let $N:=\frac{|x|^d\secp^{d+3}}{(c-s)^2}$.

    \item Let $(P_1,V_1)$ denote the $\CVQC$ protocol obtained by applying the compiler given in \Cref{thm:non-des}, with parameter $\epsilon$,  to $(P,V)$.  
    \item Let $(P_2,V_2)$ denote the $\CVQC$ protocol obtained by applying the compiler given in \Cref{thm:sequential-rep}.
    \item Let $(P_3,V_3)$ denote the $\CVQC$ protocol obtained by applying the compiler given in \Cref{thm:non-des} to $(P_2,V_2)$.\footnote{
     Here the the parameter $\epsilon$ associated with the compiler is not important since $(P_2,V_2)$ has  completeness $1-\negl$, and the parameter~$\epsilon$ is only used by the compiler in the case where the verifier rejects. }

\item Output $(P_3,V_3)$.
\end{enumerate}

By \Cref{thm:non-des}, $(P_1,V_1)$ has the same properties as $(P,V)$ up to negligible factors, but is $4\epsilon$-non-destructive, which means that if the prover starts with a witness $\ket{w}\in\Repairable_{p,\epsilon}(x)$, where $p\geq a+4\epsilon$, then at the end of the protocol, with overwhelming probability, the prover has a state   $\ket{\psi}\in\Repairable_{p-4\epsilon,\epsilon}(x)$.   Therefore, by \Cref{thm:sequential-rep}, $(P_2,V_2)$ has nearly perfect completeness and soundness and if the prover starts with a witness $\ket{w}\in\Repairable_{p,\epsilon}(x)$ for $p\geq a+\epsilon 4N$ then at the end of the protocol, with overwhelming probability, the prover has a state $\ket{w}\in\Repairable_{p-\epsilon 4N,\epsilon}(x)$.  Therefore, by \Cref{thm:non-des}, the final protocol $(P_3,V_3)$ also has nearly perfect completeness and soundness, but is also witness preserving for any witness $\ket{w}\in\Repairable_{a+\epsilon 4N,\epsilon }(x)$, as desired.

\qed

Thus, it remains to prove \Cref{thm:non-des,thm:sequential-rep}, which we do next.

\subsection{Proof of \Cref{thm:non-des} ($\epsilon$-Non-Destructive CVQC)}\label{sec:epsilon-nd}

In this section, we prove \Cref{thm:non-des}. To this end, fix the following ingredients:
\begin{itemize}
 
    \item A dual-mode trapdoor function family with $(1-2^{-\secpar})$-state recovery $\calF$ (\Cref{def:dual-trapdoor-recovery}).

    \item A $(1-2^{-\secpar})$-state-preserving argument $\SPNP$ for $\NP$ (\Cref{def:state-preserving-NP}), which can be constructed from any dual-mode trapdoor function family with state recovery (\Cref{thm:witness-preserving-NP}).
\end{itemize}

We prove completeness and soundness in \Cref{sec:ndcvqc-soundness} (Claims \ref{claim:ndcvqc-completeness} and \ref{claim:nd-cvqc-soundness}). We prove witness preservation for the special case of $c = 1-\negl$ in \Cref{sec:ndcvqc-witness-recovery-high-c} (\Cref{claim:nd-cvqc-recovery-high-c}) and $\epsilon$-nondestructiveness for the general case in \Cref{sec:ndcvqc-witness-recovery-low-c} (\Cref{claim:nd-cvqc-recovery-low-c}).

For the general case, we will also need the notion of a repairable witness.

\label{repairable-sec}
\begin{definition}[$\Repairable_{p,\epsilon}$]\label{def:high-qual}
    Fix any language $\calL=(\calL_\yes,\calL_\no)\in\QMA_{a,b}$ and any $\QMA$ verifier $V_\QMA$. Then for any $x\in\calL_\yes$,  $\ket{w}\in\Repairable_{p,\epsilon}(x)$
    if  and only if for every $\secp\in\mathbb{N}$, and for $\delta=2^{-\secp}$
    \[
        \Pr\left[p'<p-\epsilon : ~(p',\brho)\gets\ValEst_{V_{\QMA,x(1^\secp,\cdot)}}(\ket{w},\epsilon,\delta)\right]\leq \sqrt{2\delta}
    \]
    where $\ValEst$ is the almost projective measurement defined in \Cref{lemma:CMSZ-valest}. 
\end{definition}
By \Cref{lemma:CMSZ-valest}, every $\ket{w}\in\Repairable_{p,\epsilon}(x)$ satisfies
\[
    \Pr[\Accept \gets V_{\QMA,x}(\ket{w})] \geq p - \epsilon - \sqrt{2\delta} -\delta
\]
Conversely, by \Cref{coro:valest-high-quality}, every $\ket{w}$ which is accepted with probability $1-\negl$ belongs to $\Repairable_{1-1/\poly,\epsilon}$ for any $1/\poly$, for example $1/\poly = 0.1$. \Cref{lem:cmsz-eigenstates} gives a more precise characterization of the $\Repairable$ set.

\paragraph{Notation for the underlying non-adaptive  $\CVQC$ protocol.}
Before giving our construction, we define some additional notation for the underlying $\CVQC$, similar to the notation used in \Cref{sec:cvqc}.
\begin{itemize}
    \item $\ell=\ell(|x|,\secp)$ denotes the number of rounds in the protocol. 
    \item The verifier is denoted by $V=(V_1,V_2)$, and its messages by $(q_1,\ldots,q_\ell)\gets V_1(x,1^\secp)$.
    \item The prover's messages are denoted by $(a_1,\ldots,a_\ell)$. \item For every $i\in[\ell]$ we denote by
\[
q_{[i]}=(q_1,\ldots,q_i)~~\mbox{ and }~~a_{[i]}=(a_1,\ldots,a_i)
\]
\item We denote the partial transcript in the first $i$ rounds by 
\[
\tau_{[i]}=(q_1, a_1, \ldots, q_i, a_i).
\]
\item We denote the state of the prover after the first $i$ rounds as $\ket{\psi_{\tau_{[i]}}}$, and denote its initial state by $\ket{\psi_\emptyset}$.
\item For every round $i\in[\ell]$, and for every message $q_i$ that the verifier sends in the $i$'th round, denote by $P_{i} = P_{i}(q_{i})$ the unitary that the prover applies to its state upon receiving $q_{i}$ from the verifier.  Namely, 
\[
    P_{i}: \ket{\psi_{\tau_{[i-1]}}} 
    \mapsto
    \sum_{a_{i}} \left(\beta_{a_{[i]}} \ket{a_{[i]}} \otimes \ket{\psi_{\tau_{[i]}}}\right)
\]
\item For every $i\in[\ell]$ and every query sequence $q_{[i]}=(q_1,\ldots,q_i)$, we denote the coherent state of the prover after the $i$'th round, which is the state obtained by applying $\prod_{j=1}^{i} P_j$ to its initial state denoted by $\ket{\psi_\emptyset}$ (without doing any measurements),  by 
\[
    \sum_{a_{[i]}} \beta_{a_{[i]}}\ket{a_{[i]}} \otimes \ket{\psi_{\tau_{[i]}}}
\]
After measuring the first $i$ answers its state becomes $\ket{\psi_{\tau_{[i]}}}$.

\end{itemize}

In what follows we construct our $\epsilon$-non-destructive $\CVQC$ protocol that preserves the completeness and soundness of the underlying $\CVQC$ protocol, up to negligible factors. 
\begin{construction}[$\epsilon$-Non-Destructive $\CVQC$]\label{constr:cvqc}
    Fix a $\QMA$ language $\lang = (\lang_\yes, \lang_\no)\in\QMA_{a,b}$ with verifier $V_\QMA$. To prove to the verifier that $x\in\calL_\yes$,  the prover  begins with a witness state $\ket{w}\in\Repairable_{p,\epsilon}(x)$, where  $p\geq a+4\epsilon$, and the protocol proceeds as follows.

    \begin{enumerate}
        \item \textbf{Prover:} 
        \begin{enumerate}
            \item \textbf{This step is skipped for the special case of $c=1-\negl$. Otherwise:}
            
            Compute $(p^*,\brho)\gets\ValEst_{V_{\QMA,x}}(\ket{w},\epsilon,\delta)$, with $\delta=2^{-\secp}$.

            This value $p^*$ is needed by the witness recovery protocol (in \Cref{constr:witness-recovery-general} for the general case).

            Note that if $\ket{w}\in\Repairable_{p,\epsilon}(x)$ then by \Cref{lemma:CMSZ-valest} and \Cref{lem:mixed-to-pure}, with overwhelming probability $\brho$ has overwhelming probability mass on pure states $\ket{\psi}$ such that
        \[
            \ket{\psi} \in\Repairable_{p-\epsilon,\epsilon}(x).
        \]
        Therefore, for the sake of simplicity (and without loss of generality), we abuse notation, and denote the resulting state by~$\ket{w}\in\Repairable_{p-\epsilon,\epsilon}(x)$. 

       \item 
        Initialize three registers $(\regtranscript, \regprover, \regrand)$. 
        \begin{itemize}
            \item 
        $\regtranscript = (\regtranscript_1, \dots, \regtranscript_\ell)$ is initialized to $\ket{\vec{0}}$ and will contain a superposition over messages $a_{[i]}$ that the prover could have sent in the base $\CVQC$ protocol.
        \item 
        $\regprover$ is initialized to $\ket{\psi_{\emptyset}} = \ket{w}\otimes \ket{\vec{0}}$ and will contain the prover's internal state $\ket{\psi_{\tau_{[i]}}}$ for the base $\CVQC$ protocol.
\item 
        $\regrand = (\regrand_1, \dots, \regrand_\ell)$ is initialized to $\ket{\vec{0}}$ and will contain randomness used for evaluating the trapdoor family $\calF$.
         \end{itemize}
 \end{enumerate}

        \item \textbf{Verifier:} \begin{enumerate}
            \item Compute the verifier's internal state and queries $(\st, q_1, \dots, q_\ell) \gets V_1(x,1^\secpar)$ for the base $\CVQC$. 
            \item Sample $(\pubparams, \td) \gets \Setup_\calF(1^\secpar, \lossy)$.
\item
        Send $\pubparams$ to the prover.
        \end{enumerate}
        
        \item For each round $i \in [\ell]$ of the base $\CVQC$ protocol:
        \begin{enumerate}
            \item \textbf{Verifier:} Send $q_i$ to the prover.

            \item \textbf{Prover:} 
            \begin{enumerate}
                \item Apply $P_{i}(q_{i})$ to registers $(\regtranscript_i, \regprover)$, after which the prover's state is 
                \[                \sum_{a_{[i]}}\alpha_{a_{[i]}}\ket{a_{[i]}}_\calA\otimes\ket{\psi_{\tau_{[i]}}}_\calP
                \]
                \item  Coherently evaluate $f_\pubparams$ on $\regtranscript_i$ and measure the result. Specifically, apply the isometry
                \[
                    \ket{a_{i}}_{\regtranscript_{i}} \otimes \ket{\vec{0}}_{\regrand_{i}}
                    \mapsto 
                    \frac{1}{\sqrt{|\calR|}} \sum_{r_{i}\in \calR} \ket{a_{i}}_{\regtranscript_{i}} \otimes \ket{r_{i}}_{\regrand_i} \otimes \ket{f_{\pubparams}(a_{i}; r_i)}
                \]
                to registers $(\regtranscript_i, \regrand_i)$ and measure the last register to obtain $y_i$.
             
                At this step, the state of the prover is

            \begin{equation}\label{eqn:st}
                \propto \
                \left(I_{\regtranscript,\regprover,\regrand} \otimes \Pi_{y_{[i]}}\right) 
                \left(\sum_{a_{[i]}} \alpha_{a_{[i]}} \ket{a_{[i]}}_{\regtranscript} \otimes \ket{\psi_{\tau_{[i]}}}_{\regprover} \otimes \sum_{r_{[i]}\in \calR^{i}} \ket{r_{[i]}}_{\regrand} \otimes \ket{f_{\pubparams}(a_{[i]}; r_{[i]})}\right)
            \end{equation}
            where 
            $\Pi_{y_{[i]}}$ is a projector onto $\ket{y_{[i]}} = \ket{(y_1,\dots,y_i)}$ and where 
            \[f_{\pubparams}(a_{[i]};r_{[i]}) = (f_{\pubparams}(a_1;r_1),\dots, f_{\pubparams}(a_i;r_i)).
            \]

            \item Send $y_i$ to the verifier.
             \end{enumerate}

        \end{enumerate}

        \item \textbf{Verifier.} Send $\st$ to the prover.
        \item \label{item:prover-b}\textbf{Prover.} 
        Project the state to answers that are accepted or rejected by $V_2$. Namely, 
        \begin{enumerate}
            \item Denote by $U_{V_2}$ the unitary that computes the verdict $V_2(\st,a_{[\ell]})$.
            \item denote by $M$ the measurement of this verdict.
            \item Apply $U_{V_2}^\dagger M U_{V_2}$ to the state, to obtain a verdict bit $b\in\{0,1\}$. 
        \end{enumerate}

        The residual state becomes:
        \[
        \propto \
                \left(I_{\regtranscript,\regprover,\regrand} \otimes \Pi_{y_{[\ell]}}\right) 
                \left(\sum_{a_{[\ell]}:~V_2(\st,a_{[\ell]})=b} \alpha_{a_{[\ell]}} \ket{a_{[\ell]}}_{\regtranscript} \otimes \ket{\psi_{\tau_{[\ell]}}}_{\regprover} \otimes \sum_{r_{[i]}\in \calR^{\ell}} \ket{r_{[\ell]}}_{\regrand} \otimes \ket{f_{\pubparams}(a_{[\ell]}; r_{[\ell]})}\right)
        \]

        \item If $b=0$ then abort.  Otherwise, continue.
        \item \textbf{Both:} The prover proves to the verifier that the statement $(\st, \pubparams, y_{[\ell]})$ is in the $\NP$ language 
         \begin{equation}\label{eq:ndcvqc-np-language}
            \lang_{\NDCVQC} \coloneqq
            \left\{
                \big(\st, \pubparams, y_{[\ell]}\big) :\
                \begin{array}{c}
                    \exists\ \big(a_{[\ell]}, r_{[\ell]}\big) \text{ s.t. } \\
                     V_2\big(\st, a_{[\ell]}\big) = \mathsf{Accept}  \\
                     \land\ y_i = f_{\pubparams}(a_{i}; r_i) \quad\forall i\in[\ell]
                \end{array}
            \right\}
        \end{equation}
        using $\SPNP$, the state-preserving argument for $\NP$. The prover uses the contents of registers $(\regtranscript, \regrand)$ as the witness.

        \item \textbf{Verifier:} Output the result of $\SPNP$. Send $\td$ to the prover.

        \item \textbf{Prover:} Run the witness recovery algorithm (\Cref{constr:witness-recovery} for the case where the completeness or the underlying protocol is $1-\negl$, and \Cref{constr:witness-recovery-general} for the general case) using $\td$ and all three internal registers $(\regtranscript, \regprover,\regrand)$.
    \end{enumerate}
\end{construction}

\subsection{Completeness and Soundness}\label{sec:ndcvqc-soundness}
\begin{claim}\label{claim:ndcvqc-completeness}
    Let $(P, V)$ be a $\CVQC$ and let $(P', V')$ be result of compiling $(P, V)$ using \Cref{constr:cvqc}. If $(P,V)$ has completeness $c$, then $(P', V')$ has completeness $c-\negl$.
\end{claim}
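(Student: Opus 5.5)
The plan is to compare the honest execution of $(P',V')$ with an execution of the base protocol $(P,V)$ on the same witness, and show that the verdict bit $b$ obtained in step~\ref{item:prover-b} has the same distribution as the base verdict up to a negligible error. After that, completeness of $\SPNP$ takes care of the final acceptance.

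\textbf{Step 1: the value-estimation step (general case).} Here the prover first runs $\ValEst$ in step~1(a). I would take completeness of $(P,V)$ to refer to the state the prover actually feeds into the base protocol. By \Cref{lemma:CMSZ-valest} and \Cref{lem:mixed-to-pure}, for a witness in $\Repairable_{p,\epsilon}(x)$ with $p\ge a+4\epsilon$, this post-estimation state is, up to negligible mass, supported on witnesses that $V_{\QMA,x}$ accepts with probability at least $p-2\epsilon-\negl \ge a$. Completeness of the base protocol therefore still applies to each pure component. I will note this in the proof and then argue about a single pure witness $\ket{w}$.

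\textbf{Step 2: commitments do not change the verdict distribution.} The commitment isometry and the measurement of $y_i$ act only on the registers $\regrand_i$ and the output register. Both are diagonal in the computational basis of $\regtranscript_i$. So they commute with the later unitaries $P_{j}(q_j)$, which act on $(\regtranscript_j,\regprover)$, and with $U_{V_2}^\dagger M U_{V_2}$, which is controlled on $\regtranscript$. By non-adaptivity all queries $q_{[\ell]}$ and $\st$ are fixed in advance, so we may defer every $y$-measurement to the end. The resulting distribution of $b$ is then exactly
\[
\Pr[b=1]=\sum_{a_{[\ell]}:\,V_2(\st,a_{[\ell]})=1}|\alpha_{a_{[\ell]}}|^2 ,
\]
which is the acceptance probability of the base protocol in which the prover measures each $a_i$ in turn (again by deferred measurement). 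Hence $\Pr[b=1]\ge c$.

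\textbf{Step 3: the NP argument.} Conditioned on $b=1$, the residual state is a superposition over $(a_{[\ell]},r_{[\ell]})$ with $V_2(\st,a_{[\ell]})=1$ and $y_i=f_\pubparams(a_i;r_i)$ for every $i$. This is a superposition of valid witnesses for $(\st,\pubparams,y_{[\ell]})\in\lang_{\NDCVQC}$, possibly entangled with $\regprover$. The completeness clause of \Cref{def:state-preserving-NP} then gives acceptance with probability $1-\negl$. Combining the steps yields $c-\negl$. The only delicate point is Step~1: making sure the base protocol's completeness genuinely applies after $\ValEst$ has perturbed the witness. If that is awkward to argue, I would state completeness relative to the post-$\ValEst$ state, which is what the construction uses.
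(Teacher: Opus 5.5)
Your proposal is correct and follows essentially the same route as the paper's proof. After $\ValEst$, the state is still, up to negligible mass, a witness accepted with probability at least $a$; base completeness then gives $b=1$ with probability at least $c$; and completeness of $\SPNP$ finishes the argument. The only difference is that you spell out the commutation/deferred-measurement argument showing that the $f_\pubparams$ commitments do not change the distribution of the verdict bit, which the paper uses implicitly when it appeals to completeness of $(P,V)$.
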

\begin{proof}
    Let $\ket{w'}$ be the state after running $\ValEst$ in step 1a. As noted previously, if $\ket{w}\in \Repairable_{p,\epsilon}(x)$, then with overwhelming probability $\ket{w'}\in \Repairable_{p-\epsilon, \epsilon}$. Therefore 
    \[
        \Pr[\Accept \gets V_{\QMA,x}(\ket{w})] \geq p - \epsilon - \sqrt{2\delta} - \delta \geq a + 2\epsilon
    \]
    By the completeness of $(P, V)$, the prover's measurement in step 5 outputs a verdict bit $b=1$ with probability $\geq c$. Whenever this occurs, the prover's state is supported on witnesses for $\lang_{\NDCVQC}$, so completeness of $\SPNP$ implies that the verifier accepts with overall probability $c - \negl$.
\end{proof}

\begin{claim}\label{claim:nd-cvqc-soundness}
    Let $(P, V)$ be a $\CVQC$ and let $(P', V')$ be result of compiling $(P, V)$ using \Cref{constr:cvqc}. 
    
    If $(P, V)$ has $s$-testable soundness and is a $(d,s, \delta)$-testable proof of quantum knowledge for the relation decided by the $\QMA$ verifier $V_\QMA(1^\secpar, \cdot, \cdot)$, then $(P', V')$ has $s+\negl$-testable soundness and is a $(d+1,s+\negl,\delta+\negl)$-testable proof of quantum knowledge.

    Furthermore, if $(P, V)$ is an $(s, \delta)$-testable proof of quantum knowledge for the relation decided by the QMA verifier $V_\QMA(1^\secpar, \cdot, \cdot)$, then $(P', V')$ is an $(s+\negl,\delta+\negl)$-testable proof of quantum knowledge.

\end{claim}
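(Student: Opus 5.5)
The plan is to reduce to the underlying protocol through a hybrid that switches the trapdoor function to $\injective$ mode, following the outline in the technical overview. Given a cheating prover $P'^*$ for $(P',V')$ with state $\brho$, define a prover $P^*$ for the base protocol $(P,V)$. $P^*$ samples $(\pubparams,\td)\gets\Setup_\calF(1^\secpar,\injective)$ and sends $\pubparams$ to $P'^*$. On receiving each base query $q_i$ from the external verifier, it forwards $q_i$ to $P'^*$, obtains $y_i$, computes $a_i=\Ext(\td,y_i)$, and sends $a_i$ externally. Because the base verifier is non-adaptive (\Cref{def:non-adaptive-cvqc}), this message flow is faithful. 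The prover $P^*$ never needs $\st$, and the final $\SPNP$ phase is simply not run by $P^*$. The same wrapper also defines the knowledge extractor: run the base extractor $\cE$ with oracle access to $P^*$.

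The key step is to show, separately for each choice $V_\test$ and $V_\honest$ of the base verifier, that
\[
\Pr[(P^*(\brho),V_\star)=1]\ \geq\ \Pr[(P'^*(\brho),V'_\star)=1]-\negl .
\]
I will not argue directly that the extracted transcript is accepted. Instead, I will pass through the event $E$ that the statement $(\st,\pubparams,y_{[\ell]})$ lies in $\lang_{\NDCVQC}$. Four steps are needed.

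\begin{enumerate}
\item In the real execution, which uses $\lossy$ mode, computational soundness of $\SPNP$ gives $\Pr[V'\text{ accepts}\wedge \neg E]\leq\negl$. Hence $\Pr[V'\text{ accepts}]\leq \Pr[E\wedge V'\text{ accepts}]+\negl$.
\item Consider the event ``$\SPNP$ accepts''. It is efficiently checkable without $\td$, since the verifier can sample $\st$ and run the $\SPNP$ verifier. By dual-mode indistinguishability, the probability that the $\SPNP$ verifier accepts changes only negligibly when $\pubparams$ is switched to $\injective$ mode. The reduction here embeds the challenge $\pubparams$ and simulates everything else, including $V_1$ and the $\SPNP$ verifier.
\item Applying soundness of $\SPNP$ again, now in $\injective$ mode, shows that acceptance implies $E$ except with negligible probability.
\item In $\injective$ mode, $E$ means that there is some $(a_{[\ell]},r_{[\ell]})$ with $f_\pubparams(a_i;r_i)=y_i$ for all $i$ and $V_2(\st,a_{[\ell]})=1$. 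Injectivity forces $a_i=\Ext(\td,y_i)$, which is exactly what $P^*$ sent. Therefore $V_2$ accepts $P^*$'s transcript.
\end{enumerate}

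Chaining these steps yields the displayed inequality.

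Given the inequality, testable soundness follows by contraposition. Suppose $P'^*$ passes $V'_\test$ with probability at least $\testsoundness+\negl$ and passes $V'_\honest$ with probability greater than $s+\negl$. Then $P^*$ passes $V_\test$ with probability at least $\testsoundness$ and passes $V_\honest$ with probability greater than $s$, which contradicts the $s$-testable soundness of $(P,V)$. For the knowledge claim, the precondition for $P'^*$ is a test success of $1-1/(|x|^{d+1}\secp^{d+1})$. This translates into a test success of $1-1/(|x|^{d+1}\secp^{d+1})-\negl\geq 1-1/(|x|^d\secp^d)$ for $P^*$, and this slack is what justifies raising $d$ to $d+1$. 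The check threshold $s$ similarly becomes $s+\negl$. Applying $\cE^{P^*}$ then yields a valid $\QMA$ witness with probability $\delta$, adjusted by $\negl$.

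The main obstacle is Step 2. The predicate ``$\Ext$ output is accepted'' cannot be evaluated in $\lossy$ mode, and it is not efficiently computable without $\td$, so indistinguishability cannot be applied to it directly. The fix is to apply indistinguishability only to the efficiently verifiable event that the $\SPNP$ verifier accepts, and to use $\SPNP$ soundness in both modes to convert acceptance into membership in $\lang_{\NDCVQC}$. A secondary care point is that $\SPNP$ soundness must hold for statements chosen adaptively by $P'^*$, with $\st$ revealed just before the $\SPNP$ phase. I would handle this by treating the prefix of the interaction as the cheating $\NP$ prover's auxiliary state, which is valid because the statement is fixed before $\SPNP$ begins.
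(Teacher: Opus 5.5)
Your proposal is correct and follows essentially the same route as the paper. You use the same injective-mode wrapper that forwards each $q_i$ and returns $\Ext(\td,y_i)$. You apply dual-mode indistinguishability only to the efficiently checkable event that the $\SPNP$ verifier accepts, then use $\SPNP$ soundness in the injective-mode hybrid to obtain $(\st,\pubparams,y_{[\ell]})\in\lang_{\NDCVQC}$, and finally use uniqueness of preimages to conclude that the extracted transcript is accepted.

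Your Step 1 (soundness of $\SPNP$ in $\lossy$ mode) is not actually used in the chain and can be dropped. Your explicit accounting of the $d\to d+1$ slack and of the adaptively chosen $\SPNP$ statement makes precise two points the paper leaves implicit.
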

\begin{proof}
    Let $x \in \lang_\yes \cup \lang_\no$.
    First, note that when $V$ is composed of two possible verifier algorithms $V_{\test}$ and $V_{\honest}$, then $V'$ can also be decomposed into analogous verification algorithms $V'_{\test}$ and $V'_{\honest}$. We consider $V_{\test}$ to be the ``test'' verifier and $V_{\honest}$ to be the ``honest'' verifier from \Cref{def:2part} (or \Cref{def:2partpok} for argument of knowledge).
    We show that for every $\QPT$ prover $P^*$ which passes  $V'_{\test} (x)$ (resp., $V'_{\honest} (x)$)
    with probability $p$ in an execution of $(P', V')$, there exists a $\QPT$ prover $\widetilde{P}^*$ 
    
    that convinces $V_{\test}(x)$ (resp., $V_{\honest}(x)$) in an execution of the underlying $(P,V)$ protocol with probability  $\geq p - \negl$. 

    Thus, if $(P, V)$ is sound, then so is $(P',V')$, and if it is a proof of quantum knowledge, then so is $(P', V')$.

    $\widetilde{P}^*$ begins by sampling $(\pubparams, \td)\gets \Setup_{\calF}(1^\secpar, \injective)$. Then, for each round $i$ of the underlying $(P,V)$ protocol, it receives the external verifier's query $q_i$ and forwards it internally to $P^*$, who sends back a message $y_i$. $\widetilde{P}^*$ runs the trapdoor function extractor $a'_i \gets \Ext_{\calF}(\td, y_i)$, then sends $a'_i$ to the external verifier.

    We now analyze the success probability of $\widetilde{P}^*$. Let $\st$ be the verifier's final internal state.
    
    Suppose that $(\st,  \pubparams, y_{[\ell]}) \in \lang_\NDCVQC$. Then there exist $(a_{[\ell]}, r_{[\ell]})$ such that the following two conditions hold:
    \begin{gather*}
        V_2(\st, a_{[\ell]}) = \Accept
        \\
        y_i = f_{\pubparams}(a_i; r_i) \quad \forall i\in[\ell]
    \end{gather*}
    where $a_{[\ell]} = (a_1, \dots, a_\ell)$, where $r_{[\ell]} = (r_1, \dots, r_\ell)$.

    The trapdoor property of $\calF$ in injective mode (\Cref{def:dual-trapdoor-recovery}) implies that the extractor outputs $a'_i = a_i$ with overwhelming probability.
    Furthermore, the first condition says that sending each $a_i$ to the verifier constitutes an accepting transcript.
    Thus, it suffices to show that 
    \[
        \Pr_{\widetilde{P}^*}\big[(\st, \pubparams, y_{[\ell]}) \in \lang_\NDCVQC] = p - \negl
    \]

    Consider the following hybrid experiments.
    \begin{itemize}
        \item $\Hyb_0$ is an execution of $(P^*, V'(x))$ up to the stage where $V$ decides its output. The experiment ends \emph{before} $V'$ sends $\td$ to $P^*$ for witness recovery. Let $b$ be the decision bit of $V'$, which is its decision in the execution of $\SPNP\langle P^*, V\rangle(1^\secpar, (\st, \pubparams, y_{[\ell]}))$.

        \item $\Hyb_1$ replaces the trapdoor function specified by $\pubparams$ with an \emph{injective-mode} function $(\pubparams, \td)\gets\Setup_\calF(1^\secpar, \injective)$.

    \end{itemize}
    $\Hyb_1$ can be run using an externally generated $\pubparams$ and generating everything else internally. Therefore dual-mode indistinguishability of $\calF$ implies that 
    \[
        \left|\Pr_{\Hyb_1}[b = \Accept] - \Pr_{\Hyb_{0}}[b = \Accept]\right| = \negl
    \] 
    The completeness and soundness of $\SPNP$ imply that 
    \[
        \Pr_{\Hyb_1}[b = \Accept] = \Pr_{\Hyb_{1}}[(\st, q_{[\ell]}, \pubparams, y_{[\ell]}) \in \lang_\NDCVQC] - \negl
    \]

    Observe that $\Hyb_{1}$ is distributed precisely as in the internal evaluation of $\widetilde{P}^*$.
    Combining this with the observations from before, $\widetilde{P}^*$ convinces the verifier in an execution of $(P, V)$ with probability $p - \negl$.

    Thus, if $V'_{\test}$ accepts $P^*$ with probability $p$, then $V_{\test}$ accepts $\widetilde{P^*}$ with probability at least $p - \negl$, which by the $s$-testable soundness of $(P,V)$ implies that $V_{\honest}$ accepts $\widetilde{P^*}$ with probability at most $s$. In turn this implies that $V'_{\honest}$ accepts $P^*$ with probability at most $s + \negl$, as desired. 
    Furthermore, if $(P, V)$ is a $(s, \delta)$-testable proof of quantum knowledge, then running its extractor $\calE$ on $\widetilde{P^*}$ produces a witness $\rho$ with $\geq \delta$ probability, that is,
    \[
        \Pr[\mathsf{Accept} \gets \calR(1^\secp, x, \rho): \rho \gets \cE^{\widetilde{P^*}}(\rho, 1^\secp, x)] \geq \delta(\secp)
    \]
\end{proof}

\subsection{Witness Recovery: High Completeness}\label{sec:ndcvqc-witness-recovery-high-c}

We first show the witness recovery algorithm in the case where the underlying $\CVQC$ has completeness $\geq 1-\negl$.  We will later generalize it to any completeness parameter.
\begin{construction}[Witness Recovery  for $c\geq 1-\negl$]\label{constr:witness-recovery}
    As we argue below, at the end of the $\CVQC$ protocol the prover holds a state that is negligibly close to

    \[
        \propto \
        \left(I_{\regtranscript,\regprover,\regrand} \otimes \Pi_{y_{[\ell]}}\right) 
        \left(\sum_{a_{[\ell]}} \alpha_{a_{[\ell]}} \ket{a_{[\ell]}}_{\regtranscript} \otimes \ket{\psi_{\tau_{[\ell]}}}_{\regprover} \otimes \sum_{r_{[\ell]}\in \calR^{\ell}} \ket{r_{[\ell]}}_{\regrand} \otimes \ket{f_{\pubparams}(a_{[\ell]}; r_{[\ell]})}\right)
    \]
    
    Thus the witness recovery algorithm proceeds as follows:
    \begin{enumerate}
        \item
    For every $i\in[\ell]$ run the trapdoor function recovery algorithm $\Recover_{\calF}(\td, \cdot, y_i)$ on registers $(\regtranscript_i, \regrand_i)$, to obtain the state
    \[
       \sum_{a_{[\ell]}} \alpha_{a_{[\ell]}} \ket{a_{[\ell]}}_{\regtranscript} \otimes \ket{\psi_{\tau_{[\ell]}}}_{\regprover}  
    \]
        \item For $i\in[\ell]$  apply $P_{i}^\dagger$ to registers $(\regtranscript_i,\regprover)$, to obtain the initial state $\ket{\psi_\emptyset}=\ket{w}\otimes\ket{\vec{0}}$.
    \end{enumerate}
\end{construction}

\begin{claim}\label{claim:nd-cvqc-recovery-high-c}
    Let $\ket{w}$ be a witness for $x\in \lang_\yes$.  Suppose the underlying $\CVQC$ protocol has completeness $c\geq 1-\negl$.  Let $\NDCVQC$ be the protocol from \Cref{constr:cvqc} together with witness recovery from \Cref{constr:witness-recovery}. Then $\NDCVQC$ is witness-preserving.

\end{claim}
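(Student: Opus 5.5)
We track the prover's state through the protocol, step by step, and show it stays within negligible trace distance of an ideal state from which the recovery in \Cref{constr:witness-recovery} returns $\ket{\tilde w}$ exactly. The reference process is an \emph{ideal} one: the prover coherently applies $P_1(q_1),\dots,P_\ell(q_\ell)$ to $(\regtranscript,\regprover)$ and never measures anything. Every deviation of the real protocol from this ideal process is then charged to one of three gentle operations:
\begin{itemize}
\item the measurements of $y_i$, which are undone by $\Recover$;
\item the verdict measurement of step~\ref{item:prover-b};
\item the execution of $\SPNP$.
\end{itemize}
Throughout, the external register $\calB$ is untouched, so we argue on the purification. Since every bound we use is a diamond-norm or trace-distance bound, entanglement with $\calB$ is handled automatically. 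The queries $q_{[\ell]}$ and $\st$ are fixed classical values sampled by $V_1$, so we condition on them.

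\textbf{Step 1 (commute the verdict measurement past the $f_\pubparams$ evaluations).} The verdict projector $U_{V_2}^\dagger M U_{V_2}$ is diagonal in the computational basis of $\regtranscript$, i.e.\ of the form $\sum_{a}U_a\otimes\ketbra{a}$. By \Cref{lem:eval-measure-recover}, applied to all $\ell$ coordinates (with $(1-2^{-\secpar})$-recovery and a union bound, giving loss $\ell 2^{-\secpar}$), the real channel "evaluate and measure $y_{[\ell]}$, then verdict, then $\Recover$" is negligibly close in diamond norm to "verdict applied directly to $\sum_{a_{[\ell]}}\alpha_{a_{[\ell]}}\ket{a_{[\ell]}}\ket{\psi_{\tau_{[\ell]}}}$". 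The same reordering argument handles the $\SPNP$ execution that sits between the verdict and $\Recover$. That execution acts only on $(\regtranscript,\regrand)$, which are used as the $\NP$ witness. It is $(1-2^{-\secpar})$-state-preserving whenever the state is a superposition of valid witnesses for $\lang_\NDCVQC$, and after the verdict outcome $b=1$ this holds exactly. So $\SPNP$ changes the state by at most $2^{-\secpar}$ in trace distance, except with negligible probability.

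\textbf{Step 2 (gentleness of the verdict).} In the ideal process, the probability of outcome $b=1$ equals the probability that the honest prover on $\ket{w_i}$ is accepted by $V$. By completeness, this is at least $1-\negl$ for every witness component in the support. We therefore apply gentle measurement to the pure state over $(\calA,\calB,\text{ancillas})$: the post-measurement state is within trace distance $O(\sqrt{\negl})=\negl$ of the pre-measurement state. We also need the acceptance probability of the superposition $\sum_i\alpha_i\ket{w_i}\ket{\phi_i}$. Because the register $\calB$ is orthogonally marked, this is the average $\sum_i|\alpha_i|^2\Pr[\text{accept}\mid w_i]$, so the bound carries over.

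\textbf{Step 3 (uncomputation).} After $\Recover$, the state is negligibly close to $\sum_{a_{[\ell]}}\alpha_{a_{[\ell]}}\ket{a_{[\ell]}}\ket{\psi_{\tau_{[\ell]}}}$. Applying $P_\ell^\dagger,\dots,P_1^\dagger$ returns exactly $\ket{w}\otimes\ket{\vec 0}$ in the ideal process. Unitaries preserve trace distance, so the triangle inequality sums the three negligible losses from Steps 1 and 2. Markov's inequality then converts the expected distance into the required "$\mu_1$-close with probability $1-\mu_2$" form of \Cref{def:witness-pres}.

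\textbf{Main obstacle.} The delicate step is Step 1. We must justify reordering the verdict measurement and the $\SPNP$ interaction, both of which come \emph{after} the $y_i$ measurements, relative to $\Recover$, which needs $\td$ and therefore happens last. The plan is to invoke \Cref{lem:eval-measure-recover} with $U$ being the (classically controlled) dilation of the verdict together with the $\SPNP$ prover's actions. The subtlety is that the $\SPNP$ prover also acts on $\regrand$ and on the $y$-dependent witness. To handle this, we first apply $\SPNP$'s state preservation so that it acts as the identity up to $2^{-\secpar}$. Only the diagonal verdict then remains, and the lemma applies to it directly.
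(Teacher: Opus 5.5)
Your proposal is correct and rests on the same three ingredients as the paper's proof: gentleness of the verdict measurement from $1-\negl$ completeness, negligible disturbance by $\SPNP$ via state preservation, and the recovery property of $\calF$ followed by $P_\ell^\dagger,\dots,P_1^\dagger$. It differs from the paper in two ways. First, you commute the verdict past the $f_\pubparams$ evaluations with \Cref{lem:eval-measure-recover}, as the paper does in its general-$c$ hybrids, whereas the paper's proof of this claim argues gentleness directly on the post-$y$ state. Second, you make explicit the orthogonal marking on $\calB$, which the paper's phrase ``any mixture of valid witnesses'' tacitly assumes.

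One small correction: completeness bounds the rejection probability only on average over $(\st,q_{[\ell]})$, not for each fixed query sequence you condition on. So apply the gentle-measurement bound in expectation, using Jensen ($\mathbb{E}[\sqrt{\varepsilon}]\le\sqrt{\mathbb{E}[\varepsilon]}$), before your Markov step.
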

\begin{proof}
    Note that step 1a, where $\ValEst$ is run, is skipped in the case of $c\geq 1-\negl$. $\NDCVQC$ goes directly to computing the verifier's queries $q_{[\ell]}$ and the prover's messages $y_{[\ell]}$.
    
    By \Cref{eqn:st}, at the end of the $\ell$'th round of the $\CVQC$ protocol, the prover holds the state \[
        \propto \
        \left(I_{\regtranscript,\regprover,\regrand} \otimes \Pi_{y_{[\ell]}}\right) 
        \left(\sum_{i, a_{[\ell]}} \alpha_{a_{[\ell]}} \ket{a_{[\ell]}}_{\regtranscript} \otimes \ket{\psi_{\tau_{[\ell]}}}_{\regprover} \otimes \sum_{r_{[\ell]}\in \calR^{\ell}} \ket{r_{[\ell]}}_{\regrand} \otimes \ket{f_{\pubparams}(a_{[\ell]}; r_{[\ell]})} \otimes \ket{\phi_i}\right)
    \]
    where $\sum_{a_{[\ell]}} \alpha_{a_{[\ell]}} \ket{a_{[\ell]}}_{\regtranscript}$ is the result of coherently running the base $\CVQC$ prover on one of the input witnesses $\ket{w_i}$ in the support of $\ket{\tilde{w}}$.

    By the trapdoor recovery property of $\calF$ (\Cref{def:dual-trapdoor-recovery}), it suffices to argue that the  state of the prover, after running the state-preserving argument for $\NP$ ($\SPNP$), is negligibly close to the state above.
    To see that this is the case, first recall that the underlying $\CVQC$ has completeness $1-\negl$. Thus, measuring whether $a_{[\ell]}$ is an accepting transcript is almost deterministic on any mixture of valid witnesses $\ket{w_i}$, so this measurement is gentle on the above state.
    Conditioned on the measurement accepting (which occurs with overwhelming probability), the prover holds a superposition over NP witnesses for $\lang_\NDCVQC$, so the $\SPNP$ protocol further disturbs the state by a negligible amount.

\end{proof}

\subsection{Witness Recovery: Arbitrary Completeness}\label{sec:ndcvqc-witness-recovery-low-c}

Next, we present a witness recovery algorithm for the case that the underlying $\CVQC$ protocol has arbitrary completeness~$c$.  This witness recovery protocol is associated with a tunable parameter $\epsilon=\epsilon(\secp)$ and has the guarantee that if the initial witness $\ket{w}$ is in $\Repairable_{p,\epsilon}(x)$ for $p\geq a+4\epsilon$,  then the witness recovery algorithm will output a witness that is in $\Repairable_{p-4\epsilon,\epsilon}(x)$.

\begin{construction}[Witness Recovery for any completeness parameter~$c$]\label{constr:witness-recovery-general}
This witness recovery algorithm takes as input a parameter $\epsilon$ and an estimate $p^*\in [0,1]$ of the quality of the witness, as computed by $P'$ in the first step of the protocol (\Cref{constr:cvqc}).

At the end of the $\CVQC$ protocol given in \Cref{constr:cvqc}, the prover either has the state 
     \[
    \ket{\varphi_\Reject} =
        \left(I_{\regtranscript,\regprover,\regrand} \otimes \Pi_{y_{[\ell]}}\right) 
        \left(\sum_{a_{[\ell]}: V_2(\st,a_{[\ell]})=0} \alpha_{a_{[\ell]}} \ket{a_{[\ell]}}_{\regtranscript} \otimes \ket{\psi_{\tau_{[\ell]}}}_{\regprover} \otimes \sum_{r_{[\ell]}\in \calR^{\ell}} \ket{r_{[\ell]}}_{\regrand} \otimes \ket{f_{\pubparams}(a_{[\ell]}; r_{[\ell]})}\right)
    \]
    or the state that is negligibly far from the state 
    \[
    \ket{\varphi_\Accept} =
        \left(I_{\regtranscript,\regprover,\regrand} \otimes \Pi_{y_{[\ell]}}\right) 
        \left(\sum_{a_{[\ell]}: V_2(\st,a_{[\ell]})=1} \alpha_{a_{[\ell]}} \ket{a_{[\ell]}}_{\regtranscript} \otimes \ket{\psi_{\tau_{[\ell]}}}_{\regprover} \otimes \sum_{r_{[\ell]}\in \calR^{\ell}} \ket{r_{[\ell]}}_{\regrand} \otimes \ket{f_{\pubparams}(a_{[\ell]}; r_{[\ell]})}\right).
    \]

    In what follows, we use the \cite{CMSZ22} repair procedure (\Cref{lemma:CMSZ-repair}) to repair the state back to a state $\bsigma^*\in\Repairable_{p-4\epsilon,\epsilon}$. 
    This is done as follows:
    
    \begin{enumerate}
    \item For every $i\in[\ell]$, run the trapdoor function recovery algorithm $\Recover_{\calF}(\td, \cdot, y_i)$ on registers $(\regtranscript_i, \regrand_i)$, to obtain the state
\[
     \ket{\psi^*}=\sum_{a_{[\ell]}:~ V_2(\st,a_{[\ell]})=b} \alpha_{a_{[\ell]}} \ket{a_{[\ell]}}_{\regtranscript} \otimes \ket{\psi_{\tau_{[\ell]}}}_{\regprover}  
    \]
    where $b$ is the verdict bit computed by the prover in \Cref{item:prover-b} in the $\CVQC$ protocol.

    \item Denote by $U_{q_{[\ell]}}$ the unitary that converts the state $\ket{\psi_\emptyset}=\ket{w}\otimes\ket{\vec{0}}$ to the state 
    \[    U_{q_{[\ell]}}\ket{\psi_\emptyset}=\sum_{a_{[\ell]}} \alpha_{a_{[\ell]}} \ket{a_{[\ell]}}_{\regtranscript} \otimes \ket{\psi_{\tau_{[\ell]}}}_{\regprover} 
    \]
    by emulating the the $\CVQC$ protocol, with the honest prover given the state $\ket{\psi_\emptyset}$ and the honest verifier using queries $q_{[\ell]}=(q_1,\ldots,q_\ell)$.
    \item Denote by $\Pi'_b$ the projective measurement that projects the state $U_{q_{[\ell]}}\ket{\psi_\emptyset}$ onto the state
    \[    \ket{\psi^*}=\sum_{a_{[\ell]}:~V_2(\st, a_{[\ell]})=b} \alpha_{a_{[\ell]}} \ket{a_{[\ell]}}_{\regtranscript} \otimes \ket{\psi_{\tau_{[\ell]}}}_{\regprover} 
    \]
    \item Let $\Pi=U_{q_{[\ell]}}^\dagger\cdot \Pi'\cdot U_{q_{[\ell]}}$.
    \item Let $\ket{\psi^{**}}=U_{q_{[\ell]}}^\dagger \ket{\psi^*}=U_{q_{[\ell]}}^\dagger \Pi' U_{q_{[\ell]}} \ket{\psi_\emptyset}$.
    \item Let ${\sf M}=\ValEst_{V_{\QMA,x}}(\cdot, \epsilon,\delta)$ be the almost projective measurement, where $\delta=2^{-\secp}$, and where $\ValEst$ is from \Cref{lemma:CMSZ-valest}.
    \item Compute $\bsigma^*\gets \Repair_{{\sf M},\Pi}(\ket{\psi^{**}},b,p^*,T)$, where $T = \sqrt{2\delta}$, where $p^*$ is the quality estimate of the initial witness, and where $\Repair$ is from \Cref{lemma:CMSZ-repair}.
    \item Estimate $(p^{**}, \bsigma^{**}) \gets \ValEst_{V_{\QMA,x}}(\bsigma^*, \epsilon, \delta)$ and output $\bsigma^{**}$ along with the new quality estimate $p^{**}$.\footnote{As a small optimization when sequentially repeating the protocol, this estimate $p^{**}$ can be used in place of making a fresh estimate in step 1(a) of \Cref{constr:cvqc}.}
    \end{enumerate}     
   
\end{construction}

\begin{claim}\label{claim:nd-cvqc-recovery-low-c}
    Let $\NDCVQC$ be the protocol from \Cref{constr:cvqc} together with witness recovery from \Cref{constr:witness-recovery-general}. It is $4\epsilon$-non-destructive with respect to all witnesses in $\Repairable_{p,\epsilon}$, for any $p\geq a+4\epsilon$ (as defined in \Cref{def:high-qual}). 
    Furthermore, it runs in expected quantum polynomial time.
\end{claim}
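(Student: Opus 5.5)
We track the witness through the protocol and account for three additive losses of $\epsilon$ each: the initial estimate $p^*$, $\Repair$, and the final estimate $p^{**}$. Together with $\Repair$'s inherent $2\epsilon$, the total stays within the $4\epsilon$ budget. All of this is conditioned on events of overwhelming probability.

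\textbf{Step 1: the initial estimate and the measurement $\Pi$.} Since $\ket{w}\in\Repairable_{p,\epsilon}(x)$, step 1a of \Cref{constr:cvqc} yields $p^*\geq p-\epsilon$ except with probability $\sqrt{2\delta}=\negl$. We then view the rest of the prover's execution, up to step 1 of \Cref{constr:witness-recovery-general}, as a single two-outcome projective measurement $\Pi=(\Pi_0,\Pi_1)$ applied to the post-$\ValEst$ state, with $\Pi_b=U_{q_{[\ell]}}^\dagger\Pi'_bU_{q_{[\ell]}}$. This identification comes from the following facts.
\begin{itemize}
\item Evaluating $f_\pubparams$ and measuring the $y_i$ is undone by $\Recover_\calF$ up to $\ell 2^{-\secpar}$. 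Here we apply \Cref{lem:eval-measure-recover} to each round, using that $P_i$ and the verdict measurement act on $\regtranscript_i$ in a controlled-on-basis way that commutes with $\Exp_\pubparams$.
\item $\SPNP$ disturbs the accepting branch by at most $2^{-\secpar}$ (\Cref{thm:witness-preserving-NP}).
\item On the reject branch there is no interaction beyond the abort.
\end{itemize}
Hence, up to negligible trace distance, the state fed to $\Repair$ is exactly $\ket{\psi^{**}}$, obtained by applying $\Pi_b$ to the estimated state and renormalizing. The outcome $b$ is distributed as in that measurement, which has $N=2$ outcomes.

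\textbf{Step 2: apply \Cref{lemma:CMSZ-repair}.} Take ${\sf M}=\ValEst_{V_{\QMA,x}}(\cdot,\epsilon,\delta)$. By property 5 of \Cref{lemma:CMSZ-valest}, ${\sf M}$ is $(\epsilon,\delta)$-almost projective. The experiment of \Cref{lemma:CMSZ-repair} then gives
\[
|p^*-p^{**}|<2\epsilon
\]
except with probability $O(N\sqrt\delta)=\negl$, where $T$ is chosen as $1/\sqrt\delta$ (I read the paper's $T=\sqrt{2\delta}$ as intending this). Combining with Step 1, $p^{**}\geq p-3\epsilon$ with overwhelming probability.

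\textbf{Step 3: convert to repairability.} The output $\bsigma^{**}$ is the post-measurement state of ${\sf M}$ with estimate $p^{**}$. To bound the estimate on this state, we apply ${\sf M}$ once more and use almost-projectivity: a fresh $\ValEst(\bsigma^{**},\epsilon,\delta)$ returns a value at least $p-4\epsilon$ except with probability $\delta$. This is exactly the condition defining $\Repairable_{p-4\epsilon,\epsilon}$, with threshold $(p-4\epsilon)-\epsilon$ being at most the guaranteed bound.

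\textbf{Main obstacle.} The obstacle is that $\Repairable$ is defined for pure states with failure probability $\le\sqrt{2\delta}$, while we hold a mixed state with only overwhelming guarantees. I would handle this as in \Cref{lem:valest-to-pure}, via \Cref{lem:mixed-to-pure}. Decompose $\bsigma^{**}$ into a component of negligible weight plus a component supported on pure states whose $\ValEst$ failure probability is at most $\sqrt{\gamma}$ for negligible $\gamma$, where $\sqrt{\gamma}\leq\sqrt{2\delta}$ after slightly adjusting constants or $\delta$ (e.g.\ using $\delta/\mathrm{poly}$ internally). We then interpret the non-destructiveness claim as holding up to this negligible mixture, exactly as the paper already does in step 1a.

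\textbf{Runtime.} Each call to $\ValEst$ costs $O(\epsilon^{-1}\log\delta^{-1})=\poly[\secpar,1/\epsilon]$ oracle calls, and each oracle call to $\Pi$ costs $\poly[\secpar,|x|]$. $\Repair$ makes $N+5=7$ oracle queries in expectation, so the whole witness recovery runs in expected quantum polynomial time.
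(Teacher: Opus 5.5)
Your reduction to the \cite{CMSZ22} repair experiment matches the paper's, and it is fine. You reorder the protocol, strip $\SPNP$ and the $f_\pubparams$ evaluation/recovery using state preservation and \Cref{lem:eval-measure-recover}, and then apply \Cref{lemma:CMSZ-repair} with two outcomes and $T=1/\sqrt{\delta}$.

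\textbf{The gap is in how you resolve your ``main obstacle.''} Suppose you apply \Cref{lem:mixed-to-pure} to the unconditioned output, taking $\gamma$ to be the total probability that the fresh estimate falls below $p-4\epsilon$. Then $\gamma$ contains the $\sqrt{2\delta}$ inherited from the input's $\Repairable_{p,\epsilon}$ guarantee, plus the $O(\sqrt{\delta})$ from $\Repair$. Consequently the good pure states only get a failure bound $\sqrt{\gamma}=\Omega(\delta^{1/4})$, which exceeds the $\sqrt{2\delta}$ required by \Cref{def:high-qual}. Shrinking $\delta$ internally does not help, because both the input guarantee and the test in \Cref{def:high-qual} are pinned to $\delta=2^{-\secpar}$.

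This is not cosmetic. The output must lie in $\Repairable$ with the same $\sqrt{2\delta}$ threshold so it can feed the next of the $N$ sequential executions in \Cref{thm:sequential-rep}. A per-execution degradation of roughly $\gamma\mapsto\sqrt{\gamma}$ reaches a constant after $O(\log\secpar)$ executions. Relatedly, your Step 3 claim that the fresh estimate is $\geq p-4\epsilon$ ``except with probability $\delta$'' holds only after conditioning. Unconditionally the bound is $\delta$ plus the failure probability of $p^{**}\geq p-3\epsilon$.

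\textbf{The missing idea} is to push all negligible errors into the probability of a conditioning event, not into the per-state threshold. Let $E$ be the event $p^{**}\geq p-3\epsilon$ from the last $\ValEst$ of the recovery; you showed $\Pr[E]\geq 1-\eta$ with $\eta$ negligible.
\begin{itemize}
\item Conditioned on $E$, the fresh estimate can drop below $p-4\epsilon$ only if $|p^{***}-p^{**}|\geq\epsilon$. By almost projectivity this has conditional probability at most $\delta/(1-\eta)\leq 2\delta$.
\item Apply \Cref{lem:mixed-to-pure} with $\gamma=2\delta$ to the residual state conditioned on $E$. It has mass $\geq 1-\sqrt{2\delta}$ on pure states $\ket{\psi}$ whose fresh estimate is $\geq p-4\epsilon$ with probability $\geq 1-\sqrt{2\delta}$.
\item That is (more than) membership in $\Repairable_{p-4\epsilon,\epsilon}$, and the overall success probability is $\geq(1-\eta)(1-\sqrt{2\delta})$.
\end{itemize}
This is exactly why the definition uses the threshold $\sqrt{2\delta}$: it reproduces itself from one execution to the next.

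A smaller point on runtime: the expected-query bound of \Cref{lemma:CMSZ-repair} holds in the ideal experiment. To transfer it to the real execution, multiply the worst-case cap of about $T=2^{\secpar/2}$ queries by the $O(2^{-\secpar})$ trace distance, as the paper does.
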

\begin{proof}
    Fix any $x\in\calL_\yes$ and let $\ket{w}\in\Repairable_{p,\epsilon}(x)$ for any $p\geq a+4\epsilon$.

    We need to prove that running $\NDCVQC$ on such a $\ket{w}$
    
    results in a witness in  $\Repairable_{p-4\epsilon,\epsilon}(x)$ with overwhelming probability, and that the runtime is expected quantum polynomial.
    Consider the following hybrid experiments.
    \begin{itemize}
        \item $\Hyb_0$ runs $\NDCVQC$ and outputs the (honest) prover's output state along with the number of times $\Repair_{M, \Pi}$ makes oracle calls to $M$ or $\Pi$.
        
        \item $\Hyb_1$ modifies $\Hyb_0$ by computing the transcript
        \[
            \sum_{a_{[\ell]}:~V_2(\st,a_{[\ell]})=b} \alpha_{a_{[\ell]}} \ket{a_{[\ell]}}_{\regtranscript} \otimes \ket{\psi_{\tau_{[\ell]}}}_{\regprover}
        \]
        \emph{before} evaluating $f_\pubparams$ on it. The overall description of this hybrid can be broken into the following steps:
        \begin{enumerate}
            \item The prover estimates $(p^*,\brho)\gets\ValEst_{V_{\QMA,x}}(\ket{w},\epsilon,\delta)$.
            
            \item The prover coherently computes the \emph{entire} transcript $a_{[\ell]}$, where the verifier's queries $q_{[\ell]}$ are pre-sampled.
            \item \textbf{Evaluate $f_{\pubparams}$:} The prover evaluates $f_{\pubparams}$ on $a_{[\ell]}$ and measures the result. \label{claim-low-c-function-eval}
            \item \textbf{Measure Acceptance:} The prover measures whether the base $\CVQC$ verifier would accept $a_{[\ell]}$ (after receiving $\st$ from the verifier).
            \label{claim-low-c-measure-accept}
            \item \textbf{Prove in $\SPNP$:} If the base $\CVQC$ verifier would accept $a_{[\ell]}$, prove this fact using $\SPNP$. \label{claim-low-c-spnp-step}
            \item \textbf{Recover from $f_{\pubparams}$:} The prover runs $\Recover_\calF$ on the registers containing the transcript $a_{[\ell]}$ and $f_\pubparams$ evaluation randomness, using the $\td_\pubparams$ from the verifier.\label{claim-low-c-function-recover}
            \item The prover uncomputes the base $\CVQC$ for $q_{[\ell]}$, then runs $\Repair_{M,\Pi}$ on the transcript to obtain $\brho$.
            \item The prover runs $(p^{**}, \boldsymbol{\sigma^{**}}) \gets \ValEst_{V_{\QMA, x}}(\brho, \epsilon, \delta)$ and outputs the residual state $\boldsymbol{\rho'}$.
        \end{enumerate}

        \item $\Hyb_2$ removes step \ref{claim-low-c-spnp-step}.

        \item $\Hyb_3$ removes steps \ref{claim-low-c-function-eval} and \ref{claim-low-c-function-recover}.

    \end{itemize}

    The output of $\Hyb_0$ is equivalent to the output of $\Hyb_1$ because the prover is honest. 
    \[
        \TraceDist[\Hyb_1, \Hyb_2] \leq 2^{-\secpar}
    \]
    by the $(1-2^{-\secpar})$-state preservation of $\SPNP$ (\Cref{def:state-preserving-NP}) because $\SPNP$ is only run when the prover's state is successfully projected onto witnesses for $(\st, \pubparams, y_{[\ell]}) \in \lang_\NDCVQC$.
    \[
        \TraceDist[\Hyb_2, \Hyb_3] \leq 2^{-\secpar}
    \] 
    by the $(1-2^{-\secpar})$-state recovery of $\calF$ and in particular \Cref{lem:eval-measure-recover}, which states that measuring a predicate of the computational basis on a state $\ket{\psi}$ (step \ref{claim-low-c-measure-accept}) is close to evaluating $f_\pubparams$ on $\ket{\psi}$, then measuring the predicate, and finally running $\Recover_{\calF}$ on the residual state (steps \ref{claim-low-c-function-eval}, \ref{claim-low-c-measure-accept}, \ref{claim-low-c-function-recover}). 
    Therefore
    \[
        \TraceDist[\Hyb_0, \Hyb_3] \leq 2^{-\secpar + 1}
    \]

    Observe that $\Hyb_3$ is exactly the repair experiment from \Cref{lemma:CMSZ-repair}, with a binary outcome damaging measurement.
    
    Therefore 
    the state $\brho$ prior to running $\ValEst$ in $\Hyb_3$ satisfies
    \[
        \Pr_{(p^{**}, \boldsymbol{\sigma'}) \gets \ValEst(\boldsymbol{\rho}, \epsilon, \delta)}[|p^{**} - p^*| \geq 2\epsilon] 
        \leq 2\left(\delta + \sqrt{\delta}\right) + 4\sqrt{\delta}
    \]
    Furthermore, $\Repair_{M, \Pi}$ makes $5$ oracle calls to $M$ and $\Pi$ in expectation in $\Hyb_3$.
    
    Using the trace distance of $\Hyb_0$ to $\Hyb_3$,
    \[
        \Pr_{\Hyb_0}[p^{**} \leq p^* - 2\epsilon] 
        \leq 2^{-\secpar+1} + 2\left(\delta + \sqrt{\delta}\right) + 4\sqrt{\delta}
        = 2^{-\secpar + 2} + 5\cdot 2^{-\secpar/2}
    \]
    since $\delta = 2^{-\secpar}$. Furthermore in $\Hyb_0$, $\Repair_{M, \Pi}$ makes $5 + 2^{-\secpar + 1}T < 6$ oracle calls to $M$ and $\Pi$ in expectation since $\Repair_{M, \Pi}$ makes at most $T = 1/\sqrt{\delta} = 2^{\secpar/2}$ oracle calls.\footnote{Note that $M = \ValEst_{V_{\QMA, x}}(\cdot, \epsilon, \delta)$ takes $\poly[1/\epsilon, \log(\delta)]$ time, so the overall expected runtime of $\Repair_{M, \Pi}$ is $\poly[1/\epsilon, \log(\delta)]$.}

    Since $\ket{w}\in \Repairable_{p,\epsilon}$, by union bound we have
    \begin{align*}
        \Pr_{\substack{(\boldsymbol{\sigma^{*}}, p^{*})  \gets \ValEst(\ket{w}, \epsilon, \delta)\\(\boldsymbol{\sigma^{**}}, p^{**})  \gets \ValEst(\boldsymbol{\sigma^*}, \epsilon, \delta)}}[p^{**} \leq p - 3\epsilon]
        &\leq \Pr_{\substack{(\boldsymbol{\sigma^{*}}, p^{*})  \gets \ValEst(\ket{w}, \epsilon, \delta)\\(\boldsymbol{\sigma^{**}}, p^{**})  \gets \ValEst(\boldsymbol{\sigma^*}, \epsilon, \delta)}}[p^* \leq p - \epsilon \text{ or } p^{**} \leq  p^* - 2\epsilon]
        \\
        &\leq \sqrt{2\delta} + 2^{-\secpar + 2} + 5\cdot 2^{-\secpar/2}
        \\
        &< 2^{-\secpar + 2} + 7\cdot 2^{-\secpar/2}
    \end{align*}

    Now consider estimating again the probability of $V_{\QMA, x}(\boldsymbol{\sigma^{**}})$ accepting. The $(\epsilon,\delta)$-almost projectivity of $\ValEst(\cdot, \epsilon, \delta)$ implies that the new estimate $(\boldsymbol{\sigma^{***}}, p^{***}) \gets \ValEst_{V_{\QMA,x}}(\boldsymbol{\sigma^{**}}, \epsilon,\delta)$ satisfies $|p^{***}-p^{**}|< \epsilon$ except with probability $\delta$. Thus, we can bound the conditional probability
    \begin{align*}
        \Pr_{\substack{(\boldsymbol{\sigma^{*}}, p^{*})  \gets \ValEst(\ket{w}, \epsilon, \delta)\\(\boldsymbol{\sigma^{**}}, p^{**})  \gets \ValEst(\boldsymbol{\sigma^*}, \epsilon, \delta)\\(\boldsymbol{\sigma^{***}}, p^{***})  \gets \ValEst(\boldsymbol{\rho}', \epsilon, \delta)}} 
        &\bigg[p^{***} \geq p - 4\epsilon \bigg| p^{**} \geq p - 3\epsilon\bigg]
        \\
        &\geq \frac{1 - \Pr_{p^{**}, p^{***}}\left[|p^{***} - p^{**}| \geq \epsilon \text{ or } p^{**} < p - 3\epsilon \right]}{\Pr_{p^{**}}\left[p^{**} \geq p - 3\epsilon \right]}
        \\
        &\leq \frac{1 - (\delta + 2^{-\secpar + 2} + 7\cdot 2^{-\secpar/2})}{1 - (2^{-\secpar + 2} + 7\cdot 2^{-\secpar/2})}
        \\
        &= 1 - \frac{\delta}{1 - 2^{-\secpar + 2} + 7\cdot 2^{-\secpar/2}}
        \\
        &\leq 1 - 2\delta
    \end{align*}
    since $2^{-\secpar + 2} + 7\cdot 2^{-\secpar/2} < 1/2$.
    By \Cref{lem:mixed-to-pure}, conditioned on $p^{**}\geq p - 3\epsilon$, the residual state $\sigma^{**}$ after the \emph{second} $\ValEst$ (which occurs in the actual repair procedure) has at least $1-\sqrt{2\delta} = 1-\negl$ probability mass on pure states $\ket{\psi}$ such that 
    \[
        \Pr_{(\boldsymbol{\sigma^{***}}, p^{**}) \gets \ValEst(\ket{\psi}, \epsilon, \delta)}[p^{***}\geq p - 4\epsilon] 
        \geq 1-\sqrt{2\delta}
    \]
    
    Since $p^{**}\geq p - 3\epsilon$ with $1-\negl$ probability as well, for every initial witness $\ket{w}\in\Repairable_{p,\epsilon}(x)$, 
    \[
        \Pr_{(\ket{\psi},b) \gets \NDCVQC(P(\ket{w}), V)(1^\secpar, x)}[(\ket{\psi}\in\Repairable_{p-4\epsilon,\epsilon}(x)] = 1-\negl.
    \]

\end{proof}

\subsection{Proof of \Cref{thm:sequential-rep} (Sequential Repetition)}\label{sec:sequential-rep}

In this section we prove \Cref{thm:sequential-rep}.  Namely, we construct a compiler that converts any $\epsilon$-non-destructive $\CVQC$ into one that has negligible soundness and almost perfect completeness, while still being $\epsilon$-non-destructive (as defined in \Cref{thm:sequential-rep}).
The compiler consists of many sequential repetitions of the underlying $\CVQC$ protocol, and the verifier accepts if the number of accepting executions exceeds a threshold.

\begin{construction}[Sequential Amplification of $\epsilon$-Non-Destructive $\CVQC$]\label{constr:sequential-repetition}
    Fix a language $\calL=(\calL_\yes,\calL_\no)\in\QMA_{a,b}$ with a $\QMA$ verifier $V_\QMA=\{V_{\QMA,x}\}_{x\in\calL_\yes\cup \calL_\no}$. Let $(P,V)$ be  an $\epsilon$-non-destructive $\CVQC$ for $\calL$ w.r.t.\ $V_\QMA$, with completeness $c$ and testable soundness $s$, where $c-s\geq 1/\poly$, and where $V=(V_\test,V_{\sf check})$.  
    
    For $N$ large enough (that we set below), we  construct an $N\epsilon$-non-destructive $\CVQC$ protocol $(P',V')$ for $\calL$ w.r.t.\ $V_\QMA$ with negligible soundness and overwhelming completeness.  $P'$ and $V'$ take as input $(1^\secp,x)$ where $x\in\calL_\yes$, and  in $P'$ takes as an additional input a witness $\ket{w_0}\in\Repairable_{p,\epsilon}(x)$.  The protocol $(P'(\ket{w_0}),V)(1^\secp,x)$ proceeds as follows:
    
    \begin{enumerate}
    \item Set $p_0 = 1-2^{-\secpar}$.
    \item Set $N=\frac{|x|^d \secpar^{d+2}}{(c-s)^2}$.
        \item For $i=1,\dots, N$, do the following:

        Emulate the protocol  $\bigg(P\big(\ket{w_{i-1}}\big), V\bigg)(1^\secpar, x)$, where $V$ implements  $V_\test$ with probability $1/2$ and implements $V_{\sf check}$ with probability $1/2$.
        \begin{enumerate}
        \item 
        If $V$ implemented $V_\test$ and the test failed (i.e., $V_\test$ rejects), then $V'$ aborts and outputs $\Reject$.

        \item  If $V$ implemented  $V_{\sf check}$ then denote by $b_i$ the output bit of $V$ ($1$ denotes accept, $0$ denotes reject).
        \item Denote by $\ket{w_{i}}$ the residual state of $P$.

         \end{enumerate} \item Denote by $B\subseteq[N]$ all the repetitions where $V$ executed $V_{\sf check}$.        
        
        \item $V'$ outputs $\Accept$ if $\sum_{i\in B} b_i \geq  \left(\frac{c+s}{2}\right) |B|$. Otherwise, $V'$ outputs $\Reject$.

        \item $P'$ outputs $\ket{w_{N}}$.
    \end{enumerate}

\end{construction}

\begin{proof}
    We prove that $(P',V')$ is $\epsilon N$-non-destructive, and that it has $1-\negl$ completeness and $\negl$ computational soundness error. Below, we assume that $|x|>\secp$, which can be achieved without loss of generality by padding the instance.
    
    \paragraph{$\epsilon N$-Non-Destructiveness.} 
    \begin{align*}
        &\Pr[\ket{w_N}\in\Repairable_{p-N\epsilon,\epsilon}(x)]\geq \\
        &\Pr[\ket{w_N}\in\Repairable_{p-N\epsilon,\epsilon}(x)~|~\ket{w_{N-1}}\in\Repairable_{p-(N-1)\epsilon,\epsilon}(x)]\cdot\\ &\Pr[ \ket{w_{N-1}}\in\Repairable_{p-(N-1)\epsilon,\epsilon}(x) ]\geq \\
        &(1-\mu_N)\cdot\Pr[ \ket{w_{N-1}}\in\Repairable_{p-(N-1)\epsilon,\epsilon}(x) ]\geq\\ 
         &(1-\mu_N)(1-\mu_{N-1})\cdot\Pr[ \ket{w_{N-2}}\in\Repairable_{p-(N-2)\epsilon,\epsilon}(x) ]\geq\\ 
        &...\\
&(1-\mu_N)(1-\mu_{N-1})\cdot\ldots\cdot(1-\mu_1)\Pr[\ket{w_{0}}\in\Repairable_{p,\epsilon}(x)]=\\
&(1-\mu_N)(1-\mu_{N-1})\cdot\ldots\cdot(1-\mu_1)=\\
&1-\negl.
    \end{align*}
    where all the equations except the last two follow from the fact that $(P,V)$ is $\epsilon$-non-destructive, the second to last equality follows by assumption, and the last equation   follows from the fact that $N=\polyx$ and that each $\mu_i=\negl$.
    
    \paragraph{Completeness:}

    Fix any $x\in\calL_\yes$ and fix any $\ket{w_0}\in\Repairable_{p,\epsilon}(x)$ for $p\geq a+\epsilon N$.
    By the $\epsilon N$-non-destructiveness, with overwhelming probability each intermediate witness $\ket{w_i} \in \Repairable_{p-N\epsilon, \epsilon}$. When this occurs, each $\ket{w_i}$ is accepted by $V_\QMA$ with probability $\geq a$.  Therefore, 
    the $c$-completeness of the underlying $\CVQC$ implies that for every $i\in B$,
    \[
        \Pr[b_i=1]\geq c
    \] 
    Therefore the random variable $\sum_{i=1}^{N}b_i$ is bounded below by $\mathsf{Binomial}(N, c)$. Applying Hoeffding's inequality shows that
    \[
        \Pr\left[\sum_{i=1}^{N}b_i \leq Nc - N\frac{c-s}{2}\right]
        \leq \exp\left(\frac{-2\left(N\frac{c-s}{2}\right)^2}{N}\right)
        = \negl
    \]
    since $N > \frac{\secpar^{d+3}|x|^d}{(c-s)^2}$, $d>1$ and $c-s < 1$.

    \paragraph{Argument of Knowledge.} Fix any $x \in \lang_\yes \cup \lang_\no$, and any efficient prover strategy $P^*$ with arbitrary initial quantum advice $\rho^*$ such that the verifier in the sequentially repeated protocol accepts with noticeable probability $\frac{1}{p_{\mathsf{acc}}(\secp)}$. 

    Our goal is to build a proof-of-knowledge extractor for the repeated protocol. Our extractor is inspired by~\cite{VidickZhang21}. Looking ahead, our extractor will simply pick a random execution $j \gets [N]$ and run the extractor of the underlying CVQC on this execution. We show that such an extractor succeeds with probability $\frac{\delta}{p_{\mathsf{acc}}N}(\secp) - \negl$, where $\delta(\secp)$ is the success probability of the underlying CVQC extractor. 

   Formally, we will first prove that for accepting transcripts, nearly all prefixes of the transcript satisfy the following: the prover, conditioned on this prefix, will pass $V_{\test}$ with high probability in the upcoming round. We call such a transcript a $\mathsf{Good}$ transcript.

\begin{definition}[Good Transcripts]
    We say that a transcript $\tau = \tau_\secp \in \mathsf{Good}_\secp$ if (1) $\tau$ is accepted by the verifier and (2) there exists a large subset of repetitions $S_\tau \subseteq [N], |S_\tau| = (N - \secp^{d+1})$ such that for all $i \in S_\tau$, 
    \[\Pr\left[P^*_i(\rho_{i-1}), V_{i,\test}(1^\secp, x) = 1
    \big|\tau_{i-1}
    \right] > \left(1 - \frac{1}{\secp^d}\right)\]

    where $\rho_{i-1}$ denotes the state on the prover's registers at the end of $i-1$ repetitions with $\rho_0 = \rho^*$, $V_{i,\test}$ denotes the output of the verification algorithm of the underlying CVQC in round $i$,
    and the probability is over the randomness of the protocol upto round $i$.
\end{definition}

The next claim proves that an overwhelming fraction of accepting transcripts are $\mathsf{Good}$.

\begin{claim}
\label{clm:six}
There exists a negligible function $\mu(\cdot)$ such that
\[\Pr[\tau_{\secp} \in \mathsf{Good}_{\secp}|\tau_\secp \text{ is accepted}] \geq (1-\mu(\secp))\]
where the probability is over the randomness of the experiment.
\end{claim}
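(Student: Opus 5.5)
The plan is to show that, except with negligible probability, an accepting transcript cannot contain more than $\secp^{d+1}$ repetitions whose conditional test-pass probability is at most $1-1/\secp^d$. For each $i\in[N]$, let $t_i$ be the probability (conditioned on the prefix $\tau_{i-1}$ and the prover's residual state $\rho_{i-1}$) that $P^*$ passes $V_{i,\test}$ in round $i$. Call round $i$ \emph{bad} if $t_i\leq 1-1/\secp^d$. Then $\tau\notin\mathsf{Good}_\secp$ while accepted exactly when more than $\secp^{d+1}$ rounds are bad. It therefore suffices to bound the joint probability
\[
\Pr\big[\tau \text{ accepted} \wedge |\{i:\ t_i\leq 1-1/\secp^d\}|>\secp^{d+1}\big]\leq \mu'(\secp)
\]
for a negligible $\mu'$. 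Dividing by $\Pr[\tau\text{ accepted}]\geq 1/p_{\mathsf{acc}}(\secp)$, which is noticeable, then gives the claim with $\mu=p_{\mathsf{acc}}\mu'$, still negligible.

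To bound the joint event, I will use that acceptance requires every round in which $V_\test$ was chosen to pass. In each round the verifier independently chooses $V_\test$ with probability $1/2$. So at a bad round $i$, conditioned on the whole history, the probability that round $i$ does not cause an abort is at most $1-\tfrac12\cdot\tfrac1{\secp^d}$. I will formalize this with a stopping-time argument. Process rounds in order and define a process that survives round $i$ if no test abort occurs there. Let $E_k$ be the event that the $k$-th bad round is reached and survived. The conditional probability of surviving a bad round, given the history up to it, is at most $1-1/(2\secp^d)$. Chaining these conditional bounds (or, equivalently, applying \Cref{thm:azuma} to the martingale $S_m=\sum_{i\leq m}(\mathbf{1}[\text{round } i \text{ bad and survived}]-\ldots)$) gives
\[
\Pr[\text{at least } \secp^{d+1} \text{ bad rounds, all survived}]\leq \big(1-\tfrac{1}{2\secp^d}\big)^{\secp^{d+1}}\leq e^{-\secp/2},
\]
which is negligible. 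The simplest route is the direct product of conditional probabilities, because whether a round is bad is determined by the history before that round's coin is tossed. That history-measurability is what makes the chaining valid, even though the prover's state evolves adaptively.

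The main subtlety, and the step I expect to be hardest, is defining $t_i$ rigorously for a quantum prover whose state $\rho_{i-1}$ depends on the measured prefix. I will treat $\rho_{i-1}$ as the post-measurement state conditioned on the classical transcript $\tau_{i-1}$, so $t_i$ is a deterministic function of $\tau_{i-1}$. With that choice, "bad" is a predicate of the prefix. I will then argue that the event "round $i$ is a test round and passes" has conditional probability $\tfrac12 t_i+\tfrac12$ given $\tau_{i-1}$, because the check branch never aborts. Its complement, an abort, therefore has conditional probability $\tfrac12(1-t_i)\geq 1/(2\secp^d)$ at bad rounds. Once this conditioning is handled, the counting argument above goes through. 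The set $S_\tau$ is then any subset of size $N-\secp^{d+1}$ of the non-bad rounds, which exists whenever at most $\secp^{d+1}$ rounds are bad.
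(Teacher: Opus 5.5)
Your proposal is correct and follows the same route as the paper: bound the probability that a transcript is accepted yet has more than $\secp^{d+1}$ bad rounds by a product of per-round survival probabilities, then divide by the noticeable acceptance probability $1/p_{\mathsf{acc}}$ (the paper phrases this step as a contradiction). Your version is somewhat more careful than the paper's in two respects: you keep the factor $1/2$ from the test/check coin, giving $(1-\frac{1}{2\secp^d})^{\secp^{d+1}}\le e^{-\secp/2}$ rather than the paper's $(1-\frac{1}{\secp^d})^{\secp^{d+1}}$, and you justify the chaining by noting that whether a round is bad is determined by the prefix before that round's coin is tossed.
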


\begin{proof}
    Suppose the claim is not true, then there is a polynomial $\mathsf{poly}(\cdot)$ such that for infinitely many $\secp \in \mathbb{N}$,
    \[\Pr[\tau_\secp \not\in \mathsf{Good}_\secp|\tau_\secp \text{ is accepted}] \geq \frac{1}{\mathsf{poly}(\secp)}\]
    Because $P^*$ convinces the verifier with noticeable probability $\frac{1}{p_{\mathsf{acc}}(\secp)}$, this implies that with probability at least $\frac{1}{p_{\mathsf{acc}}(\secp)\mathsf{poly}(\secp)}$, the sampled transcript is (1) accepted and (2) has a set $T$ of size at least $\secp^{d+1}$ such that for all $i \in T$, 
    \[\Pr\left[P^*_i(\rho_{i-1}), V_{i,\test,\NDCVQC}(1^\secp, x) = (\cdot,1)
    \big|\tau_{i-1}
    \right] < \left(1 - \frac{1}{\secp^d}\right)\]
    Since the verifier picks ``test'' with constant probability, such accepting transcripts are only generated with probability at most $\left(1-\frac{1}{\secp^d} \right)^{\secp^{d+1}} = \negl$,
    
    a contradiction. 
\end{proof}

We will now prove that there exists a polynomial $p(\cdot)$ such that for an overwhelming fraction of accepting transcripts $\tau$, there exists $i \in S_{\secp}$ such that $P^*_i$ passes $V_\honest$ with probability greater than $s + \frac{1}{p(\secp)}$ conditioned on $\tau$.

To do this, we will first prove the following claim, which gives tail bounds on the number of accepts in rounds where the verifier picks $V_\honest$.

\begin{claim}
    Define a sequence of variables $X_1 \ldots X_{N}$ as follows.
    For each $i \in [N]$, denote by $X_i = 1$ the event that  
    (1) the verifier picked $V_{\honest}$ in repetition $i$, and (2) the prover passed $V_\honest$ in repetition $i$ (and otherwise, $X_i = 0$).
    Then for every $\gamma \in (0,1)$,
    \begin{equation}
    \label{eq:d2}
    \Pr\left[ \sum_{i \in [N]}X_i - \sum_{i \in [N]}\mathbb{E}[X_i|\tau_{i-1}] \geq \gamma n_\honest \right] \leq \exp( - 2 \gamma^2 n_\honest )
    \end{equation}
\end{claim}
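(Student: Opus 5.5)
The plan is to apply the Azuma–Hoeffding inequality (\Cref{thm:azuma}) to the Doob-type martingale built from the centered indicators. For $m \in \{0,\dots,N\}$, define
\[
    S_m \coloneqq \sum_{i=1}^{m} \big(X_i - \mathbb{E}[X_i \mid \tau_{i-1}]\big), \qquad S_0 = 0,
\]
where $\tau_{i-1}$ is the full transcript (including the verifier's choices of test/check and all messages) of the first $i-1$ repetitions.

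\textbf{Step 1: $(S_m)$ is a martingale.} The key observation is that $X_1,\dots,X_{i-1}$ are determined by $\tau_{i-1}$. Indeed, whether repetition $j<i$ used $V_\honest$, and whether it accepted, are functions of the classical transcript, once we include the verifier's coin for test/check and its verdict. Hence $S_1,\dots,S_{i-1}$ are $\tau_{i-1}$-measurable, and
\[
    \mathbb{E}[S_i - S_{i-1} \mid \tau_{i-1}] = \mathbb{E}[X_i \mid \tau_{i-1}] - \mathbb{E}[X_i\mid \tau_{i-1}] = 0 .
\]
By the tower property this gives $\mathbb{E}[S_i \mid S_1,\dots,S_{i-1}] = S_{i-1}$. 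The quantum nature of the prover causes no difficulty here, because all conditioning is on classical measurement outcomes. Conditioned on $\tau_{i-1}$, the prover's residual state $\rho_{i-1}$ is a fixed state, so repetition $i$ induces a well-defined distribution on $X_i$. I would also note that abort on a failed test can be handled by defining $X_j=0$ and the increment as $0$ after an abort, which preserves the martingale property.

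\textbf{Step 2: bounded differences.} Since $X_i\in\{0,1\}$ and $\mathbb{E}[X_i\mid\tau_{i-1}]\in[0,1]$, we have $|S_i - S_{i-1}|\le 1$. I would sharpen this slightly by observing that $X_i$ can be nonzero only in check rounds. The natural normalization is then by $n_\honest$, the number of check rounds $|B|$. For the stated bound $\exp(-2\gamma^2 n_\honest)$, I would apply Azuma with $t=\gamma n_\honest$ and $\sum c_k^2 = n_\honest$. This requires only counting the rounds in which the increment can be nonzero.

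\textbf{Step 3: the main obstacle.} The difficulty is that $n_\honest$ is itself random, whereas Azuma needs deterministic $c_k$. I would handle it in one of two ways. The first is to reindex the martingale over check rounds only, i.e.\ a stopped/optional-sampling version in which the $k$-th step corresponds to the $k$-th check round; this has unit differences and a fixed horizon once we condition on $n_\honest$. This is legitimate because the test/check coins are independent of the prover. The alternative is to use the fixed horizon $N$ with $c_k=1$, obtaining $\exp(-2\gamma^2 n_\honest^2/N)$, and then use that $n_\honest \ge N/3$ except with probability $e^{-\Omega(N)}$ by Chernoff on the independent fair coins. Up to constant factors in the exponent and an additive negligible term, this gives the same form. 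I would present the first route to match the stated bound, and fall back on the second if the conditioning argument becomes delicate.
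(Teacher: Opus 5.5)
Your Steps 1--2 are exactly the paper's proof. The paper then plugs $m=n_\honest$ and $t=\gamma n_\honest$ into Azuma and identifies $S_{n_\honest}$ with the full sum over $[N]$, which is precisely the random-horizon step you flag. The gap is in your repair.

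With $\tau_{i-1}$ as you define it, it does not contain repetition $i$'s test/check coin. So $\mathbb{E}[X_i\mid\tau_{i-1}]=\tfrac12 q_i$, where $q_i=\Pr[\text{pass in } i\mid\tau_{i-1},\, i\in B]$. In a \emph{test} round the increment is therefore $-\tfrac12 q_i$, which is nonzero in general. Hence the Step~2 ``sharpening'' to $\sum c_k^2=n_\honest$ is false, and Route~1 is not a reindexing of your $S_N$. Instead,
\[
S_N=\sum_{i\in B}(X_i-q_i)+\sum_{i\in[N]}q_i\big(\mathbf{1}[i\in B]-\tfrac12\big).
\]
The second, coin-driven martingale fluctuates on scale $\sqrt N$, so $S_N$ is not controlled at rate $e^{-2\gamma^2 n_\honest}$. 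With this literal compensator the displayed bound is in fact false. For a prover that passes every round, the left-hand side equals $n_\honest-N/2$. So conditioned on $n_\honest=n\ge N/(2(1-\gamma))$ the event is certain. Unconditionally, with $\gamma=\secp/\sqrt{n_\honest}$ as in Claim~\ref{clm:eight}, it has probability $e^{-(1+o(1))\secp^2}$, not at most $e^{-2\secp^2}$.

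There are two fixes.

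(i) The coins are independent of the prover, so condition on the whole coin vector, which fixes $B$ and $n_\honest=n$. The compensator becomes $\mathbb{E}[X_i\mid\tau_{i-1},\text{coins}]=\mathbf{1}[i\in B]\,q_i$. Increments vanish off $B$ and lie in a predictable interval of length $\mathbf{1}[i\in B]$. It is this interval form, not $|S_i-S_{i-1}|\le c_i$, that yields the $e^{-2t^2/\sum c_k^2}$ version of Azuma. You then get $\Pr[\sum_{i\in B}(X_i-q_i)\ge\gamma n\mid n_\honest=n]\le e^{-2\gamma^2 n}$. This differs from the displayed inequality, but it is exactly what Claim~\ref{clm:eight} uses: that claim reads $\mathbb{E}[X_i\mid\tau_{i-1}]$ as the check-acceptance probability and sums over the $n_\honest$ check rounds. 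It is also more rigorous than the paper's argument.

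(ii) Keep the literal compensator and use your Route~2: $\Pr[S_N\ge\gamma n_\honest]\le e^{-2\gamma^2N/9}+e^{-N/18}$, via $\Pr[S_N\ge\gamma N/3]+\Pr[n_\honest<N/3]$. This also suffices downstream.

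Either way, commit to one compensator: as written, your Step~1 and Step~3 use different ones.
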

\begin{proof}
        
For any $i$ and   transcript $\tau$, define $Y_i = X_i - \mathbb{E}[X_i|\tau_{i-1}]$, where $\tau_{i-1}$ denotes a truncation of the transcript upto round $i-1$. We have that for every $i \in S$ and every $\tau_{i-1}$,
        \[\mathbb{E}[ Y_i |  \tau_{i-1}] = \mathbb{E}[ X_i - \mathbb{E}[X_i|\tau_{i-1}] | \tau_{i-1}] = 0\]
        
        Moreover, for every $i \leq N$ and  transcript $\tau$, $Y_i \in [-1,1]$.

Then $S_0 = 0, S_k = \sum_{i=1}^{k} Y_k$ 
        is a martingale because $\mathbb{E}[S_{k+1}|S_1, \ldots, S_k] = \mathbb{E}[S_{k+1}|Y_1, \ldots, Y_k] = S_k$, and it
        satisfies $|S_k - S_{k-1}| \leq 1$.

        Then by Azuma-Hoeffding (Theorem \ref{thm:azuma}) applied to Martingales with distance $c_k = 1$, we have
        \[\Pr[ S_m - S_0 \geq t ] \leq \exp( - 2 t^2/m ).\]

Let $n_\honest$ denote the total number of repetitions in $S$ where the verifier picks $V_\honest$.
Let $t = \gamma n_\honest$. Then, 
\[\Pr[S_{n_\honest} \geq \gamma n_\honest]
    \leq \exp( - 2 (\gamma n_\honest)^2 / n_\honest )
    = \exp( - 2 \gamma^2 n_\honest )\]
which is the same as saying
\[
\Pr\left[ \sum_{i \in [N]}X_i - \sum_{i \in [N]}\mathbb{E}[X_i|\tau_{i-1}] \geq \gamma n_\honest \right] \leq \exp( - 2 \gamma^2 n_\honest )
\]
This completes the proof.

\end{proof}

Equation (\ref{eq:d2}) above proves that for nearly all transcripts, the total number of accepts in rounds where the verifier picks $V_\honest$ deviates from its expected value by only a small amount.
We will now use the definition of $\mathsf{Good}$ to bound the expected value of the total number of accepts in check rounds, and use this bound to argue (in the proof of the following claim) that there must exist a round where both $V_\test$ accepts with probability $>\testsoundness$ and $V_\honest$ accepts with probability $>s$. Therefore on this round, the underlying extractor will succeed.

\begin{claim}
\label{clm:eight}
   With probability $(1-negl)$ over $\tau = \tau_\secp \in \mathsf{Good}_\secp$, there exists $i \in S_{\tau}$ such that 
   \begin{equation}
        \label{eq:d1} \Pr\left[P^*_i(\rho_{i-1}), V_{i,\honest,\NDCVQC}(1^\secp, x) = (\cdot,1)|\tau_{i-1}
        \right] > s + \frac{1}{p(\secp)}
\end{equation}
\end{claim}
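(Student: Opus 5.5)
We argue by contradiction, combining the martingale tail bound (\Cref{eq:d2}) with the acceptance threshold of $V'$. Fix $p(\secp)$ to be a suitable polynomial, say $p(\secp)=\frac{4}{c-s}$ (so that $s+\frac{1}{p(\secp)} < \frac{c+s}{2}-\frac{c-s}{4}$). Suppose the claim fails, so that with non-negligible probability over $\tau\in\mathsf{Good}_\secp$, every $i\in S_\tau$ satisfies
\[
\Pr\left[P^*_i(\rho_{i-1}), V_{i,\honest,\NDCVQC}(1^\secp,x)=(\cdot,1)\,\middle|\,\tau_{i-1}\right]\le s+\tfrac{1}{p(\secp)}.
\]
Call this event $E$. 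We will show that transcripts in $E$ are accepted with only negligible probability, which contradicts \Cref{clm:six} together with the assumption that $P^*$ is accepted with noticeable probability $1/p_{\mathsf{acc}}(\secp)$.

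First we bound the conditional expectations. For $i\in S_\tau$, the verifier chooses $V_\honest$ with probability $1/2$ independently of $\tau_{i-1}$, so $\mathbb{E}[X_i\mid\tau_{i-1}]\le \frac12\bigl(s+\frac1{p(\secp)}\bigr)$. For the at most $\secp^{d+1}$ indices outside $S_\tau$, we use the trivial bound $\mathbb{E}[X_i\mid \tau_{i-1}]\le \frac12$. Summing gives
\[
\sum_i \mathbb{E}[X_i\mid\tau_{i-1}] \le \tfrac{N}{2}\bigl(s+\tfrac1{p(\secp)}\bigr)+\secp^{d+1}.
\]
Acceptance requires $\sum_i X_i\ge \frac{c+s}{2}|B|$. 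By a Chernoff bound, $|B|\ge \frac N2 - \secp\sqrt N$ except with negligible probability. So on accepted transcripts in $E$, the deviation $\sum_i X_i-\sum_i\mathbb{E}[X_i\mid\tau_{i-1}]$ is at least $\frac N2\cdot\frac{c-s}{4} - \secp^{d+1}-O(\secp\sqrt N)$. Since $N\ge \secp^{d+2}|x|^d/(c-s)^2$ and $|x|>\secp$, this is at least $\gamma n_\honest$ with $\gamma=\frac{c-s}{16}$ and $n_\honest\approx N/2$. Applying \Cref{eq:d2} then bounds the probability of such a deviation by $\exp(-2\gamma^2 n_\honest)\le \exp(-\Omega(\secp^{2}))$, which is negligible.

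The main obstacle is that the event ``every $i\in S_\tau$ has check-success probability at most $s+1/p$'' depends on the whole transcript, since $S_\tau$ is defined through $\tau$. The Azuma bound, by contrast, is stated unconditionally. We handle this by never conditioning on $E$. Instead, we bound the probability of the intersection, $\Pr[E\wedge\text{accept}]$, by the probability of the large-deviation event, which \Cref{eq:d2} controls without conditioning. The expectation bound above holds pointwise on $E$ because each $\mathbb{E}[X_i\mid\tau_{i-1}]$ is a function of the prefix $\tau_{i-1}$. A second subtlety is that $n_\honest$ is random. We address this by applying the martingale bound with $m=N$ and $t=\gamma N/4$, which avoids any dependence on $n_\honest$.

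Finally, $\Pr[E\wedge\text{accept}]\le\negl$ contradicts the assumption that $E$ has non-negligible mass among accepted Good transcripts. This is because accepted transcripts themselves occur with probability at least $1/p_{\mathsf{acc}}(\secp)$.
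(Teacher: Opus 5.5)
Your proposal is correct and follows essentially the same route as the paper. You assume the negation, bound $\sum_i \mathbb{E}[X_i\mid\tau_{i-1}]$ using the rounds in $S_\tau$ plus at most $\secp^{d+1}$ exceptional rounds, and combine the Azuma--Hoeffding tail bound with the acceptance threshold $\frac{c+s}{2}|B|$ to conclude that such Good transcripts are almost never accepted. Your bookkeeping is cleaner than the paper's: you negate with $s+\frac{1}{p(\secp)}$ rather than $s+\negl$, keep the factor $\frac12$ in $\mathbb{E}[X_i\mid\tau_{i-1}]$, apply Azuma with the fixed length $m=N$, and bound $\Pr[E\wedge\text{accept}]$ instead of conditioning on $E$. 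The only slip is trivial: $p(\secp)=\frac{4}{c-s}$ gives $s+\frac{1}{p(\secp)}=\frac{c+s}{2}-\frac{c-s}{4}$ with equality, not strict inequality, and equality is all your later computation uses.
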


\begin{proof}
Suppose towards a contradiction that with noticeable probability over $\tau = \tau_\secp \in \mathsf{Good}_\secp$, for all $i \in S_\tau$,
\begin{equation}
        \label{eq:d1} \Pr\left[P^*_i(\rho_{i-1}), V_{i,\honest,\NDCVQC}(1^\secp, x) = (\cdot,1)|\tau_{i-1}
        \right] \leq s+\negl
\end{equation}
which (by the definition of $X_i$) is the same as saying that for $i \in S_\tau$,
$\mathbb{E}[X_i|\tau_{i-1}] \leq s_H + \negl$. This implies that with noticeable probability over $\tau = \tau_\secp \in \mathsf{Good}_\secp$,
\[\sum_{i \in [n_{\honest}]}
\mathbb{E}[X_i|\tau_{i-1}] \leq s n_\honest + \secp^{d+1} + \negl\]

Equation (\ref{eq:d2}) proves that with all but an inverse-exponential probability over transcripts, the actual sum of $X_i$ does not exceed its expected value except by an inverse-exponentially small amount.
Thus, upon combining with 
equation (\ref{eq:d2}) we have that with

noticeable probability over $\tau_\secp \in \mathsf{Good}_\secp$
\[\sum_{i \in S} X_i > \left(s + \frac{\secp^{d+1}}{n_\honest} + \gamma \right) n_{\honest} + \negl \]

Setting
$\gamma = \frac{\secp}{\sqrt{n_{\honest}}}$, we have that with noticeable probability over $\tau_\secp \in \mathsf{Good}_\secp$,

\[\sum_{i \in S} X_i > \left(s + \frac{\secp^{d+1}}{n_\honest}+ \frac{\secp}{\sqrt{n_\honest}}\right) n_{\honest} + \negl \]

Fix any transcript that satisfies the above equation.
By a Chernoff bound, with overwhelming probability, $n_\honest \geq \frac{N}{3}$, and therefore for all but a negligible fraction of such transcripts,
\[
\frac{\sum_{i \in n_\honest}X_i}{n_\honest} \geq 
\left(s + 3\frac{\secp^{d+1}}{N}+ 3\frac{\secp}{\sqrt{N}} \right)
\]
Recall that $N = \left(\frac{\secp^{d+2}}{(c-s)^2} \right)$, thus the fraction $\frac{\sum_{i \in n_\honest}X_i}{n_\honest}$ is at most
\[
\left(s + 3\frac{\secp^{d+1}(c-s)^2}{\secp^{d+2}}+ 3\frac{\secp(c-s)}{\secp^{3/2}}\right)
\]
which, for large enough $\secp$, is at most
\[
\left(s + \frac{c-s}{3}\right) \leq \left(\frac{c+s}{2}\right)
\]
and therefore this noticeable fraction of $\mathsf{Good}$ transcripts $\tau$ are not accepted, a contradiction.
\end{proof}

Combining Claim \ref{clm:six} and Claim \ref{clm:eight}, we have that with probability $(1-\negl)$ over $\tau_{\secp} \in \mathsf{Good}_{\secp}$, there exists $j \in [N]$ such that:
\[\Pr\left[P^*_{j}(\rho_{j-1}), V_{j,\test,\NDCVQC}(1^\secp, x) = (\cdot,1)
    \big|\tau_{j-1}
    \right] > \left(1 - \frac{1}{\secp^d}\right)\]
    and
   \[ \Pr\left[P^*_{j}(\rho_{j-1}), V_{j,\honest,\NDCVQC}(1^\secp, x) = (\cdot,1)|\tau_{j-1}
        \right] > s + \frac{1}{p(\secp)}
    \]
and furthermore, the event $\tau_\secp \in \mathsf{Good}_\secp$ happens with probability $\frac{1}{p(\secp)} - \negl$.

Thus, an extractor that randomly picks a sequential round $j \leftarrow [N]$ and runs the extraction procedure of the underlying CVQC on round $j$ will, with probability $\frac{1}{pN}(\secp) - \negl$, sample $j$ satisfying the two conditions above, and therefore will output a witness with probability $\frac{\delta}{pN}(\secp) - \negl$, where $\delta(\secp)$ is the (non-negligible) probability that the underlying extractor succeeds.

\begin{remark}\label{remark:Nloss}
The factor $N$ loss is difficult to eliminate due to the following case: the adversary participates in $\frac{c+s}{2}N$ executions with an extremely good witness that the CVQC accepts with probability $1$, and uses a junk state in the remaining executions. In this case, the extractor would need to identify an execution where the adversary uses a good witness.
\end{remark}

        \paragraph{Soundness.}
        Fix any $x \in \lang_\no$ and any prover $P^*$ that generates accepting proofs for $x$ with probability $\frac{1}{p_{\mathsf{acc}}(\secp)}$, for some polynomial $p_{\mathsf{acc}}$.  
        Towards a contraction, for such a prover, note that by Claim \ref{clm:six} and Claim \ref{clm:eight} above, there exists $j \in [N]$, 
        transcript prefix $\tau_{j-1}$ and corresponding prover state $\rho_{j-1}$ such that:
    \[\Pr\left[P^*_{j}(\rho_{j-1}), V_{j,\test,\NDCVQC}(1^\secp, x) = (\cdot,1)
    \big|\tau_{j-1}
    \right] > \left(1 - \frac{1}{\secp^d}\right)\]
    and
   \[ \Pr\left[P^*_{j}(\rho_{j-1}), V_{j,\honest,\NDCVQC}(1^\secp, x) = (\cdot,1)|\tau_{j-1}
        \right] > s + \frac{1}{p(\secp)}
    \]

        This circuit $P^*_j$ with input state $\rho_{j-1}$ contradicts testable $s$-soundness of the underlying CVQC, as desired.

\end{proof}

\section{Acknowledgments}
DK was supported in part by AFOSR, NSF CAREER CNS-2238718, NSF 2112890, NSF CNS-2247727 and a Google
Research Scholar award. This material is based upon work supported by the Air Force Office of
Scientific Research under award number FA9550-23-1-0543.
This work was partially done while the authors were visiting the Simons Institute for the Theory of Computing, Berkeley.

\bibliographystyle{alpha}
\bibliography{bib/abbrev3,bib/crypto,bib/project}
\appendix
\section{Proofs of No-Intrusion}\label{app:PoNI}

As an application of state-preserving arguments for NP, we show how to construct public-key encryption (PKE) with \emph{proofs of no-intrusion} (PoNIs). PKE with PoNIs was introduced by \cite{eprint:GR25}. Roughly, it allows a classical verifier to test that a quantum ciphertext has not been stolen from a server, in the sense that nobody besides the server can decrypt it even given the key, without destroying the server's copy of the ciphertext.

Our construction achieves a stronger notion of verifiability than \cite{eprint:GR25}. In their construction, only the original encryptor can verify their own ciphertext because the $\PoNI$ verifier requires secret information about it. In our construction, the encryptor may publish part of their verification key to allow \emph{anyone} to act as the $\PoNI$ verifier.

\subsection{Definition}

\begin{definition}[Encryption with Proofs of No-Intrusion]
    A (public-key) \textbf{encryption scheme with a proof of no-intrusion} consists of the following QPT algorithms.
    \begin{itemize}
        \item $(\pk, \sk) \gets \KeyGen(1^\secpar)$ takes as input the security parameter $1^\secpar$ then outputs a public key $\pk$ and a secret key $\sk$.
        \item $\verkey \gets \VerKeyGen(1^\secpar)$ takes as input the security parameter $1^\secpar$ then outputs a verification key $\verkey$.
        \item $\ct \gets \Enc(\pk, m, \verkey)$ takes as input a public key $\pk$, a message $m$, and a verification key $\verkey$, then outputs a ciphertext $\ct$.
        \item $m\gets \Dec(\sk, \ct)$ takes as input a secret key $\sk$ and a ciphertext $\ct$, then outputs a message $m$.
    \end{itemize}

    Additionally, it is equipped with a \textbf{proof of no-intrusion}, which is an interactive protocol \emph{with classical communication} between a QPT prover $P$ holding a state $\ket{\psi}$ and a PPT verifier holding a verification key $\verkey$:
    \[
        (\ket{\psi'}, b) 
        \gets
        \PoNI\langle \prover(\ket{\psi}), \verifier(\verkey)\rangle
    \]
    At the end of the protocol, $P$ outputs a state $\ket{\psi'}$ and $V$ outputs a decision bit $b\in \{\Accept, \Reject\}$.

    The scheme must satisfy the following properties:
    \begin{itemize}
        \item \textbf{Correctness.} For every message $m$, every key pair $(\pk, \sk)$ in the support of $\KeyGen(1^\secpar)$, and every verification key $\verkey$ in the support of $\VerKeyGen(1^\secpar)$,
        \[
            \Pr\left[m = \Dec(\sk, \Enc(\pk, m, \verkey)\right] = 1-\negl
        \]

        \item \textbf{Semantic Security.} For every pair of messages $(m_0, m_1)$,
        \[
            \left\{
                (\ct_0, \verkey) : 
                \begin{array}{c}
                    (\pk, \sk) \gets \KeyGen(1^\secpar) \\
                    \verkey\gets \VerKeyGen(1^\secpar) \\
                     \ct_0 \gets \Enc(\pk, m_0, \verkey)  
                \end{array} 
            \right\}
            \approx_c
            \left\{
                (\ct_1, \verkey) :
                \begin{array}{c}
                (\pk, \sk) \gets \KeyGen(1^\secpar) \\
                \verkey\gets \VerKeyGen(1^\secpar) \\
                 \ct_1 \gets \Enc(\pk, m_1, \verkey)  
                \end{array} 
            \right\}
        \]
    
        \item \textbf{PoNI Correctness and State Preservation:} Let $(\pk, \sk)$ be in the support of $\KeyGen(1^\secpar)$, let $m$ be an arbitrary message, let $\verkey$ be in the support of $\VerKeyGen$, and let $\ket{\ct}$ be a ciphertext which decrypts to $m$ with certainty.
        
        \[
            (\ket{\psi'}, b) \gets \PoNI\langle P(\ket{\ct}),\ V(\verkey)\rangle 
        \]
        $b = \Accept$ and $\ket{\psi}$ belongs to the space of ciphertexts which decrypt to $m$.
        
        \item \textbf{PoNI Security:} The construction satisfies either search PoNI security (\Cref{def:PoNI-security-search}) or decisional PoNI security (\Cref{def:PoNI-security-decision}).
    \end{itemize}    
\end{definition}

For security, we consider a general scenario where the server receives a ciphertext, then splits it into two pieces after interacting in many PoNIs. After the split, the server simultaneously gives a PoNI using one piece while the ``hacker'' receives the secret key and attempts to decrypt the ciphertext using the other piece.

\begin{definition}[Proofs of No-Intrusion for Encryption: Search Security]\label{def:PoNI-security-search}
    Consider the following security game $\PoNIEncSearch_{n}(\adv)$, played by an adversary $\adv = (\adv_1, \adv_P, \adv_H)$ consisting of three QPT algorithms (with auxiliary quantum input) and parameterized by a non-negative integer $n$.
    \begin{enumerate}
        \item Sample a message $m\gets \{0,1\}^\secpar$, a key pair $(\pk, \sk)\gets \KeyGen(1^\secpar)$, a verification key $\verkey \gets \VerKeyGen(1^\secpar)$, and a ciphertext $\ct \gets \Enc(\pk, m, \verkey)$.
        \item Initialize $\adv_1$ with $(\pk, \ct)$.
        \item Perform $\PoNI\langle \adv_1, \verifier(\verkey)\rangle$ a total of $n$ times. If the verifier outputs $\Reject$ in any of these executions, the adversary immediately loses (output $0$).
        \item $\adv_1$ outputs two registers $(\calH, \calP)$.
        \item Perform $b \gets \PoNI\langle \adv_P(\calP), \verifier(\verkey)\rangle$, where $b$ is the verifier's decision bit. 
        \item Run $m' \gets \adv_H(\regH, \sk, \vk)$.
        \item Output $1$ (the adversary wins) if $b = \Accept$ and $m' = m$. Otherwise output $0$ (the adversary loses).
    \end{enumerate}

    \noindent We say the PoNI has $n$-time search security if for all QPT adversaries $\adv$,
    \[
        \Pr[1\gets \PoNIEncSearch_n(\adv)] = \negl
    \]
    If this holds for all $n = \poly$, then we simply say that the PoNI has search security.

    \noindent If this holds for QPT $\adv_1$ and $\adv_P$, but \emph{unbounded} $\adv_H$, we say that the PoNI has ($n$-time) \emph{everlasting} search security.
\end{definition}

\begin{definition}[Proofs of No-Intrusion for Encryption: Decisional Security]\label{def:PoNI-security-decision}
    Consider the following security game $\PoNISec_{n}(m)$, played by an adversary $\adv(\ket{\psi})$ and parameterized by a message bit $m\in \{0,1\}$ along with a non-negative integer $n$.
    \begin{enumerate}
        \item Sample a key pair $(\pk, \sk)\gets \KeyGen(1^\secpar)$, a verification key $\verkey\gets \VerKeyGen(1^\secpar)$ and a ciphertext $\ct \gets \Enc(\pk, m, \verkey)$.
        \item Initialize $\adv(\ket{\psi})$ with $(\pk, \ct)$.
        \item Perform $\PoNI\langle \adv(\cdots), \verifier(\verkey)\rangle$ a total of $n$ times. If the verifier outputs $\Reject$ in any of these executions, immediately output $\bot$.
        \item $\adv$ outputs a register $\regR_{\Dec}$.
        \item Perform $\PoNI\langle \adv(\cdots), \verifier(\verkey)\rangle$. If the prover rejects, output $\bot$. Otherwise output $(\regR_{\Dec}, \sk, \verkey)$.
    \end{enumerate}

    \noindent We say the PoNI has $n$-time security if for all QPT adversaries $\adv(\ket{\psi})$,
    \[
        \{\PoNISec_{n}(0)\} \approx \{\PoNISec_{n}1)\}
    \]
    If this holds for all $n = \poly$, then we simply say that the PoNI is secure.

    \noindent If these two distributions are instead \emph{statistically} close, we say that the PoNI is ($n$-time) everlasting secure.
\end{definition}

Goyal and Raizes show that search $\PoNI$ security can be generically compiled to decision $\PoNI$ security in the quantum random oracle model (QROM).

\begin{theorem}\label{thm:enc-poni-decisional}
    Any $\PKE$ with search $\PoNI$ security can be generically compiled into a $\PKE$ with decisional $\PoNI$ security in the quantum random oracle model.
\end{theorem}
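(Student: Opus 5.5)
The plan is to follow the Goyal--Raizes template for turning search security into decisional security in the QROM, and to check that the transformation keeps the PoNI protocol and its state-preservation guarantee intact.

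Given a search-secure $\PKE$ $(\KeyGen,\VerKeyGen,\Enc,\Dec,\PoNI)$ and a random oracle $H:\{0,1\}^\secpar\to\{0,1\}$, define the new scheme as follows. To encrypt a bit $m$, the encryptor samples $r\gets\{0,1\}^\secpar$ and outputs $\ct'=(\Enc(\pk,r,\verkey),\, m\oplus H(r))$. Decryption recovers $r$ with $\Dec$ and then unmasks the bit. The PoNI is the original PoNI run on the quantum part of the ciphertext; the classical mask bit is simply carried along.

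Correctness follows immediately from correctness of the base scheme. Semantic security and PoNI correctness/state preservation are also inherited directly, since the protocol acts on the base ciphertext exactly as before.

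The main step is to reduce decisional PoNI security to search PoNI security using the one-way-to-hiding (O2H) lemma.

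1. Consider the game $\PoNISec_n(m)$. Replace $H(r)$ in the challenge by a uniformly random bit. After this change, the output distribution is independent of $m$.
2. By O2H, the distinguishing advantage between the real and modified games is bounded by $2\sqrt{q\cdot\Pr[\text{find}]}$. Here $\Pr[\text{find}]$ is the probability that measuring a uniformly random one of the adversary's $q$ oracle queries yields $r$. The queries in question range over the whole experiment, including the final phase after $\sk$ is released.
3. Use this to build a search adversary. $\adv_1$ simulates the oracle with a $2q$-wise independent function and plays the first $n$ PoNIs plus the pre-split phase. The register split is as follows: $\regR_{\Dec}$ together with the simulated-oracle state goes to $\adv_H$, and the rest goes to $\adv_P$, which runs the final PoNI.
4. $\adv_H$ receives $\sk$ and $\verkey$, continues the distinguisher, and measures a random query to output $m'$.

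The main obstacle is that the decisional game performs the final PoNI \emph{before} releasing $\regR_{\Dec}$ and $\sk$. The search game instead splits the registers first and then runs the PoNI and the hacker "simultaneously". The O2H extraction therefore has to be placed so that queries made during the final PoNI are also covered. I would handle this as follows:

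1. Note that the distinguisher's output depends only on $(\regR_{\Dec},\sk,\verkey)$ and on the event $b=\Accept$.
2. Have $\adv_1$ output the split exactly at step 4 of the decisional game. Queries made during the final PoNI can be attributed to $\adv_P$. Since O2H picks a uniformly random query, losing the final-PoNI queries costs at most a factor related to the query budget. I would try to bound this by guessing, with probability $1/2$, whether the extracted query lies in the $\adv_H$ phase.
3. Conclude that a non-negligible decisional advantage yields a non-negligible $\Pr[b=\Accept \wedge m'=r]$, contradicting search security. Note that the search game samples $r$ uniformly from $\{0,1\}^\secpar$, which matches how $r$ is sampled in the construction.
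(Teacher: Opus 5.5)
Since the paper only cites Goyal--Raizes for this theorem and gives no proof, I am judging your argument on its own terms. Your construction is the right one: encrypt a uniform $r\in\{0,1\}^\secpar$, mask the bit by $H(r)$, and run the old PoNI on the quantum part. The O2H outline is also right. However, the step you identify as the main obstacle is not actually resolved.

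A measured query helps your search adversary only if $\mathcal{A}_H$ makes it, i.e.\ the distinguisher after $\mathsf{sk}$ is released. Queries made before the split, or by $\mathcal{A}_P$ during the final PoNI, are useless, because $\mathcal{A}_P$ cannot pass a found $r$ to $\mathcal{A}_H$. Guessing with probability $1/2$ which phase the query lies in does not help. If the O2H find-probability sat on those earlier queries, your $\mathcal{A}_H$ would win with negligible probability however you guess, so the loss is not ``a factor related to the query budget''. Relatedly, O2H only gives you a noticeable $\Pr[\text{find}]$, not a noticeable $\Pr[b=\Accept\wedge m'=r]$. The coupling with acceptance must be argued.

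The missing ingredient is semantic security of the base scheme. In the random-mask game, everything before $\mathsf{sk}$ is revealed can be simulated from $(\mathsf{pk},\mathsf{Enc}(\mathsf{pk},r,\mathsf{vk}),\mathsf{vk})$, using a $2q$-wise independent oracle and a uniform mask bit. This covers the $n$ PoNIs, the pre-split computation, the final PoNI, and also the distinguisher run on input $\bot$. The simulation works because the base semantic-security game hands out $\mathsf{vk}$, and the PoNI verifier needs nothing more. For uniform $r\in\{0,1\}^\secpar$, semantic security implies one-wayness, so each such query equals $r$ with negligible probability.

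Hence, up to a negligible term, the whole find-probability is carried by the distinguisher's queries on input $(\mathcal{R}_{\mathsf{Dec}},\mathsf{sk},\mathsf{vk})$, which exist only on the accepting branch. Let $\mathcal{A}_H$ measure a uniformly random one of the distinguisher's $q_D\le q$ queries. Because $\mathcal{A}_P$ and $\mathcal{A}_H$ act on disjoint registers, this gives $\Pr[b=\Accept\wedge m'=r]\ge\Pr[\text{find}]-\negl$, which is noticeable.

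Two smaller points. First, semantic security of the new scheme is not ``inherited directly''; it needs this same O2H-plus-one-wayness step. Second, your O2H bound mixes the random-query version ($2q\sqrt{P_{\mathrm{guess}}}$) with the semi-classical one ($2\sqrt{(d+1)P_{\mathrm{find}}}$). This is harmless, since only polynomial losses matter.
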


\subsection{Tools}

\paragraph{PKE with Publicly Verifiable Certified Deletion.}
The construction uses a PKE with publicly-verifiable certified deletion, which are known from a variety of assumptions including post-quantum PKE~\cite{C:BarKhuPor23,EC:BGKMRR24,TCC:BKMPW23,TCC:KitNisYam23}. Roughly, this primitive allows generation of a quantum ciphertext that can be destructively measured to produce a certificate $\cert$. The certificate can be checked by anyone using a public verification key; if the certificate is valid, then the ciphertext can no longer be decrypted, even using the secret decryption key.

\begin{definition}[PKE with Publicly Verifiable Certified Deletion]
    A PKE scheme with publicly verifiable certified deletion has the following syntax.
    \begin{itemize}
        \item $(\sk, \pk) \gets \KeyGen(1^\secpar)$ takes as input the security parameter then outputs a secret decryption key $\sk$ and public encryption key $\pk$.
        \item $(\ket{\ct}, \verkey) \gets \Enc(\pk, m)$ takes as input a public key and a message bit $b$ then outputs a ciphertext and an associated verification key.
        \item $m \gets \Dec(\sk, \ket{\ct})$ takes as input a secret key and a ciphertext then outputs a message.
        \item $\cert \gets \Del(\ket{\ct})$ takes as input a ciphertext then outputs a (classical) certificate.
        \item $\Accept/\Reject \gets \Ver(\verkey, \cert)$ takes as input a verification key and a certificate, then outputs accept or reject.
    \end{itemize}

    It must satisfy the standard decryption correctness and semantic security properties for PKE.

    Additionally, it must satisfy \textbf{correctness of deletion}: for any message $m$,
    \[
        \Pr\left[\Accept \gets \Ver(\verkey, \cert): 
        \begin{array}{c}
             (\sk, \pk) \gets \KeyGen(1^\secpar) \\
             (\ket{\ct}, \verkey) \gets \Enc(\pk, m) \\
             \cert \gets \Dec(\sk, \key{\ct})
        \end{array}\right]
        =1-\negl
    \]

    Finally, it must also satisfy \textbf{publicly verifiable certified deletion security}: for every QPT adversary $\adv$, 
    \[
        \frac{1}{2}\|\PVCDExp(0) - \PVCDExp(1) \|_1 = \negl
    \]
    where the experiment $\PVCDExp(b)$ is defined as follows:
    \begin{enumerate}
        \item Sample keys $(\sk, \pk) \gets \KeyGen(1^\secpar)$ and encrypt $(\ket{\ct}, \verkey) \gets \Enc(\pk, m)$.
        \item Run $\adv(\ket{\ct}, \pk, \verkey)$ and parse their output as a certificate $\cert$ and a state on register $\calR_{\adv}$.
        \item If $\Ver(\verkey, \cert) = \Accept$, output $\calR_{\adv}$. Otherwise output $\bot$.
    \end{enumerate}
\end{definition}

\paragraph{Simultaneous Extraction.}
Additionally, the analysis requires the following lemma from \cite{eprint:GR25}. Informally, it states that if two adversaries $B$ and $C$ simultaneously succeed in a search task with noticeable probability without communicating, then they also simultaneously succeed with noticeable probability even when $C$'s challenge is modified in a way which is indistinguishable to $C$, but which may be distinguishable to $B$.

\begin{lemma}\label{lem:simult-search}
    Let $\calD$ be a distribution outputting three registers $(\regA, \regB, \regC)$ along with a classical string $\cAuxClass$. Let $\calD'$ be a distribution taking as input $\cAuxClass$ then outputting $\regC'$ such that
    \[
        \{(\regA, \regC, \cAuxClass): (\regA, \regB, \regC, \cAuxClass)\gets \calD\} 
        \approx_c
        \left\{(\regA, \regC', \cAuxClass): 
            \begin{array}{c}
             (\regA, \regB, \regC, \cAuxClass)\gets \calD \\
              \regC' \gets \calD'(\cAuxClass)
            \end{array}
            \right\}
    \]
    For all QPT algorithms $(A, C)$ and all (unbounded) quantum algorithms $B$, if 
    \[
        \Pr\left[1\gets B(\regB, \regA_B) \land 1\gets C(\regC, \cAuxClass, \regA_C): 
        \begin{array}{c}
             (\regA, \regB, \regC, \cAuxClass)\gets \calD \\
             (\regA_B, \regA_C) \gets A(\regA) 
        \end{array}
        \right] 
        \geq 1/p
    \]
    for some $p = \poly$, then there exists $q = \poly$ such that
    \[
        \Pr\left[
        1\gets B(\regB, \regA_B) \land 1\gets C(\regC', \cAuxClass, \regA_C): 
        \begin{array}{c}
             (\regA, \regB, \regC, \cAuxClass)\gets \calD \\
             (\regA_B, \regA_C) \gets A(\regA) \\
             \regC' \gets \calD'(\cAuxClass)
        \end{array}
        \right]
        \geq 1/q
    \]
\end{lemma}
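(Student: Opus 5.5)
The plan is a contrapositive argument: assume the joint success probability with $\regC'$ is negligible, and turn that into a QPT distinguisher between $(\regA,\regC,\cAuxClass)$ and $(\regA,\regC',\cAuxClass)$. First I would normalise the picture. After $A$ runs, $B$ acts only on $(\regB,\regA_B)$, while $C$ acts only on $(\regC \text{ or } \regC',\cAuxClass,\regA_C)$, so the two success measurements commute and can be applied in either order. Write $E_B$ for $B$'s (possibly inefficient) accepting POVM element and $F$ for $C$'s efficient one. Both success probabilities then have the form $\Tr[(E_B\otimes F)\,\rho]$, where $\rho$ is either the joint state on $(\regB,\regA_B)\otimes(\regC,\cAuxClass,\regA_C)$ or its primed analogue. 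The primed state is produced by the same procedure, except that $\regC$ is discarded and replaced by $\calD'(\cAuxClass)$.

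The second step conditions on $\cAuxClass$ and averages. Because $\calD'$ depends only on the classical string $\cAuxClass$, the fresh $\regC'$ is, conditioned on $\cAuxClass$, in tensor product with everything else, including $\regB$. By Markov's inequality, the hypothesis gives a set $K$ of strings $\cAuxClass$ of noticeable mass on which the conditional joint success is at least $1/(2p)$.

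For $\cAuxClass\in K$, I would measure $E_B$ first. This is legitimate by commutation, and it succeeds with probability at least $1/(2p)$. Let $\sigma_{\cAuxClass}$ be the normalised residual state on $(\regA_C,\regC,\cAuxClass)$. The operator inequality $E_B\le I$ gives $\sigma_{\cAuxClass}\le 2p\cdot \rho_{\cAuxClass}$, where $\rho_{\cAuxClass}$ is the unconditioned reduced state. Conditioned on $B$'s success, $C$ succeeds on $\sigma_{\cAuxClass}$ with noticeable probability.

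What remains is to show that $C$ still succeeds when $\regC$ is swapped for $\regC'$ inside $\sigma_{\cAuxClass}$. The fresh $\regC'$ is independent of $\regB$, so it is enough to compare, for the efficient POVM $F$, the quantities $\Tr[F(\sigma_{\cAuxClass}^{\regA_C}\otimes\calD'(\cAuxClass))]$ and $\Tr[F\sigma_{\cAuxClass}]$.

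The main obstacle is that step. The hypothesis only gives indistinguishability for the unconditioned state $\rho_{\cAuxClass}$, while the states we must compare are conditioned on the inefficient event $E_B$, and the operator bound $\sigma\le 2p\rho$ only yields upper bounds, which are the wrong direction. My first attempt would be a purification and gentle-measurement argument.

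Purify $\rho$, and decompose $\regA_C$ along the eigenbasis of the efficient operator $\Tr_{\regC}[F(\cdot\otimes I)]$, in the style of Jordan's lemma (\Cref{lem:jordan}). This should isolate a subspace of $\regA_C$ on which $C$ succeeds with noticeable probability regardless of the correlation with $\regB$. The weight of $\sigma$ on that subspace would then be lower bounded by $1/\poly$, using the hypothesis together with the fact that $E_B$ does not act on $\regA_C$.

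A distinguisher given $(\regA,\cdot,\cAuxClass)$ can run $A$, project $\regA_C$ onto that subspace using an efficient estimation procedure such as $\ValEst$ (\Cref{lemma:CMSZ-valest}), and then run $C$. If the joint success with $\regC'$ were negligible, this distinguisher would show a noticeable gap between $\regC$ and $\regC'$, contradicting the assumed indistinguishability. This yields the polynomial $q$, and I expect it to depend polynomially on $p$ and on the estimation precision.
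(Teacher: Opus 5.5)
The paper gives no proof of this lemma; it imports it from \cite{eprint:GR25}, so there is nothing in-paper to compare against, and I can only judge your argument on its own terms. Your overall shape is right: a contrapositive, with a distinguisher that runs $A$, performs a threshold measurement on $\regA_C$, then runs $C$. However, the step that carries all the content is asserted rather than proved. You claim that the weight of $\sigma_{\cAuxClass}$ on the good subspace is $\ge 1/\mathrm{poly}$ ``using the hypothesis together with the fact that $E_B$ does not act on $\regA_C$''. Up to a factor $\epsilon\Pr[E_B]$, that is a restatement of the conclusion. Neither the hypothesis (which concerns the correlated $\regC$) nor the locality of $E_B$ yields it, and your last paragraph likewise asserts a gap without exhibiting one.

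Two ingredients are also mis-specified. First, the operator must be $G_{\cAuxClass}:=\Tr_{\regC'}[F\,(I_{\regA_C}\otimes\calD'(\cAuxClass))]$, the success operator of $C$ on a \emph{fresh} sample, not $\Tr_{\regC}[F(\cdot\otimes I)]$. Only then is the primed joint success exactly $\Tr[(E_B\otimes G_{\cAuxClass})\rho]$, and measuring it efficiently requires $\calD'$ to be efficiently samplable, which the statement leaves implicit. Second, the distinguisher should project onto the \emph{low} eigenspace of $G_{\cAuxClass}$, not the high one. Jordan's lemma plays no role; only the spectral decomposition of the single operator $G_{\cAuxClass}$ is used.

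Here is the missing computation. Let $\Pi_{\ge\epsilon}$ and $\Pi_{<\epsilon}$ be the spectral projectors of $G_{\cAuxClass}$, and let $\eta$ be the primed joint success.

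\begin{itemize}
\item Since $G_{\cAuxClass}\ge\epsilon\,\Pi_{\ge\epsilon}$, the $B$-accepting branch has little high weight: $\nu:=\Tr[(E_B\otimes\Pi_{\ge\epsilon})\rho]\le\eta/\epsilon$.
\item The projector $\Pi_{\ge\epsilon}$ commutes with $E_B$ but not with $F$, so expand $F$ into its four blocks with respect to $\Pi_{\ge\epsilon}+\Pi_{<\epsilon}=I$. The high diagonal block contributes at most $\nu$ to $\Tr[(E_B\otimes F)\rho]$.
\item Each cross term is at most $\sqrt{\nu}$ in absolute value, by Cauchy--Schwarz after writing $E_B\otimes\Pi_{<\epsilon}F\Pi_{\ge\epsilon}=(\sqrt{E_B}\otimes\Pi_{<\epsilon}\sqrt{F})(\sqrt{E_B}\otimes\sqrt{F}\,\Pi_{\ge\epsilon})$.
\item Hence $\Tr[(E_B\otimes\Pi_{<\epsilon}F\Pi_{<\epsilon})\rho]\ge 1/p-\nu-2\sqrt{\nu}$. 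Only \emph{now} do you drop $E_B\le I$.
\end{itemize}

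With the real $\regC$, the test ``project $\regA_C$ onto $\Pi_{<\epsilon}$, then run $C$'' therefore succeeds with probability at least $1/p-\nu-2\sqrt{\nu}$. With $\regC'$ it succeeds with probability $\Tr[\Pi_{<\epsilon}G_{\cAuxClass}\Pi_{<\epsilon}\rho]\le\epsilon$. This is exactly how the inefficient conditioning on $B$ is removed: it is used only to show that the low projection is gentle on the $B$-accepting branch, and is then discarded by monotonicity.

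Two simplifications follow. Averaging over $\cAuxClass$ only needs concavity of $\sqrt{\cdot}$, so your Markov step is unnecessary. And because the bound is quantitative, you need not assume $\eta$ negligible: with $\epsilon=1/(8p)$, already $\eta\le\epsilon^3$ gives advantage at least $1/(2p)$ before approximation errors, so $q=O(p^3)$.

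What remains is to replace $\Pi_{<\epsilon}$ by an efficient approximate threshold measurement and check that it inherits both facts used above.
\begin{itemize}
\item Its ``high'' outcome must be unlikely on the $B$-accepting branch, and it must disturb that branch negligibly. A projective dilation makes the block argument go through verbatim: compute the estimate coherently, measure only the threshold bit, and uncompute.
\item A ``low'' outcome must leave $G_{\cAuxClass}$-value $O(\epsilon)$.
\end{itemize}
Neither property is automatic for $\ValEst$, whose residual state is not a projection of its input.
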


\subsection{Construction}

We construct a PKE with \emph{search} $\PoNI$ security. As mentioned previously, any such scheme can be generically transformed into one with decisional $\PoNI$ security in the quantum random oracle model. We make use of our specific construction of state-preserving arguments for NP to enable compatibility with \Cref{lem:simult-search}.\footnote{Roughly, \Cref{lem:simult-search} allows reasoning about the simultaneous success rate of two searchers $B$ and $C$ when $C$'s input is changed in a way that is indistinguishable to $C$ (but may be distinguishable to $B$). It does \emph{not} allow wholesale replacement of $C$ by some other algorithm $\Ext(C)$, which might in general behave very differently. In our construction of state-preserving arguments for NP, $\Ext$ is a straightline extractor which works by sending indistinguishable messages for the first part of the protocol.}
The construction uses the following tools:
\begin{itemize}
    \item A PKE scheme with certified deletion $\PKE.(\KeyGen, \Enc, \Dec, \Del,\Ver)$.
    \item A signature scheme $\Sig.(\KeyGen, \Sign, \Ver)$.
    \item The state-preserving argument for NP $\SPNP$ from \Cref{constr:state-pres-np}.
\end{itemize}

\begin{construction}\label{constr:pke-poni}
    The core algorithms are as follows:
    \begin{itemize}
        \item $\KeyGen(1^\secpar)$ samples and outputs $(\sk, \pk) \gets \PKE.\KeyGen(1^\secpar)$.
        
        \item $\VerKeyGen(1^\secpar)$ samples a signing key pair $(\sk_\Sig, \pk_\Sig) \gets \Sig.\KeyGen(1^\secpar)$. It outputs $\verkey = (\sk_\Sig, \pk_\Sig)$.
        
        \item $\Enc(\pk, m; \verkey)$ parses $\verkey = (\sk_\Sig, \pk_\Sig))$. Then, it samples $x\gets \{0,1\}^\secpar$ and encrypts $(\ket{\ct}, \verkey') \gets \PKE.\Enc(\pk, x)$. Finally, it signs $\sigma \gets \Sig.\Sign(\sk_\Sig, \verkey')$ and outputs
        \[
            \ket{\ct'} = \left(\ket{\ct},\ \verkey',\ \sigma\right)
        \]

        \item $\Dec(\sk, \ket{\ct'})$ parses $\ket{\ct'} = (\ket{\ct},\ \verkey',\ \sigma)$, then computes $m \gets \PKE.\Dec(\sk, \ket{\ct})$. It outputs $m$.
    \end{itemize}

    \noindent The \textbf{proof of no-intrusion} is as follows.
    \begin{enumerate}
        \item \textbf{Prover:} Parse $\ket{\ct'} = (\ket{\ct}, \verkey', \sigma)$ and send $(\verkey', \sigma)$ to the verifier. Coherently compute $\Del(\ket{\ct})$ to obtain $\sum_{\cert} \alpha_\cert \ket{\cert, \mathsf{garbage}_\cert}$.
        
        \item \textbf{Verifier:} Parse $\verkey = (\sk_\Sig, \pk_\Sig)$. Then, compute $\Sig.\Ver(\pk_\Sig, \sigma, \verkey')$. If it rejects, immediately output reject.
        
        \item \textbf{Both.} Interact in an execution of $\SPNP$ for the language
        \[
            \lang = \{\verkey': \exists \cert \text{ s.t. } \PKE.\Ver(\verkey', \cert) = \Accept\}
        \]
        and the statement $\verkey'$.

        \item \textbf{Prover:} Uncompute the coherent implementation of $\Del$.
    \end{enumerate}
\end{construction}

\begin{theorem}
    \Cref{constr:pke-poni} is a PKE with everlasting search proofs of no-intrusion, assuming the existence of PKE with publicly verifiable certified deletion (which can be based on post-quantum PKE~\cite{TCC:BKMPW23,TCC:KitNisYam23}), post-quantum signature schemes, and that \Cref{constr:state-pres-np} is a state-preserving argument of knowledge for NP (all of which can be based on LWE).
\end{theorem}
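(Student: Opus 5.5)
We need to prove that \Cref{constr:pke-poni} is a PKE with everlasting search PoNI security. We must check correctness, semantic security, PoNI correctness and state preservation, and search security.

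\textbf{Routine properties.} Correctness and semantic security follow directly from the underlying $\PKE$, since the signature is only on the public $\verkey'$ and $\verkey$ is sampled independently of the plaintext. For PoNI correctness and state preservation, we coherently compute $\Del$ on an honest $\ket{\ct}$. By correctness of deletion, the resulting superposition $\sum_\cert \alpha_\cert\ket{\cert,\mathsf{garbage}_\cert}$ is, up to negligible weight, a superposition over $\NP$ witnesses for the statement $\verkey'\in\lang$. A gentle-measurement step projects it onto valid certificates with negligible disturbance. Then the completeness and state preservation of $\SPNP$ (\Cref{thm:witness-preserving-NP}, applied with the garbage register as the entangled external register $\calB$) give acceptance while preserving the state up to negligible trace distance. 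Uncomputing $\Del$ then returns a state negligibly close to $\ket{\ct'}$. Since the PoNI is state preserving, the $n$ prior executions in the game can be simulated by the reduction: the verifier's side uses only $\pk_\Sig$, and the reduction can sign $\verkey'$ itself.

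\textbf{Search security.} Suppose $(\adv_1,\adv_P,\adv_H)$ wins with probability $1/p$. The plan proceeds in three steps.

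\begin{enumerate}
\item We first invoke signature unforgeability. With all but negligible probability, in the final PoNI $\adv_P$ must send the honest $\verkey'$ together with a valid signature. Otherwise $\adv$ yields a forgery, since the reduction knows $\sk_\Sig$ only through a signing oracle queried once on $\verkey'$. Hence the $\SPNP$ statement is the honest $\verkey'$.
\item Next we replace the final $\SPNP$ execution by its straight-line extractor $\Ext$ from \Cref{thm:witness-preserving-NP}. $\Ext$ differs from the honest verifier only by sending injective-mode $\pubparams$ in the first message, which is computationally indistinguishable to the QPT $\adv_P$. We apply \Cref{lem:simult-search} with the following roles: $\regC$ is $\adv_P$'s interaction with a lossy-mode $\pubparams$, $\regC'$ is the same interaction with an injective-mode $\pubparams$ (sampled from $\cAuxClass$), and $B=\adv_H$ is unbounded and receives $\sk$. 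The lemma gives that, with noticeable probability $1/q$, both of the following hold simultaneously: $\adv_H$ outputs $m$, and the injective-mode interaction succeeds. Injective-mode success means, via the argument-of-knowledge analysis of \Cref{thm:witness-preserving-NP} (parallel repetition), that the trapdoor extracts a valid certificate $\cert$ for $\verkey'$ except with negligible probability.
\item Finally, we build a certified-deletion adversary. It takes $\ket{\ct}$, samples the signature keys itself, runs $\adv_1$, then $\adv_P$ against the extracting verifier, and outputs the extracted $\cert$ together with $\adv_H$'s register. When $\Ver$ accepts, the unbounded $\adv_H$ given $\sk$ recovers the uniformly random plaintext $x$ with noticeable probability. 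This contradicts publicly verifiable certified deletion security (everlasting, since $\adv_H$ is unbounded): encrypting two random strings makes the two outputs statistically distinguishable by checking $\adv_H$'s guess.
\end{enumerate}

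\textbf{Main obstacle.} The expected hard step is step 2. We must check the hypotheses of \Cref{lem:simult-search}: the switch must be indistinguishable to $C$ alone, jointly with $\cAuxClass$ and the split register $\regA$. We must also check that the extractor is exactly "$C$ with modified input" rather than a rewinding procedure. Here we use the specific fact that the $\SPNP$ extractor is straight-line and only changes $\pubparams$, and that we package the trapdoor inside $\cAuxClass$ as the thing used by the outer checking procedure.
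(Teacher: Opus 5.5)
Your overall route is the paper's: fix the $\SPNP$ statement to the honest $\verkey'$ via signature unforgeability, then reduce to publicly verifiable certified deletion using the straight-line injective-mode extractor. The mode switch is bridged with \Cref{lem:simult-search}, taking $\adv_1$ as the splitter $A$, the unbounded $\adv_H$ as $B$, and $\adv_P$ as $C$. However, the way your last paragraph proposes to satisfy the lemma's hypothesis would fail. In \Cref{lem:simult-search}, $C$ is given $\cAuxClass$ as input, and the hypothesis requires $(\regA,\regC,\cAuxClass)\approx_c(\regA,\regC',\cAuxClass)$. If you put the trapdoor in $\cAuxClass$, or derive the injective $\pubparams$ from it, this indistinguishability is false. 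Anyone holding $\td$ can check whether the $\pubparams$ in the $C$-register is the one $\td$ belongs to (or, in the LWE instantiation, whether $\vecu$ is an LWE sample). So lossy and injective modes become trivially distinguishable. The lemma also has no separate ``outer checking procedure'': on that side, the only success predicate is $C$'s own output.

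The fix, which is what the paper does, is to keep the trapdoor entirely out of $C$'s view:
\begin{itemize}
\item $\cAuxClass$ contains only public classical data ($\pk_\Sig$ and $\verkey'$).
\item $\regC$ is the lossy first message, and $\calD'$ samples a fresh injective $\pubparams$ with its own coins.
\item $C$'s success is producing an accepting $\SPNP$ transcript for the statement $\verkey'$, which can be checked from $\pubparams$ alone.
\end{itemize}
The lemma then gives, in injective mode, that $\adv_H$ succeeds and the transcript is accepted jointly with probability $1/q$. Only afterwards does the reduction, which sampled $(\pubparams,\td)$ itself, invert the $y$'s. In injective mode preimages are unique, so whenever the extracted colorings are invalid in every repetition, acceptance over the public-coin challenges has probability at most $(1-1/|\calE|)^{\secpar}$. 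Hence $\Pr[\text{accept}\wedge\text{extraction fails}]=\negl$. It follows that $\adv_H$ succeeds jointly with a valid extracted $\cert$ with probability at least $1/q-\negl$. With this change, your step 3 goes through as written.
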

\begin{proof}
    Decryption correctness follows from inspection and the correctness of $\PKE$. PoNI correctness follows from the deletion correctness of $\PKE$, the correctness of $\Sig$, and the correctness of $\SPNP$. State preservation follows from the state preservation of $\SPNP$, the deletion correctness of $\PKE$, and the fact that no other measurements are made.
    Semantic security follows from the semantic security of $\PKE$.

    To see PoNI security, first observe that the adversary may simulate any number of PoNIs locally using $\verkey_\Sig$. Furthermore, in the final PoNI, the NP statement proved in the $\SPNP$ execution must be for the original $\verkey'$ generated at encryption time, or else the unforgeability of $\Sig$ is broken. 

    We show $\PoNIEncSearch$ search security by reducing to the publicly verifiable certified deletion property of $\PKE$. Specifically, given a $\PoNIEncSearch$ search adversary $(\adv_1, \adv_P, \adv_H)$, the reduction plays the certified deletion security game as follows. 
    \begin{enumerate}
        \item Sample random $m_0 \gets \{0,1\}^\secpar$ and $m_1\gets \{0,1\}^\secpar$ to be the challenge plaintexts. Receive $(\ket{ct_b}, \verkey')$, an encryption of a random $m_b$ along with the corresponding verification key, from the certified deletion challenger. 
        \item Sample $\verkey = (\sk_\Sig, \pk_\Sig)$ and compute $\sigma = \Sig.\Sign(\sk_\Sig, \verkey')$ to complete the ciphertext $\ket{\ct'} = (\ket{\ct_b}, \verkey', \sigma)$, then give it to $\adv_1$. 
        \item Run the initial $\PoNI$ executions locally with $\adv_1$.
        \item Receive the adversary's split registers $(\regH, \calP)$.
        \item Execute the final $\PoNI$ with $\adv_P(\calP)$ using the argument of knowledge extractor $\Ext(\adv)$ for $\SPNP$. If it succeeds in extraction, it outputs $\cert$ such that $\PKE.\Ver(\verkey', \cert) = \Accept$ (since $\verkey'$ is the statement used as argued previously). If extraction fails, set $\cert = \bot$.
        \label{step:poni-red-cert}
        \item Send $\cert$ and $\regH$ to the certified deletion challenger.
        \item Receive either $\regH$ or $\bot$ from the certified deletion challenger, depending on whether $\cert$ is valid.\label{step:poni-red-valid}
        \item If the latter, output a random bit. If the former, run $m' \gets \adv_H(\calH)$. Then, output $b$ if $m' = m_b$ and otherwise guess randomly.
    \end{enumerate}

    We now analyze the reduction's advantage in the certified deletion security game, supposing that the adversary violates $\PoNIEncSearch$ search security for the sake of contradiction. Observe that $\adv_H$'s view is independent of $m_{1-b}$, so $m' = m_{1-b}$ with negligible probability. Since the reduction guesses randomly in this case unless $m' = m_b$ or if $\cert$ is invalid, it suffices to show that with noticeable probability $m' = m_b$ and simultaneously $\cert$ is valid. 

    Consider the hybrid experiment where steps \ref{step:poni-red-cert} and \ref{step:poni-red-valid} are replaced by running the honest execution of $\SPNP$ (i.e. the honest $\PoNI$), and running $m' \gets \adv_H(\calH)$ if the $\PoNI$ accepts (instead of checking an extracted $\cert$). This matches the $\PoNIEncSearch$ game for random message $m_b$, so by assumption $\adv_H$ outputs $m' = m_b$ with noticeable probability.\footnote{It is tempting to assume that this immediately implies the above reduction also finds $m' = m_b$ with noticeable probability. However, because we consider unbounded $\adv_H$ for everlasting security, running $\Ext$ is distinguishable to them, which may affect the simultaneous success probability of $\adv_H$ and $\adv_P$. We rely on \Cref{lem:simult-search} to avoid this issue.}

    Now consider the following simultaneous search task: $\adv_1$ acts as the splitter $A$; $\adv_H(\calH)$ acts as (unbounded) $B$ and searches for $m_b$; $C$ takes as input a view from $\adv_P$ just before the opening phase in \Cref{constr:state-pres-np} and searches for an opening that would lead to an accepting response. This simultaneous search task matches the hybrid experiment described above, so $B$ and $C$ succeed with noticeable probability. Now consider replacing $C$'s input by a view from $\adv_P$ in an execution of $\SPNP$ where the public parameters are generated in \emph{injective} mode (this matches the extractor from the proof of \Cref{thm:witness-preserving-NP}). This view is computationally indistinguishable to $C$ because the trapdoor has not yet been revealed in the $\SPNP$ execution. Therefore \Cref{lem:simult-search} implies that $B$ and $C$ simultaneously succeed in their search tasks with noticeable probability even when $C$ receives the modified view. Finally, we argue that if $C$ succeeds in finding an accepting opening, then with overwhelming probability $\Ext$ successfully extracts a witness. This follows from the fact that if $\Ext$ would fail, then for the overwhelming majority of challenges in $\SPNP$, there does not exist an accepting opening.

    Therefore with noticeable probability, $\Ext$ extracts a valid $\cert$ and $\adv_H$ outputs $m' = m_b$. As argued previously, this contradicts the publicly verifiable certified deletion security of $\PKE$.
\end{proof}

\end{document}